\documentclass[a4paper,twocolumn,superscriptaddress,11pt,accepted=2017-11-27]{quantumarticle}
%\pdfoutput=1
\usepackage[utf8]{inputenc}
\usepackage[english]{babel}
\usepackage[T1]{fontenc}
\usepackage{hyperref}

\usepackage{amsmath,amsthm,latexsym,amssymb,amsfonts,epsfig,psfrag,color,url,dsfont}
\usepackage{bbold}
\usepackage{graphicx}
\usepackage{caption}
\usepackage{float}
\usepackage{subcaption}
\usepackage{sidecap}
\usepackage{wrapfig}
\usepackage{multirow}
\usepackage{cite}
\usepackage{tikz}
\usepackage{lipsum}

   % pathorder operator
                % real part
                % imaginary part
            % span
              % Cyl
              % dom

          % SU(2)
            % U(1)
\DeclareMathOperator{\Tr}{Tr}

\setcounter{secnumdepth}{5}
\newtheorem{Theorem}{Theorem}[section]
\newtheorem{Definition}[Theorem]{Definition}%[section]
\newtheorem{lem}[Theorem]{Lemma}
\newtheorem{Corollary}[Theorem]{Corollary}%[Theorem]
%[Theorem]
%[Theorem]
\newtheorem{Assumption}{Assumption}%[section]

\newtheorem{Req}{Requirement}
\newtheorem*{R}{Remark}

\newtheorem{pr}{Rule}
\newtheorem*{claim}{Claim}
%
%  technical abbreviations
\def\be{\begin{equation}}
\def\ee{\end{equation}}
\def\ba{\begin{eqnarray}}
\def\ea{\end{eqnarray}}

\newcommand\nn{\nonumber}
\newcommand\q{\quad}
%
% abbreviations connected with spaces of connections

%
% naturals
\def\Nl{{\mathchoice
{\setbox0=\hbox{$\displaystyle\rm N$}\hbox{\hbox to0pt
{\kern0.4\wd0\vrule height0.9\ht0\hss}\box0}}
{\setbox0=\hbox{$\textstyle\rm N$}\hbox{\hbox to0pt
{\kern0.4\wd0\vrule height0.9\ht0\hss}\box0}}
{\setbox0=\hbox{$\scriptstyle\rm N$}\hbox{\hbox to0pt
{\kern0.4\wd0\vrule height0.9\ht0\hss}\box0}}
{\setbox0=\hbox{$\scriptscriptstyle\rm N$}\hbox{\hbox to0pt
{\kern0.4\wd0\vrule height0.9\ht0\hss}\box0}}}}
%
% integers
\def\Zl{{\mathchoice
{\setbox0=\hbox{$\displaystyle\rm Z$}\hbox{\hbox to0pt
{\kern0.4\wd0\vrule height0.9\ht0\hss}\box0}}
{\setbox0=\hbox{$\textstyle\rm Z$}\hbox{\hbox to0pt
{\kern0.4\wd0\vrule height0.9\ht0\hss}\box0}}
{\setbox0=\hbox{$\scriptstyle\rm Z$}\hbox{\hbox to0pt
{\kern0.4\wd0\vrule height0.9\ht0\hss}\box0}}
{\setbox0=\hbox{$\scriptscriptstyle\rm Z$}\hbox{\hbox to0pt
{\kern0.4\wd0\vrule height0.9\ht0\hss}\box0}}}}
%
% rationals
\def\Ql{{\mathchoice
{\setbox0=\hbox{$\displaystyle\rm Q$}\hbox{\hbox to0pt
{\kern0.4\wd0\vrule height0.9\ht0\hss}\box0}}
{\setbox0=\hbox{$\textstyle\rm Q$}\hbox{\hbox to0pt
{\kern0.4\wd0\vrule height0.9\ht0\hss}\box0}}
{\setbox0=\hbox{$\scriptstyle\rm Q$}\hbox{\hbox to0pt
{\kern0.4\wd0\vrule height0.9\ht0\hss}\box0}}
{\setbox0=\hbox{$\scriptscriptstyle\rm Q$}\hbox{\hbox to0pt
{\kern0.4\wd0\vrule height0.9\ht0\hss}\box0}}}}
%
% reals
\def\Rl{{\mathchoice
{\setbox0=\hbox{$\displaystyle\rm R$}\hbox{\hbox to0pt
{\kern0.4\wd0\vrule height0.9\ht0\hss}\box0}}
{\setbox0=\hbox{$\textstyle\rm R$}\hbox{\hbox to0pt
{\kern0.4\wd0\vrule height0.9\ht0\hss}\box0}}
{\setbox0=\hbox{$\scriptstyle\rm R$}\hbox{\hbox to0pt
{\kern0.4\wd0\vrule height0.9\ht0\hss}\box0}}
{\setbox0=\hbox{$\scriptscriptstyle\rm R$}\hbox{\hbox to0pt
{\kern0.4\wd0\vrule height0.9\ht0\hss}\box0}}}}
%
%complexes
\def\Cl{{\mathchoice
{\setbox0=\hbox{$\displaystyle\rm C$}\hbox{\hbox to0pt
{\kern0.4\wd0\vrule height0.9\ht0\hss}\box0}}
{\setbox0=\hbox{$\textstyle\rm C$}\hbox{\hbox to0pt
{\kern0.4\wd0\vrule height0.9\ht0\hss}\box0}}
{\setbox0=\hbox{$\scriptstyle\rm C$}\hbox{\hbox to0pt
{\kern0.4\wd0\vrule height0.9\ht0\hss}\box0}}
{\setbox0=\hbox{$\scriptscriptstyle\rm C$}\hbox{\hbox to0pt
{\kern0.4\wd0\vrule height0.9\ht0\hss}\box0}}}}
%
% quaternions
\def\Hl{{\mathchoice
{\setbox0=\hbox{$\displaystyle\rm H$}\hbox{\hbox to0pt
{\kern0.4\wd0\vrule height0.9\ht0\hss}\box0}}
{\setbox0=\hbox{$\textstyle\rm H$}\hbox{\hbox to0pt
{\kern0.4\wd0\vrule height0.9\ht0\hss}\box0}}
{\setbox0=\hbox{$\scriptstyle\rm H$}\hbox{\hbox to0pt
{\kern0.4\wd0\vrule height0.9\ht0\hss}\box0}}
{\setbox0=\hbox{$\scriptscriptstyle\rm H$}\hbox{\hbox to0pt
{\kern0.4\wd0\vrule height0.9\ht0\hss}\box0}}}}
%
% octonions
\def\Ol{{\mathchoice
{\setbox0=\hbox{$\displaystyle\rm O$}\hbox{\hbox to0pt
{\kern0.4\wd0\vrule height0.9\ht0\hss}\box0}}
{\setbox0=\hbox{$\textstyle\rm O$}\hbox{\hbox to0pt
{\kern0.4\wd0\vrule height0.9\ht0\hss}\box0}}
{\setbox0=\hbox{$\scriptstyle\rm O$}\hbox{\hbox to0pt
{\kern0.4\wd0\vrule height0.9\ht0\hss}\box0}}
{\setbox0=\hbox{$\scriptscriptstyle\rm O$}\hbox{\hbox to0pt
{\kern0.4\wd0\vrule height0.9\ht0\hss}\box0}}}}
%
%%% Calligraphic Alphabet

\newcommand{\cl}{\mathcal L}

\newcommand{\cq}{\mathcal Q}

\newcommand{\ct}{\mathcal T}

%%% Fraktur Alphabet

%%% Greek letters

%%%\renewcommand{\a}{\alpha}
%%%\renewcommand{\b}}{\beta}
%\newcommand{\g}{\gamma}

%%\renewcommand{\d}{\delta}

%\newcommand{\th}{\theta}
%\newcommand{\Th}{\Theta}
 %\newcommand{\Varth}{\Vartheta}

%\renewcommand{\l}{\lambda}
%\renewcommand{\L }{\Lambda}
%\newcommand{\r}{\rho}
%\renewcommand{\o}{\omega}
%\renewcommand{\O}{\Omega}
%\renewcommand{\t}{\tau}
%\newcommand{\z}{\zeta}

%%%%%%%%%%%%%%%%%%%%%%%%%%%%%%%%%%%%%%%%%%%%%%%%%%%%%%%%%%%%%%%%%%%%%%
%Simone's redefinitions

\def\nn{\nonumber}

\newcommand{\eqa}{\begin{eqnarray}}
\newcommand{\neqa}{\end{eqnarray}}

\newcommand{\1}{$\{10j\}$}

\newcommand{\n}{\nabla}

%%%%%%%%% SIMBOLI %%%%%%%%%%

\def\f{\frac}

\usepackage{bbm}

%{\renewcommand{\theenumi}{\roman{enumi}}   \renewcommand{\labelenumi}{(\theenumi)}

  % Lie algebra

\newcommand{\SO}{\mathrm{SO}}

%\newcommand{\bd}{\mathbf d}

%%%%
% new defs

\def\q{{\quad}}

%%%%%%%%%%% FINAL ARXIV VERSION%%%%%%%%%%%%%%%

\begin{document}

\title{Toolbox for reconstructing quantum theory from rules on information acquisition}
\date{\today}
\author{Philipp Andres H\"ohn}
\email{hoephil@gmail.com}
\orcid{0000-0001-7108-1184}
\thanks{Present address: Institute for Quantum Optics \& Quantum Information, Austrian Academy of Sciences; and Vienna Center for Quantum Science and Technology, University of Vienna, Boltzmanngasse 3, 1090 Vienna, Austria}
\affiliation{Perimeter Institute for Theoretical Physics, 31 Caroline Street North, Waterloo, Ontario, Canada N2L 2Y5}
%\affiliation{Institute for Quantum Optics \& Quantum Information (IQOQI), Austrian Academy of Sciences, Boltzmanngasse 3, Vienna A-1090, Austria}

\maketitle

\begin{abstract}
We develop an operational approach for reconstructing the quantum theory of qubit systems from elementary rules on information acquisition. The focus lies on an observer $O$ interrogating a system $S$ with binary questions and $S$'s state is taken as $O$'s `catalogue of knowledge' about $S$. The mathematical tools of the framework are simple and we attempt to highlight all underlying assumptions. Four rules are imposed, asserting (1) a limit on the amount of information available to $O$; (2) the mere existence of complementary information; (3) $O$'s total amount of information to be preserved in-between interrogations; and, (4) $O$'s `catalogue of knowledge' to change continuously in time in-between interrogations and every consistent such evolution to be possible. This approach permits a {\it constructive} derivation of quantum theory, elucidating how the ensuing independence, complementarity and compatibility structure of $O$'s questions matches that of projective measurements in quantum theory, how entanglement and monogamy of entanglement, non-locality and, more generally, how the correlation structure of arbitrarily many qubits and rebits arises. The rules yield a reversible time evolution and a quadratic measure, quantifying $O$'s information about $S$. Finally, it is shown that the four rules admit two solutions for the simplest case of a single elementary system: the Bloch ball and disc as state spaces for a qubit and rebit, respectively, together with their symmetries as time evolution groups. The reconstruction for arbitrarily many qubits is completed in a companion paper \cite{hw} where an additional rule eliminates the rebit case. This approach is inspired by (but does not rely on) the relational interpretation and yields a novel formulation of quantum theory in terms of questions. 
\end{abstract}

\section{Introduction}

Tools and concepts from information theory have seen an ever growing number of applications in modern physics, often proving useful for understanding and interpreting specific physical phenomena. Among a vast number of examples, black hole entropy, or more generally space-time horizon entropies, can be understood in terms of entanglement entropy \cite{Sorkin:2014kta,Bombelli:1986rw,Jacobson:1995ab,Jacobson:2012yt,Bianchi:2012ev}, thermodynamics naturally adheres to entropic and thus informational perspectives \cite{jaynes,bennett1982thermodynamics,maruyama2009colloquium,popescu2006entanglement,horodecki2013fundamental}, and, above all, the entire field of quantum information and computation is the natural physical arena for applications of information theoretic tools \cite{nielsen2010quantum}. 

This manuscript, by contrast, is motivated by the question whether information theoretic concepts, apart from their useful applications to concrete physical situations, can also tell us something deeper about physics, namely about the physical content and architecture of theories. The overriding idea is that elementary rules or restrictions of certain informational activities, e.g.\ information acquisition or communication, should be deeply intertwined with the structure of the appropriate theory. In this article, we shall address this question by means of the concrete example of quantum theory. Our ambition is to develop a novel informational framework for deriving the formalism and structure of quantum theory for systems of arbitrarily many qubits from elementary operational postulates -- a task which is completed in the companion paper \cite{hw}. While neither this question nor the fact that one can reconstruct quantum theory from elementary axioms is new and has been extensively explored before in various contexts \cite{Hardy:2001jk,Dakic:2009bh,masanes2011derivation,Mueller:2012ai,Masanes:2012uq,chiribella2011informational,de2012deriving,Mueller:2012pc,Hardy:2013fk,Barnum:2014fk,kochen2013reconstruction,Oeckl:2014uq}, we shall approach both from a novel constructive perspective and with a stronger emphasis on the conceptual content of the theory. The ultimate goal of this work is therefore very rudimentary: to redo a well established theory -- albeit in a novel way which is especially engineered for exposing its informational and logical structure, physical content and distinctive phenomena more clearly. In other words, we shall attempt to rebuild quantum theory for qubit systems from scratch.

In such an information based context it is natural to follow an operational approach, describing physics from the perspective of an observer. Accordingly, we shall work under the premise that we may only speak about the information an observer has access to in an experiment. Our approach will thus be purely operational and epistemic (i.e.\ knowledge based) by construction and shall survive without ontic statements (i.e.\ references to `reality'). We shall thus say nothing about whether or not `hidden variables' could give rise to the experiences of the observer. Under these circumstances we adopt an `inside view of physics', holding properties of systems as being {\it relationally}, rather than absolutely defined. 

Indeed, more generally the replacement of absolute by relational concepts goes in hand with the establishment of {\it universal (i.e.\ observer independent) limits}. For instance, the crucial step from Galilean to special relativity is the realization that the speed of light $c$ constitutes a {\it universal limit} for information communication among observers. The fact that all observers agree on this limit is the origin of the relativity of space and time. Similarly, the crucial step from classical to quantum mechanics is the recognition that the Planck constant $\hbar$ establishes a {\it universal limit} on how much simultaneous information is accessible to an observer. While less explicit than in the case of special relativity, this simple observation suggests a relational character of a system's quantum properties. More precisely, the {\it process of information acquisition through measurement establishes an informational relation between the observer and system}. Only if there was no limit on the acquisition of information would it make sense to speak about an absolute state of a system within a purely operational approach (unless one accepts the existence of an omniscient and absolute observer as an external standard). But thanks to the existence of complementarity, implied by $\hbar$, an observer may not access all conceivable properties of the system at once. Furthermore, the observer can choose the experimental setting and thereby which property of the system she would like to reveal (although, clearly, she cannot choose the experimental outcome). Under our purely operational premise, we shall treat the situation as if the system does not have any other properties than those accessible to the observer at any moment of time. % such that its properties will be taken as defined relative to the latter. 
In particular, the system's state is naturally interpreted as representing the observer's state of information about the system. These ideas are in agreement with earlier proposals in the literature \cite{hartle1968quantum,zheng1996quantum} and, most specifically, with the {\it relational interpretation of quantum mechanics} \cite{Rovelli:1995fv,Smerlak:2006gi}.

Of course, in order for different observers who may communicate (by physical interaction) to have a basis for agreeing on the description of a system, some of its attributes must be observer independent such as its state space, the set of possible measurements on it and possibly a limit on its information content. But without adhering to an external standard against which measurement outcomes and states could be defined, it is as meaningless to assert a system's physical state to be independent of its relations to other systems as it is to relate a system's dynamics to an absolute Newtonian background time. 

It is worthwhile to investigate what we can learn about physics from such an operational and informational approach. For this endeavour we shall adopt the general conviction, which has been voiced in many different (even conflicting) ways before in the literature \cite{hartle1968quantum,Rovelli:1995fv,peres1995quantum,zeilinger1999foundational,Brukner:1999qf,Brukner:vn,Brukner:ys,Brukner:2002kx,Fuchs:fk,Caves:1996nq,Caves:2002uq,caves2002unknown,spekkens2007evidence,Spekkens:2014fk}, that quantum theory is best understood as an operational framework governing an observer's acquisition of information about a system. While most earlier works take quantum theory as given and attempt to characterize and interpret its physical content with an emphasis on information inference, here and in \cite{hw} we take a step back and show that one can actually derive quantum theory from this perspective. This will require a focus on the informational relation between an observer and a system and the rules governing the observer's acquisition of information. More precisely, our approach will be formulated in terms of the observer interrogating a system with elementary questions.

While the present work has been inspired by relational ideas, we emphasize that the sequel does {\it not} actually rely on them so this should not discourage a reader unsympathetic with relational interpretations of quantum theory. Our approach will reconstruct and produce a novel formulation of quantum theory, but will clearly not single out the relational interpretation as `the right one'. However, it will support a partial interpretation, namely that quantum theory is a law book governing an observer's acquisition of information about physical systems.

This is clearly not the only physical situation to which such a relational approach applies; it likewise opens up a novel perspective on elementary space-time structure which is encoded in the informational relations among different observers and can be exposed by a communication game.  For instance, without presupposing a particular space-time structure -- and thus without assuming an externally given transformation group between different reference frames -- one can also derive the Lorentz group as the minimal group translating between different observer's descriptions of physics from their informational relations, established by communication with quantum systems \cite{Hoehn:2014vua}.

More fundamentally, relational ideas are actually required and commonly employed in the context of background independent quantum gravity approaches where the notion of coordinates disappears together with a classical notion of space-time within which a dynamics could be defined. Instead, one has to resort to dynamical degrees of freedom to define physical reference frames (i.e., dynamical `rods' and `clocks') relative to which a meaningful dynamics can be formulated in the first place. This constitutes the relational paradigm of dynamics \cite{Rovelli:2004tv,Rovelli:1989jn,Rovelli:1990ph,Rovelli:1990pi,Dittrich:2005kc,Tambornino:2011vg,Bojowald:2010qw,Bojowald:2010xp,Hohn:2011us} but goes beyond a purely informational and operational approach and thus clearly beyond the scope of this work. 

The remainder of this manuscript is organized as follows.

\onecolumngrid

\tableofcontents

 \vspace*{.4cm}
\twocolumngrid

% In section \ref{sec_motivation} we shall further consider the potential benefits of a successful reconstruction of quantum theory. In section \ref{sec_landscape} we shall develop a novel landscape of information inference theories within which we shall later, in section \ref{sec_postulates}, formulate new postulates for qubit quantum theory. Our approach will be formulated in terms of an observer interrogating a system with binary questions. 
%
 
The first part of the article up to and including section \ref{sec_postulates} includes a substantial amount of conceptual elaborations. The second part, by contrast, will become more technical upon putting the novel postulates to use in sections \ref{sec_qstructure}-\ref{sec_n1}. The reconstruction for arbitrarily many qubits is performed in the companion article \cite{hw} where an additional postulate eliminates rebits (two-level systems over real Hilbert spaces) in favour of qubit quantum theory. The rebit case is considered separately in \cite{hw2}.

\section{Why a(nother) reconstruction of quantum theory?}\label{sec_motivation}

Given that we have a beautifully working theory, one may wonder why one should bother to reconstruct quantum theory from operational statements. There are various motivations for this endeavour:
\begin{enumerate}
\item To equip the standard, physically obscure textbook axioms for quantum theory with an operational sense. In particular, in addition to its empirical success, a derivation from operational statements can conceptually justify the formulation of the theory in terms of Hilbert spaces, complex numbers, tensor product rule for composite systems, etc.
\item To better understand quantum theory within a larger context. By singling out quantum theory with operational statements one can answer the question ``what makes quantum theory special?'', thereby establishing a bird's-eye perspective on the formalism and conceivable alternatives.
\item It may help to understand why or why not quantum theory in its present form should be fundamental and thus why it should or should not be modified in view of attempting to construct fundamental theories. By dropping or modifying some of its defining physical principles, one obtains a handle for systematic generalizations of quantum theory. This may also be interesting in view of quantum gravity phenomenology (away from the deep quantum regime). More fundamentally, the question arises whether an informational perspective could be beneficial for quantum gravity in general.
\item The hope has been voiced that a clear interpretation of the theory may finally emerge from a successful reconstruction, in analogy to how the interpretation of special and general relativity follows naturally from its underlying principles \cite{Rovelli:1995fv,zeilinger1999foundational,Fuchs:fk}.
\end{enumerate}

It is fair to say that the hope alluded to under point 4 has not been realized thus far because the existing successful reconstructions \cite{Hardy:2001jk,Dakic:2009bh,masanes2011derivation,Mueller:2012ai,Masanes:2012uq,chiribella2011informational,de2012deriving,Mueller:2012pc,Hardy:2013fk,Barnum:2014fk,kochen2013reconstruction,Oeckl:2014uq} are fairly neutral as far as an interpretation is concerned. While most of them emphasize the operational character of the theory, a particular interpretation of quantum theory is not strongly suggested.

The language and concepts of the present reconstruction are different. It will emphasize and concretize the view (or partial interpretation) that quantum theory is a framework governing an observer's acquisition of information about the observed system \cite{hartle1968quantum,Rovelli:1995fv,peres1995quantum,zeilinger1999foundational,Brukner:1999qf,Brukner:vn,Brukner:ys,Brukner:2002kx,Fuchs:fk,Caves:1996nq,Caves:2002uq,caves2002unknown,spekkens2007evidence,Spekkens:2014fk}. The new postulates are simple and conceptually comprehensible, concerning only the relation between an observer and the system. Due to the simplicity, the ensuing derivation is mathematically quite elementary and, in contrast to previous derivations, yields the formalism, state spaces and time evolution groups explicitly and in a more constructive manner. The disadvantage, compared to other reconstructions, is that a large number of detailed steps is required. The advantage, on the other hand, is the simplicity of the principles and mathematical tools, and the fact that the reconstruction affords natural explanations for many quantum phenomena, including entanglement and monogamy, and elucidates the origin of the unitary group.% It offers a novel perspective on the structure of quantum theory.

\section{Landscape of inference theories}\label{sec_landscape}

The ambition of a (re-)construction of quantum theory is to derive its formalism, state spaces, time evolution groups and permissible operations from physical principles -- in some rough analogy to the construction of relativity theory from the principle of relativity and the equivalence principle. But in order to formulate physical postulates, we clearly have to presuppose some mathematical structure within which a precise meaning can be given to them. (For example, also the construction of special and general relativity certainly presupposed a substantial amount of mechanical structure.) 

The procedure is thus to firstly define some {\it landscape of theories}, which hopefully contains quantum theory and classical information theory, but within which theories are generally {\it not} formulated in terms of the usual complex Hilbert spaces, tensor product rules, etc. The mathematical formulation of the landscape must therefore be more elementary and, in particular, operational. That is, for the time being, we have to forget about the usual -- mathematically crisp but physically rather obscure -- textbook axioms of quantum theory. While different theories will have the mathematical and physical structure of the landscape in common, they may have otherwise very different physical and informational properties; e.g., they may admit much stronger correlations than quantum theory \cite{popescu1994quantum}, or weaker correlations as classical probability theory, they may allow exotic communication and information processing tasks to be accomplished \cite{pawlowski2009information,Paterek:2010fk}, and so on. Secondly, given the language of this landscape, one can attempt to formulate comprehensible physical statements which single out quantum theory from {\it within} it. Going to a larger theory landscape and beyond the language of Hilbert spaces is precisely what allows us to ask the question ``what makes quantum theory special?'' and, ultimately, to find an operational and physical justification for the usual textbook axioms and the standard Hilbert space formulations.

The goal of this section is precisely to build such an appropriate landscape of inference theories both conceptually and mathematically from scratch within which we shall subsequently formulate those elementary rules, governing an observer's acquisition of information about a system, that single out qubit quantum theory. This novel landscape of inference theories employs a different language and is conceptually distinct from the by now standard landscape of generalized probabilistic theories (GPTs) which are commonly employed for characterizations (or generalizations) of quantum theory. For contrast and to put the new tools into a larger perspective, we begin with a brief synopsis of the GPT language before we establish the landscape of inference theories underlying this manuscript.

\subsection{The standard landscape of generalized probabilistic theories}

It has become a standard in the literature to employ the formalism of generalized probabilistic theories (GPTs) for operational characterizations or derivations of quantum theory \cite{Hardy:2001jk,barrett2007information,masanes2011derivation,Mueller:2012ai,Dakic:2009bh,hardy2011foliable,hardy2013formalism,chiribella2010probabilistic,Masanes:2012uq,chiribella2011informational,Masanes:2011kx,de2012deriving,Mueller:2012pc,pfister2013information,Hardy:2013fk,Barnum:2014fk,Barnum:2010uq}. The setup of GPTs is exclusively operational and one considers three kinds of operation devices (see figure \ref{fig_gpt}): (1) a preparation device which can spit out systems in some set of states defined by a vector of probabilities for the outcome of fiducial measurements. The state spaces of the systems are necessarily required to be convex to permit convex mixtures of states. (2) A transformation device can perform physical operations on the prepared systems (e.g., a rotation) which may change the state of the system (e.g., by some group action on the state vector), but must allow it to continue its journey to (3), a measurement device, which detects certain experimental outcomes. The measurement devices are mathematically described by so-called `effects' which are assumed to be dual to, i.e.\ linear functionals on, the states. (For the interested reader we note, however, that Holevo has shown how the mathematical incarnation of measurements as linear functionals on states follows from other simple operational assumptions \cite{holevo}.)
\begin{figure}[hbt!]
\begin{center}
\psfrag{p}{\small preparation}
\psfrag{t}{\small transformation}
\psfrag{m}{\small measurement}
\psfrag{s}{systems}
\psfrag{c}{\hspace*{-.2cm}convex state spaces}
\psfrag{cb}{\footnotesize cbit}
\psfrag{gb}{\hspace*{-.35cm}\footnotesize square bit}
\psfrag{rb}{\footnotesize rebit}
\psfrag{qb}{\footnotesize qubit}
\psfrag{e}{\small `effects'}
\psfrag{d}{\footnotesize dual to states}
{\includegraphics[scale=.33]{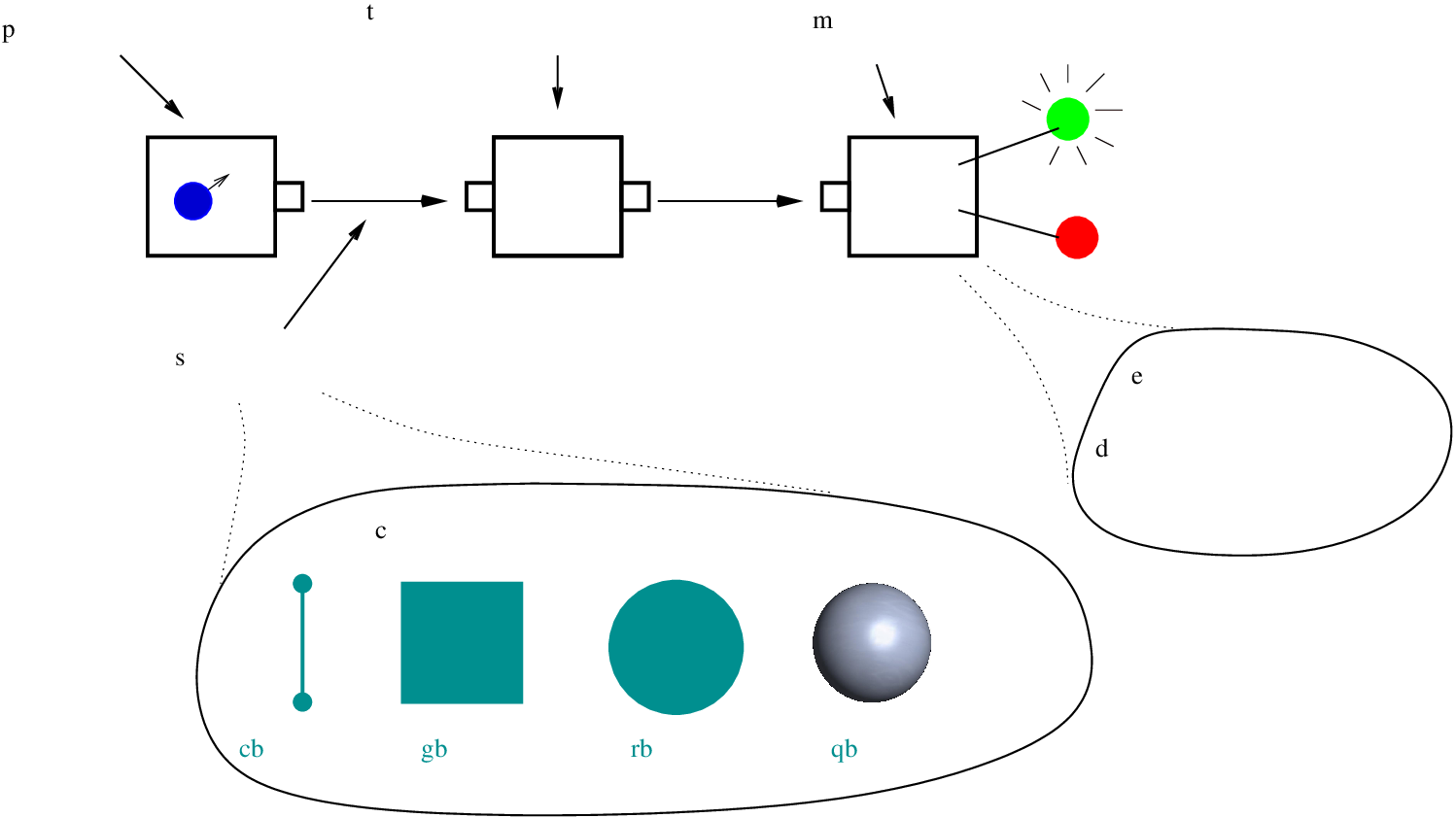}}
\caption{\small The standard operational setup of generalized probabilistic theories with examples of allowed convex state spaces of elementary two-level systems.}\label{fig_gpt}
\end{center}
\end{figure}

{\it Measurements} and {\it states} are at the heart of GPTs;  `effects' directly determine the outcome probabilities of measurements and thereby, when a complete set is engaged, reveal the %{\it intrinsic} 
probabilistic state of a system. %These probabilities are the primary concept of GPTs. 
An observer assumes a supporting role, % which is essentially reduced to 
giving intuitive meaning to the notion of preparation, transformation and measurements of systems. The reconstructions of quantum theory within the GPT formalism depart from rather global operational axioms, restricting the sets of possible preparations (i.e.\ state spaces), transformations and measurements. The concrete acquisition of information of the observer about the systems is otherwise not accentuated. In particular, since the outcome probabilities are the primary concept of GPTs, these axioms say little about what an observer experiences in individual experimental runs, but instead focus on the totality of a large number of experimental runs. %, neither is her/his acquisition of information accentuated. %The primary concept in GPTs are probabilities of measurement outcomes intrinsic to systems. 
This has led to a whole wave of successful quantum theory reconstructions, employing GPT concepts in one way or another \cite{Hardy:2001jk,Dakic:2009bh,masanes2011derivation,Mueller:2012ai,Masanes:2012uq,chiribella2011informational,de2012deriving,Mueller:2012pc,Hardy:2013fk,Barnum:2014fk}, most of which, however, are fairly abstract on account of the rather global axioms.% with little concrete insights on what an observer experiences in individual experimental runs. 

It is perhaps one of the great strengths of GPTs that they constitute a functional and purely operational framework which is interpretationally fairly neutral. It is not the ambition of this framework to elucidate the measurement problem, to clarify what happens to a state during a measurement, what probabilities are or, ultimately, how to interpret quantum mechanics (except that it highlights its operational character). As such, this framework is compatible with most interpretations of quantum theory.

\subsection{A novel landscape of inference theories }\label{sec_infland}

The success of GPTs notwithstanding, we shall now change semantics and perspective to define a new landscape of theories and a novel framework for (re)constructing and understanding quantum theory. Henceforth, we shall fully engage the observer and give primacy to his acquisition of fundamentally {\it limited} information from observed systems. This will include being more explicit about what general sort of information will be available to the observer even in individual experimental runs. Probabilities, on the other hand, can be viewed as secondary and as a consequence of the limited information available to the observer -- although, clearly, probabilities will assume a pivotal role too (after all we want to reconstruct quantum theory). %In particular, in contrast to `hidden variable' models we shall only speak about information that is accessible to the observer. The emphasis will lie on the {\it informational relation between an observer and a system}.

\subsubsection{Questions and answers}\label{sec_QandA}

As schematically depicted in figure \ref{fig_inter}, we shall consider an observer $O$ who can only interact with a system $S$ through interrogation %, $O\underset{\tiny Q_i}{\rightarrow}S$,
 via questions $Q_i$ from some set of questions $\cq$ which we shall further constrain below. (At this stage we make no assumption about whether $\cq$ is continuous or discrete.) The only information which we allow $O$ to acquire about $S$ is by asking questions from this set $\cq$.
 
 In principle, of course, $O$ could conceive of and ask $S$ all kinds of questions, but $S$ could not always give a meaningful answer; $S$ may simply not have the desired properties or carry the information $O$ is inquiring about (e.g., $S$ may not be complex enough, or $S$ does not interact with other systems carrying the information in question), or possibly the question is not a senseful one in the first place. We shall call a question $Q$ physically {\it implementable on $S$} if $O$ can acquire a `meaningful' answer from $S$ to $Q$. An elementary restriction on $\cq$ is that any question $Q_i$ from this set be {\it implementable} on $S$ and that, whenever $O$ asks $Q_i$ to $S$, $S$ will give an answer to $O$. Clearly, $\cq$ depends on $S$.
 
 But how does $O$ know whether $S$ will give an answer and how can he judge whether the latter is `meaningful'? This requires $O$, like any experimenter, to have developed, from previous experiences, a theoretical model by means of which he describes and interprets his interactions with $S$. An answer can only be `meaningful' in the context of this model such that our notion of implementability is actually dependent on $O$'s model. We shall come back to this model frequently.

%  We assume that, whenever $O$ asks $Q_i$ to $S$, $S$ will give an answer to $O$ -- provided $S$ is present. 

\begin{figure*}[htb!]
\begin{center}
\psfrag{p}{Preparation}
\psfrag{i}{Interrogation}
\psfrag{S}{$S$}
\psfrag{O}{$O$}
\psfrag{q}{$Q_i$?}
{\includegraphics[scale=.5]{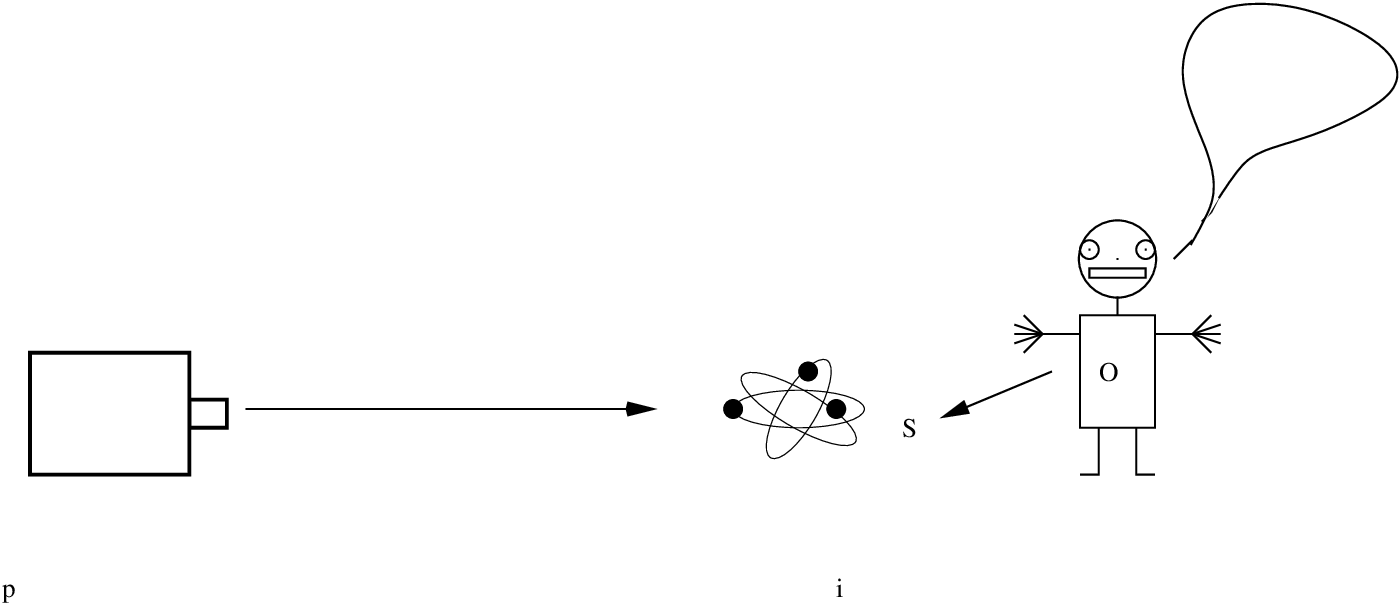}}
  \caption{\small Schematic representation of an observer $O$ interrogating a system $S$. }\label{fig_inter}
\end{center}
\end{figure*}

The central ingredients of this framework will thus be {\it questions} and {\it answers} -- and $O$'s information about answer outcomes and their relations. This will lead to a novel question calculus from which many crucial quantum properties will be derived. That is, rather than focusing mostly on probabilistic properties as in GPTs, this novel framework will connect more directly to what can be measured in experiments through its emphasis on questions and their relations. As a consequence, this framework will also produce operationally more compelling explanations of typical quantum phenomena.

In the sequel, we shall solely speak about the {\it information $O$ has about $S$} and, correspondingly, about the state that $O$ {\it assigns} to $S$ based on this information. Such a state of $S$ is then defined {\it relative} to $O$ (whether or not some hidden variables give rise to this state is a question we shall not address). The act of information acquisition establishes a {\it relation between $O$ and $S$} and this will be the center of our attention. (A priori, a different observer $O'$ may establish a {different} relation with $S$.) Although this framework will also not give rise to a unique interpretation, it will support the {\it partial} interpretation that quantum theory is a law book governing an observer's acquisition of information about physical systems and might therefore be considered interpretationally less neutral than GPTs. While this connects with general ideas underlying, e.g., the {relational} \cite{Rovelli:1995fv,Smerlak:2006gi}, Brukner-Zeilinger informational \cite{zeilinger1999foundational,Brukner:1999qf,Brukner:vn,Brukner:ys,Brukner:2002kx}, or QBist \cite{Fuchs:fk,Caves:1996nq,Caves:2002uq,caves2002unknown} interpretations of quantum mechanics, we emphasize that this reconstruction does {\it not} rely on them and should thus not discourage readers uncomfortable with these specific interpretations. 

We shall not explicitly deal with transformation and measurement (`effect') devices as in GPTs; instead, these will be replaced more generally by time evolution and questions, respectively. The set of all possible (information preserving) operations that $O$ could perform on $S$ can later be identified with the set of all possible time evolutions of $S$. However, in analogy to GPTs, we will assume that $O$ has access to some method of preparing $S$ in different ways such that $S$'s answers to $O$'s questions depend on the precise way of preparation. ($O$ could either control himself a preparation device or have a distinct observer $O'$ prepare systems for him.)

When constructing the new landscape $\cl$ of theories describing $O$'s acquisition of information about $S$ within which we shall later, in section \ref{sec_postulates}, formulate our postulates, we will make a number of restrictions and assumptions. %Clearly, while any restriction and assumption weakens the strength of the resulting (re)construction, we shall, nevertheless, achieve an instructive and non-trivial derivation of quantum theory (partly in this manuscript, and fully in \cite{hw}) that can serve as a `proof of principle' for this inference framework and the ideas underlying relational quantum mechanics \cite{Rovelli:1995fv,Smerlak:2006gi}. 
In order to facilitate future generalizations and improvements of the present construction of quantum theory -- which constitutes a proof of principle -- we shall attempt to be as clear as possible about the assumptions made throughout this work.

As a starter, we would like to keep $\cq$ as simple as possible, while still having non-trivial questions. In particular, we do not wish to consider trivial propositions which are always true or always false. We shall therefore assume the following.
\begin{Assumption}\label{assump1}
The set of questions $\cq$ which we shall permit $O$ to ask $S$ only contains \emph{binary} questions $Q_i$. Any $Q_i\in\cq$ is a non-trivial question such that $S$'s answer (`yes' or `no') is \emph{not} independent of its preparation. Furthermore, any $Q_i\in\cq$ is \emph{repeatable} such that $O$, by asking the same $S$ the same $Q_i$ $m$ times in succession will receive $m$ times the same answer.
\end{Assumption}
The restriction to elementary `yes-no'-questions greatly simplifies the discussion and, ultimately, will give rise to the quantum theory of qubit systems. For instance, in quantum theory, a binary question could be `is the spin of the qubit up in $x$-direction?' However, it will not be too difficult to generalize $\cq$ to also consist of ternary, quaternary, quinary, etc. questions, but we shall not attempt to do so here. Since trivial questions are not considered, we already see that not even all implementable questions will be taken into account. We shall impose further restrictions on $\cq$ such that, ultimately, it will be a strict subset of all possible binary questions which $O$ could, in principle, ask $S$.

Since this is an operational approach, it is fair to assume that $O$ can record the answers to his questions asked to any system (e.g., by writing them on a piece of paper) and that he can do statistics over the outcomes (e.g., by counting the frequency of outcomes). We shall require that every possible way of preparing $S$ will give rise to a particular statistics over the answers to all $Q_i\in\cq$; $O$ could test these statistics by interrogating a large number $n$ of identically prepared systems\footnote{Given the present structure, two systems $S_1$ and $S_2$ could only be considered as distinct in nature if either the (maximal) set of questions $\cq$ which $O$ can ask the systems or the totality of answer statistics for all possible preparations were distinct for $S_1$ and $S_2$ (if always the same two $S_1,S_2$ are interrogated and thereafter freshly prepared again). If $O$ can {\it not} distinguish $S_1$ and $S_2$ in this way for sufficiently many trials (ideally infinitely many), we shall call them identical. Let $S_1$ and $S_2$ be identical and $O$ prepare both systems with the same procedure (for instance, the setting of the preparation device is the same for both systems). If the answer statistics (frequencies) for $S_1,S_2$ for all $Q_i\in\cq$ become indistinguishable after sufficiently many trials (of preparing and interrogating the same systems with the same procedure), we consider these systems as `identically prepared'. (After completion of this work, the author was made aware that this notion is similar to the definition of `operational equivalence' put forward in \cite{spekkens2005contextuality}.)}  $S_a$, $a=1,\ldots,n$, sufficiently often with (at least ideally) all $Q_i\in\cq$. In fact, this is precisely how $O$ will operationally distinguish different preparations of systems. 

By having interrogated, in this manner, the $n$ always identically prepared $S_a$ for all possible ways of preparation, we shall assume $O$ to have gained a `stable' knowledge of the {\it set $\Sigma$ of all possible answer statistics for $S$ over $\cq$}, i.e., the answer statistics for all $Q\in\cq$ and all possible ways of preparing a system $S$.\footnote{We assume the preparation method to be ideal in the sense that it accounts for any possible answer statistics which $S$ can admit in $O$'s world for questions in $\cq$. That is, there do not exist other methods which can prepare $S$ in ways that $O$'s method does not encompass.} While ideally $n\rightarrow\infty$ is necessary, `practically' $n$ should be large enough for $O$ to develop a theoretical model of both $\cq$ and $\Sigma$ up to some accuracy which agrees with his observations. It is not our ambition to clarify further what $n$ is nor how precisely $O$ has developed his theoretical model, instead, we shall henceforth just assume that $O$ has puzzled out the pair $(\cq,\Sigma)$. As any experimenter in an actual laboratory, $O$ shall interpret the outcomes of his interrogations by means of his model for $(\cq,\Sigma)$ and he can decide whether a given question is contained in the set $\cq$ or not. Henceforth, we assume that $O$ only asks questions from $\cq$. Our task will be to establish what this model is, based on the ensuing assumptions and postulates.

\subsubsection{From information to probabilities: the state of $S$ relative to $O$}

The previous subsection did not refer to the notion of probabilities. But it defined two preparations as being identical if they give identical statistics. This permits us to regard $\Sigma$ equivalently as the set of all possible answer statistics or as the set of all (distinct) preparations.  

Based on this notion, we shall now identify probabilities as degrees of belief. The `knowledge' of what $\Sigma$ is for a given $S$ will permit $O$ to assign probabilities to the outcomes of his questions. It is very natural for $O$ to assign probabilities to questions because he deals with statistical fluctuations and furthermore, as we shall see later, with systems about which he always has incomplete information in the sense that the corresponding $\Sigma$ is such that he can never know the answers to all $Q_i\in\cq$ at the same time.

More precisely, for a specific $S$ and any $Q_i\in\cq$ that he may ask the system next, $O$ can assign a probability $y_i$ that the answer will be `yes' (or a probability $n_i$ that the answer will be `no'), according to
\begin{itemize}
\item[(i)] $O$'s knowledge of $\Sigma$, and
\item[(ii)] any prior information that $O$ may have about the specific $S$.
\end{itemize}
Given our setup, the only prior information that $O$ may have about the particular $S$ (apart from what the associated $\Sigma$ and $\cq$ are) must result either from having interrogated some ensemble of systems identically prepared to $S$ with some subset of $\cq$ beforehand\footnote{If $O$ asks more than one question to any $S$, the ordering of the questions may matter. But then $O$ could ask the questions for any $S$ always in the same order.} and from the corresponding accumulated statistics of the asked $Q_j\in\cq$ (e.g., that $O$ may have recorded on a piece of paper), or from any other prior information about the method of preparation. In particular, this previous ensemble could have been empty.% such that $O$ %did not acquire any prior information about $S$ and 
%must estimate the priors $y_i$ solely based on his knowledge of $\Sigma$ and any prior information about the method of preparation which he did not acquire through a preceding run of interrogations.

For instance, if every time that $O$ asked the specific $Q_i$ to any of the identically prepared systems gave a `yes' answer before, he will assign the prior probability $y_i=1$ to $Q_i$ and to the next identically prepared $S$ that he will interrogate. If, on the other hand, the number of `yes' and `no' answers to some other $Q_j$ was equal for the previously identically prepared systems, $O$ will assign, as a best guess, the prior probability $y_j=\f{1}{2}$ to this $Q_j$ and to the next $S$. Similarly, for any other answer statistics, $O$ would assign $y_i$ to the next $S$ depending on the recorded frequencies of `yes' answers. But, thanks to his knowledge of $\Sigma$ and therefore of any possible relations in the answer statistics, $O$ can also assign prior probabilities $y_k$ to questions $Q_k$ that he did {\it not} ask the previous set of identically prepared systems. For example, $\Sigma$ may be such that whenever $S$ gives a `yes' answer to $Q_i$, it will give a `no' answer to an immediately following $Q_k$. Accordingly, if $O$ assigns a prior probability $y_i=1$ to $Q_i$ as above, he will also assign a prior $y_k=0$ to $Q_k$ without previously having asked $Q_k$. But other relations between questions will be permitted too. In particular, it may be that the information gained from the questions he previously asked the identically prepared systems and the structure of $\Sigma$ make it equally likely that the answer to $Q_k$ asked to the next $S$ will be `yes' or `no'. In this case, $O$ will assign $y_k=\f{1}{2}$ to $Q_k$ that he may ask the next $S$. This will become more precise along the way.

We therefore take a broadly\footnote{We add the qualifier `broadly' here since we also allow for the typical laboratory situation of ensembles of systems (which may or may not contain more than one element).} {\it Bayesian perspective} on probabilities: $O$ assigns probabilities to questions according to his `degree of belief' about $S$. These probabilities $y_i$ are thereby {\it relative to the observer $O$}. A different observer $O'$ may have different information about $S$ and thereby assign different probabilities to the various outcomes of questions posed to $S$ (for a discussion, within quantum theory, of the consistency of different observers having different information about a system, see \cite{Rovelli:1995fv,Smerlak:2006gi,Cavescompat,mermin2002compatibility}). E.g., $O'$ could be the one preparing $S$. She could `know' the statistics for the specific preparation setting (from previous tests) and then send $O$ the specifically prepared $S$ without informing him about her knowledge. 

For consistency, we tacitly assume the set $\Sigma$ of all possible answer statistics to coincide with the set of possible `beliefs'. That is, to every equivalence class of preparations of $S$ (with `identically prepared' defining the equivalence relation) there shall correspond a unique `belief' $\{y_i\}_{Q_i\in\cq}$ and vice versa. Since the only way for $O$ to acquire information about $S$ is by interrogation with questions in $\cq$, the prior probabilities $y_i$ that $O$ assigns to every $Q_i\in\cq$ encode the entire information that $O$ has about $S$. Hence, we shall make the following identification.
\begin{Definition}{\bf(State of $S$ relative to $O$)}
The collection of all probabilities $y_i$ $\forall\, Q_i\in\cq$ \emph{is} the state of $S$ relative to $O$. Accordingly, the set $\Sigma$ of all possible answer statistics on $\cq$ which $S$ admits is the \emph{state space of $S$}.
\end{Definition}
Of course, ultimately not all $y_i$ will be independent such that the full collection of probabilities will yield a redundant parametrization of the state. However, this is not important for the moment and we shall come back to this shortly. 

This definition of the state of a system $S$ explicitly identifies it with the `state of information' that $O$ has acquired about $S$; $O$ assigns this state to $S$ according to his information about the $Q_i\in\cq$. As such, {\it the state of system $S$ is epistemic (i.e.\ a `state of knowledge') and a priori only meaningful relative to the observer $O$}. The interpretation of the quantum state as a `state of information' is certainly not new and has been proposed in various ways before (see also, e.g., \cite{Caves:1996nq,Caves:2002uq,Fuchs:fk,spekkens2007evidence}). However, the above definition is closest in spirit to the %beautiful
 ideas underlying Relational Quantum Mechanics \cite{Rovelli:1995fv,Smerlak:2006gi} and the Zeilinger-Brukner interpretation \cite{zeilinger1999foundational,Brukner:ys,Brukner:1999qf,Brukner:vn,Dakic:2009bh} and thereby generalizes them to the landscape of theories describing $O$'s acquisition of information which we are in the process to establish. % -- albeit not in such a relational manner.

While the state of $S$ is thus a priori only meaningful relative to $O$, we emphasize that both the set of questions $\cq$ which $O$ may ask and the state space $\Sigma$ are to be intrinsic to the system $S$. Otherwise, it would be difficult for two observers to agree on the description of a given $S$.

\subsubsection{`Belief' updating and `collapse' of the state}\label{sec_bayes}

At this stage it is important to distinguish {\it single} from {\it multiple shot interrogations}. In a 
\begin{description}
\item[single shot interrogation] $O$ interrogates a single system $S$, in some prior state, with a number of questions from $\cq$ {\it without} intermediate re-preparations of $S$. The definite answers to these questions give $O$ {\it definite} information about this specific $S$ after the interrogation. Furthermore, his knowledge of $\Sigma$ and any prior knowledge of $S$ (acquired through previous interrogations of identically prepared systems) give him statistical information about any questions he did {\it not} ask $S$. In conjunction, the new answers and his prior knowledge thus determine the state of $S$ {\it after} the interrogation. This will constitute a {\it posterior state update rule}, and thereby a `belief' update form a prior to a posterior state for a {\it single} system which we shall turn to shortly (and which clearly depends on the specific way $O$ interrogates $S$). This {\it posterior} state of $S$ will reflect $O$'s definite information about every asked question $Q_i$ by featuring either $y_i=0$ or $y_i=1$ due to repeatability, depending on whether the answer was `no' or `yes', respectively (assuming for now, of course, that $\Sigma$ is such that the answers to the selection of questions that $O$ asked can be known simultaneously). 

If this {\it posterior} state does not coincide with the prior state that $O$ assigned to $S$ before the interrogation, based on his prior information about $S$, then $S$'s state  has `collapsed' relative to $O$ during the interrogation. Hence, a state `collapse' only occurs if $O$'s {\it posterior} information about $S$ does not coincide with his prior information about $S$, i.e.\ if $O$ experienced an {\it information gain} about $S$ via the interrogation. We shall therefore view a state `collapse' as $O$'s information gain about this specific $S$ rather than a `disturbance'\footnote{A `disturbance' of the system $S$ is only meaningful if there was an underlying {\it ontic} state (i.e.\ `state of reality') to which, however, $O$ would have no access. Here we shall merely speak about the information that $O$ has access to and therefore not make any ontic statements, regarding them as excess baggage for our purposes.} of $S$ (we refer the reader also to \cite{hartle1968quantum,Fuchs:1995xa,Fuchs:1996hz,pfister2013information} for a related discussion).

\item[multiple shot interrogation] $O$ interrogates an {\it ensemble} of identically prepared systems $S_a$, $a=1,\ldots,n$, where the interrogation of every $S_a$ is a single shot interrogation. $O$ will carry out such a multiple shot interrogation to do {\it state tomography}, i.e.\ to estimate the state of the ensemble $\{S_a\}$ for the specific setting of preparation or, in other words, the state of any of the systems {\it prior} to being interrogated by $O$. 

This will also be a `belief' updating, however, a {\it prior (or ensemble) state updating}. After every interrogation of a system in the ensemble, $O$ will assign probabilities $y_i$ to the $Q_i$ in the manner described above. This will then define the prior state of the next system in the ensemble to be interrogated. By interrogating more and more systems, $O$ will gain more and more information about the ensemble state such that his assignments of the $y_i$ will fluctuate less the larger the number of interrogated systems. This process gives rise to an ensemble state updating. Independent of this updating, the prior state that $O$ assigns to any individual system may experience a `collapse' during the interrogation of that specific system because his information about the specific system may have changed. Accordingly, $O$ will have to distinguish the ensemble state from the {\it posterior} state of any system in the ensemble. But the collection of {\it posterior} states determines the ensemble state.

\end{description}

%While not strictly necessary for the subsequent derivation of quantum theory, we shall assume the non-existence of an omniscient absolute observer (`God') who could define an `objective' knowledge and probabilities. In this case, there is little reason to assert the existence of an absolute but `unknown probability distribution' that $O$ could unravel by repeating his interrogations sufficiently many times. Any observer $O$ can only update his information about $S$ and accordingly assign an updated state to $S$ -- but there is no absolute state of $S$ (see also the related Qubism arguments \cite{Fuchs:fk,caves2002unknown}).

\subsubsection{Elementary structure on $\Sigma$ and $\cq$}\label{sec_elstruc}

The present structure is still too rudimentary for a quantum (re)construction. We therefore need more. 

Firstly, as in GPTs we will need $O$ to be able to assign a single prior state to any pair of identical systems whenever he flips a biased coin in order to decide which of the two systems he will interrogate. (Equivalently, the preparation method could involve a biased coin toss, the outcome of which determines the preparation setting.) That is, $O$ will be permitted to build convex combinations of states.
\begin{Assumption}\label{assump2}
The state space $\Sigma$ of $S$ is a closed convex set.
\end{Assumption}
Closure of $\Sigma$ is assumed because points lying in the boundary of $\Sigma$ can be arbitrarily well approximated by states in the interior. {\it Perfect preparation} and {\it arbitrarily good preparation} are operationally indistinguishable and so adding the boundary does not change the physical predictions \cite{masanes2011derivation}.

Next, we need to define some elementary structure on $\cq$ in order to meaningfully speak about relations among questions (and answers). To this end, we shall establish additional structure on the pair $(\cq,\Sigma)$. We declared in assumption \ref{assump1} that there are to be no trivial questions in $\cq$ the answers to which would be independent of $S$'s preparation. We also insisted before that $O$ will distinguish the different ways of preparing a system $S$ -- and thus the different states he can assign -- by the particular answer statistics. We shall now strengthen these requirements, by asserting that %the answer statistics for {\it none} of the $Q_i\in\cq$ should be independent of $S$'s preparation and, even stronger, that 
there exists a distinguished state of `no information', corresponding to the situation that $O$'s prior knowledge about $S$ makes it equally likely for him that the answer to {\it any} $Q\in\cq$ is `yes' or `no'.\footnote{We emphasize that GPTs are more general by, in principle, permitting state spaces which do not contain such a distinguished state. However, most operationally interesting GPTs do possess such a state.} We note that this assumption is a restriction on $O$'s model for the \underline{pair} $(\cq,\Sigma)$.\footnote{For instance, we note that on account of the existence of binary POVMs with an inherent bias, such as $(E_0=2/3\cdot\mathds{1},E_1=1/3\cdot\mathds{1})$, the pair given by $\cq=\{\text{binary POVMs}\}$ and $\Sigma=\{\text{unit trace density matrices}\}$ cannot satisfy this condition because no unit trace density matrix exists which yields probability $1/2$ for $(E_0,E_1)$. This pair will therefore ultimately {\it not} be the solution of this reconstruction. However, a subset of $\{\text{binary POVMs}\}$ together with the full quantum state space will be the solution.}
\begin{Assumption}\label{assump3}
There exists a special state in $\Sigma$, called the \emph{state of no information}, which is given by $y_i=\f{1}{2}$, $\forall\,Q_i\in\cq$.% (if the probability that $S$ `is there' is $p=1$). 
\end{Assumption}
This state shall be the prior state that $O$ assigns to $S$ (or an ensemble thereof) in a belief updating whenever he has `no prior information'. For example, a distinct observer $O'$ may prepare a system $S$ and send it to $O$ in such a way that the latter knows only the associated $\Sigma$ but not the preparation setting. In this case, as a prior, $O$ will assign the state of no information to $S$, i.e.\ $y_i=\f{1}{2}$, $\forall\,Q_i\in\cq$. Similarly, there will exist a special preparation setting which is such that a multiple shot interrogation on an ensemble prepared in this setting will give totally random answers to $O$ such that he will assign the state of no information to the ensemble. But note that the state of any individual system {\it after} the interrogation of that system will {\it not} be the state of no information because, through the interrogation, $O$ will have acquired information about that specific system (see the discussion of the state `collapse' above).

Given the state of no information, we shall preliminarily quantify the amount of information $\alpha_i$ that the definite answer to any $Q_i\in\cq$ defines as one \texttt{bit}. Similarly, whenever $y_i=\f{1}{2}$, as in the state of no information, we shall say that $O$ has $\alpha_i=0$ \texttt{bits} of information about $Q_i$. In general, under the premise that information can neither be negative nor complex, $O$'s information about $Q_i$ should satisfy
\ba
0\,\texttt{bit}\leq\alpha_i\leq1\,\texttt{bit}.\label{infoineq}
\ea 
We shall not propose an explicit information measure $\alpha_i$ (as a function on $\Sigma$) here because this must follow from the rules on information acquisition postulated below and we shall indeed derive it therefrom later in section \ref{sec_infomeasure}. Until then it will be sufficient to work with this implicit notion of quantifying $O$'s information about any $Q_i$. %But we can already infer from this that $O$ will not be able to parametrize $S$'s state by means of the information measure $\alpha_i$, $\forall\,Q_i\in\cq$, because, e.g., $\alpha_i=1\,$\texttt{bit} only signifies maximal information about $Q_i$ but does not distinguish between whether the answer was `yes' or `no'.

But the questions $O$ can ask, and the information that the corresponding answers define, may not be independent. However, a priori the notion of independence of questions, in the sense of stochastic independence, is state dependent. For example, for a pair of qubits in quantum theory the questions $Q_1$, `is the spin of qubit 1 up in $x$-direction?', and $Q_2$, `is the spin of qubit 2 up in $x$-direction?', are stochastically independent relative to the completely mixed state, but fully dependent relative to an entangled state (with correlation in $x$-direction). As a result of the state dependence, the independence of questions may also be viewed as observer dependent. For instance, in quantum theory an observer $O'$ could send $O$ an entangled pure state (with correlation in $x$-direction) and refuse to tell $O$ which state it is. Relative to $O'$, $Q_1$ and $Q_2$ will be dependent, but they will be independent relative to $O$ because the latter will assign the completely mixed state to the pair prior to measurement.

Since the notion of independence of questions is state dependent, we need a {\it distinguished} state in order to unambiguously define it -- this is the second purpose of assumption \ref{assump3} and brings us in contact with a state update rule. Indeed, suppose $O$ acquires $S$ in the state of no information and poses the question $Q_i\in\cq$. By assumption \ref{assump1}, any $Q_i\in\cq$ is repeatable such that, upon receiving either the answer `yes' or `no', $O$ will assign $y_i=1$ or $y_i=0$, respectively, as the probability for $S$ giving the answer `yes' if posing $Q_i$ again. The posterior state update rule, which enables $O$ to update his information about a specific $S$ in compliance with the given answers, must respect this repeatability. It depends on the details of this update rule what the probabilities $y_j$ for all other $Q_j\in\cq$ is after having asked only $Q_i$. Rather than fully specifying at this stage what this rule is, we shall simply assume that $O$ employs one which is consistent. Whatever this posterior state update rule, we shall refer to $Q_i,Q_j\in\cq$ as
\begin{description}
\item[independent] if, after having asked $Q_i$ to $S$ in the state of no information, the probability $y_j=\f{1}{2}$. That is, if the answer to $Q_i$ relative to the state of no information tells $O$ `nothing' about the answer to $Q_j$. We shall require this relation to be symmetric, i.e., $Q_i$ is independent of $Q_j$ if and only if $Q_j$ is independent of $Q_i$.\footnote{It should be noted that, while this is true for projective measurements in quantum theory, it does {\it not} hold for generalized measurements. I thank Tobias Fritz for pointing this out.} This is equivalent to saying that $Q_i,Q_j$ are stochastically independent with respect to the state of no information, i.e.\ the joint probabilities factorize relative to the latter, $p(Q_i,Q_j)=y^*_{i}\cdot y^*_j=\f{1}{2}\cdot\f{1}{2}=\f{1}{4}$. Here $p(Q_i,Q_j)=p(Q_j,Q_i)$ denotes the probability that $Q_i$ and $Q_j$ give `yes' answers if asked in {sequence} to the {\it same} $S$ arriving in the state of no information and $y^*_{i},y^*_j$ are the individual `yes'-probabilities in this distinguished state.\footnote{We emphasize that this is a definition of {\it maximal} independence. For example, for a qubit two linearly independent directions $\vec{n}_1,\vec{n}_2$ in the Bloch sphere with $\vec{n_1}\cdot\vec{n}_2\neq0$ would define two spin observables $\vec{n}_1\cdot\vec{\sigma}$ and $\vec{n}_2\cdot\vec{\sigma}$ whose corresponding projectors would {\it not} be maximally independent according to this definition. The corresponding questions would be partially dependent (see below) because whenever the observer knows the answer to one question, the probability for a `yes' answer to the other would be distinct from $\f{1}{2}$.} 

\item[dependent] if, after having asked $Q_i$ to $S$ in the state of no information, the probability $y_j=0,1$. That is, if the answer to $Q_i$ relative to the state of no information implies also the answer to $Q_j$. Again, we require this relation to be symmetric. This is equivalent to saying that, relative to the state of no information, $Q_i,Q_j$ are stochastically fully dependent as either $p(Q_i,Q_j)=y^*_i=y^*_j=\f{1}{2}=p(\neg Q_i,\neg Q_j)$ or $p(Q_i,\neg Q_j)=y^*_i=y^*_j=\f{1}{2}=p(\neg Q_i,Q_j)$, where $\neg Q$ is the negation of $Q$.\footnote{The most trivial example of a pair of dependent questions are clearly $Q$ and $\neg Q$. But there exist less trivial ones. E.g., see theorem \ref{thm_qubit} below.} 
\item[partially dependent] if, after having asked $Q_i$ to $S$ in the state of no information, the probability $y_j\neq0,\f{1}{2},1$. That is, if the answer to $Q_i$ relative to the state of no information gives $O$ partial information about the answer to $Q_j$. Again, this relation is required to be symmetric. This is equivalent to saying that the joint probabilities $p(Q_i,Q_j)$ relative to the state of no information do {\it not} factorize and the answer to one question does not fully imply the answer to the other.
\end{description}
We shall henceforth assume that $\Sigma$ is such that two questions $Q_i,Q_j$ which are fully or partially dependent relative to the state of no information are also fully or partially dependent relative to any other state in $\Sigma$. These definitions of (in-)dependence depend a priori on the update rule through the joint probabilities. For our purposes it will turn out not to be necessary to fully specify this update rule. 

Next, we need a notion of compatibility and complementarity. $Q_i,Q_j$ are% which will also further constrain the update rule.
\begin{description}
\item[maximally compatible] if $O$ may know the answers to both $Q_i,Q_j$ simultaneously, i.e.\ if there exists a state in $\Sigma$ such that $y_i,y_j$ can be simultaneously $0$ or $1$.

\item[maximally complementary] if maximal information about the answer to $Q_i$ forbids $O$ to have {\it any} information about the answer to $Q_j$ at the same time (and vice versa). That is, every state in $\Sigma$ which features $y_i=0,1$ will necessarily have $y_j=\f{1}{2}$ (and vice versa). %In this case, the question $Q_{ij}$,  `are the answers to $Q_i,Q_j$ the same?', is an ill-defined question for $O$. Accordingly, we shall require, in addition, that in this case $O$ 

\end{description}
Consequently, complementary questions are independent, but independent questions are not necessarily complementary. %We emphasize that these relations are required to be symmetric. 
Finally, $Q_i,Q_j$ are partially compatible (or complementary) if maximal information about one precludes maximal, but permits non-maximal information about the other.

This permits us to further constrain the update rule. Firstly, maximal complementarity has an obvious consequence for an update rule. Secondly, we shall assume the following.
\begin{Assumption}\label{assump5b}
If $Q_i,Q_j$ are maximally compatible and independent then asking either shall not change $O$'s information about the other, regardless of $S$'s state. That is, asking $Q_i$ must leave $y_j$ invariant -- and vice versa for $i,j$ interchanged. 
\end{Assumption}
This is to prevent $O$ from losing or gaining information about some question by asking another question which is compatible with but independent of the first. These constraints on the update rule turn out to be sufficient for our reconstruction.

Assumption \ref{assump5b} leads to a first result which we shall use at times: it implies what sometimes is referred to as `Specker's principle' \cite{specker1960logik} (see also \cite{liang2011specker,cabello2013simple,cabello2012specker,chiribella2014measurement}). We shall call $Q_1,\ldots,Q_n\in\cq$ {\it mutually maximally compatible} if there exists a state of $S$ where the answers to all of $Q_1,\ldots,Q_n$ are known simultaneously to $O$. 

\begin{Theorem}\label{assump5} {\bf (`Specker's principle')}
If $Q_1,\ldots,Q_n\in\cq$ are pairwise maximally compatible and pairwise independent then they are also mutually maximally compatible.
\end{Theorem}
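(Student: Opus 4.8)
The plan is to induct on $n$, the case $n=2$ being the hypothesis (pairwise compatible). So suppose $Q_1,\ldots,Q_{n}$ are pairwise compatible and pairwise independent, and assume as induction hypothesis that any $n-1$ of them are mutually compatible; I must upgrade this to simultaneous knowability of all $n$. The key observation is that pairwise independence together with Assumption \ref{assump5b} has strong consequences: if $Q_i,Q_j$ are compatible (which, being independent, they are here — complementary questions would be independent but not compatible, so "pairwise independent" forces us to check we are in the compatible branch, which is given) then asking one does not disturb $O$'s information about the other.

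**Key steps.** First I would pick a state $\sigma$ in which the answers to $Q_1,\ldots,Q_{n-1}$ are all simultaneously known to $O$; such a state exists by the induction hypothesis. Second, I would have $O$ ask $Q_n$ to a system in state $\sigma$. Because $Q_n$ is pairwise compatible and pairwise independent with each of $Q_1,\ldots,Q_{n-1}$, Assumption \ref{assump5b} applies to each pair $(Q_n,Q_i)$: asking $Q_n$ leaves $y_i$ invariant for every $i=1,\ldots,n-1$. Hence in the posterior state $\sigma'$ the probabilities $y_1,\ldots,y_{n-1}$ are unchanged, so they are still all $0$ or $1$; meanwhile, by repeatability (Assumption \ref{assump1}), the answer just obtained to $Q_n$ fixes $y_n\in\{0,1\}$ in $\sigma'$. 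Therefore $\sigma'$ is a state in $\Sigma$ in which $O$ knows the answers to all of $Q_1,\ldots,Q_n$ simultaneously, which is precisely mutual compatibility of the full collection. The induction closes.

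**Main obstacle.** The delicate point is legitimizing the repeated application of Assumption \ref{assump5b}: that assumption concerns a single act of asking $Q_n$, and one must be sure that the invariance of $y_i$ holds jointly for all $i$ under the one interrogation by $Q_n$ — i.e. that "asking $Q_n$ leaves $y_i$ invariant" is a statement about the state-update map and not merely about marginals computed in isolation. Since the update rule is only partially specified, I would lean on the fact that Assumption \ref{assump5b} is phrased as "asking $Q_i$ must leave $y_j$ invariant \emph{regardless of $S$'s state}", which is exactly a property of the update map, so applying it once for the single question $Q_n$ and reading off all the $y_i$ afterward is legitimate. A secondary subtlety is ensuring the induction base and the bookkeeping of "compatible" versus merely "independent": I would note explicitly at the outset that pairwise compatibility is part of the hypothesis (not derived from independence), so every pair $(Q_i,Q_j)$ entering Assumption \ref{assump5b} genuinely satisfies its premise "maximally compatible and independent."
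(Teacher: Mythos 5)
Your proposal is correct and follows essentially the same route as the paper: the paper's proof is exactly your induction unrolled into a recursion, asking the questions one at a time and invoking repeatability (Assumption \ref{assump1}) to set the new probability to $0$ or $1$ while Assumption \ref{assump5b} guarantees the previously fixed $y_i$ are left untouched. The subtlety you flag about Assumption \ref{assump5b} acting on the state-update map (so that a single interrogation with $Q_n$ preserves all the $y_i$ simultaneously) is handled the same way, implicitly, in the paper's argument.
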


\begin{proof}
Let $Q_1,\ldots,Q_n\in\cq$ be pairwise independent and pairwise maximally compatible. On account of repeatability, after asking $S$ the question $Q_1$, $O$ will assign the probability $y_1$ to be either $0$ or $1$, depending on the answer. $O$ may subsequently ask $S$ the question $Q_2$ upon which the probability $y_2$ will likewise be either $0$ or $1$. According to assumption \ref{assump5b}, this will not change $y_1$ because, by assumption, both are maximally compatible and independent. By the same argument, upon next asking and receiving an answer to $Q_3$, $O$ will assign either $y_3=1$ or $y_3=0$, while both $y_1,y_2$ are unchanged. Repeating this argument recursively, it is clear that $O$ can thereby generate a state of $S$ in which all $y_i$, $i=1,\ldots,n$, are either $0$ or $1$ which represents a state of $S$ where $O$ knows the answers to all of $Q_1,\ldots,Q_n$.
\end{proof}

Notice that this property is satisfied for classical bit theory and in quantum theory for projective, however, not for generalized measurements.

\subsubsection{Logical compositions of questions and rules of inference}\label{sec_ri}

Given multiple questions, nothing, in principle, stops $O$ from composing them via logical connectives to all kinds of propositions which themselves constitute questions. The issue is, however, whether he will be able to get an answer from $S$ to such a question, i.e.\ whether it is implementable on $S$ and thus whether it could be contained in $\cq$ (see section \ref{sec_QandA}). For example, given any two questions $Q_i,Q_j\in\cq$, $O$ could consider the correlation question $Q_{ij}$,  `are the answers to $Q_i,Q_j$ the same?'. Do we allow $Q_{ij}$ to {\it also} be implementable on $S$? % and thus whether it could be contained in . %, i.e.\ whether $Q_{ij}$ is implementable on $S$. % or, more precisely, whether $O$ could tell whether $Q_{ij}$ is in $\cq$. 
Clearly, if $Q_i,Q_j$ are maximally compatible, then $Q_{ij}$ is implementable because $O$ can always find the answer to the latter by asking $Q_i,Q_j$. In this case, since $Q_i,Q_j$ are simultaneously defined relative to $O$, we can also write 
 \ba
 Q_{ij}:=Q_i\leftrightarrow Q_j,\label{correlation}
 \ea
 where $\leftrightarrow$ is the logical biconditional or (XNOR).\footnote{That is $Q_{ij}=$ `yes' if $Q_i=Q_j=$ `yes' or `no' and $Q_{ij}=$ `no' otherwise.} $Q_{ij}$ will then automatically be maximally compatible with both  $Q_i,Q_j$. %Since $\leftrightarrow$ is a symmetric logical connective, $Q_{ij}=Q_i\leftrightarrow Q_j=Q_j\leftrightarrow Q_i$, it can be easily checked that (\ref{correlation}) implies
% \ba
% Q_i=Q_{ij}\leftrightarrow Q_j\,\label{correlation2}
% \ea
% and, thanks to the associativity of $\leftrightarrow$,
% \ba
% Q_i\leftrightarrow Q_{ij} = Q_i \leftrightarrow (Q_i\leftrightarrow Q_j) = \underset{\text{`yes'}}{\underbrace{(Q_i \leftrightarrow Q_i)}}\leftrightarrow Q_j = Q_j\,,
% \ea
% such that $\leftrightarrow$ defines a closed relation on $Q_i,Q_j,Q_{ij}$. This has the following ramification: (1) For any compatible pair $Q_i,Q_j$, the `correlation' operation $\leftrightarrow$ gives rise to precisely one additional question $Q_{ij}$, and (2) if $Q_i,Q_j$ are also independent, the set $Q_i,Q_j,Q_{ij}$ will be {\it pairwise} independent because neither $Q_i$ nor $Q_j$ alone can determine $Q_{ij}$ relative to the state of no information, otherwise, together with the determined $Q_{ij}$, they would determine each other via (\ref{correlation2}) -- in contradiction with their independence. That is, $Q_{ij}$ is independent of $Q_i$ and $Q_j$ alone and since we assume independence to be a symmetric relation, it must also be true the other way around. %Similarly, $Q_{ij}$ alone cannot determine both $Q_i,Q_j$ relative to the state of no information (if it determined one it would also determine the other via (\ref{correlation2})).

Let us now contrast this with the situation in which $Q_i,Q_j$ are (partially or maximally) complementary. The structure introduced thus far does not preclude the `correlation question' $Q_{ij}$ to also be implementable and, ultimately, contained in $\cq$ even if $Q_i,Q_j$ are complementary. In fact, it is {\it not} possible to decide {\it without a theoretical model for describing and interpreting} $(\cq,\Sigma)$ whether $Q_{ij}$ is implementable on $S$ %and, ultimately, whether $Q_{ij}\in\cq$, 
in the case that $Q_i,Q_j$ are at least partially complementary. What, however, is unambiguously clear is that {\it for all practical purposes} $Q_{ij}$ will not be implementable in this case. %contained in $\cq$ in this case. 
Namely, $Q_{ij}$ would be a statement about the correlation of two complementary questions $Q_i,Q_j$ and $O$ could say `the answers are the same', but he can never directly test them individually and see that they are actually `the same' at the same time. Indeed, $Q_i,Q_j,Q_{ij}$ would need to form a {\it mutually} (partially or maximally) 
complementary set. Thus, from a purely operational perspective alone, $O$ could never tell, even in principle, that $Q_{ij}$ was implementable; %\in\cq$; 
he simply cannot get an answer to $Q_{ij}$ which he could interpret, based on operationally accessible information alone, as a correlation of the complementary $Q_i,Q_j$.
%, Thus, for all practical purposes, $O$ would have to conclude that $Q_{ij}$ was not contained in $\cq$

%But in this case, $O$ could {\it not} find the answer to $Q_{ij}$ by directly asking $Q_i,Q_j$ and, in fact, $Q_i,Q_j,Q_{ij}$ would need to form a {\it mutually} (partially or maximally) 
%complementary set. %such that (\ref{correlation}) would not be applicable. 
It thus depends on the model for $(\cq,\Sigma)$ whether $Q_{ij}$ is implementable when $Q_i,Q_j$ are complementary. Since it is impossible for $O$ to settle this question operationally, it would require a model which employs hidden and operationally inaccessible ontic information and which is devoid of complementarity at an ontic level to conclude that $Q_{ij}\in\cq$ in this case. For example, Spekkens' elegant toy model \cite{spekkens2007evidence} and the `black boxes' of \cite{Paterek:2010fk} employ ontic states, satisfy the structure established thus far (at least at the epistemic level and modulo restrictions on the notion of convexity), and explicitly feature such a triple of questions.\footnote{The reader familiar with Spekkens' toy model \cite{spekkens2007evidence} will recall the simplest (epistemic) 1-bit system which has four ontic states `1', `2', `3' and `4'. An epistemic restriction forbids an observer to know the ontic state. Instead, the epistemic states of maximal knowledge correspond to either of the following three questions (and their negations) 
\begin{eqnarray}
&&Q_1: ``1\vee 2" \q\includegraphics[width=15mm]{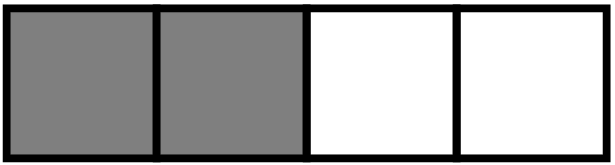}\,,\q\q
Q_2: ``2\vee3"\q\includegraphics[width=15mm]{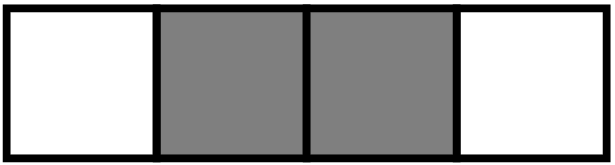}\,,\q\q\notag\\
&&Q_3: ``2\vee4"\q\includegraphics[width=15mm]{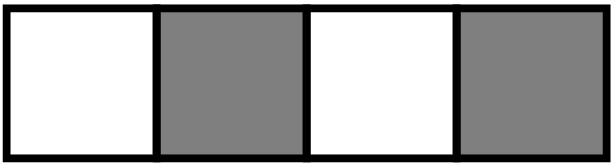}\,,   \notag
\end{eqnarray}
where $\vee$ is to be read as `or'. $Q_1,Q_2,Q_3$ are mutually complementary and it can be easily checked that $Q_3$ coincides with the `correlation' $Q_{12}$ of $Q_1$ and $Q_2$: it gives `yes' when the (ontic) answers to $Q_1,Q_2$ are equal and `no' otherwise. This relation is cyclic: $Q_1$ is also the `correlation' $Q_{23}$ of $Q_2,Q_3$ and $Q_2$ is the `correlation' $Q_{13}$ of $Q_1,Q_3$. Accordingly, there are three complementary questions for this 1-bit system, in principle the correct number for a qubit. %However, due to the `correlations' between complementary questions, this `hidden variable' toy model violates assumption \ref{assump4} and is thus ruled out in the sequel. 
Later in section \ref{sec_3D}, we will develop a different approach, {\it without} ontic states, in order to reason for the three-dimensionality of the Bloch-sphere. } However, such a model is in conflict with our premise of following a purely operational approach which only speaks about information that $O$ has access to via direct interrogation. % and makes no reference whatsoever to an ontic state. 
The model by means of which $O$ interprets the answers that he gets from $S$ -- and, hence, the information he can acquire about $S$ -- shall be based entirely on operational statements that $O$ can, in principle, check through interrogation, and not on propositions that require hidden and inaccessible ontic information. Accordingly, $O$'s theoretical model for $\cq$ should {\it not} contain (a question which is logically equivalent to) $Q_{ij}$ whenever $Q_i,Q_j$ are at least partially complementary. %But, in this case, $Q_{ij}$ is a statement about the correlation of two complementary questions $Q_i,Q_j$ and $O$ could say `the answers are the same', but he can never directly test them individually and see that they are actually `the same' at the same time. From a purely operational perspective this would be meaningless since $O$ could never tell, even in principle, whether $Q_{ij}\in\cq$, Thus, for all practical purposes, $O$ would have to conclude that $Q_{ij}$ was not contained in $\cq$; he simply cannot get an answer to it which he could interpret as a correlation.%within a theoretical model, the existence of an underlying ontic state to which he has no access. But this shall be verboten! %Hence, we shall disallow $O$ to have access to such `correlations' and require the following.
%{\color{red}{one could object that $O$, while knowing the answer to $Q_{ij}$, would never be able to actually say that it means $Q_i,Q_j$ are correlated and so he would also never conclude the existence of an ontic state. the statement i make here seems to be an ontic statement and we did not really want to do that. $O$ could only build a theoretical ontic model, as rob's and then conclude within that model that there must be an ontic state, but one could do this even for qt}}

Of course, the `correlation' (represented by the XNOR connective $\leftrightarrow$) is only one of many possible logical connectives. More generally, we shall require that $O$'s model for $\cq$ does {not} contain {\it any} logical connectives of complementary questions. He can only logically connect questions which are maximally compatible -- and thus are simultaneously defined with respect to him -- such that he could meaningfully write down a truth table for the questions to be connected and the connective question. If he cannot connect questions, he can also not ask for the connective.

 \begin{Assumption}\label{assump4}
 Let $Q_i,Q_j\in\cq$ and $*$ be a logical connective. According to $O$'s theoretical model for $\cq$, $Q_{i}*Q_j$ is implementable if and only if $Q_i,Q_j$ are maximally compatible.  
% $O$'s theoretical model for $\cq$ can contain a question which (according to this model) is logically equivalent to $Q_{i}*Q_j$ only if $Q_i,Q_j$ are maximally compatible. %, where $*$ is a logical connective. %More generally, $O$ may only employ classical rules of inference for logical connectives of compatible questions.
 \end{Assumption}
We recall from section \ref{sec_QandA} that $O$'s theoretical model for $\cq$ only contains implementable questions (but not necessarily all implementable questions). 

Since $*$ is only applied to connect maximally compatible questions which have also operationally simultaneous truth values relative to $O$, we shall allow $*$ to be any of the 16 binary connectives (or binary Boolean functions) $\neg,\vee,\wedge,\leftrightarrow,\ldots$ of classical Boolean or propositional logic.
   
We stress that assumption \ref{assump4} is a restriction on $O$'s model for $\cq$; this assumption clearly does not rule out that there may exist other (namely, ontic) models which also yield a consistent description of $O$'s experiences with the systems he is interrogating, yet which ascribe $Q_i*Q_j$ to be (logically equivalent to a question) contained in $\cq$ regardless of whether $Q_i,Q_j$ are compatible or not. However, we shall not worry about such models here. %Instead, our aim is to show that a purely operational model will be sufficient to ultimately derive quantum theory.

We note that assumption \ref{assump4} does not only apply to logical connectives of two questions, but arbitrarily many. For example, if $Q_i*Q_j$ is implementable (and contained in $\cq$), %\in\cq$, 
according to $O$'s model, then $(Q_i*Q_j)*Q_k$ is implementable too %\in\cq$ 
iff $Q_i*Q_j$ and $Q_k\in\cq$ are compatible, and so on. It is also important to note that assumption \ref{assump4} is a statement about which questions can be {\it directly} connected logically. It does {\it not} entail that $O$, according to this model, can only form meaningful logical expressions containing exclusively questions which are mutually maximally compatible. Namely, questions which are compatible might be logically further decomposable into other questions for which different compatibility relations hold. For instance, it might be that $Q_k$ in the expression $(Q_i*Q_j)*Q_k$ is only compatible with $Q_i*Q_j$, yet not with $Q_i$ or $Q_j$ individually. In this case, there is no harm in nevertheless forming the expression $(Q_i*Q_j)*Q_k$ even though, say, $Q_j$ and $Q_k$ might be complementary such that $O$ cannot write down a truth table with all three $Q_i,Q_j,Q_k$ separately. However, it is clear that in this case there is a hierarchy in which the logical connectives are to be executed. More specifically, the connective $*$ in $(Q_i*Q_j)*Q_k$ cannot be associative since $Q_i*(Q_j*Q_k)$ is not implementable %contained in $\cq$ 
according to $O$'s model because $Q_j,Q_k$ are, by assumption, complementary and can therefore not be directly connected. That is to say, the left $*$ in $(Q_i*Q_j)*Q_k$ must be executed first and $Q_k$ can only be logically connected with the result. Thus, assumption \ref{assump4} does admit logical expressions containing also mutually complementary questions as long as the latter are not {\it directly} connected through a logical connective. What assumption \ref{assump4} entails is that the ordering of the execution of the various logical connectives in a composition has to respect a hierarchy which ensures that at every step two compatible subexpressions are connected. This will become important later in the reconstruction where we shall encounter concrete such examples.

%Now that we have clarified under which conditions we permit $O$ to logically connect questions, we must also impose a few constraints on 
This brings us to the rules of inference by means of which $O$ may %perform logical operations on the questions, i.e.\ 
transform or evaluate logical expressions of questions and derive logical identities, e.g.\ to establish possible logical relations among various question outcomes. 
The state update rules need, in particular, respect these rules of inference. Again, the rules of inference which $O$ may employ depend on his theoretical model for $(\cq,\Sigma)$. For instance, in ontic models all questions will typically have simultaneous values, in contrast to purely operational ones, which will require different rules of inference.
%Logically, nothing prevents $O$ from employing all kinds of rules of inference. %; for instance, $O$ could choose to work solely with classical rules of inference regardless of the compatibility and complementarity relations of the propositions. 
%The question, however, is of course, whether the rules of inference are compatible with his experiments. 

In line with our operational premise, we shall require that the rules of inference of $O$'s model must not only be consistent with $O$'s experiences with the systems he is interrogating, but also testable. In particular, any logical identities derived from these rules must be testable through interrogations. %For non-testable expressions there shall be no rules of inference. 
It is thus appropriate to call them {\it operational rules of inference}.

Classical rules of inference require that the questions or propositions in a logical expression have truth values simultaneously. In $O$'s model any truth value must be operational such that only mutually maximally compatible questions have simultaneous meaning. Nothing stops $O$ from applying classical rules to such kind of questions. We shall thus require that it is appropriate for $O$ to employ classical rules of inference, according to Boolean logic, -- and only those --
for any logical expression or subexpression which only contains questions that are mutually maximally compatible. In all other cases (which still must abide by assumption \ref{assump4}), any possible rule of inference transforming the composition of questions will directly involve mutually complementary questions whose truth values, however, have no simultaneous operational meaning to $O$. %If $O$ could apply classical rules of inference to such expressions as well, he would find propositions which are not operationally testable. 
Accordingly, classical rules will be operationally inappropriate in such cases and we shall demand more precisely the following.

\begin{Assumption}\label{assump4b}
%$O$ is only allowed to form a logical connective $Q_{i}*Q_j$ of $Q_i,Q_j\in\cq$ if the latter are (maximally) compatible, where $*$ is a logical connective. More generally, 
$O$'s model for $\cq$ allows him to apply exclusively classical rules of inference, %-- and only these --, 
according to the rules of Boolean logic,  %exclusively 
to (transform or evaluate) \emph{any} logical expression (or subexpression of a larger expression) which is composed purely of mutually maximally compatible questions. This holds regardless of whether the mutually compatible questions can be logically decomposed into other questions which feature different compatibility relations. In all other cases, no classical rules of inference (and thus no classical logical identities) are valid.
\end{Assumption}
In consequence, $O$ can take any set of mutually compatible questions and treat them, with classical logic, as a Boolean algebra. The assumption states what is possible for compositions of compatible questions and what is {\it not} possible for compositions involving also complementary questions. This will become useful later in section \ref{sec_bell} where we shall also see what {\it is} possible for compositions of complementary questions.

We emphasize that assumption \ref{assump4b} by itself does {\it not} severely constrain the nature of any conceivable `hidden variable model' which could also consistently describe $O$'s world. However, in conjunction with two of the subsequent quantum principles, it will rule out {\it local} `hidden variables' in section \ref{sec_bell}.

\subsubsection{Parametrization of $S$'s state and tomography}

Now that we have a notion of independence on $\cq$ we can say more about the parametrization and thus representation of $S$'s state relative to $O$. Not all $Q_i\in\cq$ will be necessary to describe the state; the pairwise independent questions shall be the fundamental building blocks of the landscape $\cl$ of inference theories. 

Suppose there is a maximal set $\cq_M=\{Q_1,\ldots,Q_D\in\cq\}$ of $D$ pairwise independent (but not necessarily compatible) questions, such that no further question $Q\in\cq\setminus\cq_M$ exists which is pairwise independent from all members of $\cq_M$ too. Then, every other $Q\in\cq\setminus\cq_M$ is either (i) dependent on exactly one $Q_j\in\cq_M$ and independent of all other $\cq_M\ni Q_l\neq Q_j$ (if $Q$ was dependent on $Q_j\in\cq_M$ and partially dependent on $\cq_M\ni Q_l\neq Q_j$, then $Q_j,Q_l$ could not be independent), (ii) partially dependent on some and independent of the other questions in $\cq_M$, or (iii) partially dependent on all $Q_j\in\cq_M$. While $O$ will not be able to infer maximal information about the answers to questions of cases (ii) and (iii) from his information about individual (or even subsets of) members of $\cq_M$ alone, the question is whether his information about the full set $\cq_M$ will be sufficient to do so. 

\begin{Definition}{\bf(Informational Completeness)}
A maximal set $\cq_M=\{Q_1,\ldots,Q_D\in\cq\}$ of pairwise independent questions is said to be \emph{informationally complete} if $O$'s information about the questions in $\cq_M$ determines his information about all other $Q\in\cq\setminus\cq_M$ in such a way that the probabilities $y_i$ which $O$ assigns to every $Q_i\in\cq_M$ are sufficient in order for him to compute the probabilities $y_j\,\forall\,Q_j\in\cq$ for all preparations of $S$. In this case, the set of probabilities $\{y_i\}_{i=1}^D$ of the $Q_i\in\cq_M$ parametrizes the state that $O$ assigns to $S$ and thereby yields a complete description of the state space $\Sigma$. We shall call $D$ the dimension of $\Sigma$.
\end{Definition}

We emphasize that the dimension $D$ of $\Sigma$ will ultimately {\it not} be the Hilbert space dimension in quantum theory, but the dimension of the set of density matrices.

If $\cq_M$ was not informationally complete, $O$ would require further questions that are partially dependent on at least some of the elements in $\cq_M$ in order to fully describe the system $S$ and its state. This situation cannot be precluded, given the structure we have devised so far. However, we deem it undesirable, given that we would like to employ pairwise independent questions as building blocks for system descriptions. We shall therefore require that no more independent information about $S$ can be learned from any question in addition to a maximal set $\cq_M$.

%However, we would like all questions $O$ needs in order to fully describe $S$ to be of equal status.

\begin{Assumption}\label{assump6}
Every maximal set $\cq_M$ of pairwise independent questions is informationally complete.
%The question set $\cq$ of every system $S$ admits an informationally complete set $\cq_M$ of pairwise independent questions.
\end{Assumption}

There may exist (even continuously) many such informationally complete sets of questions on $\cq$ which, at this stage, may still be either discrete or continuous. However, their dimensions are equal.

\begin{lem}
The dimensions of all maximal sets $\cq_M$ on $\cq$ are equal (if finite).\end{lem}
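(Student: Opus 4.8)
The plan is to show that any two informationally complete sets $\cq_M=\{Q_1,\ldots,Q_D\}$ and $\cq_M'=\{Q_1',\ldots,Q_{D'}'\}$ of pairwise independent questions have the same cardinality, so that $D$ is well-defined. First I would use informational completeness of $\cq_M$: by definition, the probabilities $\{y_i\}_{i=1}^D$ assigned to the members of $\cq_M$ determine the probabilities $y_j$ for \emph{all} $Q_j\in\cq$ and \emph{all} preparations of $S$. In particular they determine the probabilities $\{y_k'\}_{k=1}^{D'}$ attached to the members of $\cq_M'$. Hence there is a well-defined map $\Sigma\to[0,1]^{D'}$ that factors through the parametrization $\Sigma\to[0,1]^D$ given by $\cq_M$; symmetrically, informational completeness of $\cq_M'$ gives a factorization the other way. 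Thus the state space $\Sigma$, as a subset of $[0,1]^D$ via $\cq_M$ and as a subset of $[0,1]^{D'}$ via $\cq_M'$, carries two faithful coordinate descriptions related by mutually inverse maps, so the two parametrizations are in bijection.

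The core step is then to argue that this forces $D=D'$. The natural route is to observe that $\Sigma$ is convex (Assumption~\ref{assump2}) and contains the state of no information together with, for each $Q_i\in\cq_M$, states in which $y_i\in\{0,1\}$ while the independence relations fix the remaining coordinates; running the argument in the proof of Theorem~\ref{assump5}-style reasoning (repeatability plus the update rule) one produces, along each independent ``direction'' $Q_i$, a genuine one-parameter family of states obtained by convex mixtures between the $y_i=0$ and $y_i=1$ configurations. Because the $Q_i$ are pairwise independent, varying the outcome statistics of $Q_i$ relative to the state of no information does not move $y_j$ for $j\neq i$, so these $D$ families are ``coordinate-independent'': no nontrivial affine relation holds among the coordinate functions $y_1,\ldots,y_D$ on $\Sigma$. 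Hence the affine hull of $\Sigma$ in the $\cq_M$-parametrization is genuinely $D$-dimensional, and likewise $D'$-dimensional in the $\cq_M'$-parametrization. Since the two parametrizations are related by mutually inverse maps taking $\Sigma$ to $\Sigma$, the affine dimension of the affine hull of $\Sigma$ is an invariant, giving $D=D'$.

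I would also need the maximality clause: no $Q\in\cq\setminus\cq_M$ is pairwise independent of every member of $\cq_M$. This is what guarantees $D$ cannot be enlarged within a single choice of $\cq_M$, and combined with the dimension-invariance above it shows every maximal pairwise-independent set has exactly $D$ elements --- if some maximal set had $D''<D$ elements it would, by maximality, still have to be informationally complete (any further independent question is excluded by maximality, and partially dependent questions add no independent information by Assumption~\ref{assump6}), contradicting $D''=D$.

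The main obstacle I anticipate is making the ``coordinate-independence'' / affine-independence step rigorous without having committed to an explicit update rule or an explicit information measure: one has to extract, purely from Assumptions~\ref{assump1}--\ref{assump6} and the notion of independence relative to the state of no information, that the coordinate functions $y_i$ on $\Sigma$ are affinely independent and that the whole of $\Sigma$ is recovered (not just spanned) by the $D$ probabilities. The finiteness hypothesis in the statement is presumably there precisely so that one may speak of a finite affine dimension and avoid subtleties with infinite-dimensional convex sets; I would flag that and otherwise keep the argument at the level of ``two faithful finite parametrizations related by mutually inverse maps must have equal dimension.''
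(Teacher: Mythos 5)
Your proposal follows essentially the same route as the paper's proof: you exhibit, for each $Q_i\in\cq_M$, states that deviate from the state of no information only in the $i$-th probability (using pairwise independence), conclude that the $D$ coordinates are non-redundant so $\Sigma$ is a genuinely $D$-dimensional convex region in its $\cq_M$-parametrization, and then argue that the two coordinate descriptions of the same $\Sigma$, being related by mutually inverse maps, must have equal dimension. Your version is a somewhat more explicit rendering (affine hull, mutual determination of probabilities) of what the paper compresses into "the two descriptions must be isomorphic," so it is correct and matches the paper's argument.
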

\begin{proof}
Consider any $\cq_M$. On account of pairwise independence, for each $Q_i\in\cq_M$ there must exist a state such that $O$ only knows the answer to this $Q_i$ with certainty, but nothing about any other $Q_{j\neq i}\in\cq_M$, i.e.\ $y_i=0$ or $1$ and all other $y_{j\neq i}=1/2$. (Namely, $O$ could have asked $Q_i$ to $S$ prepared in the state of no information.) But since also in the state of no information $y_{j\neq i}=1/2$, $y_i$ cannot in general be computed from the $y_{j\neq i}$. Hence, given an informationally complete set $\cq_M$, all associated probabilities $\{y_i\}_{i=1}^D$ are necessary to parametrize the full state space $\Sigma$. 

Now $\Sigma$ is a convex set (see assumption \ref{assump2}). Given the lack of redundancy in $\{y_i\}$, a given informationally complete $\cq_M$ with $D$ elements therefore encodes $\Sigma$ as a $D$-dimensional convex subregion of $\mathbb{R}^D$. Suppose there is a second $\cq_M'$ with $D'$ elements. In terms of $\cq_M'$, $\Sigma$ would be described as a $D'$-dimensional convex subregion of $\mathbb{R}^{D'}$. But clearly, these two descriptions of $\Sigma$ must be isomorphic which, for $D$ finite, is only possible if $D\equiv D'$.
\end{proof}

Finiteness of $D$ will later be established in section \ref{sec_qstructure} with the help of the principles.

%Notice that the dimension of every such maximal $\cq_M$ on $\cq$ must be $D$ (if finite). Otherwise, some $\cq_M$ would define more independent information than another $\cq_M'$ and thereby contain questions which must be pairwise independent of the ones in $\cq'_M$ such that the latter could not be maximal. 

Any such $\cq_M$ establishes a {\it question reference frame} on $\cq$ (see also \cite{brukner2009information}) and thereby also a `coordinate system' on $\Sigma$. In particular, in order to do {\it state tomography} with a multiple shot interrogation, as outlined in section \ref{sec_bayes}, it will be sufficient for $O$ to interrogate an ensemble of identically prepared $S$ with the questions within a given $\cq_M$ only. Given a specific $\cq_M$, there are now three equivalent ways for $O$ to describe $S$'s state: he could represent it by either the $D$-dimensional {\it yes-} or {\it no-vector}
\ba
\!\!\!\!\vec{y}_{O\rightarrow S}:=\left(\begin{array}{c}y_1 \\y_2 \\\vdots \\y_D\end{array}\right),\q\q \vec{n}_{O\rightarrow S}:=\left(\begin{array}{c}n_1 \\n_2 \\\vdots \\n_D\end{array}\right),
\ea
of probabilities $y_i$ and $n_i$, $i=1,\ldots,D$, that the answers to question $Q_i\in\cq_M$ are `yes' and `no', respectively. That is,
\ba
\vec{y}_{O\rightarrow S}+\vec{n}_{O\rightarrow S}=p\,\vec{1},\label{norm}
\ea
where $\vec{1}$ is a $D$-dimensional vector with a $1$ in each of its entries. Here we have introduced a new parameter: $p$ is the probability that $S$ is present at all. (For example, the method of preparation could be such that $O$ cannot always tell with certainty when a system is spit out by the preparation device.) This probability will rescale all $y_i$ and $n_i$ simultaneously. However, it only becomes relevant in subsection \ref{subsec_time} and otherwise is usually taken to be $p=1$. Evidently, the assignment of which answer to $Q_i$ is `yes' and which is `no' is arbitrary, but any consistent such assignment is fine for us. 

But he could also represent the state redundantly as a $2D$-dimensional vector% on $\Sigma\oplus\Sigma$
\ba
\vec{P}_{O\rightarrow S}=\left(\begin{array}{c}\vec{y} \\\vec{n}\end{array}\right)\label{2dvec}
\ea
which will turn out to be convenient especially when $p<1$. We shall write a state with a subscript $O\rightarrow S$ to emphasize that it is the state of $O$'s information about $S$.

Lastly, this structure also puts us into the position to specify $O$'s total amount of information about $S$. Clearly, the total amount of information must be a function of the state. Let $\cq_M=\{Q_1,\ldots,Q_D\}$ be an informationally complete set of questions in $\cq$. Given that these questions carry the entire information $O$ may know about $S$, we define the total information $I_{O\rightarrow S}$ as the sum of $O$'s information about the $Q_i\in\cq_M$, as measured by the $\alpha_i$ (\ref{infoineq}):
\ba
I_{O\rightarrow S}\left(\vec{y}_{O\rightarrow S}\right):=\sum_{i=1}^D \,\alpha_i.\label{totalinfo}
\ea
(We imagine $O$ as an agent who can write down results on a piece of paper and add these up.) The specific relation between $\alpha_i$ and $\vec{y}_{O\rightarrow S}$ will be derived later in section \ref{sec_infomeasure}; it will {\it not} be the Shannon entropy which, as discussed in section \ref{sec_shannon}, describes average information gains in repeated experiments rather than the information content in the state $\vec{y}_{O\rightarrow S}$.

\subsubsection{Composite systems}

For later purpose we need to clarify the notion of a composite system. Since we are pursuing a purely operational approach, the notion of a composition of systems must be defined in terms of the information accessible to $O$ through interrogation. %That is, the notion of composition of systems must be defined in terms of the questions $O$ can ask the systems. 
$O$ should, in principle, be able to tell a composite system apart into its constituents. 
Accordingly, we require that the set of questions which can be posed to the composite system contains all questions about the individual subsystems and that the remaining questions are literally composed of these individual questions or iteratively composed of compositions of them.\footnote{For the GPT specialists, we emphasize that this definition has nothing to do with the usual requirement of {\it local tomography} \cite{masanes2011derivation,Mueller:2012ai,barrett2007information,Dakic:2009bh,Masanes:2011kx,de2012deriving,Masanes:2012uq} in GPTs. To give a concrete example, we shall see later that two-level systems over real Hilbert spaces (rebits) satisfy this definition while violating local tomography.}

\begin{Definition}\label{def_comp}{\bf(Composite System)}
Let $\cq_A,\cq_B$ denote the full question sets associated to $S_A,S_B$, respectively. Two systems $S_A,S_B$ are said to form a \emph{composite system} $S_{AB}$ if any $Q_a\in\cq_A$ is maximally compatible with and independent of any $Q_b\in\cq_B$ and if
\ba
\cq_{AB}=\cq_A\cup\cq_B\cup\tilde{\cq}_{AB},%\{Q_A*Q_B\big|\,\,Q_{A,B}\in\cq_{A,B},\,\,*\,\, \text{some logical connective}\}
\label{composite}
\ea
where $\tilde{\cq}_{AB}$ only contains composite questions which are iterative compositions, $Q_a\,*_{\tiny1}\,Q_b, Q_a\,*_2(Q_{a'}*_3Q_b), (Q_a*_4Q_b)*_5Q_{b'}, (Q_a*_6Q_b)*_7(Q_{a'}*_8Q_{b'}),\ldots$, via some logical connectives $*_1,*_2,*_3,\cdots$, of individual questions $Q_a,Q_{a'},\ldots\in\cq_A$ about $S_A$ and $Q_b,Q_{b'},\ldots\in\cq_B$ about $S_B$. For an $N$-partite system, we use this definition recursively.
 \end{Definition}
 
Of course, thanks to assumption \ref{assump4}, $O$ can only directly compose questions with a logical connective $*$ if they are compatible. %This will restrict the number of composite questions in ($O$'s model for) $\cq_{AB}$. 
 
There are further repercussions: let $\cq_{M_A},\cq_{M_B}$ be informationally complete sets for $S_A,S_B$, respectively. Then an informationally complete set $\cq_{M_{AB}}$ for a composite $S_{AB}$ can be formed iteratively by joining (in a set union sense) $\cq_{M_A},\cq_{M_B}$ and adding to it a maximal pairwise independent set of questions which are the logical connectives of members of $\cq_{M_A}$ with elements of $\cq_{M_B}$ and/or iterative compositions of such compositions and possibly questions from $\cq_{M_A},\cq_{M_B}$. In section \ref{sec_connective} below, we shall determine which logical connectives $*$ we may employ to build up $\cq_{M_{AB}}$ from $\cq_{M_A},\cq_{M_B}$. % which are compatible with our other assumptions, and in particular assumption \ref{assump3}, will be determined . 

\subsubsection{Time evolution of $S$'s state}\label{subsec_time}

We shall assume that $O$ has access to a clock and begin with a definition:
\begin{description}
\item[\bf static states:] A state $\vec{P}_{O\rightarrow S}$ which is {\it constant} in time -- corresponding to the situation that $O$ will always assign the same probability to each of his question outcomes -- is called {\it static}.
\end{description}
If all the states $O$ would assign to any system $S$ were {\it static} -- according to his information -- $O$'s world would be a rather boring place. In order not to let $O$ die of boredom, we allow the probability vector $\vec{P}_{O\rightarrow S}$ to change in time. (But we require that $\cq$ and $\Sigma$ are {\it time independent}.) We shall now make use of operational reasoning, to briefly consider how $\vec{P}_{O\rightarrow S}$ evolves under time evolution. The following argument bears some analogy to an argument typically employed in the GPT framework \cite{Hardy:2001jk,barrett2007information,masanes2011derivation,Mueller:2012ai,Dakic:2009bh,Masanes:2012uq} concerning convex mixtures and measurements, however, is distinct in nature as we instead consider states and their mixtures under time evolution. %This will also require distinct assumptions.

Let $O$ have access to two {\it identical}\footnote{That is, both $S_1$ and $S_2$ carry the same $\cq$ and $\Sigma$.} non-interacting systems $S_1$ and $S_2$. $O$'s information about $S_1$ and $S_2$ can be different such that they are allowed to be in distinct states $\vec{y}_{O\rightarrow S_1}$ and $\vec{y}_{O\rightarrow S_2}$ relative to $O$. %We assume that $O$ has access to an apparatus (a preparation device) that can spit out identically prepared systems and that he has performed sufficiently many experiments to assign stable probability vectors to both $S_1$ and $S_2$. 
Now let $O$ perform a (biased) coin flip which yields `heads' with probability $\lambda$ and `tails' with probability $(1-\lambda)$.\footnote{Equivalently, $O$ may use another system to which he has assigned a stable state vector by repeated interrogations on identically prepared systems. Given an arbitrary elementary question $Q$, he can use the probability that the answer is `yes' as $\lambda$ and the probability that the outcome is `no' as $(1-\lambda)$ and thus use $Q$ as the `coin'.} %If the reader is disturbed by the use of such a `classical' coin in the reconstruction of quantum theory, $O$ may---instead of the coin---equally well use an elementary $1\,\texttt{bit}$ carrying quantum system to which $O$ has already assigned a stable probability vector $\vec{y}$ by sufficiently many experiments (on identically prepared elementary systems). Given an arbitrary elementary question $Q$ that he can ask the elementary system, he will be in the position to assign a probability to the `yes' and `no' outcomes, respectively. He can then simply use the probability that the answer is `yes' as $\lambda$ and the probability that the outcome is `no' as $(1-\lambda)$ and thus use $Q$ as the `coin'.} 
Given that $S_1$ and $S_2$ are identical, $O$ can ask the same questions to both systems. If the coin flip yields `heads', $O$ will interrogate $S_1$, if it yields `tails' $O$ will interrogate $S_2$. 
\begin{eqnarray}
\psfrag{O}{$O$}
\psfrag{s1}{$S_1$}
\psfrag{s2}{$S_2$}
\psfrag{l}{$\lambda$}
\psfrag{n}{$1-\lambda$}
\includegraphics[scale=.3]{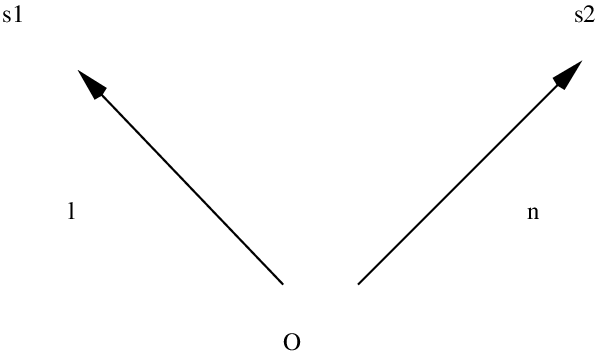}   \notag
\end{eqnarray}

In particular, before tossing the coin, say at time $t_1$, the probability he assigns to receiving a `yes' answer to question $Q_i$ (asked to either $S_1$ or $S_2$ depending on the outcome of the coin flip) is simply the convex sum $y^{12}_i(t_1)=\lambda \,y^1_i(t_1)+(1-\lambda)\,y^2_i(t_1)$. ($O$ is allowed to build this convex combination thanks to assumption \ref{assump2}.) But this holds at {\it any} time $t$ before $O$ tosses the coin (which $O$ can also choose never to do).\footnote{We assume $\lambda$ to be constant in time.} As will become clear shortly, it is convenient to consider the states $\vec{P}_{O\rightarrow S}$ with $0\leq p\leq1$ for now. We shall return to the yes-vector later in section \ref{sec_time}. 
We then have for all $t$
\ba
\!\!\!\!\!\!\!\!\vec{P}_{O\rightarrow S_{12}}(t)=\lambda\,\vec{P}_{O\rightarrow S_1}(t)+(1-\lambda)\,\vec{P}_{O\rightarrow S_2}(t).\label{convex}
\ea
This convex combination is the state of an `effective' system $S_{12}$ (identical to $S_1,S_2$) describing $O$'s information about the coin flip scenario. Note that $S_{12}$ is {\it not} a composite system according to definition \ref{def_comp} because $\cq_{12}=\cq_1=\cq_2$.

Denote by $T_k$ the time evolution map of the state of system $S_k$, $k=1,2,12$, from $t=t_1$ to $t=t_2$. Under the assumption that $\vec{P}_{O\rightarrow S_{1,2}}(t)$ evolve independently of each other, given that $S_1$ and $S_2$ do not interact, equation (\ref{convex}) implies
%\begin{widetext}
\ba
\! T_{12}[\vec{P}_{O\rightarrow S_{12}}(t_1)]&=&T_{12}[\lambda\,\vec{P}_{O\rightarrow S_1}(t_1)\nn\\
&&\q\,\,\q+(1-\lambda)\,\vec{P}_{O\rightarrow S_2}(t_1)]\nn\\
&=&\lambda\,T_1[\vec{P}_{O\rightarrow S_1}(t_1)]\label{T}\\&&\,\,\q\q+(1-\lambda)\,T_2[\vec{P}_{O\rightarrow S_2}(t_1)].\nn
\ea
%\end{widetext}
All three states $\vec{P}_{O\rightarrow S_{k}}$, $k=1,2,12$, are elements of the same $\Sigma$. As such, we assume the following.% that every $T_k$ can be applied to {\it any} element of $\Sigma$. Furthermore, $S_1,S_2$ are to evolve under the same time evolution $T:=T_1=T_2$ in this coin flip scenario which ought to be {\it independent of the initial state} of either $S_1$ or $S_2$. Motivated by these considerations, we make another non-trivial assumption.

\begin{Assumption}\label{assump7}
Every time evolution map can act on {\it any} state in $\Sigma$.% and is state independent.
%The time evolution map $T_{12}$ acting on $\vec{P}_{O\rightarrow S_{12}}$ is the \emph{same} as the time evolutions $T_{1,2}$ acting on the systems $S_{1,2}$. We thus simply write $T$ for the time evolution map.
\end{Assumption}

Next, we restrict to the situation where $O$ exposes both $S_1,S_2$ to evolve under the same time evolution $T:=T_1=T_2$. Thanks to assumption \ref{assump7}, (\ref{T}) becomes in this case
\ba
T[\vec{P}_{O\rightarrow S_{12}}(t_1)]&=&\lambda\,T[\vec{P}_{O\rightarrow S_1}(t_1)]\nn\\&&\q\q\q+(1-\lambda)\,T[\vec{P}_{O\rightarrow S_2}(t_1)]\nn
\ea
and $T$ is \emph{convex linear}. This is clear since $O$ may, in particular, prepare $S_1,S_2$ in identical states $\vec{P}_{O\rightarrow S_{1}}(t_1) = \vec{P}_{O\rightarrow S_{2}}(t_1)=\vec{P}_{O\rightarrow S_{12}}(t_1)$. This special case, when inserted into (\ref{T}), yields $T_{12}=T$. 

This scenario can be easily generalized to arbitrarily many identical systems. Remarkably, it can be shown that convex linearity of operations on probability vectors, in fact, implies full linearity of the operation \cite{Hardy:2001jk,barrett2007information}. Hence, the time evolution of the state must be {\it linear}:
\ba
\vec{P}_{O\rightarrow S}(t_2)&=&T[\vec{P}_{O\rightarrow S}(t_1)]\nn\\&=&A(t_1,t_2)\,\vec{P}_{O\rightarrow S}(t_1)+\vec{V}(t_1,t_2),\nn
\ea
where $A(t_1,t_2)$ is a $2D\times2D$ matrix and $\vec{V}$ some $2D$-dimensional vector. Given our assumption above that $O$'s world is not {\it static}, $A(t_1,t_2)$ will generally be a non-trivial matrix. Requiring that the last relation also holds at $t_1=t_2$ for all initial states, it follows that $\vec{V}(t,t)=0$ and $A(t,t)=\mathbb{1}$. Demanding further that the special state $\vec{P}_{O\rightarrow S}=\vec{0}$, corresponding to $p=0$ and the absence of a system, is invariant under time evolution (such that no system or any information is created out of `nothing'), finally yields $\vec{V}\equiv\vec{0}$, $\forall\,t_1,t_2$.\footnote{A similar argument would {\it not} work for the evolution of $\vec{y}_{O\rightarrow S}$ because, in principle, $\vec{n}_{O\rightarrow S}\neq\vec{0}$ is possible even if $\vec{y}_{O\rightarrow S}=\vec{0}$ (where $\vec{0}$ is a $D$-dimensional zero vector). This is the reason why here it is more convenient to work with $\vec{P}_{O\rightarrow S}$, which contains the information about both $\vec{y}_{O\rightarrow S},\vec{n}_{O\rightarrow S}$, rather than $\vec{y}_{O\rightarrow S}$ alone. In fact, we shall see later in section \ref{sec_time} that the evolution of the latter {\it does} involve an affine $D$-dimensional vector $\vec{V}'\neq\vec{0}$.} Given that $\vec{P}_{O\rightarrow S}(t)$ is a probability vector for all $t$ and (\ref{norm}) must always hold, $A(t_1,t_2)$ must be a {\it nonnegative} (real) matrix which is {\it stochastic} in any pair of components $i$ and $i+D$ of $\vec{P}_{O\rightarrow S}$.

Consequently, the elementary and natural assumption \ref{assump7} rules out non-linear and state dependent time evolution, such as in Weinberg's non-linear extension of quantum mechanics \cite{Weinberg:1989cm}, and thereby also eliminates the many operational problems that arise with it (e.g., superluminal signaling) \cite{gisin1990weinberg,Polchinski:1990py,Bennett:2009rt}.\footnote{It should be emphasized that assumption \ref{assump7} is also tacitly made in the GPT framework \cite{Hardy:2001jk,barrett2007information,masanes2011derivation,Mueller:2012ai,Dakic:2009bh,Masanes:2012uq} for any kind of transformations on states, such that the present setup  is in this regard not less potent.}%, such that it is not necessarily operationally desirable to include it in $\cl$, one could alternatively drop assumption \ref{assump7} here to keep $\cl$ more general and promote it to a postulate for quantum theory. However, we include it at this point since it is a very elementary condition on the structure of $\Sigma$ and we thereby take it as a basic property of $\cl$.

For later purpose, we impose a further condition on the evolution of $S$'s state. 
\begin{Assumption}\label{assump8}{\bf(Temporal Translation Invariance)}
There exist no distinguished instants of time in $O$'s world such that $O$ is free to set any instant he desires as the instant $t=0$. Time evolution, as perceived by $O$, is therefore (temporally) translation invariant $A(t_1,t_2)=A(t_2-t_1)$.
\end{Assumption}
Since the time evolution matrix $A$ can only depend on the duration elapsed, but not on the particular instant of time, we can collect the above results in the simple form:
\ba
\!\!\!\!\!\!\!\!\!\!\!\vec{P}_{O\rightarrow S}(t_2)=A(\Delta t)\,\vec{P}_{O\rightarrow S}(t_1),\q\!\Delta t=t_2\!-\!t_1.\label{linear}
\ea

We do not yet have sufficient evidence to conclude that a given time evolution will be described by a continuous one-parameter matrix group because we neither know (i) that $A$ is invertible for all $\Delta t$, (ii) that the evolution is actually continuous, nor (iii) that every composition of evolution matrices is again an evolution matrix. We shall return to this question in section \ref{sec_time} with the help of the set of principles for quantum theory and also defer the discussion of the time evolution of $\vec{y}_{O\rightarrow S}$ until then.

For now we note, however, firstly, that any $A(\Delta t)$ acting on $\vec{P}_{O\rightarrow S}$ implies a unique map $T_{\Delta t}(\vec{y}_{O\rightarrow S})$ acting on the yes-vector thanks to (\ref{norm}, \ref{2dvec}). Secondly, a multiplicity of time evolutions of $S$ is possible, depending on the physical circumstances (interactions) to which $O$ may subject $S$. The set of all possible time evolutions   which $O$ shall henceforth be able to implement will be denoted by $\ct$. This set need not contain all physically possible time evolutions. Indeed, upon imposing the quantum principles, $\ct$ will become the set of unitaries (rather than of arbitrary completely positive maps). Clearly, $\ct$ is part of $O$'s model for describing $S$; his model is thus a triple $(\cq,\Sigma,\ct)$.

\subsubsection{The theory landscape $\cl$}\label{sec_cl}

The landscape $\cl$ of theories describing $O$'s acquisition of information about a physical system $S$ is now the set of all theories which comply with the structure and assumptions established in this section.

%More formally, let $\cs$ be the set of equivalence classes of all systems in $O$'s world about which he can acquire information by interrogation. Two systems are equivalent if they admit the same pair $(\cq,\Sigma)$. 
In the sequel, we shall restrict $O$'s attention solely to composite systems of $N\in\mathbb{N}$ generalized bits (gbits), where a single gbit is characterized by the fact that $O$ can maximally know the answer to a single question at once such that it can carry at most one \texttt{bit} of information. %For $N$ gbits he may similarly know the answers to up to $N$ questions at once. 
Every inference theory specifies for every system of $N$ gbits a triple $(\cq_N,\Sigma_N,\ct_N)$. We shall be concerned with the landscape of gbit theories $\cl_{\rm gbit}$ which contains all gbit theories, satisfying assumptions \ref{assump1}--\ref{assump8}. To name concrete examples at this stage, $\cl_{\rm gbit}$ contains, among a continuum of other theories, classical bit, rebit and qubit theory for all $N\in\mathbb{N}$. We shall see more of this later, but for now we summarize their characteristics for $N=1$, 
\begin{description}
\item[classical bit theory] gives $\cq_1=\{Q,\neg Q\}$, $\Sigma_1\simeq[0,1]$ (for normalized states) with extremal points corresponding to $Q=$ `yes' and `no', respectively, and $\ct_1\simeq\mathbb{Z}_2$.\footnote{There are precisely two states of maximal information that $O$ can assign to a classical bit
\ba
\vec{P}_{O\rightarrow S}=\left(\begin{array}{c}1 \\0\end{array}\right),\q\q\q\vec{P}_{O\rightarrow S}= \left(\begin{array}{c}0 \\1\end{array}\right),\nn
\ea
corresponding to `yes' and `no' answers to $Q$. Time evolution is described by the abelian group $\mathbb{Z}_2$, given by
\ba
\mathds{1}=\left(\begin{array}{cc}1 & 0 \\0 & 1\end{array}\right),\q\q\q\q P=\left(\begin{array}{cc}0 & 1 \\1 & 0\end{array}\right),\nn
\ea
where $P$ swaps the two states. For classical bit theory, the permitted time evolution is therefore {\it discrete}.}

\item[rebit theory] (two-level systems on {\it real} Hilbert spaces) yields $\cq_1\simeq S^1$ and every two maximally complementary questions are informationally complete. $\Sigma_1\simeq D^2$ (for normalized states) and $\ct_1\simeq\rm{SO}(2)$.

\item[qubit theory] has $\cq_1\simeq S^2$ and every triple of mutually maximally complementary questions are informationally complete. $\Sigma_1\simeq B^3$ (for normalized states) and $\ct_1\simeq\rm{SO}(3)$.
\end{description}

With all the assumptions made along the way, the theory landscape devised here is perhaps not quite as general as the commonly employed landscape of generalized probability theories \cite{Hardy:2001jk,barrett2007information,masanes2011derivation,Mueller:2012ai,Dakic:2009bh,Masanes:2012uq,pfister2013information,Barnum:2014fk,Barnum:2010uq}. In particular, GPTs easily handle arbitrary finite dimensional systems, while this remains to be done within the present framework. Nevertheless, the new landscape $\cl_{\rm gbit}$ %of inference theories provides new tools for generalizing and characterizing quantum theory and highlights a novel conceptual perspective on understanding quantum phenomena by positioning the process of information acquisition center stage. Furthermore, as we shall see, $\cl_{\rm gbit}$
 is large enough and provides new tools for a non-trivial and instructive (re)construction of qubit quantum theory that approaches the latter from a conceptually and technically new angle. To facilitate future generalizations of this work, we have also attempted to spotlight the assumptions underlying $\cl_{\rm gbit}$ as clearly as possible in this section.

\section{Principles for the quantum theory of qubits as rules on information acquisition}\label{sec_postulates}

We shall now use the landscape $\cl_{\rm gbit}$ of gbit theories and formulate the principles for the quantum theory of qubit systems on it as rules on an observer's information acquisition. These principles constitute a set of `coordinates' of qubit quantum theory on $\cl_{\rm gbit}$.

\subsection{The rules}\label{sec_principles}

As we are reconstructing a technically and empirically well-established theory, we are in the fortunate position to avail ourselves of empirical evidence and earlier ideas on characterizing quantum theory in order to motivate a set of basic postulates. However, the ultimate justification for these postulates will be their success in singling out qubit quantum theory within the inference theory landscape $\cl_{\rm gbit}$ which will be completed in \cite{hw}. As `coordinates' on a theory space these principles will not be unique and one could find other equivalent sets. (As usual, at least many roads lead to Rome.) But we shall take the below ones as a first working set which restricts the informational relation between $O$ and $S$, where $S$ will be a composite system (c.f.\ definition \ref{def_comp}) of $N\in\mathbb{N}$ gbits. Each principle will be first expressed as an intuitive and colloquial statement, followed by its mathematical meaning within $\cl_{\rm gbit}$.

We take it as an empirical fact that there exist physical systems about which only a limited amount of information can be known at any one moment of time. The standard quantum example is a spin-$\f{1}{2}$ particle about which an experimenter may only know its polarization in a given spatial direction, but nothing independent of that; the polarization `up' or `down' corresponds to one \texttt{bit} of information. But there is also a typical classical example, namely a ball which may be located in either of two identical boxes, the definite position `left' or `right' corresponding to one \texttt{bit} of information. Informationally, these two examples incarnate the most elementary of systems, a gbit, supporting maximally just one proposition at a time. But, clearly, there are more complicated systems supporting other limited amounts of information. We shall take this simple observation and raise the existence of an information limit to the level of a principle.

More precisely, in analogy to von Weizs\"acker's `ur-theory' \cite{weizsaeckerbook,gornitz2003introduction,lyre1995quantum}, we shall restrict $O$'s world to be a world of elementary alternatives which thereby consists only of systems which can be decomposed into elementary gbits.\footnote{However, in contrast to \cite{weizsaeckerbook,gornitz2003introduction,lyre1995quantum}, we shall be much less ambitious here and will not attempt to deduce the dimension of space or space-time symmetry groups from systems of elementary alternatives. Recent developments \cite{Mueller:2012pc,Masanes:2011kx,Hoehn:2014vua,Dakic:2013fk}, on the other hand, unravel a deep relation between (a) simple conditions on operations with systems carrying finite information (e.g., communication with physical systems) and (b) the dimension and symmetry group of the ambient space or space-time.} All physical quantities in $O$'s world are to be finite such that he can record his information about them on a finite register. We wish to characterize the composite systems in $O$'s world according to the finite limit of $N$\,\texttt{bits} of information he can maximally inquire about them.   
\begin{pr}\label{lim}{\bf(Limited Information)}
\emph{``The observer $O$ can acquire maximally $N\in\mathbb{N}$ {\it independent} \texttt{bits} of information about the system $S$ at any moment of time.''} \\
There exists a maximal set $Q_i$, $i=1,\ldots,N$, of $N$ mutually independent and maximally compatible questions in $\cq_N$ and no subset in $\cq_N$ can contain more than $N$ questions with that property.
\end{pr}
In other words, $O$ can ask maximally $N$ independent questions\footnote{Obviously, combinations (e.g., `correlations') of the compatible $Q_i$ will define other \texttt{bits} of information which, however, will be dependent once the $Q_1,\ldots,Q_N$ are asked (we shall return to this in detail in section \ref{sec_qstructure}).} at a time to $S$. Accordingly, this rule immediately implies that $O$ can distinguish maximally $2^N$ states of $S$ in a single shot interrogation because there will be $2^N$ possible answers to the $Q_1,\ldots,Q_N$.  Since $\cq_N,\Sigma_N$ are intrinsic to $S$, also the maximal amount of $N$ \texttt{bits} that each $S$ can carry must be intrinsic to it and thus be observer independent. In fact, this rule can be regarded as a defining property of {\it all} inference theories in the gbit landscape $\cl_{\rm gbit}$ and can thus clearly not distinguish between classical bits, rebits, qubits, etc.

We take another empirical fact and elevate it to a fundamental principle: despite the limited information accessible to an experimenter at any moment of time, there always exists {\it additional} independent information that she may learn about the observed system at other times. This is Bohr's complementarity principle \cite{Bohr}. Consider, for instance, the prototypical quantum physics experiment: Young's double slit experiment. The experimenter can choose whether to obtain which-way-information or an interference pattern, but not both; the complete knowledge of whether the particle went through the left or right slit is at the expense of total ignorance about the information pattern and vice versa. (For an informational discussion of the particle-wave duality in Young's double slit experiment and a Mach-Zehnder interferometer, see also \cite{Brukner:2002kx,Brukner:1999qf}.) Similarly, in a Stern-Gerlach experiment an experimenter may determine the polarization of a spin-$\f{1}{2}$ particle in $x$-direction, but will be entirely oblivious about the polarization in $y$- and $z$-direction. A subsequent measurement of the spin of the same particle in $y$-direction will render her previous information about the polarization in $x$-direction obsolete and keep her ignorant about the spin in $z$-direction and so on. That is, systems empirically admit many more independent questions than they are able to answer at a time -- thanks to the information limit. We shall now return to the relation between $O$ and $S$ and accordingly stipulate that complementarity exists in $O$'s world, however, we shall say nothing more about {\it how much} complementary information may exist.

\begin{pr}\label{unlim}{\bf(Complementarity)}
\emph{``The observer $O$ can always get up to $N$ \emph{new} independent \texttt{bits} of information about the system $S$. But whenever $O$ asks $S$ a new question, he experiences no net loss in his total information about $S$.''}\\
There exists another maximal set $Q_i'$, $i=1,\ldots,N$, of $N$ mutually independent and maximally compatible questions in $\cq_N$ such that $Q'_i,Q_i$ are maximally complementary and $Q'_i,Q_{j\neq i}$ are maximally compatible and independent.\footnote{Alternatively, one could formulate the technical part of this rule as follows: \emph{In the $N=1$ case of a single gbit there exists, for every $Q_1\in\cq_1$, another $Q_1'\in\cq_1$ such that $Q_1,Q_1'$ are maximally complementary.} Definition \ref{def_comp} would then immediately imply that a composite system of $N$ gbits features two sets of questions, $Q_i$, $i=1,\ldots,N$, and $Q_j'$, $j=1,\ldots,N$, with the following properties: $Q_i,Q_i'$ are questions about the elementary gbit labeled by $i$, all $Q_i$ are maximally independent and maximally compatible, all $Q_j'$ are maximally independent and maximally compatible and any pair $Q_i,Q_{j\neq i}'$ is maximally independent and compatible and every pair $Q_i,Q_i'$ corresponding to the same gbit is maximally complementary. This is a priori not equivalent to the current formulation of the technical part of rule \ref{unlim} in the main text as the latter does not necessarily refer to questions about individual subsystems of a composite system. However, the alternative formulation would also be enough to get the correct structure of the informationally complete sets for $N$ qubits and $N$ rebits below. While the alternative formulation is simpler, we keep the other one as it is also published in this form in the companion article \cite{hw}.}
\end{pr}
That is, after asking $S$ a set of $N$ independent elementary propositions $Q_i$, $i=1,\ldots,N$, in a single shot interrogation, $O$ can pose a new $(N+1)$th elementary question $Q'_1$ to $S$. Since $O$ can only know $N$ independent \texttt{bits} of information about $S$ at a time and asking a new question does not lead to a {\it net} loss of information, the single \texttt{bit} of his previous information about $Q_1$ must have become obsolete upon learning the answer to $Q'_1$, while $O$'s total information about $S$ is still $N$ \texttt{bits}. The rule allows $O$ to perform the same procedure until he replaces his $N$ old by $N$ {\it new} \texttt{bits} of information about $S$. %In other words, a system carrying $N$ bits of information is made up of $N$ independent qubits.
As $O$'s information about $S$ has changed, the state of $S$ relative to $O$ will necessarily experience a `collapse' whenever he asks a new complementary question. As a result, it is the observer who decides {\it which} information (e.g., spin in $x$- or $y$-direction) he will obtain about the system by asking specific questions. But clearly $O$ will have no influence on what the answer to these questions will be and any answer will come at the price of total ignorance about complementary questions.

A few further explanations concerning this complementarity rule are in place. First of all, this rule clearly rules out classical bit theory. Secondly, the postulate asserts the existence of maximally complementary questions in $O$'s world, however, makes no statement about whether partially independent or partially complementary questions may exist too. Thirdly, the peculiar requirement that every question of rule \ref{unlim} is complementary to exactly one from rule \ref{lim} is chosen such that each subsystem of a composite system features complementarity. For example, consider the $N=1$ case, corresponding to a single gbit, for which rules \ref{lim} and \ref{unlim} entail a complementary pair $Q_1,Q_1'$. This complementarity {\it per} gbit generalizes to arbitrary $N$ since every $Q_i$ in rule \ref{lim} and every $Q'_j$ in rule \ref{unlim} may correspond to one of $N$ gbits. More complicated complementarity relations arising from rules \ref{lim} and \ref{unlim} will be discussed in section \ref{sec_qstructure}.%Accordingly, in this case, we also wish the complementarity of rule \ref{unlim} to occur {\it per} gbit.\footnote{For instance, for $N=2$, we wish to avoid a set $\cq$ given by
%\begin{eqnarray}
%\psfrag{q1}{$Q_1$}
%\psfrag{q2}{$Q_2$}
%\psfrag{q11}{$Q'_1$}
%\psfrag{q22}{$Q'_2$}
%\includegraphics[scale=.2]{}   \notag
%\end{eqnarray}
%where solid edges denote compatibility and dashed edges denote `semi-complementarity' and $Q_1,Q_2,Q_1',Q_2'$ are mutually independent such that $O$ would have to ask all four in order to do tomography. Let $O$ have asked $Q_1,Q_2$ such that he has acquired 2 \texttt{bits} about $S$. Next, $O$ can ask $Q'_1$ which gives a new \texttt{bit} of information. Since he cannot know more than two independent \texttt{bits} and will not be allowed to experience a net loss of information, by asking $Q'_1$, $O$ must have `lost' $\f{1}{2}$ \texttt{bit} about each of $Q_1,Q_2$. Finally, $O$ can ask $Q'_2$. By similar reasoning, he would then know 2 \texttt{bits} corresponding to $Q'_1,Q'_2$, while his original information about $Q_1,Q_2$ will have become obsolete. This $\cq$ structure would not correspond to a system composed of two $N=1$ systems because the mere existence of complementarity would imply for $N=1$ (regardless of the precise formulation of rules \ref{lim} and \ref{unlim}) that complementary $Q_1,Q'_1$ exist but these do not occur in this toy model $\cq$. } In fact, this can be regarded as a characterizing property of composite systems in the presence of complementarity. However, we shall see later in section \ref{sec_qstructure} that more complicated complementarity relations can arise from this as a consequence of correlations.

Notice that the complementarity rule implies the existence of a notion of {\it `superposition'} -- {\it even in states of maximal information}. For example, take $N=1$ and let $O$ know the answer to $Q_1$ with certainty, $y_1=1$, such that his information about $S$ saturates the limit of one \texttt{bit}. This will leave him oblivious about the complementary $Q'_1$ such that he would have to assign $y'_1=\f{1}{2}$. The state of information he has about $S$ can be interpreted as being in a `superposition' of the $Q'_1$ alternatives `yes' and `no', but in this case with `equal weight' because both alternatives are equally likely according to $O$'s knowledge.\footnote{We shall later see that, even in a state of maximal information of $N$ \texttt{bits}, $O$ may possibly have only partial or incomplete knowledge about the outcomes of {\it any} question in an informationally complete set $\cq_{M_N}$ of pairwise independent questions (c.f.\ assumption \ref{assump6}). In this case, $O$'s information about all questions in $\cq_{M_N}$ will be in a general state of superposition because their corresponding probabilities cannot all be $\f{1}{2}$, otherwise it would be the state of no information.  }
 We emphasize that the necessary presence of superpositions of elementary alternatives, as perceived by $O$, is a consequence of the information limit and complementarity.

We would like to stress that rules \ref{lim} and \ref{unlim} are conceptually motivated by related proposals which have been put forward first by Rovelli within the context of relational quantum mechanics \cite{Rovelli:1995fv} and later, independently, by Brukner and Zeilinger within attempts to understand the structure of quantum theory via limited information \cite{zeilinger1999foundational,Brukner:ys,Brukner:1999qf,Brukner:2002kx,brukner2009information}. However, in order to complete these ideas to a full reconstruction of qubit quantum theory, we have to impose further rules. 

Specifically, rules \ref{lim} and \ref{unlim} say nothing about what happens in-between interrogations.
%Without further ado, we shall directly state the third rule: 
We require that $O$ shall not gain or lose information about an otherwise non-interacting $S$ {\it without} asking questions.
\begin{pr}\label{pres}{\bf(Information Preservation)}
\emph{``The total amount of information $O$ has about (an otherwise non-interacting) $S$ is preserved in-between interrogations.''}\\
$I_{O\rightarrow S}$ is \emph{constant} in time in-between interrogations for (an otherwise non-interacting) $S$.
\end{pr}
Correspondingly, $I_{O\rightarrow S}$ is a `{\it conserved charge}' of time evolution; this is a simple observation which will become extremely useful later. In fact, notice that rule \ref{pres} could also be viewed as defining the notion of non-interacting systems (as perceived by $O$).

Finally, we come to the last rule of this manuscript which is about time evolution. Empirical evidence suggests, at least to a good approximation, that the time evolution of an experimenter's `catalogue of knowledge' about an observed system is {\it continuous} -- in-between measurements. More precisely, the specific probabilistic statements an experimenter can make about the outcomes of measurements, i.e.\ his {\it actual} information about the systems, change continuously in time. We promote this to a further postulate for $O$'s world, however, with a further requirement. 

We note that the state space and the time evolutions are interdependent as every legal time evolution must map a legal state to a legal state, i.e.\ $T_{\Delta t}(\vec{y}_{O\rightarrow S})\in\Sigma_N$, $\forall\,T_{\Delta t}\in\ct_N$ and $\forall\,\vec{y}_{O\rightarrow S}\in\Sigma_N$. Accordingly, what is the set of legal states depends on what is the set of legal time evolutions -- and vice versa. We would like the pair $(\Sigma_N,\ct_N)$ to be as `big' as compatibility with the other rules allows in order to equip the other rules with as general a validity as possible. But there are multiple ways of `maximizing' the pair. Namely, the interdependence implies that the larger the number of states, the tighter the constraints on the set of time evolutions -- and vice versa. We required before that $O$'s world should not be `boring' and therefore feature a non-trivial time evolution of states. We shall now sharpen this requirement: it is more interesting for $O$ to live in a world which `maximizes' the number of possible ways in which any given state of $S$ can change in time, rather than the number of states which it can be in relative to $O$. We shall thus require that {\it any} consistent, continuous  time evolution of $O$'s `catalogue of knowledge' about $S$ in-between interrogations is physically realizable. Since different ways of evolving the state of a system correspond to different interactions (e.g., among the members of a composite system) we thereby maximize the set of possible interactions among systems. $O$'s world is a maximally interactive place!

\begin{pr}\label{time}{\bf(Time Evolution)}
\emph{``$O$'s `catalogue of knowledge' about $S$ evolves \emph{continuously} in time in-between interrogations and every consistent such evolution is physically realizable.''}\\
$\ct_N$ is the maximal set of transformations $T_{\Delta t}$ on states %such that %, for any \emph{fixed} state $\vec{y}_{O\rightarrow S}$, $T_{\Delta t}(\vec{y}_{O\rightarrow S})$ 
which is \emph{continuous} in $\Delta t$ and compatible with rules \ref{lim}-\ref{pres} and the structure of $\cl_{\rm gbit}$.
\end{pr}

%Obviously, the particular distribution corresponds to $O$'s actual information about $S$.
Of course, during an interrogation, $\vec{y}_{O\rightarrow S}$ may change discontinuously, i.e.\ `collapse', on account of complementarity. As innocent as the requirement of continuity of time evolution in rule \ref{time} appears, it turns out to be absolutely crucial in order to single out a unique information measure $\alpha_i(\vec{y}_{O\rightarrow S})$. Notice also that classical bit theory violates this postulate due to its discrete time evolution group (see section \ref{sec_cl}).

It is now our task to verify what the triples of $(\cq_N,\Sigma_N,\ct_N)$ for each $N\in\mathbb{N}$ are which obey rules \ref{lim}--\ref{time}. Remarkably, it turns out that these four rules cannot distinguish between real and complex quantum theory, i.e.\ between real and complex Hilbert spaces. But the state spaces and the orthogonal and unitary groups of time evolutions of rebit and qubit quantum theory are, in fact, the only `solutions' within $\cl_{\rm gbit}$ to these rules. In the sequel of this manuscript we shall employ rules \ref{lim}--\ref{time} in order to develop the necessary tools, within $\cl_{\rm gbit}$, for eventually proving the following more precise claim in \cite{hw,hw2}. In fact, as a simple example of the newly developed tools, we shall already prove the claim for $N=1$ at the end of this article.
\begin{claim}
$\cl_{\rm gbit}$ contains only two solutions for the pair $(\Sigma_N,\ct_N)$ which are compatible with rules \ref{lim}--\ref{time}:
\begin{description}
\item[1.\ rebit quantum theory \cite{hw2},] where $\Sigma_N$ coincides with the space of $2^N\times2^N$ density matrices\footnote{A real density matrix is a symmetric, positive semidefinite matrix on $(\mathbb{R}^{2})^{\otimes N}$.} over $(\mathbb{R}^2)^{\otimes N}$ and $\ct_N$ is $\rm{PSO}(2^N)$.

\item[2.\ qubit quantum theory \cite{hw},] where $\Sigma_N$ coincides with the space of $2^N\times2^N$ density matrices over $(\mathbb{C}^2)^{\otimes N}$ and $\ct_N$ is $\rm{PSU}(2^N)$. Furthermore, states evolve unitarily according to the von Neumann evolution equation.
\end{description}
\end{claim}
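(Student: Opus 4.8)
The plan is first to extract from Principles \ref{lim}--\ref{time} the combinatorics of $\cq_N$, then a unique information measure, and finally the state space and time-evolution group; the full $N$-qubit statement is assembled in \cite{hw,hw2}, while for $N=1$ the chain of arguments closes within this paper (Theorem \ref{thm_qubit}).

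First I would fix the question structure. Principles \ref{lim} and \ref{unlim} supply a maximal mutually compatible and mutually independent set $Q_1,\dots,Q_N$ together with complementary partners $Q_1',\dots,Q_N'$; Specker's principle (Theorem \ref{assump5}), the composition rules (Definition \ref{def_comp}, Assumption \ref{assump4}) and repeatability then force every further $Q\in\cq_N$ to be a logical connective of compatible questions, i.e.\ a ``correlation question'' or an iterate of such. Counting a maximal pairwise-independent (informationally complete) subset $\cq_M$ fixes the dimension $D$ of $\Sigma_N$ (sections \ref{sec_qstructure}--\ref{sec_n1}); for $N=1$ this is the content of the three-dimensionality discussion of section \ref{sec_3D}, which must exclude $D=1$ (by Complementarity) and $D\ge 4$, leaving exactly $D=2$ and $D=3$.

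Next I would derive the information measure. By Information Preservation (Principle \ref{pres}), $I_{O\rightarrow S}=\sum_i\alpha_i$ is a conserved charge of the linear, stochastic evolution $\vec P_{O\rightarrow S}(t_2)=A(\Delta t)\,\vec P_{O\rightarrow S}(t_1)$ established in section \ref{subsec_time}, and Time Evolution (Principle \ref{time}) requires $A$ to act continuously and with the maximal admissible variety. I expect the main obstacle to lie here: one must show that these requirements single out the quadratic measure $\alpha_i\propto(2y_i-1)^2$, the point being that for any other single-question measure the level set $I_{O\rightarrow S}=N$ fails to be symmetric enough to carry a ``maximal'' continuous group of conserving linear maps, whereas the quadratic form has level sets with the maximal isometry group. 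This is the delicate analysis of section \ref{sec_infomeasure}, and continuity of the evolution (rather than mere non-triviality) is exactly what eliminates the non-quadratic alternatives.

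With the quadratic $I_{O\rightarrow S}$ in hand, the constraint $I_{O\rightarrow S}\le N$ together with convexity (Assumption \ref{assump2}), the normalisation (\ref{norm}) and the reconstructed correlation structure identifies $\Sigma_N$ with the set of $2^N\times 2^N$ positive semidefinite unit-trace matrices over $\mathbb{R}$ or $\mathbb{C}$ --- for $N=1$ directly with the Bloch disc $D^2$ or ball $B^3$. The time-evolution group $\ct_N$ is then a connected subgroup of the isometries of this state space fixing $\vec P_{O\rightarrow S}=\vec 0$, and the maximality clause of Principle \ref{time} forces it to be the \emph{full} such group: $\mathrm{SO}(2)$ or $\mathrm{SO}(3)$ for $N=1$ and $\mathrm{PSO}(2^N)$ or $\mathrm{PSU}(2^N)$ in general, with temporal translation invariance (Assumption \ref{assump8}) promoting it to a one-parameter group whose generator yields the von Neumann equation. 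The genuinely hard and ultimately unavoidable point is that none of Principles \ref{lim}--\ref{time} breaks the real/complex degeneracy --- equivalently, for $N=1$ both $D=2$ and $D=3$ survive --- so the claim as stated is the sharpest conclusion these four principles permit, and eliminating the rebit branch requires the additional postulate of \cite{hw}.
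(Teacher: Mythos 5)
Your overall architecture coincides with the paper's own strategy (question structure and $D_1\in\{2,3\}$ from principles \ref{lim}--\ref{unlim}, reversibility and a continuous one-parameter evolution group from principles \ref{pres}--\ref{time}, then the quadratic measure, then the $N=1$ Bloch disc/ball with the general-$N$ case deferred to \cite{hw,hw2}), so the disagreement is localized. The genuine gap is in your justification of the quadratic measure. You argue that only the quadratic $\alpha$ has level sets ``symmetric enough to carry a maximal continuous group of conserving linear maps.'' That is not the paper's argument and would not go through as stated: the maximality clause of principle \ref{time} plays no role in section \ref{sec_infomeasure}; all that is used there is that a \emph{given} evolution forms a continuous one-parameter matrix group, and conservation under a single such group is far too weak to force quadraticity of an arbitrary $\alpha$ satisfying conditions (i)--(vi) of section \ref{sec_elcond}. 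The paper first imposes an additional operational requirement -- that in the coin-flip/convex-mixture scenario the relative information-loss factor $f(\lambda)$ be state independent -- which gives $f(\lambda_1\lambda_2)=f(\lambda_1)f(\lambda_2)$, hence $f=\lambda^p$ and $\alpha(2y_i-1)=k\,(2y_i-1)^p$, so that $I_{O\rightarrow S}$ is a $p$-norm of the Bloch vector; only then does continuity bite, via Aaronson's result \cite{Aaronson:fk} that the sole $p$-norm with $p>0$ preserved by a continuous matrix group is the $2$-norm. Without that homogeneity step your level-set heuristic has nothing to act on, and with it the ``maximal symmetry'' language is unnecessary.

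Two smaller slips. First, the information content of a pure state is $2^N-1$ \texttt{bits} (dependent bits included), not $N$, so the constraint you write as $I_{O\rightarrow S}\le N$ is correct only at $N=1$. Second, for general $N$ the passage from the quadratic measure to $\Sigma_N\simeq$ density matrices with $\ct_N=\mathrm{PSO}(2^N)$ or $\mathrm{PSU}(2^N)$ is exactly the hard, interdependent $(\Sigma_N,\ct_N)$ problem: $\mathrm{PSU}(2^N)$ is a \emph{proper} subgroup of $\mathrm{SO}(4^N-1)$ for $N>1$, so ``maximality forces the full isometry group of the state space'' presupposes the state space one is trying to derive. Like the paper, you can only defer this to \cite{hw,hw2}; within this paper the claim is established only for $N=1$, where your argument does match section \ref{sec_n1}.
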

We remind the reader that the time evolution groups $\ct_N$ for density matrices in rebit and qubit quantum theory are {\it projective} because they correspond to $\rho\mapsto U\,\rho\,U^\dag$, where for (1) rebits $\rho$ is a $2^N\times2^N$ real symmetric matrix, $U\in\rm{SO}(2^N)$ and $\dag$ denotes matrix transpose; and (2) for qubits $\rho$ is a $2^N\times2^N$ hermitian matrix, $U\in\rm{SU}(2^N)$ and $\dag$ denotes hermitian conjugation.

Furthermore, using one additional rule on $\cq_N$ which allows $O$ to ask $S$ any question that ``makes (probabilistic) sense'', we show in \cite{hw,hw2} that
\begin{description}
\item[1. in the rebit case \cite{hw2},] \emph{$\cq_N$ is (isomorphic to) the set of projective measurements onto the $+1$ eigenspaces of Pauli operators\footnote{The set of Pauli operators over $\mathbb{R}^{2^N}$ is the set of traceless real symmetric $2^N\times2^N$ matrices with eigenvalues $\pm1$.}} over $\mathbb{R}^{2^N}$. This is the set of rank-$(2^{N-1})$-projectors.
\item[2. in the qubit case \cite{hw},]\emph{ $\cq_N$ is (isomorphic to) the set of projective measurements onto the $+1$ eigenspaces of Pauli operators\footnote{The set of Pauli operators over $\mathbb{C}^{2^N}$ is the set of traceless hermitian $2^N\times2^N$ matrices with eigenvalues $\pm1$.} over $\mathbb{C}^{2^N}$ (i.e., the set of rank-$(2^{N-1})$-projectors) and the outcome probability for any $Q\in\cq_N$ to be answered with `yes' by $S$ in some state is given by the Born rule for projective measurements.}
\end{description}

Since both real and complex quantum theory come out of rules \ref{lim}--\ref{time}, we shall impose a further rule on $O$'s information acquisition in \cite{hw} in order to also eliminate rebit theory. However, we note that rebit theory is mathematically contained in qubit quantum theory and that rebits can actually be produced in laboratories. Hence, one might also hold rules \ref{lim}--\ref{time} and the rule that $O$ may ask $S$ any question that ``makes sense'' as physically sufficient.

Notice that it is necessary to directly reconstruct the space of density matrices over Hilbert spaces from the rules rather than the underlying Hilbert spaces themselves. The reason is that the latter contain physically redundant information (norm and global phase), while the rules refer only to information which is directly accessible to $O$.

\subsection{Strategy for building the necessary tools and proving the claim}

Our strategy and procedure for developing tools and intermediate results before proving the main claim is best summarized as a diagram (see figure \ref{fig_strategy}).

\begin{widetext}
\begin{center}
\begin{figure}[hbt!]
\begin{center}
\psfrag{L}{{limited information}}
\psfrag{C}{{complementarity}}
\psfrag{p}{{information preservation}}
\psfrag{t}{{time evolution}}
\psfrag{o}{time evolution \hspace*{.5cm}}%{{completeness}}
\psfrag{q}{independence, compatibility and cor-}
\psfrag{q1}{relation structure on $\cq_N$ (in sec.\ \ref{sec_qstructure})}
\psfrag{i}{reversible time evolution (in sec.\ \ref{sec_time})}
\psfrag{i1}{and information measure (in sec.\ \ref{sec_infomeasure})}
\psfrag{1}{$\Sigma_1$ is a ball with $D=2,3$, and}
\psfrag{4}{$\ct_1$ is either $\rm{SO}(2)$ or $\rm{SO}(3)$ (in sec.\ \ref{sec_n1})}
\psfrag{2}{$\ct_2$ is either $\rm{PSO}(4)$ or $\rm{PSU}(4)$ $\Rightarrow$}
\psfrag{3}{ $\Sigma_2$ convex hull of $\mathbb{R}P^3$ or $\mathbb{C}P^3$ (in \cite{hw,hw2})}
\psfrag{n}{$(\cq_N,\Sigma_N,\ct_N)$ for $\!N>2$ {\footnotesize (in \cite{hw,hw2})}}
{\includegraphics[scale=.55]{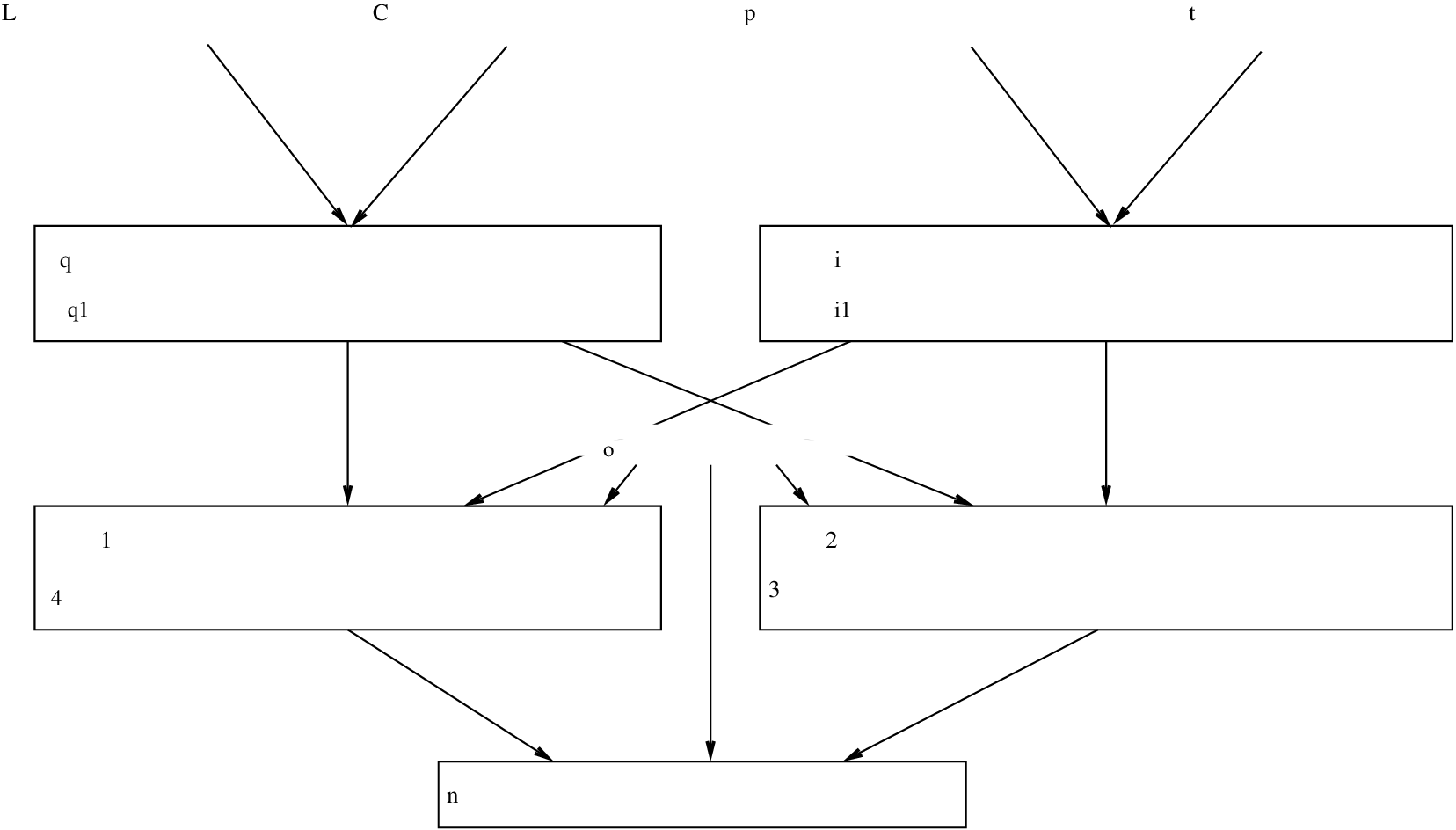}}
\caption{\small Strategy and steps for the reconstruction.}\label{fig_strategy}
\end{center}
\end{figure}
\end{center}
\end{widetext}

That is, we shall firstly ascertain, in section \ref{sec_qstructure}, the independence, compatibility and correlation structure on $\cq_N$ which is induced by rules \ref{lim} and \ref{unlim}. This section will be particularly instructive and deliver many elementary technical results which, besides becoming important later, yield simple explanations for typical quantum phenomena such as, e.g., entanglement, non-locality and monogamy relations. This is also where we shall determine the dimensionality of $\Sigma_1$ by a simple argument and, using this result, derive the dimensionalities of all other state spaces too. Next, in section \ref{sec_time}, we shall use rules \ref{pres} and \ref{time} in order to conclude that a specific time evolution, is reversible and, in fact, described by a continuous one-parameter matrix group. These results will then be used in section \ref{sec_infomeasure}, together with rules \ref{pres} and \ref{time}, to derive an explicit information measure $\alpha_i(\vec{y}_{O\rightarrow S})$ which is unique under elementary consistency conditions. In section \ref{sec_n1}, the conjunction of these outcomes will then be employed %and augmented by principle \ref{infodist} 
to show that for $N=1$ one either obtains a two- or three-dimensional Bloch ball as a state space $\Sigma_1$ and that the group of all possible time evolutions $\ct_1$ is accordingly either $\rm{SO}(2)$ or $\rm{SO}(3)$. The claim for $N>1$ will require substantial additional work and will be proven in the companion paper \cite{hw} for qubits and in \cite{hw2} for rebits.

%show that these principles synonymous to.....that's the upshot

\section{Question structure and correlations}\label{sec_qstructure}

In this section, we shall solely employ rules \ref{lim} and \ref{unlim} in order to deduce the independence, compatibility and correlation structure for the questions in an informationally complete set $\cq_{M_N}$ (c.f.\ assumption \ref{assump6}). To this end it is instructive to look at the individual cases $N=1,2,3$ in some detail before considering general $N\in\mathbb{N}$. We shall denote by $D_N$ the dimension of the state space $\Sigma_N$.

\subsection{A single gbit}

Rules \ref{lim} and \ref{unlim} taken together imply that for $N=1$ there exists at least one maximally complementary pair $Q_1,Q_1'$. Since $N=1$ is fixed, it is now more convenient to count the independent questions by an index, rather than a prime.\footnote{Rules \ref{lim} and \ref{unlim} count compatible questions, here we need to count maximally complementary questions.} Let us therefore slightly change the notation and write $Q_1'$ henceforth as $Q_2$, such that the rules imply the existence of a maximally complementary pair $Q_1,Q_2$. But, applied to a single gbit only, rules \ref{lim} and \ref{unlim} tell us nothing more about how many questions maximally complementary to $Q_1$ exist. There may arise another $Q_3$ maximally complementary to both $Q_1,Q_2$ etc. Notice that, for a single gbit, an informationally complete set of pairwise independent questions must be given by a maximal set of mutually maximally complementary questions (i.e., no further $Q\in\cq_1$ can be added to this set without destroying mutual maximal complementarity) 
\ba
\cq_{M_1}=\{Q_1,Q_2,\ldots,Q_{D_1}\}\label{qm1}
\ea
because rule \ref{lim} prohibits further pairwise independent (partially or maximally) compatible questions. (Recall that maximally complementary questions are automatically independent.) Rules \ref{lim} and \ref{unlim} imply $D_1\geq2$. It turns out that we need to consider two gbits in order to upper bound $D_1$.

We shall call such questions for a single gbit {\it individual questions}.

% {\it complementarity}: for instance, consider the case of $N=1$ bits. $O$ can ask some elementary question $Q_1$ to $S$ which will provide him with one bit of information. However, principle \ref{unlim} implies that he can choose an independent second question $Q_2$ which he can ask $S$. Since he is only allowed to know a single bit of information about $S$, asking $Q_1$ leaves him completely oblivious about the answer to question $Q_2$. Likewise, upon asking $Q_2$ and receiving the answer, he has obtained a {\it new} and {\it independent} bit of information about $S$. This implies that he has lost information about $Q_1$, or, that his information about $Q_1$ has become irrelevant. We shall consider below, how many independent questions $O$ can ask in the case of $N=1$. Compare this to measuring the spin of a spin-$\f{1}{2}$ particle in various spatial directions in quantum mechanics.

\subsection{Two gbits}\label{sec_n2}

The $N=2$ case requires substantially more work. First of all, since this is a composite system (c.f.\ definition \ref{def_comp}), the individual questions about the two single gbits must be contained in an informationally complete set $\cq_{M_2}$. We shall again slightly change the notation, as compared to section \ref{sec_principles}: the maximal complementary set $\cq_{M_1}$ (\ref{qm1}) for gbit 1 will be denoted by $Q_1,\ldots,Q_{D_1}$, while the elements of $\cq_{M_1}$ for gbit 2 will be denoted with a prime $Q'_1,\ldots,Q'_{D_1}$. It will be convenient to depict the question structure graphically. Representing individual questions henceforth as {\it vertices}, $\cq_{M_2}$ certainly contains the following
\vspace*{.2cm}
\begin{eqnarray}
\psfrag{q1}{\hspace*{-.3cm}gbit 1}
\psfrag{q2}{gbit 2}
\psfrag{Q1}{$Q_1$}
\psfrag{Q2}{$Q_2$}
\psfrag{Q3}{$Q_3$}
\psfrag{QD}{$Q_{D_1}$}
\psfrag{P1}{$Q'_1$}
\psfrag{P2}{$Q'_2$}
\psfrag{P3}{$Q'_3$}
\psfrag{PD}{$Q'_{D_1}$}
\psfrag{d}{$\vdots$}
{\includegraphics[scale=.2]{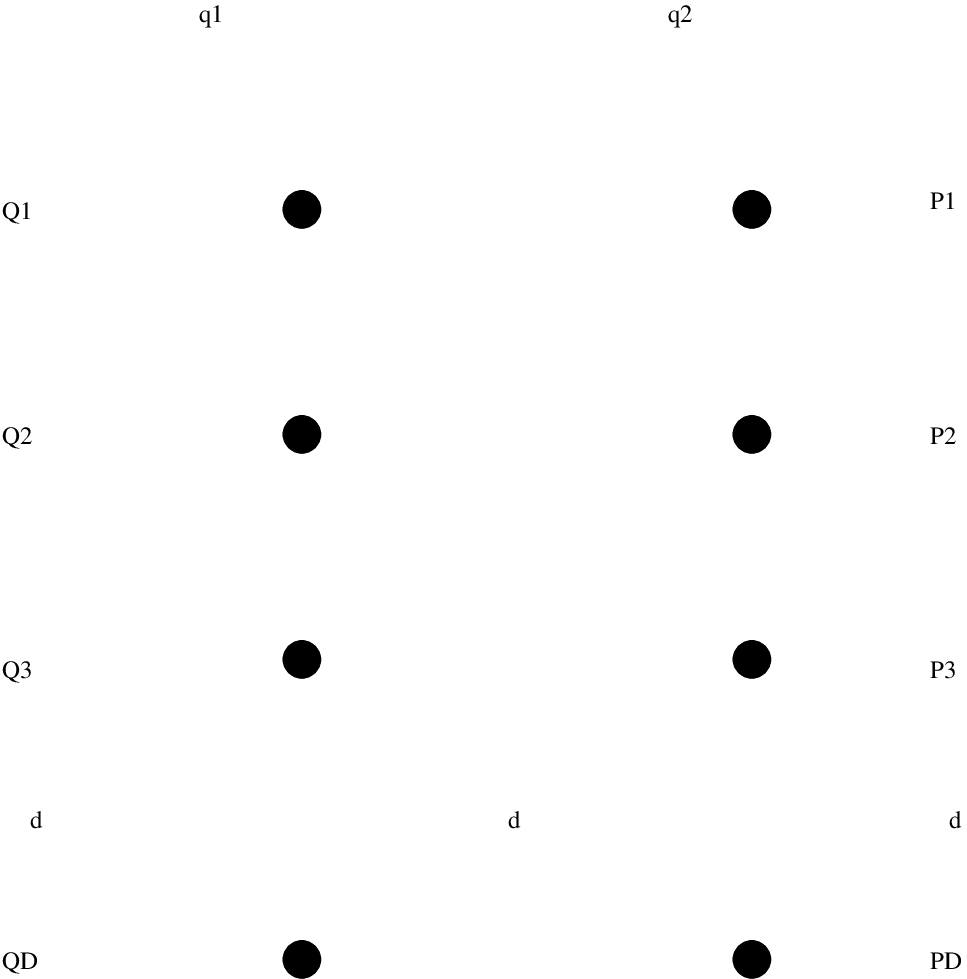}}.\label{indiv}
\end{eqnarray}
This structure abides by rules \ref{lim} and \ref{unlim} and any $Q_i$ will be independent of and maximally compatible with any $Q'_j$ (cf.\ definition \ref{def_comp}).

\subsubsection{Logical connectives of single gbit questions}\label{sec_connective}

But a composite system may also admit composite questions, built with logical connectives of the subsystem questions. We must now determine which composite questions are allowed and which must be added to the individual questions in order to form an informationally complete $\cq_{M_2}$ of {\it pairwise independent} questions. To this end, we must firstly unravel which logical connective $*$ $O$ can employ at all in order to construct composite questions which are pairwise independent of the individual ones. Recall assumption \ref{assump4} that $O$, according to his theoretical model, may only compose compatible questions with logical connectives $*$ and consider $Q_i,Q'_j$ as an exemplary compatible pair. In truth tables, we shall henceforth symbolize `yes' by `$1$' and `no' by `$0$'. 

\begin{R} {\bf(`$=$' in logical expressions.)}
To avoid confusion, we emphasize that whenever we write an equality sign `$=$' in a logical expression in the sequel, we do \emph{not} interpret it as another logical connective $*$, but as an actual equality of the values of the propositions on the left and right hand side. For example, the expression $Q_i=Q_j'$ is \emph{not} itself a proposition which takes truth values $0$ or $1$, but is intended to mean that the truth value which $Q_i$ takes is identical to that which $Q_j'$ takes.
\end{R}

While permitted binary connectives, it is clear that, e.g., the AND and OR operations $\wedge$ and $\vee$, respectively, cannot be employed alone to build a new independent question from $Q_i,Q'_j$ because $Q_i\wedge Q'_j=1$ implies $Q_i=Q'_j=1$ and $Q_i\vee Q'_j=0$ implies $Q_i=Q'_j=0$ such that certain answers to these two connectives imply answers to the individuals. But if $Q_i*Q'_j$ is to be pairwise independent from $Q_i,Q'_j$, an answer to it must not imply the answer to either of $Q_i,Q'_j$ and, conversely, the answer to only $Q_i$ or only $Q'_j$ cannot imply $Q_i*Q'_j$. As can be easily checked, it must satisfy the following truth table:
\vspace*{.2cm}
\begin{eqnarray}
%\begin{center}
\!\!\!\!\!\!\!\! \!\!\! \begin{tabular}{ ||c | c || c|| }
    \hline
    $Q_{i}$ & $Q'_{j}$ & $Q_i*Q'_j$ \\ \hline\hline
    0 & 1 & a \\ \hline
    1 & 0 & a \\ \hline
    1&1&b\\\hline
    0&0&b\\\hline
  \end{tabular} \q a\neq b\q a,b\in\{0,1\}.\label{truth}
%\end{center}
\end{eqnarray}
\vspace*{.2cm}

The only logical connectives $*$ satisfying this truth table are the XNOR $\leftrightarrow$ for $a=0$ and $b=1$ and its negation, the XOR $\oplus$ for $a=1$ and $b=0$. As $Q_i\oplus Q'_j = \neg(Q_i\leftrightarrow Q'_j)$, we may only choose one of the two binary connectives to define new pairwise independent questions. We henceforth choose to use the XNOR, already introduced in (\ref{correlation}), and operationally interpret it as a `correlation question'; $Q_{ij}:=Q_i\leftrightarrow Q'_j$ is to be read as `are the answers to $Q_i$ and $Q'_j$ the same?'. But the XOR could equivalently be employed.\footnote{As an aside, the XNOR can be expressed in terms of the basic Boolean operations as $Q_i\leftrightarrow Q'_j = (\neg Q_i\vee Q'_j)\wedge(Q_i\vee \neg Q'_j)$.}

Since, according to assumption \ref{assump6}, $O$ must be able to build an informationally complete set for a composite system which, in particular, involves pairwise independent composite questions, we would like to conclude that $Q_i\leftrightarrow Q'_j$ are not only implementable, but also contained in a $\cq_{M_2}$ together with the individuals. However, this is subject to a few consistency checks on pairwise independence.

The XNOR $\leftrightarrow$ is a symmetric logical connective, $Q_{ij}=Q_i\leftrightarrow Q'_j=Q'_j\leftrightarrow Q_i$ (but note that $Q_{ij}\neq Q_{ji}$), and, thanks to its associativity,
 \ba\label{correlation2}
  Q_i\leftrightarrow Q_{ij} &=& Q_i \leftrightarrow (Q_i\leftrightarrow Q'_j)\nn\\ &=& \underset{1}{\underbrace{(Q_i \leftrightarrow Q_i)}}\leftrightarrow Q'_j = Q'_j\,,\nn\\
  Q_{ij}\leftrightarrow Q'_j&=&Q_i\, \ea
such that $\leftrightarrow$ defines a closed relation on $Q_i,Q'_j,Q_{ij}$. This has the following ramifications: (1) For any compatible pair $Q_i,Q'_j$, the `correlation' operation $\leftrightarrow$ gives rise to precisely one additional question $Q_{ij}$, and (2) the set $Q_i,Q'_j,Q_{ij}$ will indeed be {\it pairwise} independent, according to the definition in section \ref{sec_elstruc}, because neither $Q_i$ nor $Q'_j$ alone can determine $Q_{ij}$ relative to the state of no information (not even partially), otherwise, together with the determined $Q_{ij}$, they would (at least partially) determine each other via (\ref{correlation2}) -- in contradiction with the independence of $Q_i,Q'_j$. That is, $Q_{ij}$ is independent of both $Q_i$ and $Q'_j$ and since we assume independence to be a symmetric relation, it must also be true the other way around.

Consequently, the $D_1^2$ correlation questions $Q_{ij}$, $i,j=1,\ldots,D_1$, are candidates for additional questions in $\cq_{M_2}$ besides the $2D_1$ individual ones of (\ref{indiv}). Graphically, we shall represent $Q_{ij}$ as the {\it edge} connecting the vertices corresponding to $Q_i$ and $Q'_j$. For example, the following question graphs
\begin{widetext}
\begin{eqnarray}
\psfrag{q1}{\hspace*{-.3cm}gbit 1}
\psfrag{q2}{gbit 2}
\psfrag{Q1}{$Q_1$}
\psfrag{Q2}{$Q_2$}
\psfrag{Q3}{$Q_3$}
\psfrag{QD}{$Q_{D_1}$}
\psfrag{P1}{$Q'_1$}
\psfrag{P2}{$Q'_2$}
\psfrag{P3}{$Q'_3$}
\psfrag{PD}{$Q'_{D_1}$}
\psfrag{q11}{$Q_{11}$}
\psfrag{q31}{$Q_{31}$}
\psfrag{q22}{$Q_{22}$}
\psfrag{q23}{$Q_{23}$}
\psfrag{qdd}{$Q_{D_1D_1}$}
\psfrag{d}{$\vdots$}
{\includegraphics[scale=.2]{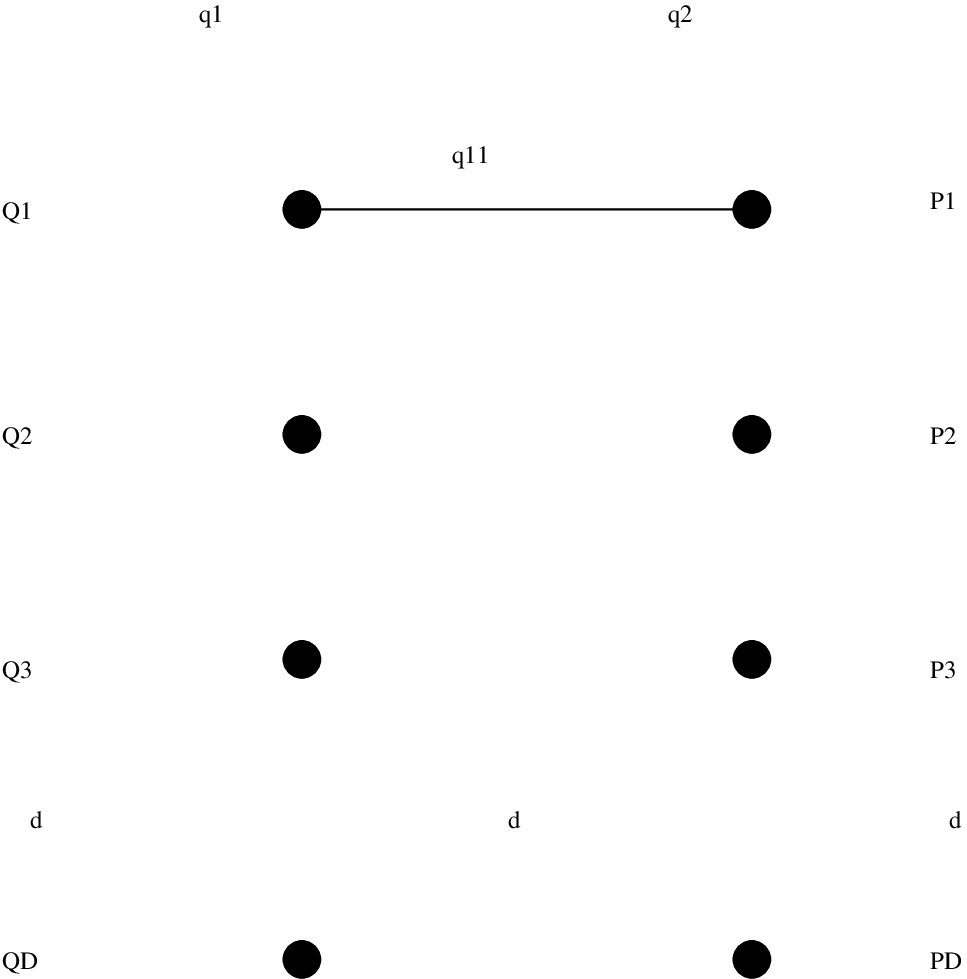}}\q\q\q\q\q\q\q{\includegraphics[scale=.2]{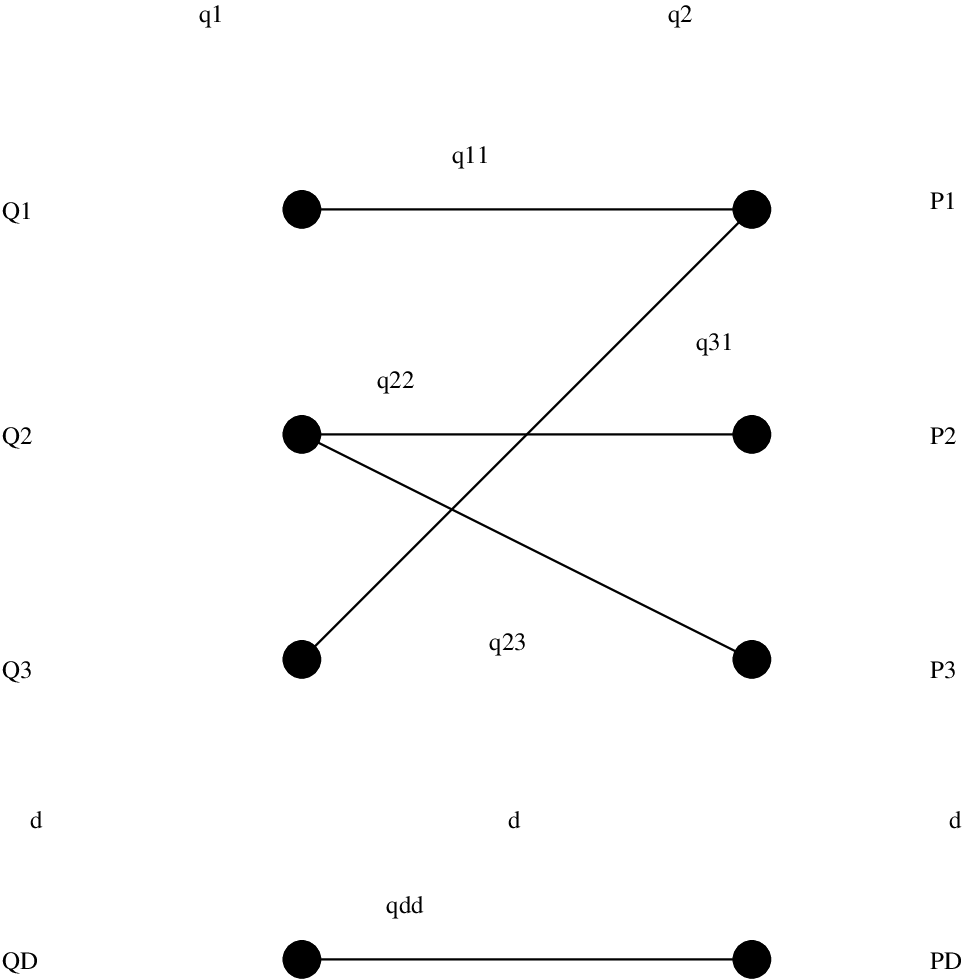}}\q\q\q\q\q\q\q{\includegraphics[scale=.2]{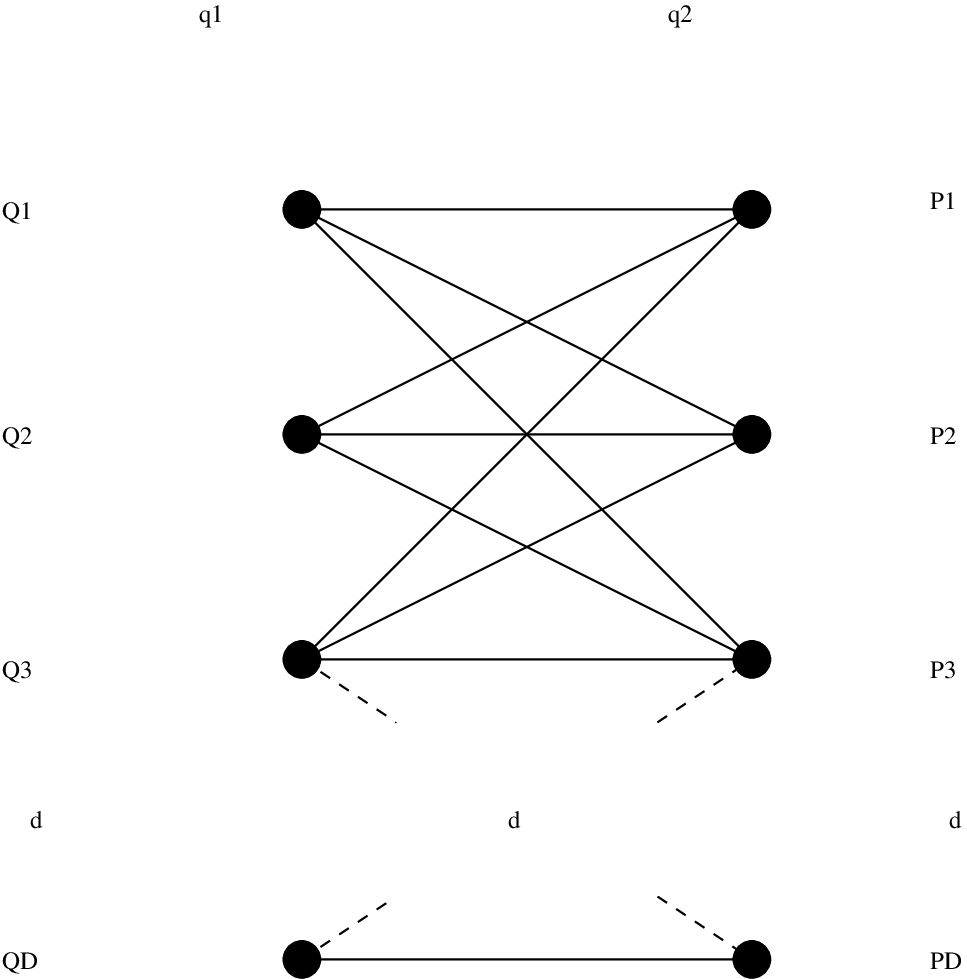}}\notag
\end{eqnarray} 
\end{widetext}
represent legal sets of questions. Note that due to assumption \ref{assump4}, there can only be edges between gbit 1 and gbit 2 but not between the vertices of a single gbit, e.g., $Q_1$ and $Q_2$, because they are complementary.
 
 Our next task is to analyse the independence and compatibility structure of the $Q_{ij}$.

\subsubsection{Independence, complementarity and entanglement}

We begin with a simple, but important observation which we will frequently make use of.

\begin{lem}\label{lem0}
Let $Q_1,Q_2,Q_3\in\cq_N$ be such that $Q_1$ is maximally compatible with and pairwise independent of both $Q_2,Q_3$ and that $Q_3$ is maximally complementary to $Q_1\leftrightarrow Q_2$. Then $Q_2,Q_3$ are maximally complementary.
\end{lem}

\begin{proof}
Suppose $Q_2,Q_3$ were at least partially compatible. In that case there must exist a state of $S$ in which $O$ has maximal information $\alpha_{2}=1$ \texttt{bit} about $Q_{2}$ and at least partial information $\alpha_3>0$ \texttt{bit} about $Q_3$ (or vice versa). Let $O$ now ask $Q_1$ which is maximally compatible with and pairwise independent of both $Q_2,Q_{3}$ to $S$ in this state. Hence, according to assumption \ref{assump5b}, by asking $Q_1$, $O$ cannot change his information about either $Q_2,Q_{3}$. Thus, after asking $Q_1$, $O$ will have maximal information about $Q_1,Q_{2}$ and at least partial information about $Q_3$. But maximal information about $Q_1,Q_{2}$ implies maximal information about $Q_1\leftrightarrow Q_2$ which is maximally complementary to $Q_3$. Consequently, $Q_2$ and $Q_{3}$ must be maximally complementary too.
\end{proof}

This has immediate useful implications.
 
\begin{lem}\label{lem1}
$Q_i$ is maximally compatible with $Q_{ij}$, $\forall\,j=1,\ldots,D_1$ and maximally complementary to $Q_{kj}$, $\forall\,k\neq i$ and $\forall\,j=1,\ldots,D_1$. That is, graphically, an individual question $Q_i$ is maximally compatible with a correlation question $Q_{ij}$ if and only if its corresponding vertex is a vertex of the edge corresponding to $Q_{ij}$. By symmetry, the analogous result holds for $Q'_j$.
\end{lem}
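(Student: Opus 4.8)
The plan is to exploit the closed algebra generated by the XNOR/correlation operation, equation~(\ref{correlation2}), together with the definitions of dependence and complementarity from section~\ref{sec_elstruc}, applied relative to the state of no information. First I would establish compatibility of $Q_i$ with $Q_{ij}$. Since $Q_i$ and $Q'_j$ are compatible (they belong to different gbits, c.f.\ (\ref{indiv}) and definition~\ref{def_comp}), assumption~\ref{assump4b} lets $O$ treat $\{Q_i,Q'_j\}$ with classical Boolean logic, so the composite $Q_{ij}=Q_i\leftrightarrow Q'_j$ has a well-defined truth value once the truth values of $Q_i,Q'_j$ are fixed. Hence there exists a state in $\Sigma_2$ -- namely one in which $O$ knows the answers to both $Q_i$ and $Q'_j$, which exists because they are compatible -- in which $O$ simultaneously knows the answer to $Q_{ij}$ (read off from the truth table (\ref{truth}) with $a=0,b=1$). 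Thus $y_i,y_{ij}$ can both be $0$ or $1$ in the same state, which is exactly the definition of $Q_i$ and $Q_{ij}$ being compatible.

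Next I would prove complementarity of $Q_i$ with $Q_{kj}$ for $k\neq i$. Here the key input is that $Q_k$ is complementary to $Q_i$ (both lie in the maximal mutually complementary set $\cq_{M_1}$ of gbit~1), so any state in $\Sigma_2$ with $y_i\in\{0,1\}$ necessarily has $y_k=\tfrac12$. Now use the closed relation (\ref{correlation2}) applied to the pair $Q_k,Q'_j$: we have $Q_k\leftrightarrow Q_{kj}=Q'_j$ and $Q_{kj}\leftrightarrow Q'_j=Q_k$. So if, in some state, $O$ knew the answer to $Q_{kj}$ with certainty ($y_{kj}\in\{0,1\}$), then knowing $Q_{kj}$ would make $Q_k$ and $Q'_j$ fully dependent on one another relative to that state; since $Q_k,Q'_j$ are independent relative to the state of no information, our standing assumption (section~\ref{sec_elstruc}) that dependence relations are state-independent forbids this unless $O$ has some information about $Q_k$ or $Q'_j$. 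More carefully: a state with $y_i\in\{0,1\}$ forces $y_k=\tfrac12$; combined with $y_{kj}\in\{0,1\}$ and the identity $Q_k=Q_{kj}\leftrightarrow Q'_j$, a short joint-probability computation relative to that state shows $y_j'$ would then have to be $\tfrac12$ as well, yet $Q_{kj}$ determined together with $y_j'=\tfrac12$ would via $Q_{kj}\leftrightarrow Q'_j=Q_k$ force $y_k$ to be determined -- contradicting $y_k=\tfrac12$. Hence no state in $\Sigma_2$ can have both $y_i\in\{0,1\}$ and $y_{kj}\in\{0,1\}$, and by the symmetric reasoning (starting from $y_{kj}\in\{0,1\}$ and using that $Q_i$ does not appear in the triple $\{Q_k,Q'_j,Q_{kj}\}$, so $Q_{kj}$ tells $O$ nothing about $Q_i$) one gets $y_i=\tfrac12$ whenever $y_{kj}\in\{0,1\}$. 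This is precisely maximal complementarity.

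The graphical reformulation is then immediate: an individual question $Q_i$ is compatible with $Q_{kj}$ iff $i=k$, i.e.\ iff the vertex $Q_i$ is an endpoint of the edge representing $Q_{kj}$; otherwise they are complementary. The statement for $Q'_j$ follows by the left--right symmetry of the construction, exchanging the roles of the two gbits and using the symmetry $Q_{ij}=Q_i\leftrightarrow Q'_j=Q'_j\leftrightarrow Q_i$ noted before (\ref{correlation2}). I expect the main obstacle to be making the complementarity argument fully rigorous without yet having an explicit information measure or an explicit state-update rule: one must carefully phrase everything in terms of which probability assignments are consistent with the closed relation (\ref{correlation2}) and the state-independence of (in)dependence, rather than in terms of conditional probabilities computed from a concrete rule. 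The compatibility half is routine; it is this second half where one has to argue that ``knowing a correlation plus knowing one constituent forces knowing the other constituent'' propagates an inconsistency back to the assumed complementarity of $Q_i$ and $Q_k$.
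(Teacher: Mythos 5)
Your first half (compatibility of $Q_i$ with $Q_{ij}$) is fine and is exactly the paper's argument: they are compatible by construction, since a state in which $O$ knows both $Q_i$ and $Q'_j$ also determines $Q_{ij}$. The gap is in the complementarity half. Your central step -- ``$Q_{kj}$ determined together with $y'_j=\tfrac12$ would via $Q_{kj}\leftrightarrow Q'_j=Q_k$ force $y_k$ to be determined'' -- is false: the closed relation (\ref{correlation2}) only determines $Q_k$ when \emph{both} $Q_{kj}$ and $Q'_j$ are known. A state with $y_{kj}\in\{0,1\}$ and $y_k=y'_j=\tfrac12$ is perfectly consistent (it is precisely the maximally entangled situation the paper later builds), so no contradiction arises within the state you consider, and your appeal to state-independence of dependence relations does not repair this. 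Moreover, you only try to exclude $y_i$ and $y_{kj}$ being simultaneously \emph{sharp}, which at best would show the two questions are not compatible; the lemma asserts \emph{maximal} complementarity, i.e.\ that maximal information about $Q_{kj}$ excludes even partial information about $Q_i$, and your closing ``symmetric reasoning'' parenthesis simply asserts rather than proves this.

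The missing idea is the counterfactual interrogation that drives the paper's proof: assume $Q_i$ and $Q_{kj}$ were at least partially compatible, so some state has $\alpha_{kj}=1$ \texttt{bit} and $\alpha_i>0$ \texttt{bit}; then let $O$ ask $Q'_j$, which is compatible with and independent of both $Q_i$ and $Q_{kj}$. Assumption \ref{assump5b} guarantees this question does not change $O$'s information about either, so afterwards $O$ has maximal information about $Q'_j$ and $Q_{kj}$, hence via (\ref{correlation2}) maximal information about $Q_k$, while still retaining partial information about $Q_i$ -- contradicting the complementarity of $Q_k$ and $Q_i$ within gbit 1. Without invoking assumption \ref{assump5b} (which you never use) there is no way to convert knowledge of the correlation into knowledge of $Q_k$, and the argument cannot close; with it, both the exclusion of partial compatibility and the extension to $Q'_j$ follow exactly as in the paper.
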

 
 \begin{proof}
$Q_i$ and $Q_{ij}$ are maximally compatible by construction. Consider therefore $Q_i$ and $Q_{kj}$ for $k\neq i$ and $j=1,\ldots,D_1$. Clearly, $Q_i$ and $Q'_j$ are maximally compatible and pairwise independent and so are $Q'_j$ and $Q_{kj}$. Maximal complementarity of $Q_i$ and $Q_{kj}$ for $k\neq i$ now follows from $Q_k=Q_{kj}\leftrightarrow Q'_j$ (see (\ref{correlation2})), which is maximally complementary to $Q_i$, and lemma \ref{lem0}.
%Assume now that $Q_{kj}$ and $Q_i$ are %maximally maximally compatible too. Then, by Specker's principle (c.f.\ assumption \ref{assump5}), $Q_i,Q'_j$ and $Q_{kj}$ are mutually maximally compatible. Thus, $O$ may ask $S$ all three at the same time. But, by (\ref{correlation2}), knowing $Q_{kj}$ and $Q'_j$ is equivalent to knowing $Q_k$ and $Q'_j$ such that by asking all three, $O$ would know the maximally complementary $Q_i$ and $Q_{k\neq i}$ simultaneously which is illegal. Similarly, assume now that $Q_{kj}$ and $Q_i$ are
%at least partially maximally compatible. In that case there must exist a state of $S$ in which $O$ has maximal information $\alpha_{kj}=1$ \texttt{bit} about $Q_{kj}$ and at least partial information $\alpha_i>0$ \texttt{bit} about $Q_i$ (or vice versa). Let $O$ now ask $Q'_j$ which is fully maximally compatible with and pairwise independent of both $Q_i,Q_{kj}$ to $S$ in this state. Hence, according to assumption \ref{assump5b}, by asking $Q_j'$, $O$ cannot change his information about either $Q_i,Q_{kj}$. Thus, after asking $Q_j'$, $O$ will have maximal information about $Q_j',Q_{kj}$ and at least partial information about $Q_i$. But maximal information about $Q_j',Q_{kj}$ implies via (\ref{correlation2}) maximal information about $Q_k$ which for $k\neq i$ is maximally complementary to $Q_i$.  Consequently, $Q_i$ and $Q_{kj}$ must be maximally complementary for $k\neq i$.
 \end{proof}
 
For example, $Q_1$ and $Q_{12}$ are maximally compatible, while $Q_1$ and $Q_{22}$ are maximally complementary. The intuitive explanation for the incompatibility of $Q_1$ and $Q_{22}$ is as follows: if $O$ knew the answers to both simultaneously, he would know more than one \texttt{bit} of information about gbit 1 because $Q_1$ defines a full \texttt{bit} of information about it while $Q_{22}$ could be regarded as defining half a \texttt{bit} about each of gbit 1 and 2. But in view of rule \ref{lim}, $O$ should never know more than one \texttt{bit} about a single gbit, even in a composite system. Considering the information contained in correlation questions to equally correspond to gbit 1 and 2, lemma \ref{lem1} suggests that $O$'s information will always be equally distributed over the two gbits for states of maximal knowledge, i.e.\ $O$ will know equally much about gbit 1 and 2.\footnote{Clearly, this is not true for non-maximal knowledge. E.g., let $O$ always ask only $Q_1$.} We shall make this more precise in \cite{hw}.

We saw before that $Q_i,Q'_j,Q_{ij}$ are pairwise independent. Lemma \ref{lem1} implies that also $Q_i$ and $Q_{kj}$ for $i\neq k$ are independent. We can make use of this result to show the following.

\begin{lem}\label{lem2}
The $Q_{ij}$, $i,j=1,\ldots,D_1$ are pairwise independent.
\end{lem}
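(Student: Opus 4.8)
The plan is to show pairwise independence of the $Q_{ij}$ by splitting into cases according to how many indices two correlation questions share, and in each case producing the relevant compatible, pairwise-independent auxiliary individual question to which Assumption \ref{assump5b} applies. Concretely, take two distinct correlation questions $Q_{ij}$ and $Q_{kl}$. There are three cases: (a) they share no index, $i\neq k$ and $j\neq l$; (b) they share the gbit-1 index, $i=k$ but $j\neq l$; (c) they share the gbit-2 index, $j=l$ but $i\neq k$. Case (c) is the mirror image of (b) by the symmetry $1\leftrightarrow 2$, so only (a) and (b) need argument.

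First I would handle case (b), $Q_{ij}$ and $Q_{il}$ with $j\neq l$. By Lemma \ref{lem1}, $Q_i$ is compatible with both $Q_{ij}$ and $Q_{il}$, and by the construction of correlation questions $Q_i,Q_{ij},Q'_j$ are pairwise independent and likewise $Q_i,Q_{il},Q'_l$. Suppose for contradiction $Q_{ij}$ and $Q_{il}$ are at least partially dependent. Then there is a state in which $O$ has maximal information about $Q_{ij}$ and at least partial information about $Q_{il}$ (or vice versa). Now $Q_i$ is compatible with, and pairwise independent of, both $Q_{ij}$ and $Q_{il}$; by Assumption \ref{assump5b}, asking $Q_i$ leaves $O$'s information about each of them unchanged. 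After asking $Q_i$, $O$ has maximal information about $Q_i$ and $Q_{ij}$, hence — via the closed relation \eqref{correlation2}, $Q_i\leftrightarrow Q_{ij}=Q'_j$ — maximal information about $Q'_j$; and still at least partial information about $Q_{il}$, hence about $Q_i\leftrightarrow Q_{il}=Q'_l$. But then $O$ simultaneously holds maximal information about $Q'_j$ and partial information about $Q'_l$ with $j\neq l$, contradicting the complementarity of $Q'_j,Q'_l$ (these are members of the maximal complementary set $\cq_{M_1}$ for gbit 2). Hence $Q_{ij}$ and $Q_{il}$ are independent; symmetry of the independence relation closes the case.

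Next I would treat case (a), $Q_{ij}$ and $Q_{kl}$ with $i\neq k$, $j\neq l$. The natural route is to interpolate through a correlation question sharing one index with each, say $Q_{il}$: by case (b) (sharing index $i$) $Q_{ij}$ and $Q_{il}$ are independent, and by case (c) (sharing index $l$) $Q_{il}$ and $Q_{kl}$ are independent. But this does not immediately yield independence of $Q_{ij}$ and $Q_{kl}$, since independence need not be transitive. Instead I would argue directly as in Lemma \ref{lem1}: assume $Q_{ij}$ and $Q_{kl}$ are at least partially dependent, so there is a state with maximal information about $Q_{ij}$ and at least partial information about $Q_{kl}$. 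Ask $Q'_j$ (compatible with and independent of $Q_{ij}$ by Lemma \ref{lem1}, and compatible with $Q_{kl}$ since $j\neq l$ — again by Lemma \ref{lem1} — and independent of it); by Assumption \ref{assump5b} this changes neither piece of information, and now $O$ has maximal information about $Q'_j$ and $Q_{ij}$, hence about $Q_i$. Then ask $Q_k$: it is complementary to $Q_i$, so this erases the information about $Q_i$ but, being compatible with and independent of $Q_{kl}$ (Lemma \ref{lem1}, since $k$ is a vertex of edge $Q_{kl}$), leaves the partial information about $Q_{kl}$ intact. Finally asking $Q'_l$ — compatible with $Q_{kl}$ and with $Q_k$ — and combining with the now-maximal... here one must be careful, and this is where the bookkeeping gets delicate: the cleanest version is to instead observe that maximal information about $Q_{ij}$ together with the structure already forces, after asking $Q'_j$, maximal information about $Q_i$, and then repeat the case-(b)/case-(c) argument one more level using $Q_{il}$ as the bridge, since $Q_i$ now has maximal information and $Q_{il}=Q_i\leftrightarrow Q'_l$. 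I expect the main obstacle to be exactly this: assembling the sequence of compatible auxiliary questions in case (a) so that each application of Assumption \ref{assump5b} is legitimate (the auxiliary question must be simultaneously compatible with and independent of \emph{both} questions whose information is to be preserved), and making sure the chain of implications via \eqref{correlation2} genuinely forces a contradiction with the complementarity within a single gbit's maximal set rather than merely with independence. Once the case-(b) argument is in hand, case (a) should follow by one more iteration of the same move, and symmetry disposes of case (c).
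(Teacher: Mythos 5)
There is a genuine gap, and it is concentrated exactly where you flagged trouble: case (a), $i\neq k$ and $j\neq l$. The contradiction strategy you set up there cannot work, because the statement you assume for contradiction — ``there exists a state with maximal information about $Q_{ij}$ and at least partial information about $Q_{kl}$'' — is in fact \emph{true} in this case: by lemma \ref{lem3} these two correlation questions are compatible, and states in which $O$ knows both with certainty (the maximally entangled states) exist. Existence of such a state does not witness dependence; dependence (as defined in the paper, relative to the state of no information and, by the standing assumption, carrying over to all states) is the universal claim that $y_{ij}=0,1$ forces $y_{kl}\neq\frac{1}{2}$. So to refute it you do not need to show that no such state exists (false here); you need only exhibit a single witness state with $y_{ij}\in\{0,1\}$ and $y_{kl}=\frac{1}{2}$. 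That witness is immediate: since $Q_i$ and $Q_{ij}$ are compatible, there is a state in which $O$ knows both, and since $Q_i$ is complementary to $Q_{kl}$ for $k\neq i$ by lemma \ref{lem1}, that state has $y_{kl}=\frac{1}{2}$. This is the paper's entire proof, and it covers \emph{all} cases uniformly: any two distinct $Q_{ij},Q_{kl}$ differ in at least one index, and one uses $Q_i$ when $i\neq k$ and $Q'_j$ when $j\neq l$. No case split, no appeal to assumption \ref{assump5b}, and no contradiction argument are needed.

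Your case (b) argument is correct as far as it goes, but note that it proves more than independence — it is essentially the complementarity statement of lemma \ref{lem3} (for edges sharing the vertex $Q_i$), obtained by the same assumption-\ref{assump5b} manoeuvre the paper uses there; independence then follows because complementary questions are independent by definition. That detour is harmless for case (b) precisely because those questions happen to be complementary, but it is why the same template collapses in case (a), where the questions are compatible. The fix is not better bookkeeping of auxiliary questions, as you anticipated, but replacing the contradiction scheme by the direct witness-state observation above, after which the lemma is a two-line argument.
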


\begin{proof}
Consider $Q_{ij}$ and $Q_{kl}$ %such that not both $i=k$ and $j=l$. %Assume $Q_{ij},Q_{kl}$ were (partially or fully) dependent such that the answer to $Q_{ij}$ implies (at least partial) knowledge about $Q_{kl}$ (i.e., $y_{ij}=0,1$ implies $y_{kl}\neq\f{1}{2}$). 
and suppose $i\neq k$. Then $Q_i$ and $Q_{ij}$ are maximally compatible, while $Q_i$ and $Q_{kl}$ are maximally complementary by lemma \ref{lem1}. Hence, when knowing $Q_i$ and $Q_{ij}$ $O$ cannot know $Q_{kl}$ such that $Q_{ij}$ and $Q_{kl}$ must be independent. (Recall from section \ref{sec_elstruc} that dependence requires the answer to $Q_{ij}$ to always imply at least partial knowledge about $Q_{kl}$, i.e., $y_{ij}=0,1$ implies $y_{kl}\neq\f{1}{2}$). The analogous argument holds for when $j\neq l$. 
\end{proof}
 
 Recalling from subsection \ref{sec_connective} that the set of composite questions in $\cq_{M_2}$ must be non-empty, lemmas \ref{lem1} and \ref{lem2} have an important corollary.
 
 \begin{Corollary}
 $Q_i,Q'_j,Q_{ij}$ are pairwise independent for all $i,j=1,\dots,D_1$ and, thanks to assumption \ref{assump6}, will thus be part of an informationally complete set $\cq_{M_2}$.
 \end{Corollary}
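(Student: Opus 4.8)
The plan is to collect the pairwise-independence relations already established and feed them into the construction of an informationally complete set for the composite system $S_{AB}$. The first point I would stress is that \emph{independent} and \emph{compatible} are not mutually exclusive — a pair of questions can be both — so for the pairs $(Q_i,Q_{ij})$ and $(Q'_j,Q_{ij})$ Lemma \ref{lem1} alone, which only supplies compatibility, is not what the corollary needs. The relevant fact is the independence of the triple $Q_i,Q'_j,Q_{ij}$, which was already argued around (\ref{correlation2}) in Section \ref{sec_connective}: if $Q_i$ determined $Q_{ij}$ even partially relative to the state of no information, then through the closure identity $Q_{ij}\leftrightarrow Q'_j=Q_i$ it would also (partially) determine $Q'_j$, contradicting the independence of $Q_i$ and $Q'_j$; and symmetrically with the roles of $Q_i$ and $Q'_j$ interchanged.

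Next I would verify that the \emph{entire} family $\{Q_1,\dots,Q_{D_1}\}\cup\{Q'_1,\dots,Q'_{D_1}\}\cup\{Q_{ij}\}_{i,j=1}^{D_1}$ is pairwise independent, running through the cases: $(Q_i,Q_k)$ with $i\neq k$ and $(Q'_j,Q'_l)$ with $j\neq l$ lie in the maximal mutually complementary sets $\cq_{M_1}$ of the two gbits, hence are complementary and therefore independent; $(Q_i,Q'_j)$ is compatible and independent by the basic structure (\ref{indiv}); for $(Q_i,Q_{kl})$ the case $k=i$ is the closure argument above (with $Q'_l$ playing the role of the partner question) and the case $k\neq i$ is complementary, hence independent, by Lemma \ref{lem1}, and symmetrically for $(Q'_j,Q_{kl})$; finally $(Q_{ij},Q_{kl})$ is covered by Lemma \ref{lem2}. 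This exhausts all pairs, so the family is pairwise independent as a whole.

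For the last clause I would invoke the construction of informationally complete sets for composites given after Definition \ref{def_comp}: by that definition the individual questions $Q_i,Q'_j$ must lie in any informationally complete $\cq_{M_2}$, and such a set can be obtained by adjoining to $\{Q_i\}\cup\{Q'_j\}$ a \emph{maximal} pairwise-independent set of composite questions. Since, by the previous paragraph, $\{Q_{ij}\}_{i,j}$ is itself pairwise independent and pairwise independent of every individual question, one may start from $\{Q_{ij}\}_{i,j}$ and extend it to such a maximal set; hence there is an informationally complete $\cq_{M_2}$ containing all of $Q_i,Q'_j,Q_{ij}$, which is precisely the assertion. I do not expect any genuine obstacle: the corollary is essentially bookkeeping over Section \ref{sec_connective} and Lemmas \ref{lem1}--\ref{lem2}. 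The only subtlety worth flagging is the one above, namely not conflating independence with non-compatibility, since the $(Q_i,Q_{ij})$ pairs are simultaneously compatible and independent and it is the independence, established separately in Section \ref{sec_connective}, that is being used.
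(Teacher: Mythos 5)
Your proposal is correct and follows essentially the same route as the paper: pairwise independence of the triple $Q_i,Q'_j,Q_{ij}$ from the closure argument around (\ref{correlation2}), independence of $Q_i$ and $Q_{kj}$ ($k\neq i$) from the complementarity in lemma \ref{lem1}, independence among the $Q_{ij}$ from lemma \ref{lem2}, and membership in $\cq_{M_2}$ from the composite-system construction following definition \ref{def_comp} together with the non-emptiness of the composite part of $\cq_{M_2}$. Your explicit case-by-case bookkeeping and the caution about not conflating independence with incompatibility are consistent with, and slightly more detailed than, the paper's implicit argument.
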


Next, we consider the compatibility and complementarity structure of the correlation questions $Q_{ij}$.

\begin{lem}\label{lem3}
$Q_{ij}$ and $Q_{kl}$ are maximally compatible if and only if $i\neq k$ and $j\neq l$. That is, graphically, $Q_{ij}$ and $Q_{kl}$ are maximally compatible if their corresponding edges do \emph{not} intersect in a vertex and maximally complementary if they intersect in one vertex.
\end{lem}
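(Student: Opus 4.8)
The plan is to prove the two implications of the biconditional separately, using only principles~\ref{lim}--\ref{unlim} together with Lemmas~\ref{lem1}--\ref{lem2}, the closure identities~(\ref{correlation2}), and assumption~\ref{assump5b}. Recall that $Q_{ij}=Q_i\leftrightarrow Q'_j$ and that $\{Q_i,Q'_j,Q_{ij}\}$ is a pairwise independent triple on which $\leftrightarrow$ acts closed, so that knowing any two of them with certainty determines the third.

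For the ``only if'' direction I would prove the contrapositive: if the edges of $Q_{ij}$ and $Q_{kl}$ share a vertex then the two questions are complementary. By the gbit-1/gbit-2 symmetry it is enough to treat $i=k$ and $j\neq l$, i.e.\ to show $Q_{ij}$ and $Q_{il}$ are complementary; this will mirror the proof of Lemma~\ref{lem1} almost verbatim. Assume for contradiction that $Q_{ij}$ and $Q_{il}$ are at least partially compatible, so that there is a state in which $O$ has maximal information about $Q_{ij}$ and non-zero information about $Q_{il}$ (or vice versa). Since $Q_i$ is fully compatible with and pairwise independent of both $Q_{ij}$ and $Q_{il}$ (Lemma~\ref{lem1} together with the pairwise independence of the two triples), assumption~\ref{assump5b} ensures that asking $Q_i$ in this state alters $O$'s information about neither. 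Afterwards $O$ knows both $Q_i$ and $Q_{ij}$ with certainty, hence by~(\ref{correlation2}) also $Q'_j=Q_i\leftrightarrow Q_{ij}$, while still retaining non-zero information about $Q_{il}$. But $Q'_j$ and $Q_{il}$ are complementary by Lemma~\ref{lem1} applied to gbit~2 (their shared vertex is $Q'_j$, and $j\neq l$), a contradiction. Hence $Q_{ij}$ and $Q_{il}$ are complementary, and symmetrically whenever the edges meet in a vertex.

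For the ``if'' direction — $i\neq k$ and $j\neq l$, so the edges are disjoint — I must exhibit a state of $S$ in which $O$ knows the answers to both $Q_{ij}$ and $Q_{kl}$. The plan is to let $O$ start from the state of no information, ask $Q_{ij}$ (after which $y_{ij}\in\{0,1\}$, while $y_{kl}=\f{1}{2}$ by the pairwise independence of Lemma~\ref{lem2}), and then ask $Q_{kl}$ (obtaining $y_{kl}\in\{0,1\}$); it then remains to argue that $O$'s information about $Q_{ij}$ is \emph{not} lost in the second step. The key point is that, unlike in the overlapping case, there is no mechanism forcing a collapse: knowing $Q_{ij}$ with certainty does \emph{not} force any information about $Q_i$ or $Q'_j$ individually, so asking $Q_{kl}$, which is complementary to the individual questions $Q_k,Q'_l$ but — as the argument must establish — not to $Q_{ij}$ itself, should not deprive $O$ of his bit about $Q_{ij}$. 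To make this rigorous I would rule out (partial) complementarity between $Q_{ij}$ and $Q_{kl}$ by contradiction, exploiting the already-proven ``only if'' part for the four ``side'' pairs among $\{Q_{ij},Q_{il},Q_{kj},Q_{kl}\}$, the identity $Q_{ij}\leftrightarrow Q_{kl}=Q_{il}\leftrightarrow Q_{kj}$ that follows from associativity of $\leftrightarrow$, and the information bound of principle~\ref{lim}.

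I expect this last direction to be the main obstacle. The naive heuristic ``knowing two correlations would over-inform a single gbit'' fails precisely because correlation questions carry no individual information — this is exactly the entanglement phenomenon — so the absence of a complementarity obstruction between disjoint-edge correlation questions must be extracted from the independence structure (Lemma~\ref{lem2}), the closure identities~(\ref{correlation2}) among the correlation questions, and careful bookkeeping against the $N$-bit limit, rather than from any count performed on a single gbit.
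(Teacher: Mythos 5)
Your ``only if'' direction is sound and is essentially the paper's own argument (you treat the shared vertex $Q_i$, the paper treats $Q'_j$; by symmetry these are the same proof): assume partial compatibility, ask the shared individual, invoke assumption \ref{assump5b}, reconstruct the other individual via (\ref{correlation2}), and contradict Lemma \ref{lem1}.

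The ``if'' direction, however, contains a genuine gap, and the route you sketch to close it would not work. You propose to rule out complementarity of $Q_{ij}$ and $Q_{kl}$ by contradiction using ``the identity $Q_{ij}\leftrightarrow Q_{kl}=Q_{il}\leftrightarrow Q_{kj}$ that follows from associativity of $\leftrightarrow$''. First, even writing $Q_{ij}\leftrightarrow Q_{kl}$ presupposes (by assumption \ref{assump4}) that $Q_{ij}$ and $Q_{kl}$ are compatible, which is exactly what you are trying to prove, so the argument is circular. Second, the regrouping itself is the classical identity (\ref{noclassfo}) which the paper explicitly forbids via assumption \ref{assump4b}, because $Q_i,Q_k$ and $Q'_j,Q'_l$ are complementary pairs and associativity may only be applied across mutually compatible questions; indeed, later in section \ref{sec_bell} the correct relation turns out to carry a relative negation, $Q_{11}\leftrightarrow Q_{22}=\neg(Q_{12}\leftrightarrow Q_{21})$ (\ref{qbit7}), so the identity you lean on is not merely unavailable but false in this framework. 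The idea you are missing is the paper's direct use of principle \ref{unlim}: let $O$ ask the individuals $Q_i$ and $Q'_j$, which saturates the $N=2$ bit limit of principle \ref{lim} and in particular determines $Q_{ij}$; then let him ask $Q_{kl}$. By Lemma \ref{lem1}, $Q_{kl}$ is complementary to both $Q_i$ and $Q'_j$, so their answers become obsolete, but principle \ref{unlim} forbids a net loss of information, so $O$ must still hold two bits after the new answer --- and the only surviving candidate is $Q_{ij}$. Hence there is a state in which $Q_{ij}$ and $Q_{kl}$ are simultaneously known, i.e.\ they are compatible, with no bookkeeping against the information bound beyond this single application of principles \ref{lim} and \ref{unlim}.
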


\begin{proof}
Suppose $Q_{ij}$ and $Q_{kj}$ with $i\neq k$ were at least partially compatible. Then there must exist a state of $S$ in which the answer to $Q_{ij}$ is fully known to $O$ and in which also $\alpha_{kj}>0$ \texttt{bit} (or vice versa). Let $O$ ask $Q_j'$ to $S$ in this state. Since, by lemma \ref{lem1}, this question is maximally compatible with both $Q_{ij},Q_{kj}$ (and pairwise independent of both), according to assumption \ref{assump5b}, by asking $Q_j'$, $O$ cannot change his information about both $Q_{ij},Q_{kj}$. That is, after having also asked $Q_j'$, $O$ must have maximal information about $Q_{ij},Q_j'$ and partial information about $Q_{kj}$. But maximal information about $Q_{ij},Q_j'$ implies also maximal information about $Q_i$ which, however, is maximally complementary to $Q_{kj}$ by lemma \ref{lem1}. Hence, $Q_{ij},Q_{kj}$ with $i\neq k$ must be maximally complementary.

%
%must be complementary because a simultaneous answer to both would imply that the complementary $Q_i$ and $Q_{k\neq i}$ are either correlated or anti-correlated. But according to assumption \ref{assump4}, this is disallowed. (A full answer to one and partial knowledge about the other would likewise partially yield illegal information about $Q_i,Q_k$.) 

Consider now $Q_{ij}$ and $Q_{kl}$ with $i\neq k$ and $j\neq l$. Let $O$ ask $S$ both $Q_i$ and $Q'_j$ the answers to which imply the answer to $Q_{ij}$ according to (\ref{correlation}). Thanks to rule \ref{lim}, this defines a state of maximal knowledge of $2$ independent \texttt{bits} and $O$ may not know any further independent information. Next, let $O$ ask the same $S$ the question $Q_{kl}$. From lemma \ref{lem1} it follows that $Q_{kl}$ is maximally complementary to both $Q_i$ and $Q'_j$ such that the answer to $Q_{kl}$ will give one new independent \texttt{bit} of information but renders $O$'s information about $Q_i,Q'_j$ obsolete. But by rule \ref{unlim} $O$ cannot experience a net loss of information by asking a new question and after asking $Q_{kl}$ he must still know $2$ \texttt{bits} about $S$. Hence, after acquiring the answer to $Q_{kl}$ he must still know the answer to $Q_{ij}$ such that both are maximally compatible.
\end{proof} 
 
 To give graphical examples, $Q_{11}$ and $Q_{21}$ are maximally complementary due to the intersection in $Q_1'$, while $Q_{13}$ and $Q_{21}$ are maximally compatible because their edges do not intersect in vertices
 \vspace*{.1cm}
 \begin{eqnarray}
\psfrag{q1}{\hspace*{-.3cm}gbit 1}
\psfrag{q2}{gbit 2}
\psfrag{Q1}{$Q_1$}
\psfrag{Q2}{$Q_2$}
\psfrag{Q3}{$Q_3$}
\psfrag{QD}{$Q_{D_1}$}
\psfrag{P1}{$Q'_1$}
\psfrag{P2}{$Q'_2$}
\psfrag{P3}{$Q'_3$}
\psfrag{PD}{$Q'_{D_1}$}
\psfrag{q11}{$Q_{11}$}
\psfrag{q21}{$Q_{21}$}
\psfrag{q22}{$Q_{22}$}
\psfrag{q13}{$Q_{13}$}
\psfrag{qdd}{$Q_{D_1D_1}$}
\psfrag{d}{$\vdots$}
{\includegraphics[scale=.2]{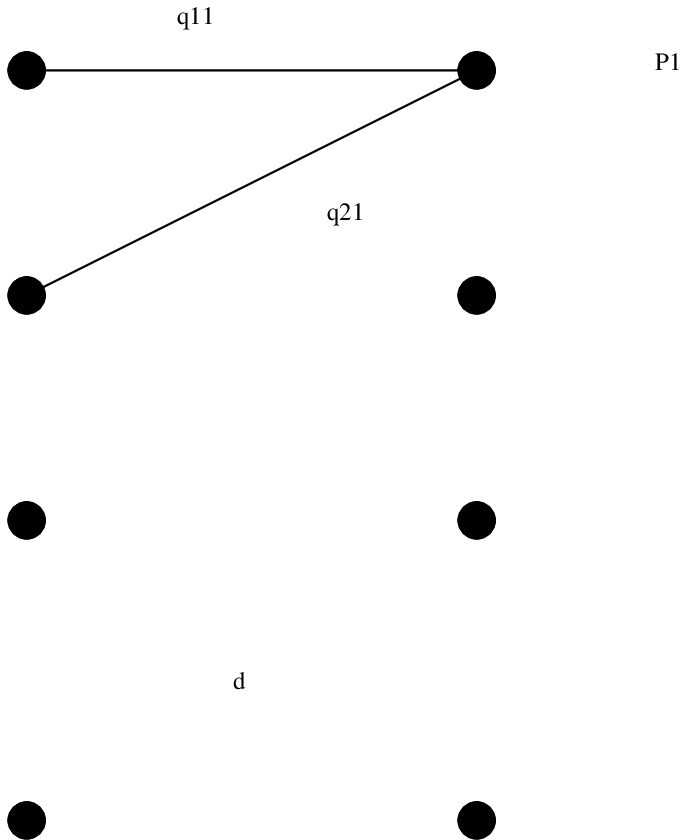}}\q\q\q\q\q\q\q{\includegraphics[scale=.2]{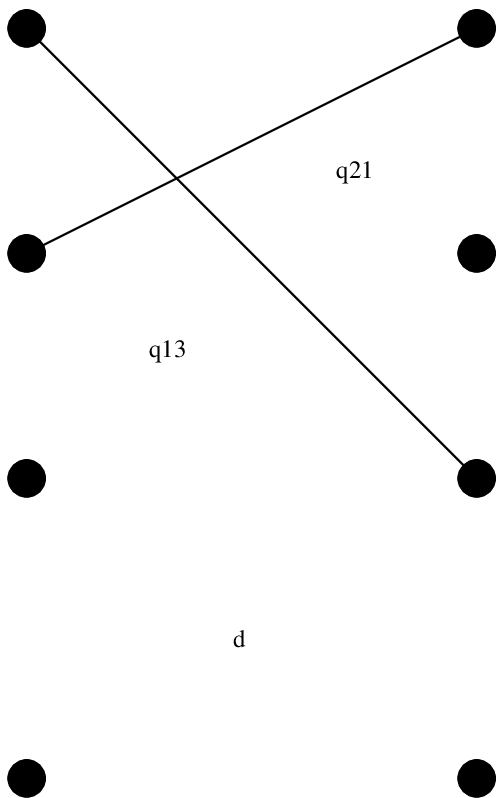}}\notag
\end{eqnarray}

This question structure has significant consequences: since the correlation questions $Q_{ij}$ and $Q_{kl}$ are both independent and maximally compatible for $i\neq k$ and $j\neq l$, $O$ can ask both of them simultaneously, thereby spending the maximal amount of $N=2$ independent \texttt{bits} of information he may acquire according to rule \ref{lim} over composite questions. For example, he may ask $S$ $Q_{11}$ and $Q_{22}$ simultaneously
\vspace*{.2cm}
 \begin{eqnarray}
\psfrag{q1}{\hspace*{-.3cm}gbit 1}
\psfrag{q2}{gbit 2}
\psfrag{Q1}{$Q_1$}
\psfrag{Q2}{$Q_2$}
\psfrag{Q3}{$Q_3$}
\psfrag{QD}{$Q_{D_1}$}
\psfrag{P1}{$Q'_1$}
\psfrag{P2}{$Q'_2$}
\psfrag{P3}{$Q'_3$}
\psfrag{PD}{$Q'_{D_1}$}
\psfrag{q11}{$Q_{11}$}
\psfrag{q21}{$Q_{21}$}
\psfrag{q22}{$Q_{22}$}
\psfrag{q13}{$Q_{13}$}
\psfrag{qdd}{$Q_{D_1D_1}$}
\psfrag{d}{$\vdots$}
{\includegraphics[scale=.2]{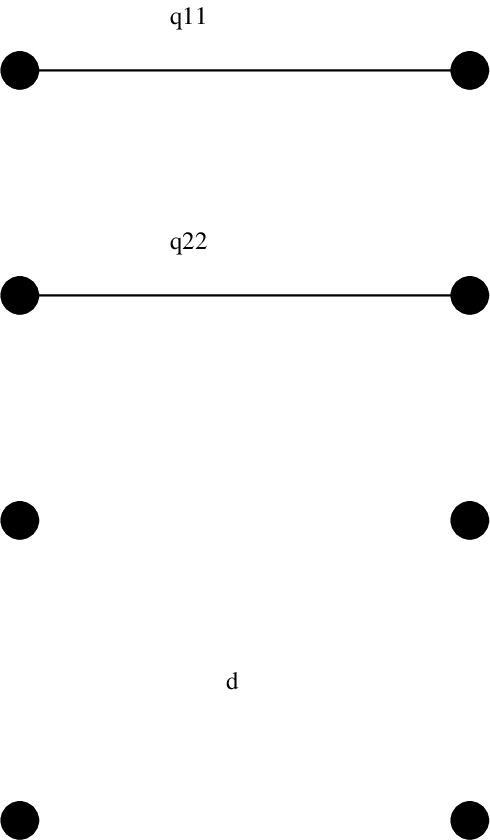}}\label{bell}
\end{eqnarray}   
upon which he must be entirely oblivious about the individual gbit properties represented by $Q_i,Q'_j$. (Lemma \ref{lem1} implies that no individual question is maximally compatible with two non-intersecting bipartite questions at once.) That is, $O$ has only composite, but no individual information about the two gbits; they are maximally correlated. But this is precisely {\it entanglement}. Indeed, the question graph (\ref{bell}) will ultimately correspond to four Bell states in either of the two question bases $\{Q_1,Q'_1\}$ and $\{Q_2,Q'_2\}$, representing the four possible answer configurations `yes-yes', `yes-no', `no-yes' and `no-no' to the correlation questions $Q_{11},Q_{22}$ (see \cite{zeilinger1999foundational} for a related perspective within quantum theory). In fact, if $O$ now `marginalized' over gbit 2, corresponding to discarding any information about questions involving gbit 2, he would find gbit 1 in the state of no information relative to him because the discarded information also contained everything he knew about gbit 1. Other compatible edges will correspond to Bell states in different question bases. 

Inspired by the compelling proposal within quantum theory in \cite{Brukner:vn,Brukner:ys}, we shall define states $\vec{y}_{O\rightarrow S}$ of a bipartite gbit system in $\cl_{\rm gbit}$ as 
\begin{description}
\item[entangled] if the composite information satisfies
$
\sum_{i,j=1}^{D_1}\,\alpha_{ij}>1\,\texttt{bit}$,
%\item[classically composed] if the composite information satisfies
%$
%\sum_{i,j=1}^{D_1}\,\alpha_{ij}\leq1\,\texttt{bit}$,
\end{description}
where $\alpha_{ij}(\vec{y}_{O\rightarrow S})$ is $O$'s information about the correlation question $Q_{ij}$. While there will be states with $\sum_{i,j=1}^{D_1}\,\alpha_{ij}\leq1\,\texttt{bit}$ which can indeed be considered as classically composed,\footnote{E.g., consider a state with $y_{11}=1$ and $y_i=\f{1}{2}$ for all other questions in $\cq_{M_2}$, corresponding to $O$ knowing with certainty that $Q_{11}=1$ and nothing else. The discussion surrounding (\ref{infoineq}) implies already that $\sum_{i,j=1}^{D_1}\,\alpha_{ij}=1\,\texttt{bit}$, while the individual information is zero. In quantum theory, this state would correspond to the separable state $\rho=\f{1}{4}(\mathds{1}+\sigma_x\otimes\sigma_x)$.} in contrast to \cite{Brukner:vn,Brukner:ys}, we emphasize that the above will only be a {\it sufficient}, but not a necessary condition for entanglement, as in the quantum case there will exist entangled states violating it.\footnote{For the quantum case, states with $\sum_{i,j=1}^{D_1=3}\,\alpha_{ij}\leq1\,\texttt{bit}$ where defined as classically composed in \cite{Brukner:vn,Brukner:ys}. This is, however, in conflict with the usual definition of entangled states as those which are non-separable. Namely, consider the family of quantum states (on the r.h.s.\ expressed in z-basis) for $-\f{1}{3}\leq a\leq\f{1}{3}$
\ba
\rho_a&:=&\f{1}{4}\left(\mathds{1}+\f{1}{3}(\sigma_z\otimes\mathds{1}+\mathds{1}\otimes\sigma_z-\sigma_z\otimes\sigma_z)\right.\nn\\&&\left.+2a(\sigma_x\otimes\sigma_x+\sigma_y\otimes\sigma_y)\right)=\left(\begin{array}{cccc}\f{1}{3} & 0 & 0 & 0 \\0 & \f{1}{3} & a & 0 \\0 & a & \f{1}{3} & 0 \\0 & 0 & 0 & 0\end{array}\right).\nn
\ea
This clearly is a family of non-separable states for $a\neq0$. In particular, for $a=\f{1}{3}$ this is the reduced state which one obtains from a W-state upon marginalizing over a qubit. These states yield $y_{z_1}=y_{z_2}=\f{2}{3}$, $y_{zz}=\f{1}{3}$ and $y_{xx}=y_{yy}=a+\f{1}{2}$ and all other $y_i=\f{1}{2}$, where, e.g., $y_{z_1}$ and $y_{zz}$ are the probabilities that the spin of qubit 1 is up in z-direction and that the spins of both qubits are the same in z-direction, respectively. Using the quadratic information measure $\alpha_{ij}=(2\,y_{ij}-1)^2$ which we derive in subsection \ref{sec_infomeasure} and which was also proposed for quantum theory by the authors of \cite{Brukner:vn,Brukner:ys}, these states give
\ba
I_{\rm comp}:=\sum_{i,j=1}^{3}\!\alpha_{ij}&=&(2\,y_{xx}-1)^2\!+\!(2\,y_{yy}-1)^2\!+\!(2\,y_{zz}-1)^2\nn\\&&+0\cdots+0=\left(\f{1}{9}+8a^2\right)\,\,\texttt{bit}\leq 1\,\,\texttt{bit},\nn
\ea
and for the individual information $I_{\rm indiv}:=(2y_{z_1}-1)^2+(2y_{z_2}-1)^2+0\cdots+0=2/9$ \texttt{bit}. $I_{\rm comp}$ can be arbitrarily close to $1/9$ \texttt{bit} for a sufficiently small $a\neq0$ so that even $I_{\rm comp}<I_{\rm indiv}$, and yet the corresponding state is entangled in violation of the proposed condition of classical composition in \cite{Brukner:vn,Brukner:ys}.}
It would be an interesting question to investigate what the precise sufficient and necessary conditions for entanglement are in the quantum case in this informational formulation. However, we shall leave this open for now.

Since there are only $N=2$ independent \texttt{bits} of information to be gained, according to rule \ref{lim}, the above sufficient condition for entanglement thus means, in particular, that $O$ has more composite than individual information. In general, two gbits will be referred to as {\it maximally} entangled relative to $O$ if he has {\it only} composite information 
about them which is incompatible with any individual information. By contrast, two gbits would be in a `product state' if $O$ has maximal individual knowledge of $2$ \texttt{bits} about them. For instance, if he knows that $Q_1=Q'_1=1$, $S$ would be in such a product state relative to him. But notice that even in this case, $O$ would have one dependent \texttt{bit} of information about the correlations because clearly $Q_{11}=1$ too. %Thence, the condition on entanglement as above. 

Note also that $O$ can `collapse' two gbits into an entangled state: as the most extreme example, consider the case that $O$ receives a `product ensemble state' of two gbits in a multiple shot interrogation. $O$ may then ask the next pair of gbits the questions $Q_{11},Q_{22}$. Upon receiving the answers, the two gbits will have `collapsed' into a maximally entangled {\it posterior} state relative to $O$, despite having been in a `product state' prior to interrogation. Entanglement is thus a property of $O$'s information about $S$.

We emphasize that, for systems with limited information content, {\it entanglement is a direct consequence of complementarity} -- at least for states of maximal information. To illustrate this observation, consider two classical bits (cbits). Since there is no complementarity in this case, there are only three pairwise independent questions: the individuals $Q_1,Q'_1$ about cbit 1 and cbit 2, respectively, and the correlation $Q_{11}$ would form an informationally complete $\cq_{M_N}$, graphically represented as
\vspace*{.2cm}
\begin{eqnarray}
\psfrag{q1}{\hspace*{-.3cm}cbit 1}
\psfrag{q2}{cbit 2}
\psfrag{Q1}{$Q_1$}
\psfrag{Q2}{$Q_2$}
\psfrag{Q3}{$Q_3$}
\psfrag{QD}{$Q_{D_1}$}
\psfrag{P1}{$Q'_1$}
\psfrag{P2}{$Q'_2$}
\psfrag{P3}{$Q'_3$}
\psfrag{PD}{$Q'_{D_1}$}
\psfrag{q11}{$Q_{11}$}
\psfrag{q21}{$Q_{21}$}
\psfrag{q22}{$Q_{22}$}
\psfrag{q13}{$Q_{13}$}
\psfrag{qdd}{$Q_{D_1D_1}$}
\psfrag{d}{$\vdots$}
{\includegraphics[scale=.2]{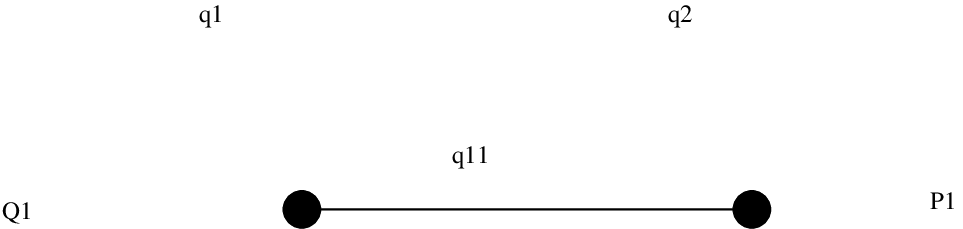}}\notag
\end{eqnarray}   
\vspace*{.1cm}

\noindent $Q_1,Q'_1,Q_{11}$ are mutually maximally compatible and this composite system satisfies rule \ref{lim} too such that $O$ may only acquire maximally $N=2$ independent \texttt{bits} of information about this pair of cbits. Clearly, also in this classical case it is possible for $O$ to acquire {\it only} composite information about the pair of cbits, namely by {\it only} asking $Q_{11}$ and not bothering about the individuals. $O$ would be able to find out whether identically prepared pairs of cbits in an ensemble are correlated by always asking $Q_{11}$ in a multiple shot interrogation. There will indeed exist states such that $y_{11}=1$ and $y_1=y'_1=\f{1}{2}$, however, %he would not be able to claim that they are entangled. The reason is simply that, if $O$ has only asked $Q_{11}$, 
the pair of cbits will not be in a state of maximal information relative to $O$ because he has only spent one of his two available \texttt{bits}. A classical state of maximal information corresponds to $O$ knowing the answers to two questions of the three $Q_1,Q'_1,Q_{11}$, but then by (\ref{correlation}, \ref{correlation2}) $O$ will also know the answer to the third. That is, in a state of maximal information about two cbits, $O$ will always have maximal individual information about the pair. For cbits, $O$ cannot spend the second \texttt{bit} also in composite information. By contrast, it is a consequence of the complementarity rule \ref{unlim} that in our case here, $O$ can actually exhaust the entire information available to him by composite questions, thereby giving rise to entanglement.

Thus far, we have only considered individual and correlation questions. All are part of $\cq_{M_2}$. But can there be more pairwise independent questions in $\cq_{M_2}$? These would have to be obtained from the individuals and correlations via another composition with an XNOR. Composing the individuals $Q_i,Q'_j$ with the correlation questions via XNOR, e.g.\ $Q_i\leftrightarrow Q_{il}$, will yield nothing new because questions need to be maximally compatible in order to be logically connected by $O$ and for compatible pairs it already follows from (\ref{correlation2}) that the individual and correlation questions are logically closed under $\leftrightarrow$. However, what about correlations of correlation questions, e.g., what about $Q_{11}\leftrightarrow Q_{22}$? %$Q_{11}$ and $Q_{22}$ are compatible.

\subsubsection{A logical argument for the dimensionality of the Bloch-sphere}\label{sec_3D}
 
The answer to the last question, in fact, is intertwined with the dimensionality $D_1$ of the state space $\Sigma_1$ of a single gbit and thus, ultimately, with the dimensionality of the Bloch-sphere. Deriving the dimension of the Bloch sphere has also been a crucial step in the successful reconstructions of quantum theory via the GPT framework. While Hardy's pioneering reconstruction \cite{Hardy:2001jk} did not fully settle this issue (it required an operationally somewhat obscure `simplicity axiom' to obtain $D_1=3$), it was later explicitly solved by impressive group-theoretic arguments in \cite{Masanes:2011kx} which show that, under certain information theoretic constraints, qubit quantum theory is the only theory with non-trivial entangling dynamics. This result was then exploited to relate the dimension of the Bloch-sphere to the dimension of space via a communication thought experiment \cite{Mueller:2012pc}. However, unfortunately, the ingredients of these arguments neither fit mathematically nor conceptually fully into our framework which relies on distinct structures than GPTs. %In particular, the GPT framework relies on distinct mathematical structures than the landscape of this article.

On the other hand, although not yielding full quantum theory reconstructions, approaches asserting an epistemic restriction (i.e.\ a restriction of knowledge) over ontic states admit very simple and elegant arguments for a 1-\texttt{bit} system to have a state space spanned by three independent epistemic states \cite{spekkens2007evidence,Spekkens:2014fk,Paterek:2010fk}. These arguments can be carried out by considering a single system over a $2$-\texttt{bit} ontic state which at the epistemic level, however, is effectively a $1$-\texttt{bit} system due to epistemic restrictions. The arguments involve binary connectives (at the ontic level) of complementary questions which, while non-problematic when dealing with `hidden variables', are, however, deemed illegal in $O$'s model of the world according to our assumption \ref{assump4}. We do not wish to make any ontological commitments in our purely operational approach. Consequently, we must reason differently and without ontic states to deduce the dimensionality of $\Sigma_1$. In our case, this requires to consider {\it two} gbits rather than just one and involves entanglement, in analogy to the group-theoretic GPT derivation in \cite{Masanes:2011kx,Mueller:2012pc} which likewise requires two gbits and entanglement.

\begin{Theorem}\label{thm_3d}
$D_1=2$ or $3$.
\end{Theorem}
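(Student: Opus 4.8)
\emph{Proof plan.} The lower bound $D_1\geq 2$ is immediate and has already been recorded: Principles \ref{lim} and \ref{unlim} force the existence of a complementary pair $Q_1,Q_2$ for a single gbit, and complementary questions are independent, so the maximal mutually complementary set $\cq_{M_1}$ has at least two elements. All the work goes into the upper bound $D_1\leq 3$, which I would prove by assuming $D_1\geq 4$ and contradicting the information limit of Principle \ref{lim} for $N=2$. Accordingly, fix four single-gbit questions $Q_1,Q_2,Q_3,Q_4\in\cq_{M_1}$, pass to the two-gbit system $S_{AB}$ (with primed copies $Q'_1,\dots,Q'_4$ for the second gbit), and consider the four ``diagonal'' correlation questions $Q_{11},Q_{22},Q_{33},Q_{44}$, where $Q_{aa}=Q_a\leftrightarrow Q'_a$.

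The first step is to establish that these four questions are \emph{mutually} compatible. They are pairwise compatible by Lemma \ref{lem3} (the edges representing $Q_{aa}$ for distinct $a$ pairwise share no vertex) and pairwise independent by Lemma \ref{lem2}; hence Specker's principle (Theorem \ref{assump5}) gives mutual compatibility, so there is a state of $S_{AB}$ in which $O$ simultaneously knows the answers to all four. By Assumption \ref{assump4b} this mutually compatible set may be handled with classical Boolean logic, i.e.\ it generates a Boolean algebra $\cb$; and since Principle \ref{lim} caps at $2^N=4$ the number of states $O$ can distinguish in a single shot interrogation, $\cb$ has at most four atoms. On the other hand any pair of pairwise independent questions is stochastically independent and unbiased in the state of no information, so $Q_{11},Q_{22}$ already generate a four-atom subalgebra, which must therefore equal $\cb$. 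In particular $Q_{33}$ and $Q_{44}$ lie in the subalgebra generated by $Q_{11},Q_{22}$, i.e.\ they are valid (state-independent) binary Boolean functions $Q_{33}=f(Q_{11},Q_{22})$ and $Q_{44}=g(Q_{11},Q_{22})$.

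It then remains to pin down $f$ and $g$. Evaluating the identity $Q_{33}=f(Q_{11},Q_{22})$ in the state of no information, where $(Q_{11},Q_{22})$ is a pair of independent fair coins, the requirement that $Q_{33}$ be pairwise independent of $Q_{11}$ \emph{and} of $Q_{22}$ (Lemma \ref{lem2}) leaves, after a short check over the sixteen binary connectives, only the possibilities $f(a,b)=a\oplus b$ and $f(a,b)=a\leftrightarrow b$ -- since $Q_{33}$ is non-trivial (Assumption \ref{assump1}), $f$ cannot be constant. Hence $Q_{33}=\pm(Q_{11}\leftrightarrow Q_{22})$, and the same argument gives $Q_{44}=\pm(Q_{11}\leftrightarrow Q_{22})$. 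Consequently $Q_{33}$ equals either $Q_{44}$ or its negation, so knowing one fixes the other; that is, $Q_{33}$ and $Q_{44}$ are dependent, flatly contradicting their pairwise independence from Lemma \ref{lem2}. This rules out $D_1\geq 4$, and with $D_1\geq 2$ yields $D_1\in\{2,3\}$. Note that this argument genuinely needs \emph{two} extra diagonal questions beyond the pair $Q_{11},Q_{22}$: with only $Q_{11},Q_{22},Q_{33}$ one merely concludes $Q_{33}=\pm(Q_{11}\leftrightarrow Q_{22})$, which is perfectly consistent, so $D_1=3$ survives, as it must.

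The step I expect to be most delicate is the extraction of the clean statement ``$\cb$ has at most four atoms, hence $Q_{33},Q_{44}$ are honest Boolean functions of $Q_{11},Q_{22}$'' from Principle \ref{lim}, rather than a looser ``only two of the four carry independent information''. One has to argue carefully that the count of distinguishable configurations is multiplicative over a mutually compatible set, that pairwise independence really does deliver a full four-atom subalgebra for $Q_{11},Q_{22}$, and that a functional relation inside the Boolean algebra of a mutually compatible set holds in \emph{every} state, not just in the one witnessing mutual compatibility -- this is where Assumption \ref{assump4b}, the symmetry of the independence relation, and the Section \ref{sec_elstruc} convention that dependence relative to the state of no information propagates to all states must be invoked with care. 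Everything else reduces to Lemmas \ref{lem2} and \ref{lem3} and Theorem \ref{assump5}.
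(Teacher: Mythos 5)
Your proposal is correct and takes essentially the same route as the paper's proof: pairwise compatibility and independence of the diagonal correlations (Lemmas \ref{lem2}, \ref{lem3}) plus Specker's principle (Theorem \ref{assump5}) give mutual compatibility, Principle \ref{lim} forces the extra diagonals to be Boolean functions of $Q_{11},Q_{22}$, the truth-table argument reduces these to XNOR/XOR, and two such diagonals are then fully dependent, contradicting Lemma \ref{lem2}. The only substantive difference is that where the paper appeals to the information limit together with the ``equivalent status'' of the $Q_{ii}$ to conclude that any pair determines the rest, you justify the same functional-dependence step by counting atoms (at most $2^N=4$) of the Boolean algebra generated by the mutually compatible set -- a slightly more explicit rendering of the same argument, with the state-independence caveat you flag being implicit in the paper as well.
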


\begin{proof}
By lemma \ref{lem3}, $Q_{ij}$ and $Q_{kl}$ are independent and maximally compatible if $i\neq k$ and $j\neq l$. But any maximal set of pairwise maximally compatible correlation questions then contains precisely $D_1$ questions. Graphically, this is easy to see: one can always find $D_1$ non-intersecting edges between the $D_1$ vertices of gbit 1 and the $D_1$ vertices of gbit 2. For example, the $D_1$ `diagonal correlations' $Q_{ii}$, $i=1,\ldots,D_1$
 \begin{eqnarray}
\psfrag{d}{$\vdots$}
\psfrag{Q11}{$Q_{11}$}
\psfrag{Q22}{$Q_{22}$}
\psfrag{QDD}{$Q_{D_1D_1}$}
\psfrag{Q33}{$Q_{33}$}
{\includegraphics[scale=.2]{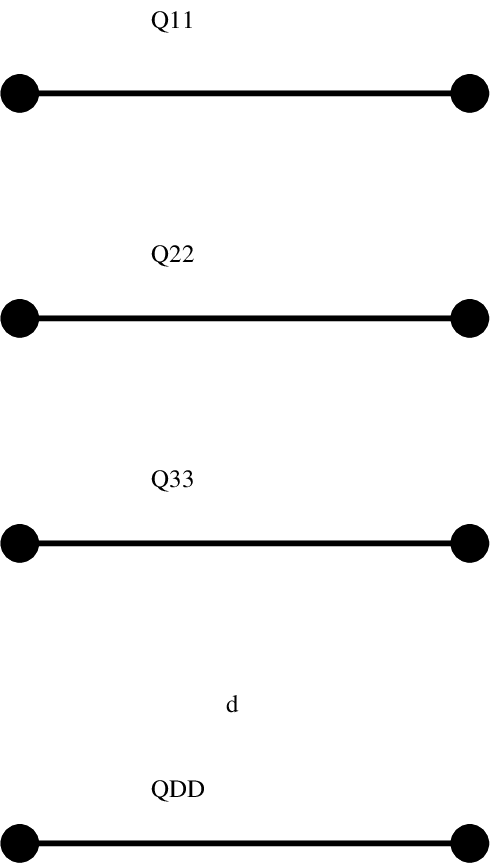}}\notag
\end{eqnarray}   
are pairwise maximally compatible and pairwise independent (lemma \ref{lem2}) such that, by theorem \ref{assump5} (Specker's principle), they must also be mutually maximally compatible. Hence, $O$ may acquire the answers to all $D_1$ $Q_{ii}$ at the same time. However, rule \ref{lim} forbids $O$'s information about $S$ to exceed the limit of $N=2$ independent \texttt{bits}. Accordingly, the $D_1$ correlation questions $Q_{ii}$ cannot be mutually independent if $D_1>2$ and %, assuming that all $Q_{ii}$ are of equivalent status, 
the answers to any pair of them must imply the answers to all others. For instance, suppose $O$ has asked $Q_{11}$ and $Q_{22}$. The pair must determine the answers to $Q_{33},\ldots,Q_{D_1D_1}$ such that the latter must be Boolean functions of $Q_{11},Q_{22}$. Hence, $Q_{jj}=Q_{11}*Q_{22}$, $j\neq1,2$, for some logical connective $*$ which preserves that $Q_{11},Q_{22},Q_{jj}$ are pairwise independent. Table (\ref{truth}) implies that $*$ must either be the XNOR $\leftrightarrow$ or the XOR $\oplus$ such that either
\ba
Q_{jj}&=&\,\,\,\,\,\,Q_{11}\leftrightarrow Q_{22},\q\q\q\text{or}\nn\\%\q\q\q 
Q_{jj}&=&\neg(Q_{11}\leftrightarrow Q_{22}),\q\q\q\forall\, j=3,\ldots,D_1.\nn
\ea
But then clearly $Q_{jj}$, $j=3,\ldots,D_1$ could {\it not} be pairwise independent if $D_1>3$. The same argument can be carried out for {\it any} set of $D_1$ pairwise independent and maximally compatible correlations $Q_{ij}$, i.e.\ for any question graph with $D_1$ non-intersecting edges, and any choice of a pair of maximally compatible correlations which $O$ may ask first. We must therefore conclude that $D_1\leq3$, while rule \ref{unlim} implies that $D_1\geq2$.
\end{proof}

We shall refer to $D_1=2$ as the `rebit case' and to $D_1=3$ as the `qubit case'; for the moment this is just suggestive terminology,
however, we shall see later and in \cite{hw2} that the $D_1=2$ case will, indeed, result in rebit quantum theory (two-level systems over {\it real} Hilbert spaces), while the $D_1=3$ case will give rise to standard quantum theory of qubit systems \cite{hw}.

\subsubsection{An informationally complete set for $N=2$ qubits}\label{sec_n2complete}

The two cases have distinct properties as regards questions which ask for the correlations of bipartite questions and it is necessary to consider them separately. We begin with the simpler qubit case $D_1=3$ for which correlations of correlation questions do not define new information for $O$ about $S$ such that six individual and nine correlation questions constitute an informationally complete set $\cq_{M_2}$. These will ultimately correspond to the propositions `the spin is up in $x-,y-,z-$direction' for the individual qubits 1 and 2 and `the spins of qubit 1 in $i-$ and qubit 2 in $j-$ direction are the same' where $i,j=x,y,z$. Note that the density matrix for two qubits has 15 parameters.

\begin{widetext}
\begin{Theorem}\label{thm_qubit}{\bf(Qubits)}
If $D_1=3$ then the correlation questions $Q_{ij}$ are logically closed under $\leftrightarrow$ and $\cq_{M_2}=\{Q_i,Q'_j,Q_{ij}\}_{i,j=1,2,3}$ constitutes an informationally complete set for $N=2$ with $D_2=15$. Furthermore, for any two permutations $\sigma,\sigma'$ of $\{1,2,3\}$ either
\ba
Q_{\sigma(3)\sigma'(3)}=Q_{\sigma(1)\sigma'(1)}\leftrightarrow Q_{\sigma(2)\sigma'(2)},\q\q\q\text{or}\q\q\q Q_{\sigma(3)\sigma'(3)}=\neg(Q_{\sigma(1)\sigma'(1)}\leftrightarrow Q_{\sigma(2)\sigma'(2)})\label{qbit1}
\ea
and either
\ba
Q_{\sigma(3)\sigma'(3)}=Q_{\sigma(1)\sigma'(2)}\leftrightarrow Q_{\sigma(2)\sigma'(1)},\q\q\q\text{or}\q\q\q Q_{\sigma(3)\sigma'(3)}=\neg(Q_{\sigma(1)\sigma'(2)}\leftrightarrow Q_{\sigma(2)\sigma'(1)})\label{qbit2}
\ea
such that either
\ba
Q_{\sigma(1)\sigma'(1)}\leftrightarrow Q_{\sigma(2)\sigma'(2)}&=&\q Q_{\sigma(1)\sigma'(2)}\leftrightarrow Q_{\sigma(2)\sigma'(1)},\q\q\q\text{or}\nn\\ Q_{\sigma(1)\sigma'(1)}\leftrightarrow Q_{\sigma(2)\sigma'(2)}&=&\neg(Q_{\sigma(1)\sigma'(2)}\leftrightarrow Q_{\sigma(2)\sigma'(1)}).\label{qbit3}
\ea
\end{Theorem}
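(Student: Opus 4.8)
The plan is to bootstrap the whole statement from the mechanism already exploited in the proof of Theorem~\ref{thm_3d}: a maximal family of pairwise compatible and pairwise independent correlation questions for $N=2$ is, graphically, a set of $D_1=3$ non-intersecting edges, i.e.\ a ``transversal'' of the $3\times3$ array of vertices; by Theorem~\ref{assump5} (Specker's principle) such a triple is mutually compatible, but by Principle~\ref{lim} it cannot be mutually independent, so (granting that the three members have equivalent status) any two of them determine the third as a Boolean function, and the truth table (\ref{truth}) forces the connecting operation to be $\leftrightarrow$ or $\oplus=\neg(\,\cdot\leftrightarrow\cdot\,)$. I would first record this as a lemma about transversals and then read (\ref{qbit1})--(\ref{qbit3}) off it.

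For (\ref{qbit1}): for fixed permutations $\sigma,\sigma'$ the triple $\{Q_{\sigma(1)\sigma'(1)},Q_{\sigma(2)\sigma'(2)},Q_{\sigma(3)\sigma'(3)}\}$ has pairwise distinct row indices and pairwise distinct column indices, so Lemma~\ref{lem3} gives pairwise compatibility and Lemma~\ref{lem2} gives pairwise independence; the transversal lemma then yields (\ref{qbit1}). For (\ref{qbit2}): the triple $\{Q_{\sigma(1)\sigma'(2)},Q_{\sigma(2)\sigma'(1)},Q_{\sigma(3)\sigma'(3)}\}$ is again a transversal (row indices $\sigma(1),\sigma(2),\sigma(3)$ distinct, column indices $\sigma'(2),\sigma'(1),\sigma'(3)$ distinct), which gives (\ref{qbit2}). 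Then (\ref{qbit3}) follows mechanically by equating the two expressions for $Q_{\sigma(3)\sigma'(3)}$ from (\ref{qbit1}) and (\ref{qbit2}) and bookkeeping the two possible negations.

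Next I would establish logical closure of the $Q_{ij}$ under $\leftrightarrow$ and then informational completeness. By Assumption~\ref{assump4} only compatible questions can be connected, so by Lemma~\ref{lem3} the only nontrivial case is $Q_{ij}\leftrightarrow Q_{kl}$ with $i\neq k,\ j\neq l$; since $D_1=3$ the third row index $m$ and third column index $n$ are uniquely determined, and choosing $\sigma,\sigma'$ with $(\sigma(1),\sigma(2),\sigma(3))=(i,k,m)$, $(\sigma'(1),\sigma'(2),\sigma'(3))=(j,l,n)$, (\ref{qbit1}) gives $Q_{ij}\leftrightarrow Q_{kl}=Q_{mn}$ or $\neg Q_{mn}$; for the compatible individual--correlation and individual--individual pairs closure is already given by (\ref{correlation2}) and by the definition $Q_i\leftrightarrow Q'_j=Q_{ij}$. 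For completeness I would count the $6$ individual plus $9$ correlation questions, check that all $15$ are pairwise independent (the individuals of a single gbit are mutually complementary hence independent, $Q_i$ and $Q'_j$ are independent, and the Corollary after Lemma~\ref{lem2} together with Lemmas~\ref{lem1}--\ref{lem3} handles the mixed and correlation--correlation pairs), and then argue maximality: by Assumption~\ref{assump4} and Definition~\ref{def_comp} any further candidate question is an iterated $*$-composition of these $15$; by (\ref{truth}) only $\leftrightarrow$ (equivalently $\oplus$) can produce a question pairwise independent of its inputs, and by the closure just established each such single step lands back in the $15$-element set up to negation, which carries the same information. Hence the set is a maximal pairwise independent set; invoking Assumption~\ref{assump6} and the earlier lemma that all maximal pairwise independent sets have the same (finite) cardinality gives $D_2=15$, and the explicitly constructed set, being assembled exactly as Definition~\ref{def_comp} prescribes, is informationally complete.

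I expect the main obstacle to be the ``equivalent status'' step, i.e.\ justifying rigorously that among three mutually compatible, pairwise independent binary questions one must be a genuine Boolean function of the other two (rather than the three being jointly but not pairwise independent in some more exotic way), and that the resulting relation is symmetric under which two are taken as primitive. This is precisely the gap left implicit in the proof of Theorem~\ref{thm_3d}. The clean way to close it is to note that mutual compatibility produces a joint $0/1$ truth table on the three questions, pairwise independence forces each two-question marginal to have full support on all four of its configurations, and the only subset of $\{0,1\}^3$ with that property is an affine plane, i.e.\ $C=A\oplus B\oplus c$ for a constant $c$ — which is manifestly a function of $A,B$ and symmetric in the three variables, so the truth-table argument (\ref{truth}) then applies regardless of which pair is taken first.
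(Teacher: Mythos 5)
Your proposal follows essentially the same route as the paper's proof: the paper likewise obtains (\ref{qbit1}) and (\ref{qbit2}) by applying the argument from the proof of theorem \ref{thm_3d} to arbitrary sets of three non-intersecting edges (your ``transversals''), and then reads off logical closure under $\leftrightarrow$, relation (\ref{qbit3}), and $D_2=15$ exactly as you do. One caution on your closing patch for the ``equivalent status'' step: full-support pairwise marginals alone do not force an affine plane (the full cube $\{0,1\}^3$ is a counterexample), so you still need principle \ref{lim} to exclude joint supports with more than four configurations — i.e., the same information-limit/equivalent-status assumption the paper invokes in theorem \ref{thm_3d}.
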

\end{widetext}

\begin{proof}
Statements (\ref{qbit1}, \ref{qbit2}) are an immediate consequence of the argument in the proof of theorem \ref{thm_3d} which can be applied to any correlation question graph with $D_1=3$ non-intersecting edges and thus to any two permutations $\sigma,\sigma'$ of $\{1,2,3\}$. From this it directly follows that the correlation questions $Q_{ij}$ are logically closed under $\leftrightarrow$ such that in the case $D_1=3$ correlations of correlations such as $Q_{ij}\leftrightarrow Q_{kl}$ do {\it not} define any new independent questions. Accordingly, 
\ba
\cq_{M_2}=\{Q_i,Q'_j,Q_{ij}\}_{i,j=1,2,3}\nn
\ea
is an informationally complete set of questions for $N=2$ and $D_1=3$ such that $D_2=15$.
\end{proof}

For example, if $\sigma,\sigma'$ are both trivial then the theorem applied to the following two question graphs
\vspace*{.2cm}
 \begin{eqnarray}
\psfrag{d}{$\vdots$}
\psfrag{Q11}{$Q_{11}$}
\psfrag{Q22}{$Q_{22}$}
\psfrag{Q12}{$Q_{12}$}
\psfrag{Q21}{$Q_{21}$}
\psfrag{Q33}{$Q_{33}$}
{\includegraphics[scale=.2]{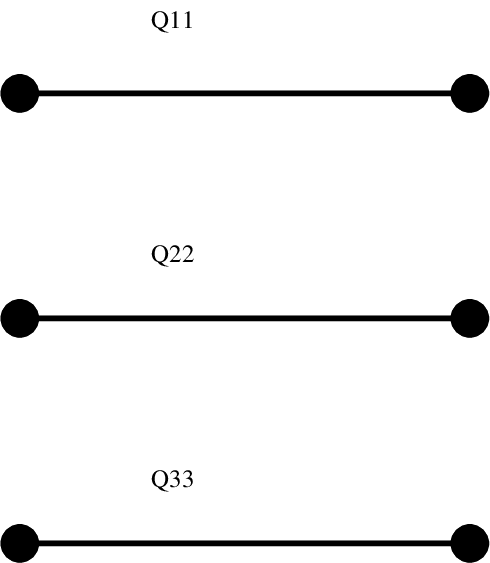}}\q\q\q\q\q\q\q\q\q{\includegraphics[scale=.2]{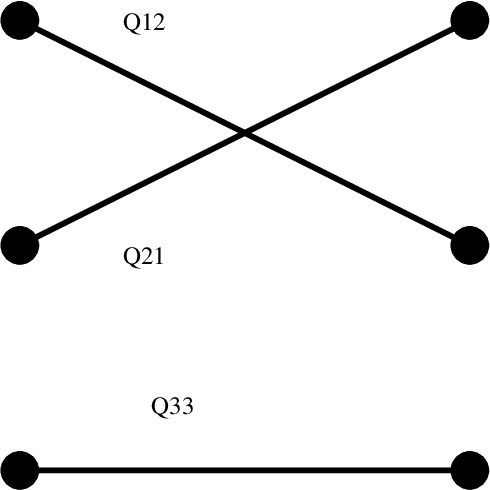}}\notag
\end{eqnarray} 
implies for $D_1=3$ either
\begin{widetext}
\ba
Q_{33}=Q_{11}\leftrightarrow Q_{22},\q\q\q\text{or}\q\q\q Q_{33}=\neg(Q_{11}\leftrightarrow Q_{22})\label{qbit4}
\ea
and either
\ba
Q_{33}=Q_{12}\leftrightarrow Q_{21},\q\q\q\text{or}\q\q\q Q_{33}=\neg(Q_{12}\leftrightarrow Q_{21})\label{qbit5}
\ea
such that either
\ba
Q_{11}\leftrightarrow Q_{22}=Q_{12}\leftrightarrow Q_{21},\q\q\q\text{or}\q\q\q Q_{11}\leftrightarrow Q_{22}=\neg(Q_{12}\leftrightarrow Q_{21})\label{qbit6}
\ea
\end{widetext}
In sections \ref{sec_bell} and \ref{sec_corr} we shall determine whether negations $\neg$ occur in the expressions (\ref{qbit1}--\ref{qbit3}). For $Q,Q',Q''$ maximally compatible and related by an XNOR, we shall henceforth distinguish between
\begin{description}
\item[even correlation:] if $Q=Q'\leftrightarrow Q''$, and
\item[odd correlation:] if $Q=\neg(Q'\leftrightarrow Q'')$.\footnote{We note that this also implies $Q'=\neg(Q\leftrightarrow Q'')$ and $Q''=\neg(Q\leftrightarrow Q')$.}
\end{description}

\subsubsection{An informationally complete set for $N=2$ rebits}

Next, let us consider the rebit case $D_1=2$. $O$ can ask the four individual questions $Q_1,Q_2,Q'_1,Q'_2$ and the four correlations $Q_{11},Q_{12},Q_{21},Q_{22}$. But $O$ can also {\it define} the two new correlation of correlations questions
\ba
\!\!\!\!\!\!\!\!\!\!\!\!\!\!\!Q_{33}:=Q_{12}\leftrightarrow Q_{21},\q\q\tilde{Q}_{33}:=Q_{11}\leftrightarrow Q_{22}\label{q33re}
\ea
corresponding to the two correlation questions graphs
\vspace*{.4cm}
 \begin{eqnarray}
\psfrag{Q1}{$Q_1$}
\psfrag{Q2}{$Q_2$}
\psfrag{Q3}{$Q_3$}
\psfrag{QD}{$Q_{D_1}$}
\psfrag{P1}{\!\!$Q'_1$}
\psfrag{P2}{\!\!$Q'_2$}
\psfrag{P3}{$Q'_3$}
\psfrag{PD}{$Q'_{D_1}$}
\psfrag{q11}{$Q_{11}$}
\psfrag{q21}{$Q_{21}$}
\psfrag{q22}{$Q_{22}$}
\psfrag{q12}{$Q_{12}$}
\psfrag{q33}{$Q_{33}$}
\psfrag{qt33}{$\tilde{Q}_{33}$}
\psfrag{d}{$\vdots$}
\!\!\!\!{\includegraphics[scale=.2]{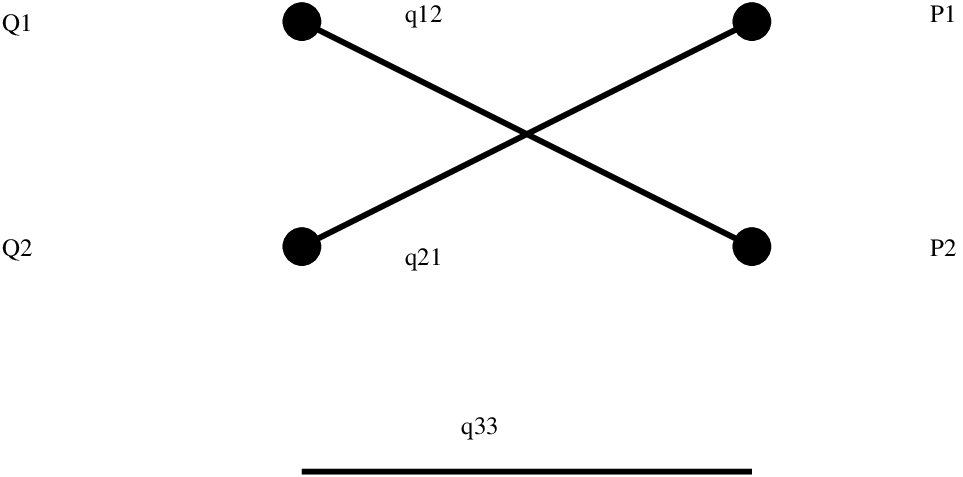}}\q\q\q\q{\includegraphics[scale=.2]{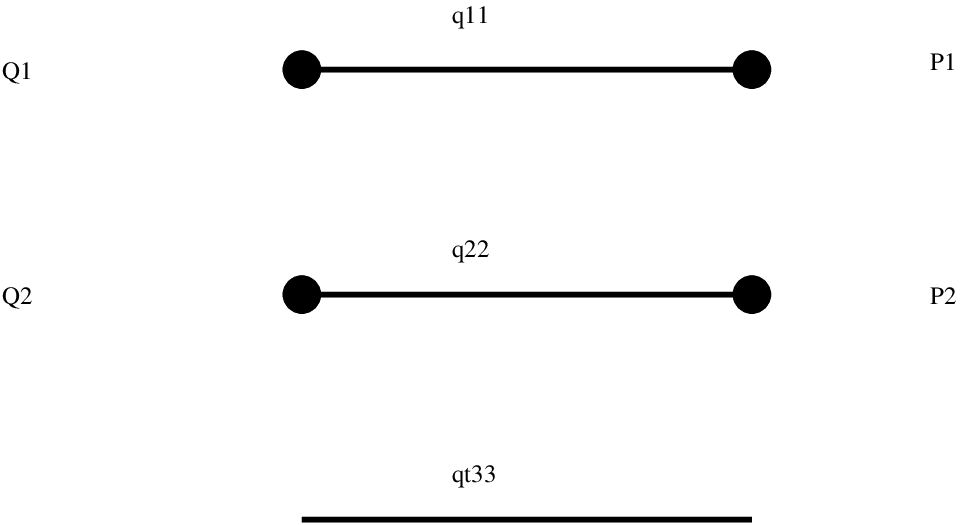}}\notag
\end{eqnarray}  
\vspace*{.1cm}

\noindent(we add the correlation of correlations as a new edge without vertices to the graph). The difference to the qubit case is that $Q_{33},\tilde{Q}_{33}$ can now {\it not} be written as correlations of individual questions $Q_3,Q'_3$ since the latter do not exist. Clearly, $Q_{12},Q_{21},Q_{33}$ and $Q_{11},Q_{22},\tilde{Q}_{33}$ are two pairwise independent sets. The question is whether $Q_{33},\tilde{Q}_{33}$ are independent of each other and of the individuals. The following lemma even asserts complementarity to the latter. (Note that this lemma holds trivially in the case $D_1=3$.)

\begin{lem}\label{lem4}
The questions asking for the correlation of the bipartite correlations, $Q_{12}\leftrightarrow Q_{21}$ and $Q_{11}\leftrightarrow Q_{22}$, are maximally complementary to and independent of any $Q_1,Q_2,Q'_1,Q'_2$.
\end{lem}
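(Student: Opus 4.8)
The plan is to re-run, one level up, exactly the collapse-and-compare argument used in Lemmas~\ref{lem1} and \ref{lem3}. I will do the representative case of $Q_{33}=Q_{12}\leftrightarrow Q_{21}$ against $Q_1$; every other case ($Q_2,Q'_1,Q'_2$ versus $Q_{33}$, and $Q_1,Q_2,Q'_1,Q'_2$ versus $\tilde Q_{33}=Q_{11}\leftrightarrow Q_{22}$) is obtained by relabelling the vertices of the $D_1=2$ question graph. First I would assemble the local combinatorics. By Lemma~\ref{lem3} the edges $Q_{12}$ and $Q_{21}$ do not share a vertex, so $Q_{12},Q_{21}$ are compatible, and by Lemma~\ref{lem2} they are independent. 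Hence, exactly as in section~\ref{sec_connective} (the paragraph after (\ref{correlation2})), the triple $Q_{12},Q_{21},Q_{33}$ is logically closed under $\leftrightarrow$, pairwise independent and pairwise compatible; by Theorem~\ref{assump5} it is mutually compatible, so $O$ may treat it as a Boolean algebra (assumption~\ref{assump4b}), and in particular $Q_{21}=Q_{12}\leftrightarrow Q_{33}$. Also, by Lemma~\ref{lem1}, $Q_1$ is compatible with $Q_{12}$ and complementary to $Q_{21}$, while $Q_1,Q'_2,Q_{12}$ are pairwise independent, so $Q_1$ is independent of $Q_{12}$.

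Next I would argue by contradiction: suppose $Q_1$ and $Q_{33}$ are at least partially compatible, i.e.\ there is a state of $S$ with maximal information $\alpha_1=1\,\texttt{bit}$ about $Q_1$ and nonzero information $\alpha_{33}>0$ about $Q_{33}$ (the case $\alpha_{33}=1$, $\alpha_1>0$ is identical). In that state let $O$ ask $Q_{12}$. Since $Q_{12}$ is compatible with and independent of both $Q_1$ and $Q_{33}$, assumption~\ref{assump5b} leaves $\alpha_1$ and $\alpha_{33}$ untouched, and by repeatability $O$ now also has $\alpha_{12}=1\,\texttt{bit}$. But with $Q_{12}$ known with certainty, the relation $Q_{21}=Q_{12}\leftrightarrow Q_{33}$ puts the answers to $Q_{21}$ and $Q_{33}$ into bijection for $O$, so $\alpha_{21}=\alpha_{33}>0$. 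This contradicts the fact that $Q_{21}$ is complementary to $Q_1$ (Lemma~\ref{lem1}) together with $O$'s maximal information about $Q_1$, which forces $\alpha_{21}=0$. Therefore $Q_1$ and $Q_{33}$ are maximally complementary; since complementary questions are independent, this also gives independence, and the analogous relabelled arguments finish all the remaining pairs.

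The one step I expect to need care is the intermediate claim that $Q_{12},Q_{21},Q_{33}$ (and likewise $Q_{11},Q_{22},\tilde Q_{33}$) form a pairwise-independent, pairwise-compatible --- hence, via Specker, mutually compatible --- closed triple: this is the $D_1=2$ analogue of the $Q_i,Q'_j,Q_{ij}$ structure raised one level, and it is precisely what allows assumption~\ref{assump5b} and the classical inference $Q_{21}=Q_{12}\leftrightarrow Q_{33}$ to be applied at the same time. Once that is nailed down, the rest is the familiar bit-counting against Lemmas~\ref{lem1}--\ref{lem3}. (Note the lemma is vacuous when $D_1=3$, since there $Q_{33}=Q_3\leftrightarrow Q'_3$ and the claim reduces to Lemma~\ref{lem1}.)
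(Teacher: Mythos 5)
Your proposal is correct and follows essentially the paper's own proof: a contradiction argument in which an auxiliary question from the defining XNOR triple ($Q_{12}$ in your case, $Q_{11}$ in the paper's) is asked, assumption~\ref{assump5b} preserves the assumed information, and Lemma~\ref{lem1}'s complementarity delivers the contradiction, with independence then following (the paper proves it separately via a Lemma~\ref{lem2}-style argument, but deducing it from complementarity is equally valid since the paper itself notes complementary questions are independent). The only substantive deviation is that you place the maximal bit on $Q_1$ rather than on the correlation of correlations, which forces the partial-information ``bijection'' step $\alpha_{21}=\alpha_{33}$; this is sound in the framework, but only $y_{21}\neq 1/2$ is actually needed there (the exact equality tacitly uses the symmetry of the not-yet-derived measure), and swapping which question carries the maximal information reproduces the paper's cleaner ``maximal $+$ maximal $\Rightarrow$ maximal'' inference.
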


\begin{proof}
We firstly demonstrate independence of $Q_1$ and $Q_{11}\leftrightarrow Q_{22}$. This follows from noting that $Q_{22}$ is maximally complementary to $Q_1$ (lemma \ref{lem1}) and maximally compatible with $Q_{11}\leftrightarrow Q_{22}$ and arguments completely analogous to those in the proof of lemma \ref{lem2}. Next, we establish complementarity of $Q_{11}\leftrightarrow Q_{22}$ and $Q_1$. Suppose the contrary, namely, that $Q_{11}\leftrightarrow Q_{22}$ and $Q_1$ were at least partially compatible such that there must exist a state with $\alpha_{Q_{11}\leftrightarrow Q_{22}}=1$ \texttt{bit} and $\alpha_1>0$ \texttt{bit}. Clearly, both are maximally compatible with and independent of $Q_{11}$. According to assumption \ref{assump5b}, asking $Q_{11}$ to this state does not change $O$'s information about $Q_{11}\leftrightarrow Q_{22}$ and $Q_1$. However, maximal information about $Q_{11}\leftrightarrow Q_{22}$ and $Q_{11}$ also implies maximal information about $Q_{22}$ which is maximally complementary to $Q_1$. Hence, $Q_1$ and $Q_{11}\leftrightarrow Q_{22}$ must be maximally complementary. The arguments works analogously for the other cases.
\end{proof}

Finally, we show that (\ref{qbit4}--\ref{qbit6}) hold analogously for the rebit case $D_1=2$. From this it follows that the four individual questions $Q_i,Q'_j$, the four correlations $Q_{ij}$, $i,j=1,2$, and the correlation of correlations $Q_{33}$ form an informationally complete set $\cq_{M_2}$ for rebits. That is, in contrast to the qubit case, there is now one non-trivial correlation of correlations question which is pairwise independent from the individuals and correlations.

\begin{Theorem}\label{thm_rebit}{\bf (Rebits)}
If $D_1=2$ then $\cq_{M_2}=\{Q_i,Q_j',Q_{ij},Q_{33}\}_{i,j=1,2}$ constitutes an informationally complete set for $N=2$ with $D_2=9$ and either
\ba
Q_{11}\leftrightarrow Q_{22}&=&\q Q_{12}\leftrightarrow Q_{21},\q\q\q\text{or}\nn\\ Q_{11}\leftrightarrow Q_{22}&=&\neg(Q_{12}\leftrightarrow Q_{21}).\nn
\ea
\end{Theorem}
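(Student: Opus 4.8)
The plan is to mirror, for the case $D_1=2$, the argument that produced (\ref{qbit4})--(\ref{qbit6}) in the qubit case, with the role of the (now non-existent) individual correlation $Q_3\leftrightarrow Q_3'$ taken over by the \emph{defined} question $Q_{33}:=Q_{12}\leftrightarrow Q_{21}$. Concretely, I would show that $\{Q_{11},Q_{22},Q_{33}\}$ is a triple of pairwise compatible and pairwise independent questions; then Specker's principle (Theorem~\ref{assump5}) makes them mutually compatible, principle~\ref{lim} with $N=2$ forbids three mutually independent \texttt{bits}, so the answers to $Q_{11},Q_{22}$ must determine $Q_{33}$, and the truth table (\ref{truth}) pins the connecting operation down to $\leftrightarrow$ or $\oplus$; hence $Q_{33}=Q_{11}\leftrightarrow Q_{22}=\tilde{Q}_{33}$ or $Q_{33}=\neg(Q_{11}\leftrightarrow Q_{22})=\neg\tilde{Q}_{33}$, which is the claimed identity.

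The pairwise-independence input is routine. $Q_{11}$ and $Q_{22}$ are independent by lemma~\ref{lem2}; $Q_{33}$ is independent of $Q_{12},Q_{21}$ by the $\leftrightarrow$-closure argument of section~\ref{sec_connective} (using $Q_{12}\perp Q_{21}$); and $Q_{33}$ is independent of $Q_{11}$ (and of $Q_{22}$) by the same reasoning as in lemma~\ref{lem2}: $Q_{12}$ is complementary to $Q_{11}$ by lemma~\ref{lem3} while being compatible with $Q_{33}$, so in a state of maximal knowledge of $Q_{12},Q_{21}$ (which exists by Specker applied to that compatible, independent pair) $O$ knows $Q_{33}$ with certainty yet must have zero information about $Q_{11}$, which contradicts any residual dependence of $Q_{33}$ on $Q_{11}$. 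Lemma~\ref{lem4} additionally records that $Q_{33},\tilde{Q}_{33}$ are complementary to and independent of all four individuals, which will be needed in the completeness count.

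The main obstacle is the pairwise-compatibility part, namely that $Q_{33}$ is compatible with $Q_{11}$ and with $Q_{22}$; this is the analogue of lemma~\ref{lem3} and the place where the permissible update rule is genuinely used. I would argue it as follows. Start $S$ in a state of maximal knowledge of $Q_{12},Q_{21}$, so $O$ also knows $Q_{33}$; let $O$ ask $Q_{11}$, which by lemma~\ref{lem3} is complementary to both $Q_{12}$ and $Q_{21}$, so $O$ gains one fresh \texttt{bit} while his information about $Q_{12},Q_{21}$ becomes obsolete, and by principle~\ref{unlim} he still holds exactly $N=2$ independent \texttt{bits} afterwards. The second retained \texttt{bit} is carried by some question compatible with and independent of $Q_{11}$; by assumption~\ref{assump5b} asking $Q_{11}$ cannot have altered $O$'s information about such a question, so it was already known with certainty \emph{before} the interrogation, hence is a $\leftrightarrow$-consequence of the prior pair $Q_{12},Q_{21}$ and therefore (by closure) one of $Q_{12},Q_{21},Q_{33}$ up to negation; of these only $Q_{33}$ is compatible with $Q_{11}$ (lemma~\ref{lem3}). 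Thus $O$ simultaneously knows $Q_{11}$ and $Q_{33}$, so they are compatible; running the same argument with $Q_{22}$ in place of $Q_{11}$ (using that $Q_{22}$ is complementary to both $Q_{12}$ and $Q_{21}$) completes the triple's pairwise compatibility, and the identity follows as above.

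For informational completeness and $D_2=9$: the nine questions $\cq_{M_2}=\{Q_i,Q'_j,Q_{ij},Q_{33}\}_{i,j=1,2}$ are pairwise independent by the above together with lemmas~\ref{lem2} and~\ref{lem4}, so $D_2\ge 9$. Conversely, by definition~\ref{def_comp} and assumption~\ref{assump4} every further question in $\cq_2$ is built by $\leftrightarrow$ from compatible subexpressions, and a short case check closes the list: individual$\,\leftrightarrow\,$individual yields only the $Q_{ij}$ (truth table of section~\ref{sec_connective}); a compatible individual$\,\leftrightarrow\,$correlation closes back to $\{Q_i,Q'_j,Q_{ij}\}$ by (\ref{correlation2}); the $K_{2,2}$ question graph has exactly the two perfect matchings $\{Q_{11},Q_{22}\}$ and $\{Q_{12},Q_{21}\}$, so the only correlations of correlations are $\tilde{Q}_{33}$ and $Q_{33}$, which are related by the identity just established and hence carry no new independent information; and $Q_{33}\,\leftrightarrow\,$(a compatible correlation) closes back as well, e.g.\ $Q_{12}\leftrightarrow Q_{33}=Q_{21}$ and $Q_{11}\leftrightarrow Q_{33}\in\{Q_{22},\neg Q_{22}\}$ by the identity. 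Therefore nothing in $\cq_2$ is independent of $\cq_{M_2}$, so $\cq_{M_2}$ is informationally complete and $D_2=9$, as asserted. I expect step~3 (the compatibility of $Q_{33}$ with the diagonal correlations, i.e.\ that asking $Q_{11},Q_{22}$ does not destroy $O$'s information about $Q_{33}$) to be the delicate point, since it hinges on the interplay of assumption~\ref{assump5b} and principle~\ref{unlim} in constraining the update rule; the completeness count is then only combinatorics of the $K_{2,2}$ graph and the $\leftrightarrow$-closure relations.
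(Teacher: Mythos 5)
Your proposal is correct and follows essentially the paper's own route: the identity is forced by combining principle \ref{unlim} (no net information loss) to establish compatibility of the correlation-of-correlations with the complementary pair, Specker's principle (theorem \ref{assump5}) for mutual compatibility, the limited-information argument of theorem \ref{thm_3d} with truth table (\ref{truth}) to pin down $\leftrightarrow$ or its negation, and XNOR-closure plus lemma \ref{lem4} for informational completeness and $D_2=9$. The only difference is cosmetic: you run the argument in the mirrored direction, starting from known $Q_{12},Q_{21}$ (hence $Q_{33}$) and asking $Q_{11}$, whereas the paper starts from known $Q_{11},Q_{22}$ (hence $\tilde{Q}_{33}$) and asks $Q_{12}$, and your appeal to assumption \ref{assump5b} in the compatibility step is at the same level of rigor as the paper's.
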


\begin{proof}
$O$ can begin by asking $S$ the maximally compatible $Q_{11}$ and $Q_{22}$ upon which he also knows the answer to $\tilde{Q}_{33}$. $O$ then possesses the maximal amount of $N=2$ independent \texttt{bits} of information about $S$. Next, $O$ can ask $Q_{12}$ (or $Q_{21}$) which, according to lemma \ref{lem3}, is maximally complementary to $Q_{11},Q_{22}$. But, by rule \ref{unlim}, $O$ is not allowed to experience a net loss of information. Hence, $Q_{12}$ (or $Q_{21})$ must be maximally compatible with $\tilde{Q}_{33}$ (they are also independent). That is, $Q_{12},Q_{21},\tilde{Q}_{33}$ are mutually maximally compatible according to theorem \ref{assump5} (Specker's principle) and $O$ may also ask all three at the same time. But then the same argument as in the proof of theorem \ref{thm_3d} applies such that either $\tilde{Q}_{33}=Q_{12}\leftrightarrow Q_{21}$ or $\tilde{Q}_{33}=\neg(Q_{12}\leftrightarrow Q_{21})$. This implies either $Q_{33}=\tilde{Q}_{33}$ or $Q_{33}=\neg\tilde{Q}_{33}$. Accordingly, there is only one independent correlation of correlations $Q_{33}$. Since $Q_{11},Q_{12},Q_{21},Q_{22}$ and $Q_{33}$ are then, by construction, logically closed under the XNOR $\leftrightarrow$ and $Q_{33}$ is maximally complementary to all individuals $Q_1,Q_2,Q'_1,Q'_2$, no further pairwise independent question can be built from this set such that it is informationally complete. Hence, $D_2=9$.
\end{proof}

\subsubsection{Entanglement and non-local tomography for rebits}\label{sec_rebtang}

The rebit case $D_1=2$ thus has a very special question and correlation structure: $Q_{33}=Q_{12}\leftrightarrow Q_{21}$ is the only composite question which is maximally complementary to all individuals, but maximally compatible with all correlations $Q_{ij}$. By contrast, e.g., $Q_{11}$ is maximally complementary to $Q_2,Q'_2,Q_{12},Q_{21}$ and maximally compatible with $Q_1,Q'_1,Q_{22}$ and maximally compatible with the correlation of correlations $Q_{33}$. Consequently, $Q_{33}$ assumes a special role in the entanglement structure: once $O$ knows the answer to this question he may no longer have {\it any} further information about the outcomes of individual questions, but may have information about correlation questions. That is, two rebits can be in a state of non-maximal information of $1$ \texttt{bit} relative to $O$, corresponding to the latter only having maximal knowledge about the answer to $Q_{33}$ and no information otherwise, and still be {\it maximally} entangled because any individual information is forbidden in that situation. Notice that this is {\it not} true for pairs of qubits because even if everything $O$ knew about the pair was the answer to $Q_{33}$, he could still acquire information about the individuals $Q_3,Q'_3$ such that one could not consider such a state as maximally entangled. 

Even stronger, $O$ will {\it always} know the answer to $Q_{33}$ if the two rebits are maximally entangled and he has maximal information about $S$. This follows from (\ref{totalinfo}): for a maximally entangled state of maximal information (no information about individuals), the following must hold
\ba
I_{O\rightarrow S}&=&\underset{=0}{\underbrace{\sum_{i=1}^2(\alpha_i+\alpha'_i)}}+\sum_{i,j=1}^2\alpha_{ij} +\alpha_{33}\nn\\
&=&\alpha_{11}+\alpha_{12}+\alpha_{21}+\alpha_{22}+\alpha_{33}=3\,\texttt{bits}.\nn
\ea
The last equality follows from the fact that once $O$ knows the answers to two maximally compatible questions, he will also know the answer to their correlation and since this correlation is in the pairwise independent question set the total information defined in (\ref{totalinfo}) will always yield 3 \texttt{bits} for $N=2$ independent \texttt{bit} systems in states of maximal knowledge.\footnote{We are implicitly using here also rules \ref{pres} and \ref{time}.} If we now require that $O$ cannot know more than a single bit about maximally complementary questions then, in a state of maximal information of $N=2$ independent \texttt{bits}, we must have
\ba
\alpha_{11}+\alpha_{12}=\alpha_{21}+\alpha_{22}=1\,\texttt{bit}\q\q\q\q\q\q\nn\\\Rightarrow\q\q\q \alpha_{33}=1\,\texttt{bit}\nn
\ea
such that $O$ must have maximal information about $Q_{33}$. Therefore, the correlation of correlations $Q_{33}$ can be viewed as the litmus test for entanglement of two rebits.

We note that the individuals $Q_1,Q_2,Q_1',Q_2'$ will ultimately correspond to projections onto the $+1$ eigenspaces of $\sigma_x,\sigma_z$, while the $Q_{ij}$ correspond to projections onto the $+1$ eigenspaces of $\sigma_i\otimes\sigma_j$, $i,j=x,z$, and $Q_{33}$ corresponds to the projection onto the $+1$ eigenspace of $\sigma_y\otimes\sigma_y$, where $\sigma_x,\sigma_y,\sigma_z$ are the Pauli matrices \cite{hw,hw2}. Observables and density matrices on a real Hilbert space correspond to real symmetric matrices. This is the reason why $\sigma_y$ is {\it not} an observable on $\mathbb{R}^2$ (it corresponds to the `missing' $Q_3$), but $\sigma_y\otimes\sigma_y$ {\it is} a real symmetric matrix and thus an observable on $\mathbb{R}^2\otimes\mathbb{R}^2$. 

This gives a novel and simple explanation for the discovery that $\sigma_y\otimes\sigma_y$ determines the entanglement of rebits \cite{caves2001entanglement}: a two-rebit density matrix $\rho$ is separable if and only if $\Tr(\rho\,\sigma_y\otimes\sigma_y)=0$, i.e.\ if the state has no $\sigma_y\otimes\sigma_y$ component. This statement means in our language that a state is separable if and only if $\alpha_{33}=0$ and is consistent with our observation above because {\it any} information about the individuals is incompatible with information about $Q_{33}$ due to complementarity.

This question structure also has severe repercussions for rebit state tomography: it must ultimately be {\it non-local}. For rebits, $D_1=2$, the probability that $Q_{33}=1$ could be written as 
\ba
y_{33}=p(Q_{12}=1,Q_{21}=1)+p(Q_{12}=0,Q_{21}=0)\nn
\ea
where $p(Q_i,Q_j)$ denotes here the joint probability distribution over $Q_i,Q_j$. That is, in a multiple shot interrogation, $O$ could ask both $Q_{12},Q_{21}$ to the identically prepared rebit couples and from the statistics over the answers, he could also determine $y_{33}$. But this probability cannot be decomposed into joint probabilities over the individual questions according to $Q_{33}=``(Q_1\leftrightarrow Q_2')\leftrightarrow(Q_2\leftrightarrow Q'_1)"$ because $Q_1,Q_2$ and $Q'_1,Q'_2$ are maximally complementary. Therefore, $O$ would not be able to determine $y_{33}$ by only asking individual questions to the two rebits and the statistics over these answers.\footnote{For example, in a multiple shot interrogation $O$ could first ask $Q_1,Q'_2$ on a set of identically prepared rebit couples to find out whether the answers are correlated. On a second identically prepared set, $O$ could then ask $Q_2,Q'_1$ which are maximally complementary to the first questions he asked. From the statistics over the answers, $O$ would be able to determine also the probabilities for $Q_ {12}$ and $Q_{21}$. But since he needed two separate interrogation runs to determine the statistics for $Q_{12}$ and $Q_{21}$, he would not be able to infer from this {\it any} information whatsoever about the statistics of answers to $Q_{33}$.} That is, for rebits, state tomography would always require correlation questions and in this sense be non-local. Note that this stands in stark contrast to qubit pairs where $Q_{33}=Q_3\leftrightarrow Q'_3$ can be written in terms of individual questions such that also $y_{33}=p(Q_3=1,Q'_3=1)+p(Q_3=0,Q'_3=0)$ can be determined by the statistics over the answers to $Q_3,Q'_3$ only. Qubit systems (and quantum theory in general) are thus tomographically local.

The requirement of {\it tomographic locality}, according to which the state of a composite system can be determined by doing statistics over measurements on its subsystems, is a standard condition in the GPT framework \cite{masanes2011derivation,Mueller:2012ai,barrett2007information,Dakic:2009bh,Masanes:2011kx,de2012deriving,Masanes:2012uq} and thus directly rules out rebit theory. However, in contrast to derivations within the GPT landscape, we shall not implement local tomography here because it will be interesting to see the differences between real and complex quantum theory from the perspective of information inference and we shall thus carry out the reconstruction of both here and in \cite{hw,hw2}.

Local tomography is usually taken as the origin of the tensor product structure for composite systems in quantum theory \cite{Masanes:2011kx,masanes2011derivation,Mueller:2012ai,barrett2007information}. However, one has to be careful with this statement because there exist {\it two distinct} tensor products: there is (a) the tensor product of Hilbert spaces, e.g., $(\mathbb{C}^2)^{\otimes N}$ for qubits and $(\mathbb{R}^2)^{\otimes N}$ for rebits, and (b) the tensor product of {\it unnormalized} probability vectors or density matrices. A tensor product of type (b) defines a sufficient support only for composite qubit systems, but not for composite rebit systems; the space of hermitian matrices over $(\mathbb{C}^2)^{\otimes N}$ is the $N$-fold tensor product of hermitian matrices over $\mathbb{C}^2$, but the space of symmetric matrices over $(\mathbb{R}^2)^{\otimes N}$ is {\it not} the $N$-fold tensor product of symmetric matrices over $\mathbb{R}^2$ (see also \cite{caves2001entanglement}). What local tomography implies is the tensor product (b), but as the example of rebits shows, the tensor product of type (a) also exists without it. 

%As an exple for (b), consider $\vec{P}=p\,(1,y_1,y_2,y_3)\in\mathbb{R}^4$ which represents an unnormalized probability vector for a single qubit (a hermitian $2\times2$ matrix has four parameters), where $p$ is the probability that the qubit is there. Then the product $\vec{P}_1\otimes\vec{P}_2$ constitutes an unnormalized product state for two qubits which lives on the tensor product $\mathbb{R}^4\otimes\mathbb{R}^4$ (a $4\times4$ hermitian matrix has 16 parameters) \cite{Masanes:2011kx}. What local tomography implies is the tensor product (b). 

\subsubsection{A Bell scenario with questions: ruling out {\it local} hidden variables}\label{sec_bell}

We shall now settle the issue of the relative negation $\neg$, i.e.\ whether
\ba
Q_{11}\leftrightarrow Q_{22}&=&\q Q_{12}\leftrightarrow Q_{21},\q\q\q\text{or}\nn\\ Q_{11}\leftrightarrow Q_{22}&=&\neg(Q_{12}\leftrightarrow Q_{21}),\nn
\ea
for both rebits and qubits; one of these equations must be true (see theorems \ref{thm_qubit} and \ref{thm_rebit}). %We firstly note that, in agreement with the discussion in section \ref{sec_ri}, these relations are operationally testable, i.e.\ $O$ can check through interrogations on two gbits which of these two equations is correct. While these logical identities are not {\it directly} testable because, e.g., $Q_{11}$ on the left hand side and $Q_{12}$ on the right hand side are complementary according to lemma \ref{lem3} such that they do not feature simultaneous truth values relative to $O$, he can {\it indirectly} test them. Namely, he can firstly test (\ref{qbit4}) and (\ref{qbit5}) separately (according to lemma \ref{lem3}, $Q_{33}$ is compatible with all of $Q_{11},Q_{12},Q_{21},Q_{22}$) and from this conclude which of the above identities holds.
It is instructive to illustrate these equations by means of a question configuration analogous to a Bell scenario.

{\bf(A)} Suppose $Q_{11}\leftrightarrow Q_{22}=Q_{12}\leftrightarrow Q_{21}$ was true. Before we determine whether it is consistent with our background assumptions and postulates, we note that this is the case of classical (or realist) logic: $O$ can consistently interpret any such configuration by means of a {\it local} `hidden variable' model. For example, consider the case $Q_{11}=Q_{22}=1$ (which is compatible with the rules) such that $Q_{11}\leftrightarrow Q_{22}=1$ and, consequently, also $Q_{12}\leftrightarrow Q_{21}=1$. We represent the last two equations by the following two graphs
\vspace*{-.05cm}
\begin{eqnarray}
\psfrag{Q12}{$\!\textcolor{red}{Q_{12}}$}
\psfrag{Q21}{$\,\textcolor{red}{Q_{21}}$}
 \psfrag{Q11}{$\textcolor{blue}{Q_{11}}$}
\psfrag{Q22}{$\textcolor{blue}{Q_{22}}$}
{\includegraphics[scale=.2]{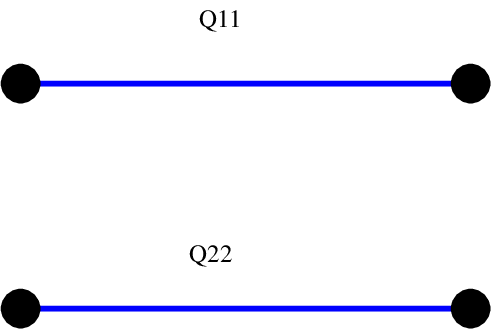}}\q\q\q\q\q\q\q\q{\includegraphics[scale=.2]{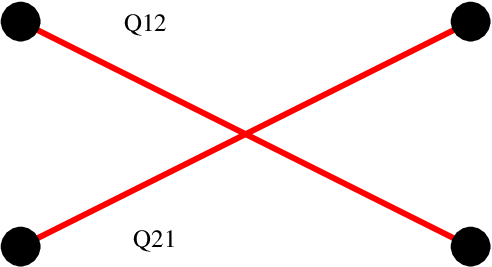}}\nn 
\end{eqnarray}
where both edges solid and of equal colour means that the corresponding questions are correlated. $O$ could consistently read this configuration as follows: ``Since $``Q_1=Q'_1"$ and $``Q_2=Q'_2"$, I would precisely have to conclude that $``Q_{12}=Q_{21}"$, i.e.\ $Q_{12}\leftrightarrow Q_{21}=1$, if the individuals $Q_1,Q_2,Q'_1,Q'_2$ had definite values which, however, I do not know.'' %$O$ would conclude that there must be an underlying unknown ontic state. 
For instance, $Q_{11}= Q_{22}=Q_{12}\leftrightarrow Q_{21}=1$ would be consistent with the four ontic states:
\begin{widetext}
 \begin{eqnarray}
 \psfrag{1}{$1$}
 \psfrag{0}{$0$}
{\includegraphics[scale=.2]{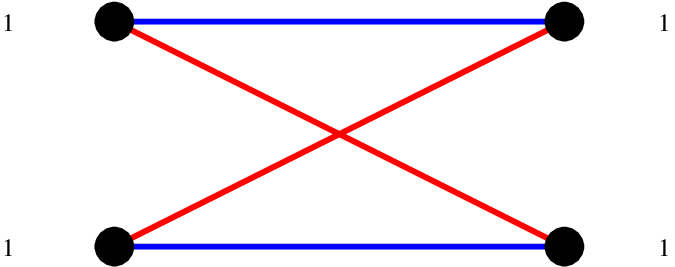}}\q\q\q\q{\includegraphics[scale=.2]{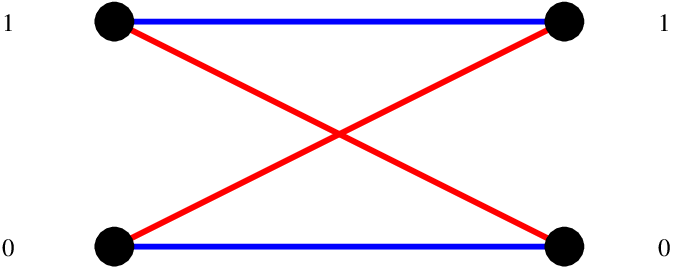}}\q\q\q\q{\includegraphics[scale=.2]{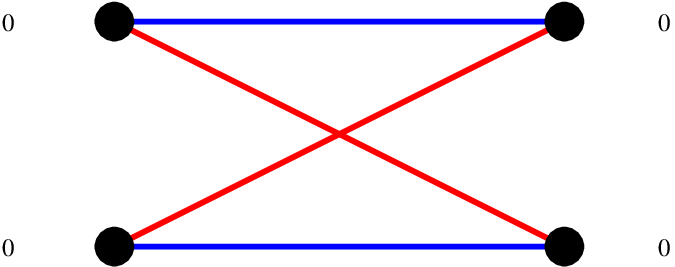}}\q\q\q\q{\includegraphics[scale=.2]{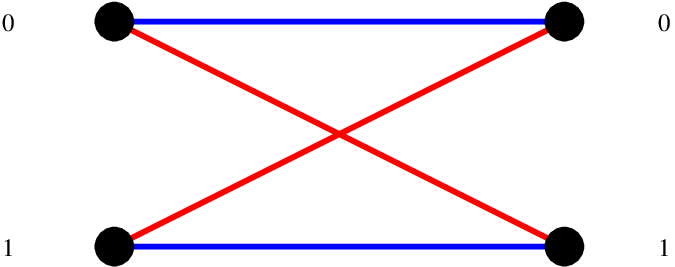}}\,\,\,\,.\nn 
\end{eqnarray} 
\end{widetext}
Therefore, $O$ could join the two graphs consistently (without knowing the truth values of the individuals)
\begin{eqnarray}
{\includegraphics[scale=.2]{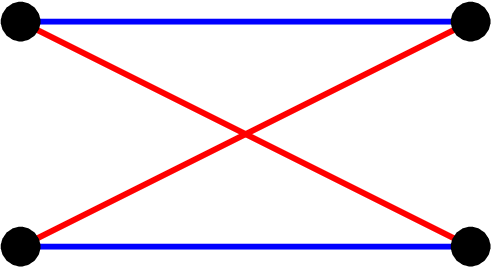}}\label{join}
\end{eqnarray}
i.e., draw all four edges simultaneously which corresponds to all four edges having definite (albeit unknown) values at the same time. 
%$O$ could then also read the graph (\ref{join}) equivalently as 
% \begin{eqnarray}
% \psfrag{q1}{$Q_1$}
% \psfrag{q2}{$Q_2$}
% \psfrag{p1}{$Q'_1$}
% \psfrag{p2}{$Q'_2$}
%{\includegraphics[scale=.2]{}}\label{join2}
%\end{eqnarray} 
%and would have to conclude that
%\ba
%Q_1\leftrightarrow Q_2=Q'_1\leftrightarrow Q'_2.\nn
%\ea
%That is, although $O$ would not know the answer to either $Q_1\leftrightarrow Q_2$ or $Q'_1\leftrightarrow Q'_2$, he would know their correlation and thus be able to make consistent statements about logical connectives of complementary questions. This, however, is illegal according to our assumption \ref{assump4}.
Similarly, $O$ can interpret any other configuration with $Q_{11}\leftrightarrow Q_{22}=Q_{12}\leftrightarrow Q_{21}$ in terms of a local `hidden variable' model which assigns definite values to the individuals and, conversely, as can be easily checked, any assignment of definite (or ontic) values to the individuals $Q_1,Q_2,Q'_1,Q'_2$ leads necessarily to $Q_{11}\leftrightarrow Q_{22}=Q_{12}\leftrightarrow Q_{21}$. %Consequently, $O$ would always be able to extract illegal information about complementary questions, thereby violating assumption \ref{assump4}, 

We now preclude this possibility. We note that $Q_{11}\leftrightarrow Q_{22}=Q_{12}\leftrightarrow Q_{21}$ can be rewritten in terms of the individuals as
\begin{widetext}
\ba
(Q_1\leftrightarrow Q_1')\leftrightarrow(Q_2\leftrightarrow Q_2')=(Q_1\leftrightarrow Q_2')\leftrightarrow(Q_2\leftrightarrow Q_1')\label{noclassfo}
\ea
\end{widetext}
which is a {\it classical logical identity} that is {\it always} true for classical Boolean logic. It states that the brackets can be equivalently regrouped, i.e.\ that the order in which the XNOR $\leftrightarrow$ is executed in the logical expression can be switched. In classical Boolean logic, this identity follows from the associativity and symmetry of the XNOR. Namely, suppose for a moment that $Q_1,Q_2,Q_1',Q_2'$ had simultaneous truth values. Then Boolean logic would tell us that
\begin{widetext}
\ba
(Q_1\leftrightarrow Q_1')\leftrightarrow(Q_2\leftrightarrow Q_2')&=&\left((Q_1\leftrightarrow Q_1')\leftrightarrow Q_2\right)\leftrightarrow Q_2'=\left(Q_1\leftrightarrow(Q_2\leftrightarrow Q_1')\right)\leftrightarrow Q_2'\nn\\
&=&Q_2'\leftrightarrow\left(Q_1\leftrightarrow(Q_2\leftrightarrow Q_1')\right)=(Q_1\leftrightarrow Q_2')\leftrightarrow(Q_2\leftrightarrow Q_1').\label{noclassfo2}
\ea
\end{widetext}
Of course, relative to $O$, $Q_1,Q_2,Q_1',Q_2'$ do not have simultaneous truth values. Instead, assumption \ref{assump4b} implies that, %in $O$'s model, %classical rules of inference (and thus classical logical identities) are valid for any logical (sub-)expressions which are entirely composed of mutually maximally compatible questions (regardless of whether the latter can be decomposed into other questions) and that otherwise no classical rules of inference hold. 
%According to our discussion in section \ref{sec_ri} this identity would thus hold  in $O$'s model if $Q_1,Q_2,Q_1',Q_2'$ were mutually maximally compatible. 
%In other words, for 
for a classical logical identity to hold in $O$'s model, either (a) all questions in the involved logical expressions are mutually maximally compatible or (b) the identity follows from applying classical rules of inference to (sub-)expressions which are entirely written in terms of mutually maximally compatible questions and subsequently decomposing the result logically into other questions (with possibly distinct compatibility relations).
However, since $Q_1,Q_2$ and $Q_1',Q_2'$ form maximally complementary pairs, neither is the case here. In particular, the right hand side in (\ref{noclassfo}) does {\it not} follow from applying any rules of Boolean logic to the total expression $Q_{11}\leftrightarrow Q_{22}$ or those subexpressions $Q_1\leftrightarrow Q_1'$ and $Q_2\leftrightarrow Q_2'$ on the left hand side which are fully composed of maximally compatible questions.\footnote{The only relevant rules in this case, as in (\ref{noclassfo2}), follow from the properties of the XNOR, namely its symmetry and associativity in Boolean logic. However, it is clear that, thanks to the occurring complementarity, the arguments of (\ref{noclassfo2}) no longer apply. For instance, the first step in (\ref{noclassfo2}) is illegal, according to assumption \ref{assump4}, because $Q_2$ and $Q_1\leftrightarrow Q_1'$ are complementary such that they may not be directly connected. } Hence, (\ref{noclassfo}) violates assumption \ref{assump4b} %which states that {\it no} classical rules of inference (and thus no classical logical identities) hold unless all questions in a logical expression are mutually maximally complementary. Hence, 
such that we must rule out the possibility $Q_{11}\leftrightarrow Q_{22}=Q_{12}\leftrightarrow Q_{21}$.

{\bf(B)} We are thus already forced to the conclusion that 
\ba
Q_{11}\leftrightarrow Q_{22}=\neg(Q_{12}\leftrightarrow Q_{21}).\label{qbit7}
\ea
must hold. Indeed, its equivalent representation
%\begin{widetext}
\ba
(Q_1\leftrightarrow Q_1')\leftrightarrow(Q_2\leftrightarrow Q_2')\q\q\q\q\q\q\q\nn\\=\neg(Q_1\leftrightarrow Q_2')\leftrightarrow(Q_2\leftrightarrow Q_1')\nn
\ea
%\end{widetext}
does {\it not} constitute a classical logical identity and is thus consistent with assumption \ref{assump4b}. It is also consistent with assumption \ref{assump4} since only maximally compatible questions are {\it directly} connected.

%Let us quickly check that it is, indeed, consistent with our assumptions. % and that in this case $O$ would not be able to extract any illegal information. 
It is impossible for $O$ to interpret this situation in terms of {\it local} `hidden variables' which assign definite values to the individuals simultaneously. Namely, consider, again, the case $Q_{11}=Q_{22}=1$ such that $Q_{11}\leftrightarrow Q_{22}=1$ and, consequently, now $Q_{12}\leftrightarrow Q_{21}=0$. This configuration may be graphically represented as
\vspace*{.2cm}
\begin{eqnarray}
\psfrag{Q12}{$\!\textcolor{red}{Q_{12}}$}
\psfrag{Q21}{$\,\textcolor{red}{Q_{21}}$}
 \psfrag{Q11}{$\textcolor{blue}{Q_{11}}$}
\psfrag{Q22}{$\textcolor{blue}{Q_{22}}$}
{\includegraphics[scale=.2]{Figures/even2.eps}}\q\q\q\q\q\q\q\q{\includegraphics[scale=.2]{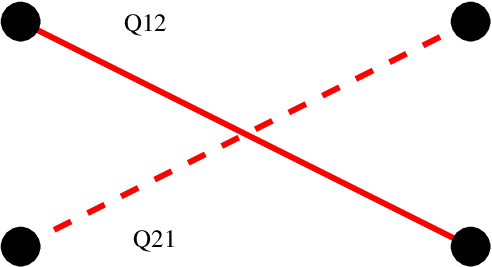}}\nn 
\end{eqnarray}
\vspace*{.2cm}

\noindent where one edge solid the other dashed, but same colour, means that the corresponding answers are anti-correlated. One can easily check that, in contrast to (\ref{join}), it is impossible to consistently join the two diagrams by drawing all four edges simultaneously (which would correspond to all individuals and thus all edges having definite truth values), as one would obtain a frustrated graph 
\vspace*{.2cm}
 \begin{eqnarray}
\psfrag{Q12}{}
\psfrag{Q21}{}
 \psfrag{Q11}{}
\psfrag{Q22}{}
{\includegraphics[scale=.2]{Figures/even2.eps}}\q\q\q\nLeftrightarrow\q\q\q{\includegraphics[scale=.2]{Figures/odd3.eps}}\,\,\,.\label{frustrated}
\end{eqnarray} 
\vspace*{.2cm}

\noindent$O$ must conclude that neither the four individuals nor the four edges can have definite (but unknown) values all at the same time. %But if $O$ cannot join the two diagrams, he can also not extract illegal information about $S$ and make consistent statements about logical connectives of complementary questions -- in contrast to (\ref{join2}). 
The same verdict holds for any other configuration with $Q_{11}\leftrightarrow Q_{22}=\neg(Q_{12}\leftrightarrow Q_{21})$.  

%It is precisely the graphical inconsistency (\ref{frustrated}) which renders (\ref{qbit7}) consistent with assumption \ref{assump4} and the principles. 

Since either (A) or (B) must be true by theorems \ref{thm_qubit} and \ref{thm_rebit}, and (A) violates assumption \ref{assump4b}, while (B) is consistent with both assumptions \ref{assump4} and \ref{assump4b}, we conclude that (\ref{qbit7}) is the correct relation.

Note that this argument holds for both rebits and qubits and, for qubits, also for any other pairs $Q_i,Q_{j\neq i}$ and $Q_k',Q'_{l\neq k}$ of pairs of maximally complementary individuals (see theorem \ref{thm_qubit}). In fact, even more generally, the same argument can be made for {\it any} four questions (i.e., not necessarily individuals) $Q,Q',Q'',Q'''$ which are such that $Q,Q'$ and $Q'',Q'''$ are maximally complementary pairs, while $Q$ and $Q'$ are each maximally compatible with both $Q'',Q'''$; also in this case one would have to conclude that
\begin{widetext}
\ba
(Q\leftrightarrow Q'')\leftrightarrow (Q'\leftrightarrow Q''')=\neg\left((Q\leftrightarrow Q''')\leftrightarrow(Q'\leftrightarrow Q'')\right).\label{qbit8}
\ea 
\end{widetext}
This will become relevant later on. For now we observe that exchanging the positions of the complementary $Q'',Q'''$ from the left to the right hand side introduces a negation $\neg$. 

This relative negation $\neg$ in (\ref{qbit8}) precludes a classical reasoning for the distribution of truth values over $O$'s questions. In fact, as just seen, it rules out {\it local} `hidden variable models', analogously to the Bell arguments. We also recall that assumption \ref{assump4b} was a statement about which rules of inference and logical identities are {\it not} applicable to logical compositions involving mutually complementary questions. By contrast, (\ref{qbit8}) is now a first non-classical logical identity showing one possible non-classical way of reasoning in the presence of complementarity. 
 
The correlation structure for rebits is now clear from these results; (\ref{qbit7}) implies $Q_{33}=\neg\tilde{Q}_{33}$ for the correlations of correlations defined in (\ref{q33re}). This settles the fate of all possible relative negations $\neg$ for rebits. However, these results do {\it not} fully determine the odd and even correlation structure for qubits: we still have to clarify whether there is an overall negation $\neg$ relative to $Q_{33}$ in (\ref{qbit4}, \ref{qbit5}) and more generally in theorem \ref{thm_qubit} for other permutations of non-intersecting edges. This difference between rebits and qubits results again from the fact that $Q_{33}$ is defined as the correlation of the individuals $Q_3,Q'_3$ for qubits, while it is a correlation of correlations for rebits. This has a remarkable consequence: rebit theory is its own `logical mirror image', while qubit theory's `logical mirror image' is distinct from qubit theory. This topic will be deferred to section \ref{sec_corr} because we firstly need to understand the question structure for the $N=3$ case in order to discuss the odd and even correlation structure of two qubits further.

 \vspace*{.5cm}
\subsection{Three gbits}
\vspace*{.2cm}

 It will be both useful and instructive to explicitly consider the $N=3$ case for rebits and qubits. As a composite system, we can view three gbits, labeled by $A,B,C$, either as three individual systems, as three combinations of one individual and a bipartite composite system or as a tripartite system:
 \begin{widetext}
  \begin{eqnarray}
 \psfrag{a}{$A$}
 \psfrag{b}{$B$}
 \psfrag{c}{$C$}
{\includegraphics[scale=.2]{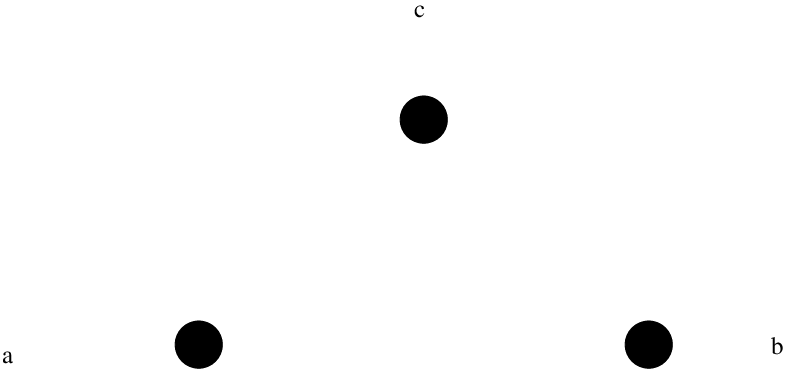}}\q\q\q\q\q{\includegraphics[scale=.2]{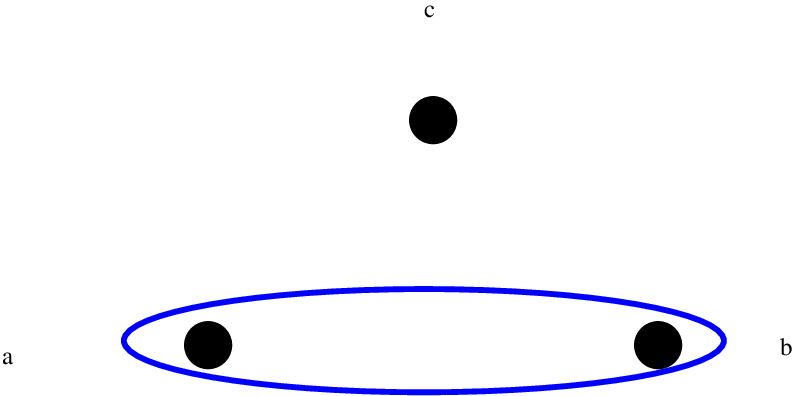}}\q\q\q\q\q{\includegraphics[scale=.2]{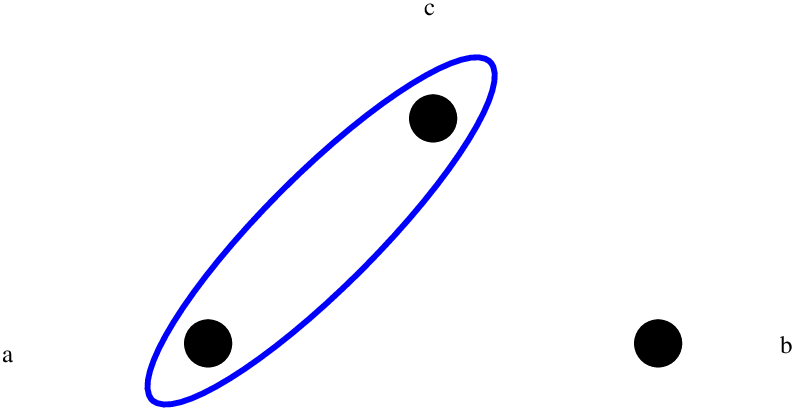}}\nn
\end{eqnarray}
\begin{eqnarray}
 \psfrag{a}{$A$}
 \psfrag{b}{$B$}
 \psfrag{c}{$C$}
 {\includegraphics[scale=.2]{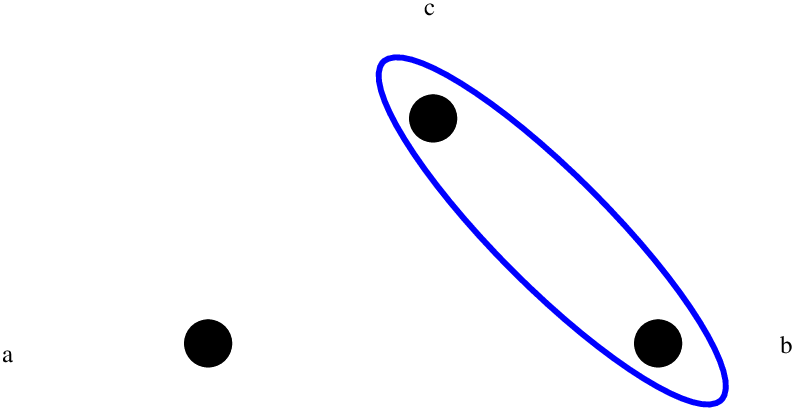}}\q\q\q\q\q{\includegraphics[scale=.2]{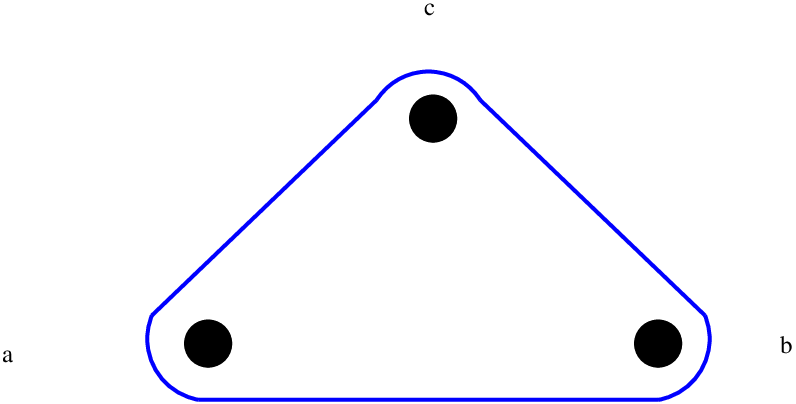}}\label{3gbits}
\end{eqnarray}  
\end{widetext}
According to definition \ref{def_comp} of a composite system, $\cq_3$ must then contain the individual questions of all three gbits, any bipartite correlation questions and any permissible logical connectives thereof. This results in different structures for rebits and qubits.

 \subsubsection{Three qubits}
 
We shall begin with the qubit case $D_1=3$. Clearly, according to definition \ref{def_comp}, all $3\times3=9$ individual questions, henceforth denoted as $Q_{i_A},Q_{j_B},Q_{k_C}$, and all $3\times9=27$ bipartite correlation questions, from now on written as $Q_{i_Aj_B},Q_{i_A,k_C},Q_{j_Bk_C}$, $i,j,k=1,2,3$, will be part of an informationally complete set $\cq_{M_3}$. As before, we represent individuals and correlations graphically as vertices and edges, respectively, e.g.
%\begin{widetext}
 \begin{eqnarray}
 \psfrag{a}{$A$}
 \psfrag{b}{$B$}
 \psfrag{c}{$C$}
 \psfrag{a1}{$Q_{1_A}$}
 \psfrag{a2}{$Q_{2_A}$}
 \psfrag{a3}{$Q_{3_A}$}
  \psfrag{b1}{$Q_{1_B}$}
 \psfrag{b2}{$Q_{2_B}$}
 \psfrag{b3}{$Q_{3_B}$} 
  \psfrag{c1}{$Q_{1_C}$}
 \psfrag{c2}{$Q_{2_C}$}
 \psfrag{c3}{$Q_{3_C}$} 
\psfrag{11}{$Q_{1_A1_C}$}
\psfrag{133}{$Q_{1_A3_B}$}
\psfrag{13}{$Q_{1_B3_C}$}
\psfrag{22}{$Q_{2_A2_B}$}
\psfrag{222}{$Q_{2_B2_C}$}
\psfrag{111}{$Q_{111}$}
\psfrag{333}{$Q_{333}$}
\psfrag{322}{$Q_{322}$}
{\includegraphics[scale=.2]{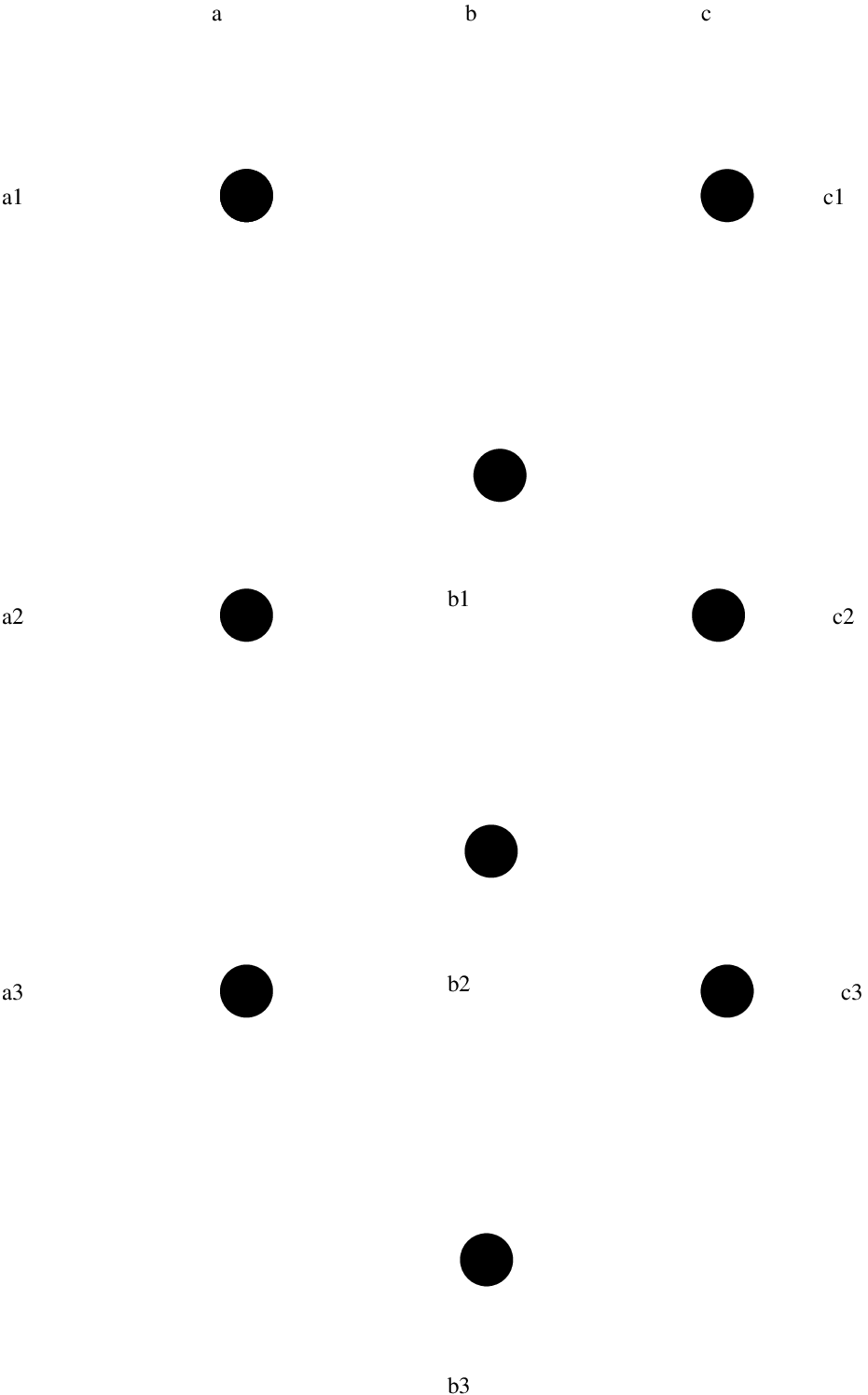}}\nn\\%\q\q\q\q\q\q\q\q\q\q\q\q\q\q\q\q\q
\nn\\
\nn\\
\nn\\
\psfrag{a}{$A$}
 \psfrag{b}{$B$}
 \psfrag{c}{$C$}
 \psfrag{a1}{$Q_{1_A}$}
 \psfrag{a2}{$Q_{2_A}$}
 \psfrag{a3}{$Q_{3_A}$}
  \psfrag{b1}{$Q_{1_B}$}
 \psfrag{b2}{$Q_{2_B}$}
 \psfrag{b3}{$Q_{3_B}$} 
  \psfrag{c1}{$Q_{1_C}$}
 \psfrag{c2}{$Q_{2_C}$}
 \psfrag{c3}{$Q_{3_C}$} 
\psfrag{11}{$Q_{1_A1_C}$}
\psfrag{133}{$Q_{1_A3_B}$}
\psfrag{13}{$Q_{1_B3_C}$}
\psfrag{22}{$Q_{2_A2_B}$}
\psfrag{222}{$Q_{2_B2_C}$}
\psfrag{111}{$Q_{111}$}
\psfrag{333}{$Q_{333}$}
\psfrag{322}{$Q_{322}$}
{\includegraphics[scale=.2]{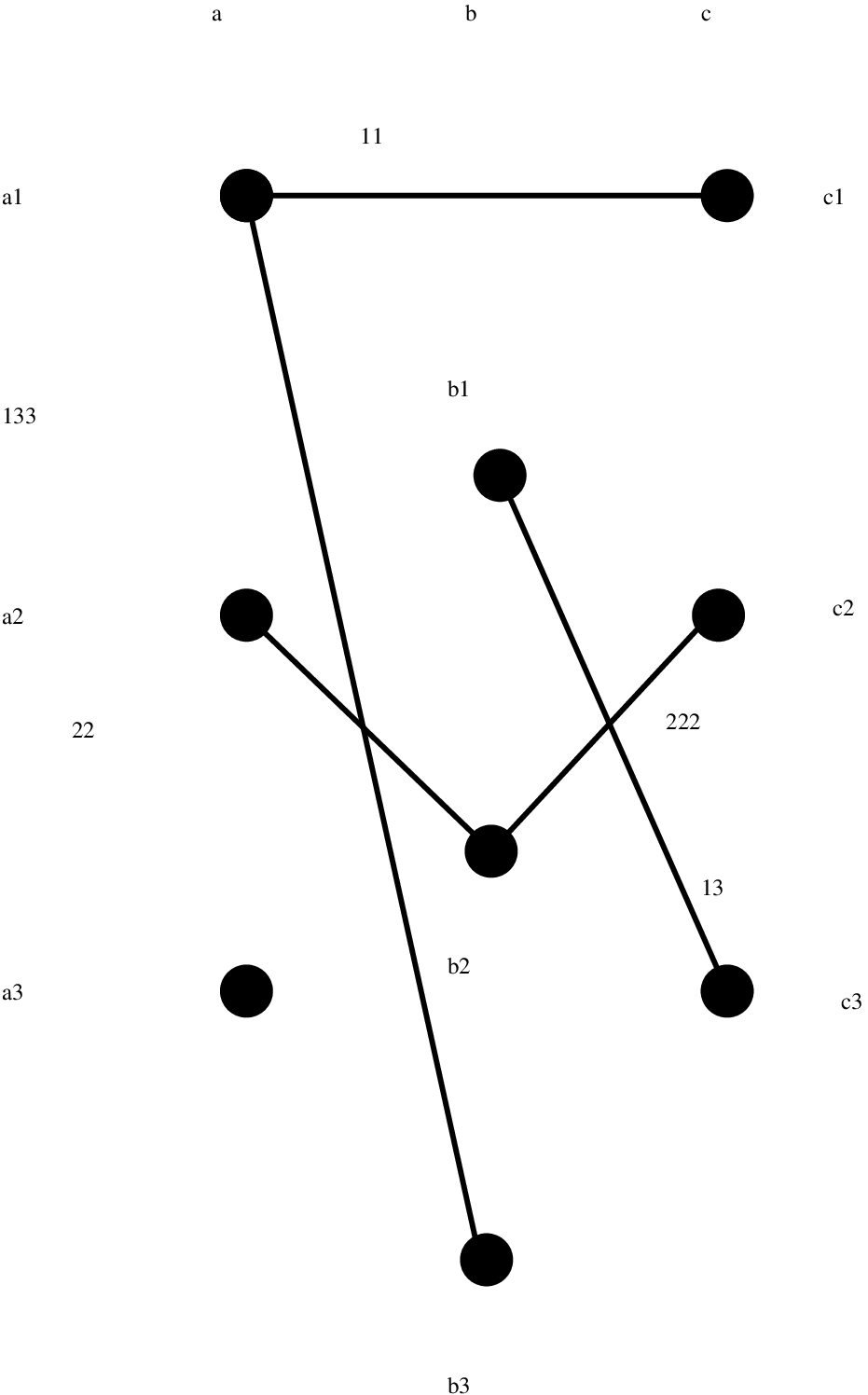}}\nn 
\end{eqnarray} 
%\end{widetext}
depict valid question graphs. 

But in order to complete the individuals and bipartite correlations to $\cq_{M_3}$ we now have to consider logical connectives of these questions which are pairwise independent. This will necessarily involve tripartite questions because the bipartite structure is already exhausted with individuals and bipartite correlations. Clearly, we cannot add a question $\tilde{Q}_{111}$, representing the proposition ``the answers to $Q_{1_A},Q_{1_B},Q_{1_C}$ are the same" to the individuals and bipartite correlations because, e.g., $\tilde{Q}_{111}=1$ would always imply $Q_{1_A1_B}=Q_{1_A1_C}=Q_{1_B1_C}=1$. Also, $Q_{1_A1_B}=0$ would imply $\tilde{Q}_{111}=0$ such that the bipartite correlations and $\tilde{Q}_{111}$ would not be pairwise independent. 

From (\ref{truth}) we already know that the logical connective yielding new pairwise independent questions must either be the XNOR or the XOR. For consistency with the bipartite structure, we continue to employ the XNOR. There is an obvious candidate for an independent tripartite question, namely
\ba
Q_{ijk}=Q_{i_A}\leftrightarrow Q_{j_B}\leftrightarrow Q_{k_C}\label{tri1}
\ea
which thanks to the associativity and symmetry of $\leftrightarrow$ can also equivalently be written as
\ba
Q_{ijk}&=&Q_{i_A}\leftrightarrow Q_{j_Bk_C}=Q_{i_Aj_B}\leftrightarrow Q_{k_C}\nn\\&=&Q_{i_Ak_C}\leftrightarrow Q_{j_B}.\label{tripartite}
\ea
(Since the notation for this tripartite question is unambiguous from the ordering of $i,j,k$ we drop the subscripts $A,B,C$ in $Q_{ijk}$.) This structure is also natural from the different compositions in (\ref{3gbits}). $Q_{ijk}$ thus defined is by construction maximally compatible with $Q_{i_A},Q_{j_B},Q_{k_C},Q_{i_Aj_B},Q_{i_Ak_C}$, $Q_{j_Bk_C}$ and, for similar reasons to the independence of $Q_{i_Aj_B}$ from $Q_{i_A},Q_{j_B}$ (see the discussion below (\ref{correlation2})), also pairwise independence of the latter. Note that this question does not stand in one-to-one correspondence with the proposition ``the answers to $Q_{i_A},Q_{j_B},Q_{k_C}$ are the same"; e.g., $Q_{i_A}=1$ and $Q_{j_B}=Q_{k_C}=0$ also gives $Q_{ijk}=1$. It is easier to interpret this question via (\ref{tripartite}) as either of the three questions ``are the answers to $Q_{i_A},Q_{j_Bk_C}$/$Q_{i_Aj_B},Q_{k_C}$/$Q_{i_Ak_C},Q_{j_B}$ the same?".

There are $3\times3\times3=27$ such tripartite questions $Q_{ijk}$, $i,j,k=1,2,3$. We shall represent them graphically as triangles. For example, $Q_{111},Q_{322},Q_{333}$ are depicted as follows:
  \begin{eqnarray}
 \psfrag{a}{$A$}
 \psfrag{b}{$B$}
 \psfrag{c}{$C$}
 \psfrag{a1}{$Q_{1_A}$}
 \psfrag{a2}{$Q_{2_A}$}
 \psfrag{a3}{$Q_{3_A}$}
  \psfrag{b1}{$Q_{1_B}$}
 \psfrag{b2}{$Q_{2_B}$}
 \psfrag{b3}{$Q_{3_B}$} 
  \psfrag{c1}{$Q_{1_C}$}
 \psfrag{c2}{$Q_{2_C}$}
 \psfrag{c3}{$Q_{3_C}$} 
\psfrag{11}{$Q_{1_A1_C}$}
\psfrag{133}{$Q_{1_A3_B}$}
\psfrag{13}{$Q_{1_B3_C}$}
\psfrag{22}{$Q_{2_A2_B}$}
\psfrag{222}{$Q_{2_B2_C}$}
\psfrag{111}{$Q_{111}$}
\psfrag{333}{$Q_{333}$}
\psfrag{322}{$Q_{322}$}
{\includegraphics[scale=.2]{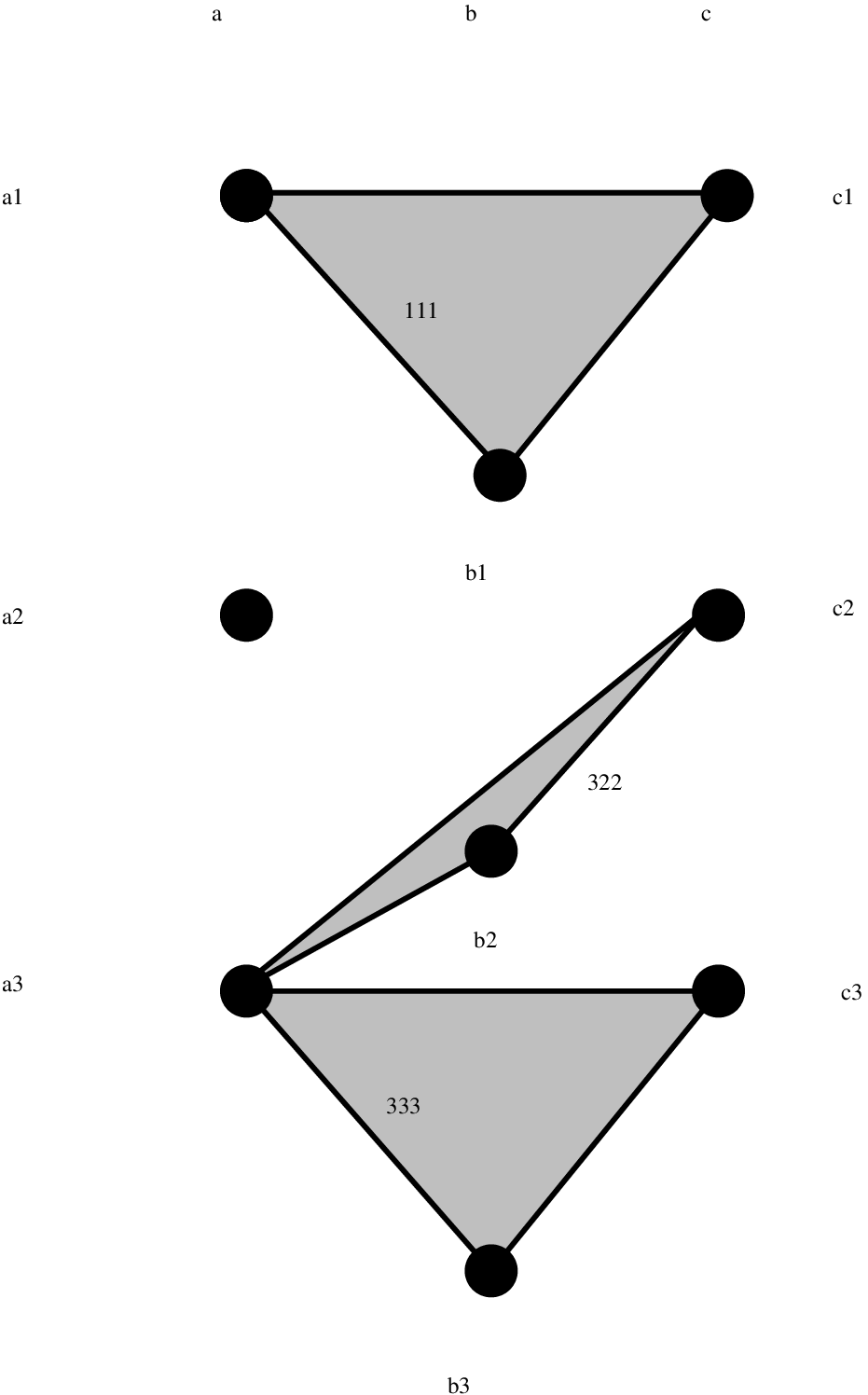}}\nn 
\end{eqnarray}  
 
 \subsubsection{Independence and compatibility for three qubits}

We have to delve into some technical details on the independence and compatibility structure to explain and understand monogamy and the entanglement structure of three qubits. These results will also be used to prove the analogous results by induction in section \ref{sec_nqbits} for $N$ qubits. 

Individual questions from qubit $A$ are maximally compatible with the individual questions from qubits $B$ and $C$, etc. But what about the compatibility of bipartite and tripartite correlations? The compatibility structure of bipartite correlations of a fixed qubit pair is clear from lemma \ref{lem3}, but we have to investigate compatibility of bipartite correlation questions involving all three qubits.
 
 \begin{lem}\label{lem5}
 $Q_{i_Aj_B}$ and $Q_{l_Bk_C}$ are maximally complementary if $j\neq l$. On the other hand, $Q_{i_Aj_B}$ and $Q_{j_Bk_C}$ are maximally compatible and it holds
 \ba
 Q_{i_Aj_B}\leftrightarrow Q_{j_Bk_C}=Q_{i_Ak_C}.\nn
 \ea
 The analogous statements hold for any permutation of $A,B,C$. That is, graphically, two bipartite correlations involving three qubits are maximally compatible if the corresponding edges intersect in a vertex and maximally complementary otherwise.
 \end{lem}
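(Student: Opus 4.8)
The plan is to imitate the proofs of lemmas \ref{lem1} and \ref{lem3}: combine the invariance property of assumption \ref{assump5b} with Specker's principle (Theorem \ref{assump5}) and the closure of correlation questions under the XNOR. I would treat the two assertions separately, dealing first with the \emph{intersecting} (compatible) case $Q_{i_Aj_B}$ versus $Q_{j_Bk_C}$ and then with the \emph{non-intersecting} (complementary) case $Q_{i_Aj_B}$ versus $Q_{l_Bk_C}$ with $j\neq l$.

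For the compatible case, I would first note that the individual questions $Q_{i_A},Q_{j_B},Q_{k_C}$ of the three distinct qubits are pairwise compatible (definition \ref{def_comp}) and pairwise independent, hence \emph{mutually} compatible by Theorem \ref{assump5}. In the state in which $O$ knows all three he also knows $Q_{i_Aj_B}=Q_{i_A}\leftrightarrow Q_{j_B}$ and $Q_{j_Bk_C}=Q_{j_B}\leftrightarrow Q_{k_C}$ simultaneously, so these two correlations are compatible. Since $Q_{i_A},Q_{j_B},Q_{k_C}$ are mutually maximally compatible, assumption \ref{assump4b} entitles $O$ to manipulate any expression built from them by ordinary Boolean logic; using symmetry and associativity of $\leftrightarrow$ together with $Q_{j_B}\leftrightarrow Q_{j_B}\equiv 1$ and $1\leftrightarrow Q\equiv Q$ then gives
\ba
Q_{i_Aj_B}\leftrightarrow Q_{j_Bk_C}=(Q_{i_A}\leftrightarrow Q_{j_B})\leftrightarrow(Q_{j_B}\leftrightarrow Q_{k_C})=Q_{i_A}\leftrightarrow Q_{k_C}=Q_{i_Ak_C},\nn
\ea
as claimed.

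For the complementary case, suppose $Q_{i_Aj_B}$ and $Q_{l_Bk_C}$ with $j\neq l$ were \emph{not} maximally complementary; then, whether they are compatible or only partially so, there is a state of $S$ in which $\alpha_{Q_{i_Aj_B}}=1\,\texttt{bit}$ and $\alpha_{Q_{l_Bk_C}}>0\,\texttt{bit}$ (the symmetric situation, with the roles of the two correlations and of qubits $A$ and $C$ interchanged, is handled by the same argument). Now let $O$ ask $Q_{i_A}$ to $S$ in this state. The question $Q_{i_A}$ is maximally compatible with and independent of $Q_{i_Aj_B}$ by construction (cf.\ lemma \ref{lem1}), and it is likewise maximally compatible with and independent of $Q_{l_Bk_C}$ since it concerns only qubit $A$ while $Q_{l_Bk_C}$ concerns only qubits $B,C$ (here the composite-system structure of definition \ref{def_comp} and the pairwise independence arranged for the informationally complete set are used). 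Hence, by assumption \ref{assump5b}, asking $Q_{i_A}$ changes neither $\alpha_{Q_{i_Aj_B}}$ nor $\alpha_{Q_{l_Bk_C}}$. After this $O$ has maximal information about both $Q_{i_A}$ and $Q_{i_Aj_B}$, and therefore --- via the closure relation (\ref{correlation2}), legitimate because $Q_{i_A}$ and $Q_{i_Aj_B}$ are compatible --- maximal information about $Q_{j_B}=Q_{i_A}\leftrightarrow Q_{i_Aj_B}$, while still $\alpha_{Q_{l_Bk_C}}>0$. But $Q_{j_B}$ and $Q_{l_Bk_C}$ are maximally complementary by lemma \ref{lem1} applied to the composite system $BC$ (the vertex $Q_{j_B}$ is not an endpoint of the edge $Q_{l_Bk_C}$ because $j\neq l$), a contradiction. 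Hence $Q_{i_Aj_B}$ and $Q_{l_Bk_C}$ are complementary, and the statements for the remaining permutations of $A,B,C$ follow by relabelling.

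I expect the only delicate point to be the justification that the auxiliary individual question $Q_{i_A}$ is genuinely \emph{maximally} compatible with and independent of the ``distant'' bipartite correlation $Q_{l_Bk_C}$; once this is granted the rest is a mechanical repetition of the lemma \ref{lem1}/lemma \ref{lem3} template together with the Boolean manipulation licensed by assumption \ref{assump4b}. A second point requiring care is that ``complementary'' is meant in the \emph{maximal} sense, so the argument must begin from the weakest possible failure --- partial compatibility --- and still force the contradiction $\alpha_{Q_{j_B}}=1\,\texttt{bit}$ together with $\alpha_{Q_{l_Bk_C}}>0\,\texttt{bit}$, which it does.
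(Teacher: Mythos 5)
Your proof is correct and follows essentially the same route as the paper: compatibility of the intersecting correlations is obtained from the mutual compatibility of $Q_{i_A},Q_{j_B},Q_{k_C}$ together with associativity of the XNOR, and complementarity in the non-intersecting case is shown by the lemma~\ref{lem1} template, using $Q_{i_A}$ as the auxiliary question that is compatible with and independent of both correlations and invoking assumption~\ref{assump5b}. You merely spell out the details (Specker's principle, the justification that $Q_{i_A}$ is compatible with and independent of $Q_{l_Bk_C}$, and the `vice versa' case) which the paper compresses into ``arguments analogous to those in the proof of lemma~\ref{lem1}''.
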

 
 \begin{proof}
Maximal complementarity of $Q_{i_Aj_B}$ and $Q_{l_Bk_C}$ for $j\neq l$ is proven by noting that both are maximally compatible with and independent of $Q_{i_A}$, the relation $Q_{j_B}=Q_{i_A}\leftrightarrow Q_{i_Aj_B}$ and lemma \ref{lem0}. %and $Q_{k_C}$ such that by Specker's principle (c.f.\ assumption \ref{assump5}) $Q_{i_Aj_B},Q_{l_Bk_C},Q_{i_A},Q_{k_C}$ would form a mutually compatible set. But a simultaneous answer to all would imply an answer to the complementary $Q_{j_B}$ and $Q_{l_B\neq j_B}$. Partial compatibility is precluded similarly, by using arguments analogous to the ones in the proof of lemma \ref{lem1}. 
%We thus conclude that $Q_{i_Aj_B}$ and $Q_{l_Bk_C}$ are complementary if $j\neq l$.

$Q_{i_Aj_B}$ and $Q_{j_Bk_C}$ are evidently maximally compatible since $Q_{i_A},Q_{j_B},Q_{k_C}$ are maximally compatible. Moreover, 
\ba
 Q_{i_Aj_B}\leftrightarrow Q_{j_Bk_C}&=&Q_{i_A}\leftrightarrow\underset{=1}{\underbrace{(Q_{j_B}\leftrightarrow Q_{j_B})}}\leftrightarrow Q_{k_C}\nn\\
 &=&Q_{i_Ak_C}\nn
 \ea
 thanks to the associativity of $\leftrightarrow$.
 \end{proof}
 
 For example, $Q_{2_A2_B}$ and $Q_{2_B2_C}$ intersect in $Q_{2_B}$ and are thus maximally compatible, while $Q_{2_A2_B}$ and $Q_{1_B1_C}$ do not share a vertex and are therefore maximally complementary:
% \begin{widetext}
 \vspace*{.2cm} \begin{eqnarray}
 \psfrag{a}{$A$}
 \psfrag{b}{$B$}
 \psfrag{c}{$C$}
 \psfrag{a1}{$Q_{1_A}$}
 \psfrag{a2}{$Q_{2_A}$}
 \psfrag{a3}{$Q_{3_A}$}
  \psfrag{b1}{$Q_{1_B}$}
 \psfrag{b2}{$Q_{2_B}$}
 \psfrag{b3}{$Q_{3_B}$} 
  \psfrag{c1}{$Q_{1_C}$}
 \psfrag{c2}{$Q_{2_C}$}
 \psfrag{c3}{$Q_{3_C}$} 
\psfrag{11}{$Q_{1_B1_C}$}
\psfrag{133}{$Q_{1_A3_B}$}
\psfrag{13}{$Q_{1_B3_C}$}
\psfrag{22}{$Q_{2_A2_B}$}
\psfrag{222}{$Q_{2_B2_C}$}
\psfrag{111}{$Q_{111}$}
\psfrag{333}{$Q_{333}$}
\psfrag{322}{$Q_{322}$}
\!\!\!\!\!\!\!\!\!\!\!\!{\includegraphics[scale=.2]{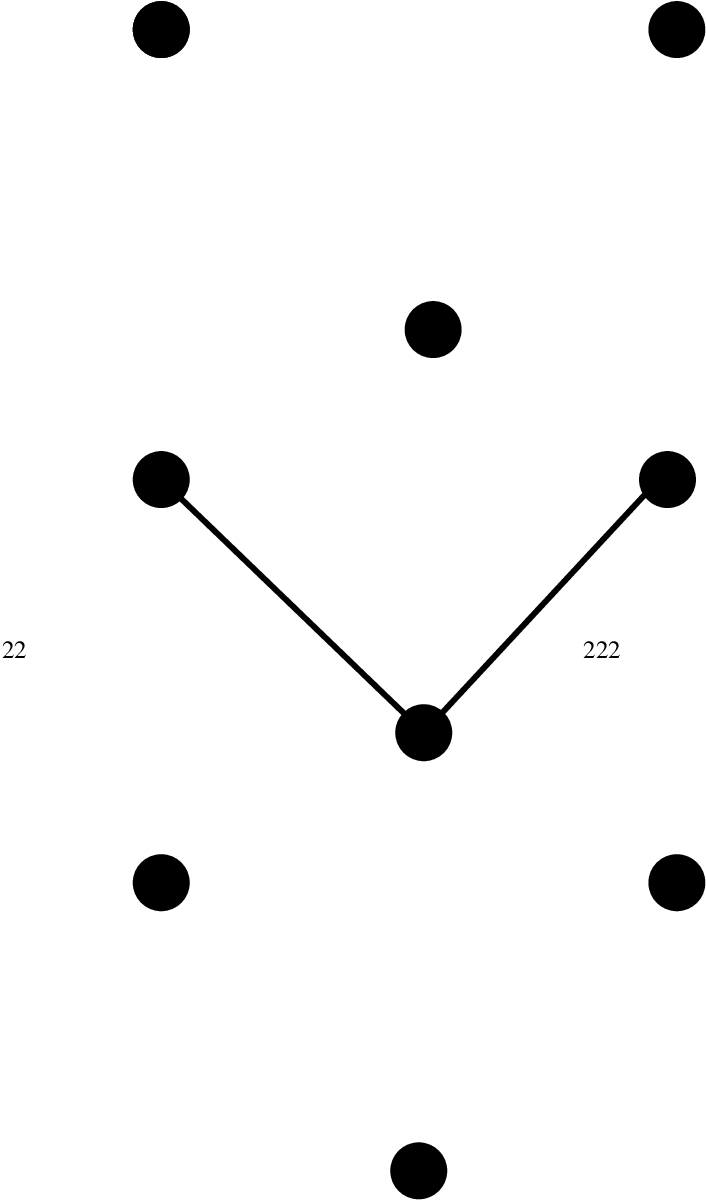}}\q\q\q\q\q\q\q%\q\q\q%\q\q\q\q\q\q\q
{\includegraphics[scale=.2]{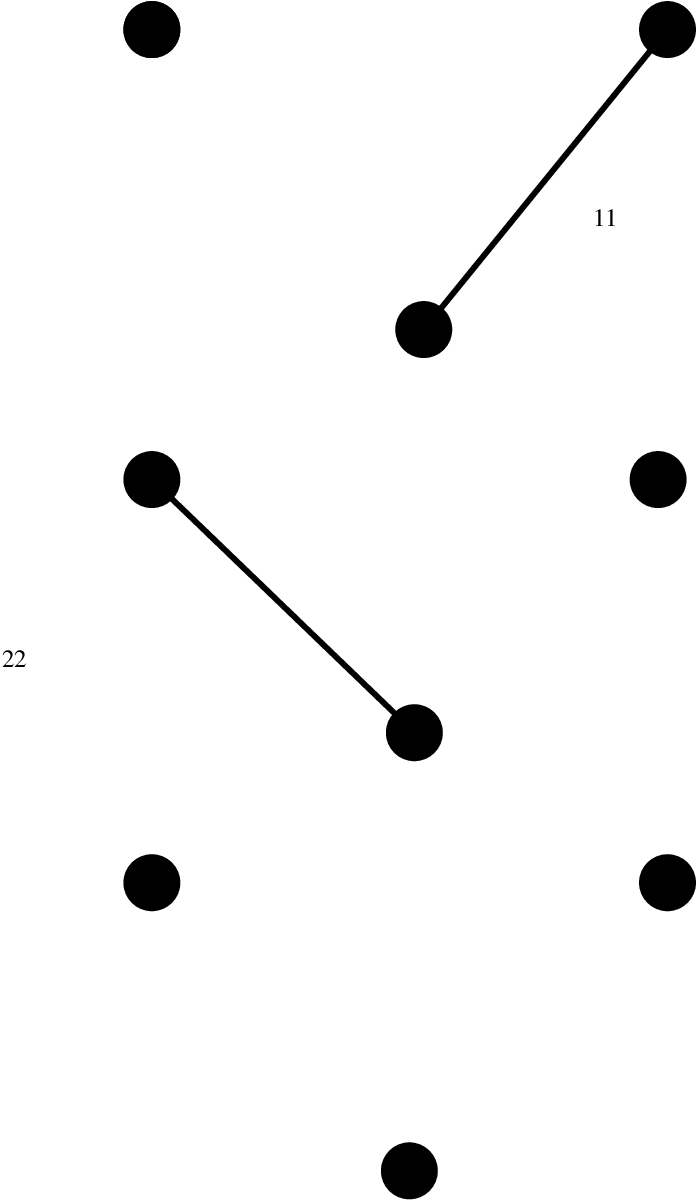}}\nn 
\end{eqnarray} 
% \end{widetext}
 \vspace*{.2cm}
  
 We continue with the tripartite questions.

 \begin{lem}\label{lem_tri1}
$Q_{ijk}$ is maximally compatible with $Q_{i_A},Q_{j_B},Q_{k_C}$ and maximally complementary to $Q_{l_A\neq i_A}$, $Q_{m_B\neq j_B},Q_{n_C\neq k_C}$.   That is, graphically, $Q_{ijk}$ is maximally compatible with an individual $Q_{i_{A,B,C}}$ if the corresponding vertex is one of the vertices of the triangle representing $Q_{ijk}$ and maximally complementary otherwise.
 \end{lem}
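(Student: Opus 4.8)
The plan is to imitate the template of lemmas \ref{lem1}, \ref{lem3} and \ref{lem5}: the compatibility half is read off from the definition of $Q_{ijk}$, and the complementarity half is a proof by contradiction built on assumption \ref{assump5b} together with the closure identities (\ref{tripartite}).

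For the compatibility statement, recall that $Q_{ijk}$ was introduced in (\ref{tri1}) as the XNOR-composition $Q_{i_A}\leftrightarrow Q_{j_B}\leftrightarrow Q_{k_C}$ of three individuals belonging to distinct qubits, which are mutually compatible by definition \ref{def_comp}. Assumption \ref{assump4} then makes $Q_{ijk}$ compatible with each of $Q_{i_A},Q_{j_B},Q_{k_C}$, and the discussion below (\ref{tripartite}) has already recorded that $Q_{ijk}$ is in addition pairwise independent of each of them.

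For the complementarity statement it suffices, by the symmetry of the roles of $A,B,C$, to treat $Q_{ijk}$ and $Q_{l_A}$ with $l\neq i$. First I would suppose the contrary — that these two questions are at least partially compatible — so that there is a state of $S$ in which $O$ has maximal information $\alpha_{ijk}=1$ \texttt{bit} about $Q_{ijk}$ and at least partial information $\alpha_{l_A}>0$ \texttt{bit} about $Q_{l_A}$ (or vice versa). The buffer questions will be $Q_{j_B}$ and $Q_{k_C}$: each of them is compatible with and independent of $Q_{l_A}$ (individuals of distinct qubits, as in the $N=2$ case), compatible with and independent of $Q_{ijk}$ (by the previous paragraph), and compatible with and independent of one another. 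Hence $O$ may ask $S$ first $Q_{j_B}$ and then $Q_{k_C}$ in this state, and by assumption \ref{assump5b} neither interrogation changes his information about $Q_{ijk}$ or about $Q_{l_A}$. Afterwards $O$ holds maximal information about $Q_{ijk},Q_{j_B},Q_{k_C}$, hence — using $Q_{i_A}=Q_{ijk}\leftrightarrow Q_{j_B}\leftrightarrow Q_{k_C}$, a rearrangement of (\ref{tripartite}) — maximal information about $Q_{i_A}$. But $Q_{i_A}$ is complementary to $Q_{l_A}$ for $l\neq i$, contradicting $\alpha_{l_A}>0$. The cases of $Q_{m_B\neq j_B}$ and $Q_{n_C\neq k_C}$ follow by relabeling $A,B,C$.

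The only point requiring genuine care — and the nearest thing to an obstacle — is checking that both buffer questions $Q_{j_B},Q_{k_C}$ are simultaneously compatible with and independent of all of $Q_{ijk}$, $Q_{l_A}$, and each other, so that assumption \ref{assump5b} legitimately applies at each of the two steps; but this is immediate from the construction of $Q_{ijk}$ and from individuals of distinct qubits being mutually compatible and independent, exactly as already used in lemmas \ref{lem1} and \ref{lem5}. (Equivalently, one could use the single buffer $Q_{j_Bk_C}$ together with $Q_{i_A}=Q_{ijk}\leftrightarrow Q_{j_Bk_C}$ from (\ref{tripartite}), shortening the argument.)
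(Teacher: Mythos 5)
Your proof is correct and takes essentially the same route as the paper: compatibility is read off from the construction of $Q_{ijk}$, and complementarity to $Q_{l_A\neq i_A}$ is obtained by the lemma-\ref{lem1}-style contradiction, invoking assumption \ref{assump5b} and the closure relation (\ref{tripartite}) to force maximal information about $Q_{i_A}$ alongside partial information about the complementary $Q_{l_A}$. The only cosmetic difference is that you interrogate the two buffers $Q_{j_B},Q_{k_C}$ sequentially, whereas the paper uses the single buffer $Q_{j_Bk_C}$ -- precisely the shortcut you note in your closing parenthesis.
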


\begin{proof}
$Q_{ijk}$ is by construction maximally compatible with $Q_{i_A},Q_{j_B},Q_{k_C}$. On the other hand, complementarity of $Q_{ijk}$ and $Q_{l_A\neq i_A}$ is shown by noting that both are maximally compatible with and independent of $Q_{j_Bk_C}$ and lemma \ref{lem0}. %such that by Specker's principle $O$ could ask all three at the same time. But this would imply simultaneous maximal knowledge about the answers to the complementary $Q_{i_A},Q_{l_A\neq i_A}$. By arguments similar to the ones in the proof of lemma \ref{lem1}, partial compatibility would also lead to illegal simultaneous information about complementary questions. Accordingly, $Q_{ijk},Q_{l_A\neq i_A}$ are complementary. 
One argues analogously for the individuals of qubits $B$ and $C$.
\end{proof}

For instance, $Q_{111}$ is maximally compatible with $Q_{1_C}$ and maximally complementary to $Q_{2_C}$:
 \vspace*{.2cm}  \begin{eqnarray}
 \psfrag{a}{$A$}
 \psfrag{b}{$B$}
 \psfrag{c}{$C$}
 \psfrag{a1}{$Q_{1_A}$}
 \psfrag{a2}{$Q_{2_A}$}
 \psfrag{a3}{$Q_{3_A}$}
  \psfrag{b1}{$Q_{1_B}$}
 \psfrag{b2}{$Q_{2_B}$}
 \psfrag{b3}{$Q_{3_B}$} 
  \psfrag{c1}{$Q_{1_C}$}
 \psfrag{c2}{$Q_{2_C}$}
 \psfrag{c3}{$Q_{3_C}$} 
\psfrag{11}{$Q_{1_B1_C}$}
\psfrag{133}{$Q_{1_A3_B}$}
\psfrag{13}{$Q_{1_B3_C}$}
\psfrag{22}{$Q_{2_A2_B}$}
\psfrag{222}{$Q_{2_B2_C}$}
\psfrag{111}{$Q_{111}$}
\psfrag{333}{$Q_{333}$}
\psfrag{322}{$Q_{322}$}
{\includegraphics[scale=.2]{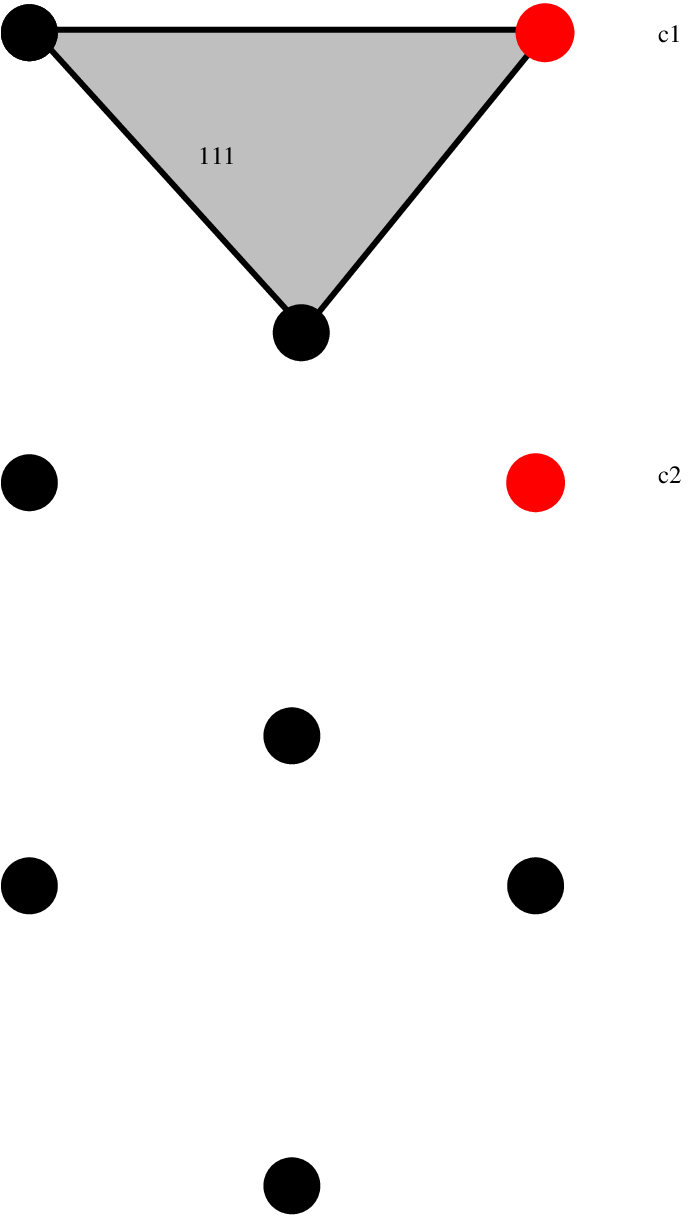}}\nn
\end{eqnarray} 
 \vspace*{.2cm}
 
 \noindent This lemma also directly implies that any individual and any tripartite correlation question are pairwise independent because (1) maximally complementary questions are in particular independent, and (2) for maximally compatible question pairs such as $Q_{i_A},Q_{ijk}$ independence arguments analogous to those surrounding (\ref{correlation2}) apply.

Next we consider bipartite and tripartite correlation questions.
 
\begin{lem}\label{lem_tri2}
$Q_{ijk}$ is maximally compatible with $Q_{i_Aj_B},Q_{i_Ak_C},Q_{j_Bk_C}$ and, furthermore, with $Q_{m_Bn_C}$, $Q_{l_An_C}$ and $Q_{l_Am_B}$ for $l\neq i$, $m\neq j$ and $k\neq n$. On the other hand, $Q_{ijk}$ is maximally complementary to $Q_{i_Am_B},Q_{i_An_C},Q_{l_Aj_B}$, $Q_{j_Bn_C},Q_{l_Ak_C},Q_{m_Bk_C}$ for $l\neq i$, $m\neq j$ and $k\neq n$. That is, graphically, $Q_{ijk}$ is maximally compatible with a bipartite correlation if the edge of the latter is either an edge of the triangle corresponding to $Q_{ijk}$ or if the edge and triangle do not intersect. $Q_{ijk}$ is maximally complementary to a bipartite correlation question if the edge of the latter and the triangle corresponding to $Q_{ijk}$ share one common vertex.
\end{lem}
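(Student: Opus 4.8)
The plan is to follow the template of lemmas \ref{lem1}, \ref{lem3} and \ref{lem_tri1}, combining the $\leftrightarrow$-algebra of $Q_{ijk}$ recorded in (\ref{tripartite}) with assumption \ref{assump5b} (asking a question that is compatible with and independent of another does not alter $O$'s information about the latter) and Specker's principle (theorem \ref{assump5}). The organizing observation is that $Q_{ijk}$ has the three decompositions $Q_{ijk}=Q_{i_A}\leftrightarrow Q_{j_Bk_C}=Q_{j_B}\leftrightarrow Q_{i_Ak_C}=Q_{k_C}\leftrightarrow Q_{i_Aj_B}$, and for every bipartite correlation question being compared with $Q_{ijk}$ one chooses the decomposition that factors $Q_{ijk}$ through the two subsystems that correlation lives on, so that the ``individual'' factor sits on the third, untouched subsystem.

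\emph{Compatible cases.} Compatibility with the three triangle edges $Q_{i_Aj_B},Q_{i_Ak_C},Q_{j_Bk_C}$ is immediate from (\ref{tripartite}): e.g.\ $Q_{ijk}=Q_{i_Aj_B}\leftrightarrow Q_{k_C}$ is a connective of two compatible questions, hence compatible with each (assumption \ref{assump4}). For a non-intersecting edge, say $Q_{m_Bn_C}$ with $m\neq j$, $n\neq k$, use $Q_{ijk}=Q_{i_A}\leftrightarrow Q_{j_Bk_C}$: the factor $Q_{i_A}$ is trivially compatible with and independent of $Q_{m_Bn_C}$ (disjoint subsystems), while $Q_{j_Bk_C}$ is compatible with and independent of $Q_{m_Bn_C}$ by lemmas \ref{lem3} and \ref{lem2} (the two $B$-$C$-edges do not intersect). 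By assumption \ref{assump5b}, asking $Q_{m_Bn_C}$ disturbs $O$'s information about neither $Q_{i_A}$ nor $Q_{j_Bk_C}$; since these three questions are pairwise compatible and pairwise independent, theorem \ref{assump5} lets $O$ hold all three answers at once, hence also the answer to $Q_{ijk}=Q_{i_A}\leftrightarrow Q_{j_Bk_C}$, which proves compatibility. The edges $Q_{l_An_C}$ and $Q_{l_Am_B}$ are treated identically via the decompositions $Q_{j_B}\leftrightarrow Q_{i_Ak_C}$ and $Q_{k_C}\leftrightarrow Q_{i_Aj_B}$, respectively.

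\emph{Complementary cases.} Consider a bipartite correlation sharing exactly one triangle vertex, say $Q_{i_Am_B}$ with $m\neq j$ (sharing $i_A$). Assume, for contradiction, that it is at least partially compatible with $Q_{ijk}$, i.e.\ some state of $S$ has $\alpha_{ijk}=1$ \texttt{bit} and $\alpha_{i_Am_B}>0$ \texttt{bit} (or vice versa). The individual question $Q_{k_C}$ of the third subsystem is compatible with and independent of both $Q_{ijk}$ (lemma \ref{lem_tri1}) and $Q_{i_Am_B}$ (disjoint subsystems), so by assumption \ref{assump5b} asking $Q_{k_C}$ changes $O$'s information about neither. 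Afterwards $O$ has maximal information about $Q_{ijk}$ and $Q_{k_C}$, hence about $Q_{ijk}\leftrightarrow Q_{k_C}=Q_{i_Aj_B}$ by (\ref{tripartite}); but $Q_{i_Aj_B}$ and $Q_{i_Am_B}$ are complementary by lemma \ref{lem3} (edges meet at $i_A$ and $j\neq m$), contradicting $\alpha_{i_Am_B}>0$. The other five cases ($Q_{i_An_C},Q_{l_Aj_B},Q_{j_Bn_C},Q_{l_Ak_C},Q_{m_Bk_C}$) run verbatim: the given edge occupies two subsystems and shares one triangle vertex; one correlates $Q_{ijk}$ with the individual question of the remaining subsystem to obtain, through (\ref{tripartite}), the triangle edge on those same two subsystems, which shares the common vertex but differs in the other index and is therefore complementary to the given edge by lemma \ref{lem3}. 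The graphical statement is then just a restatement of the three regimes (edge disjoint from the triangle, edge coinciding with a side, edge meeting the triangle in a single vertex).

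I expect the only genuine difficulty to be bookkeeping rather than ideas: relative to a fixed $Q_{ijk}$ there are $27-3=24$ non-triangle bipartite edges ($12$ compatible, $12$ complementary), and one must consistently track, for each, which of the three decompositions of $Q_{ijk}$ and which of lemmas \ref{lem3}, \ref{lem5}, \ref{lem2}, \ref{lem_tri1} supplies the required (in)compatibility and independence inputs, and — in the complementary cases — remember to rule out \emph{partial} compatibility and not merely full compatibility, exactly as in the proof of lemma \ref{lem1}. No mechanism beyond the associativity of $\leftrightarrow$ in (\ref{tripartite}) and the ``ask an undisturbing question'' device of assumption \ref{assump5b} is needed.
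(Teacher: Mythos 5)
Your proposal is correct and follows essentially the same route as the paper's proof: compatibility with the triangle edges by construction, compatibility with non-intersecting edges by noting both factors in the appropriate decomposition (\ref{tripartite}) are compatible with (and independent of) the given edge, and complementarity by the lemma-\ref{lem1}-style contradiction using the shared compatible and independent individual question (e.g.\ $Q_{k_C}$) together with assumption \ref{assump5b} and lemma \ref{lem3}. Your explicit appeal to theorem \ref{assump5} in the compatible case merely spells out what the paper leaves implicit, so this is an elaboration rather than a different argument.
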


\begin{proof}
$Q_{ijk}$ is by construction maximally compatible with $Q_{i_Aj_B},Q_{i_Ak_C},Q_{j_Bk_C}$. $Q_{ijk}=Q_{i_A}\leftrightarrow Q_{j_Bk_C}$ and $Q_{m_Bn_C}$ are also maximally compatible for $j\neq m$ and $k\neq n$ because $Q_{m_Bn_C}$ is maximally compatible with $Q_{i_A}$ and thanks to lemma \ref{lem3} also with $Q_{j_Bk_C}$. Complementarity of $Q_{ijk}$ and $Q_{i_Am_B}$ for $j\neq m$ follows from noting that both are maximally compatible with and independent of $Q_{k_C}$ and lemma \ref{lem0}. %by Specker's principle $O$ could ask all three at the same time which, however, would give him simultaneous maximal knowledge about the pair $Q_{i_Aj_B},Q_{i_Am_B}$ which are complementary by lemma \ref{lem3}. The argument against partial compatibility is similar to the one in the proof of lemma \ref{lem1} such that $Q_{ijk},Q_{i_Am_B}$ must be complementary. 
The reasoning for all other cases is analogous.
\end{proof}

To give a graphical example, $Q_{111}$ is maximally compatible with $Q_{1_B1_C}$ and $Q_{3_A3_C}$ and maximally complementary to $Q_{1_A2_B}$:
 \vspace*{.2cm}\begin{eqnarray}
 \psfrag{a}{$A$}
 \psfrag{b}{$B$}
 \psfrag{c}{$C$}
 \psfrag{a1}{$Q_{1_A}$}
 \psfrag{a2}{$Q_{2_A}$}
 \psfrag{a3}{$Q_{3_A}$}
  \psfrag{b1}{$Q_{1_B}$}
 \psfrag{b2}{$Q_{2_B}$}
 \psfrag{b3}{$Q_{3_B}$} 
  \psfrag{c1}{$Q_{1_C}$}
 \psfrag{c2}{$Q_{2_C}$}
 \psfrag{c3}{$Q_{3_C}$} 
\psfrag{11}{$Q_{1_B1_C}$}
\psfrag{133}{$Q_{1_A3_B}$}
\psfrag{12}{$Q_{1_A2_B}$}
\psfrag{33}{$Q_{3_A3_C}$}
\psfrag{222}{$Q_{2_B2_C}$}
\psfrag{111}{$Q_{111}$}
\psfrag{333}{$Q_{333}$}
\psfrag{322}{$Q_{322}$}
\!\!\!\!{\includegraphics[scale=.2]{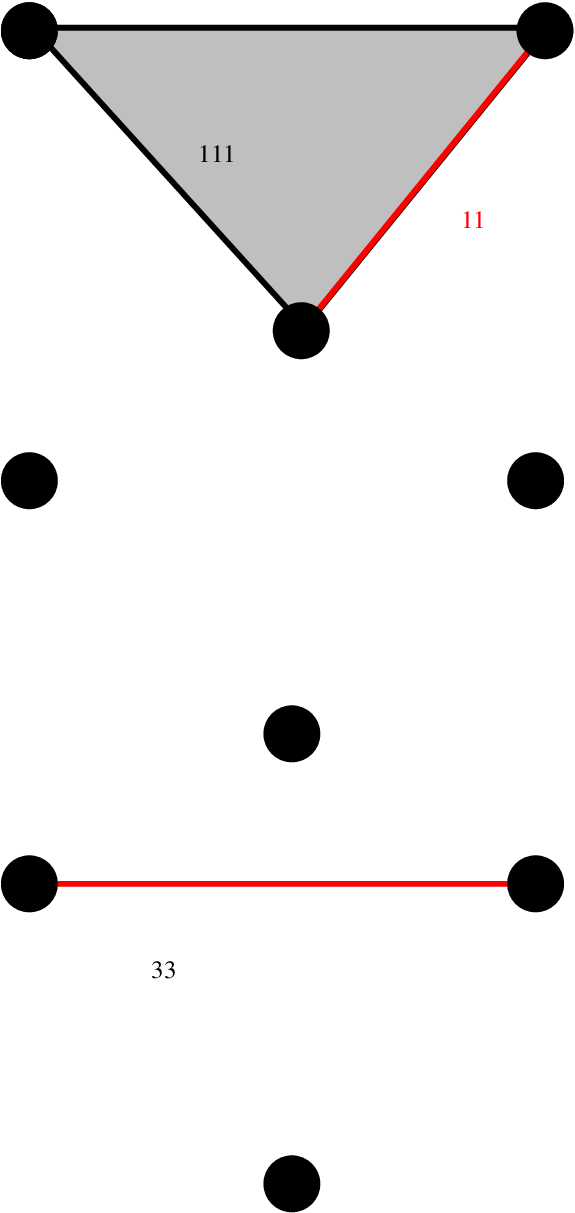}}\q\q\q\q\q\q\q\q\q\q%\q\q\q\q\q\q\q
{\includegraphics[scale=.2]{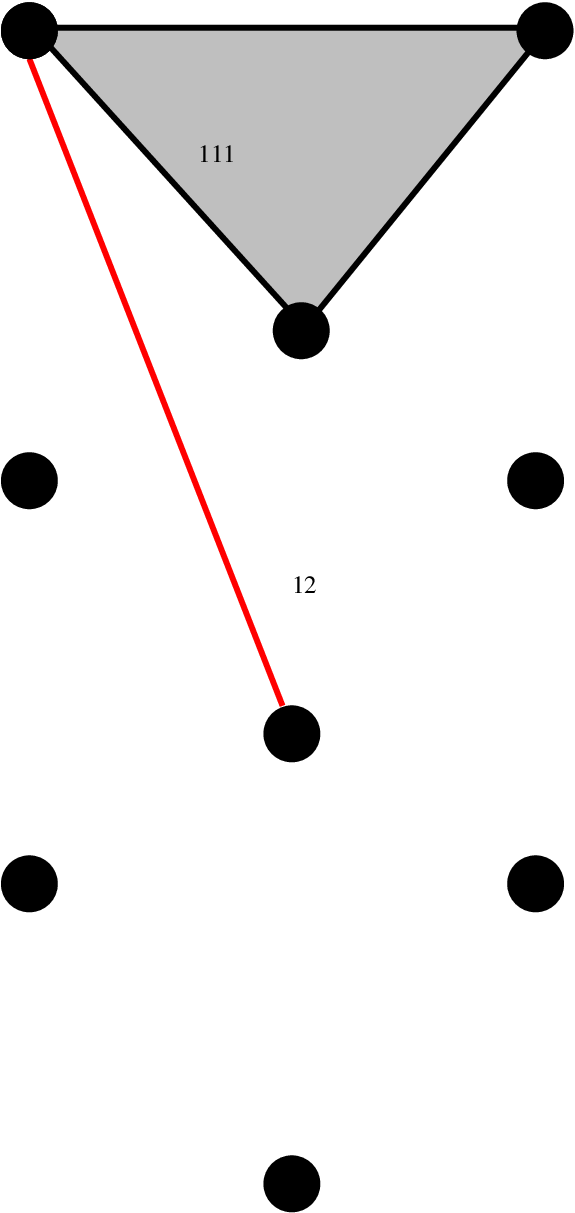}}\nn 
\end{eqnarray} 
 \vspace*{.2cm}

\noindent We still have to check pairwise independence of the bipartite and tripartite correlation questions.
\begin{lem}\label{lem8}
Any bipartite $Q_{m_Bn_C},Q_{l_An_C},Q_{l_Am_B}$ and any tripartite correlation question $Q_{ijk}$ are independent from one another.
\end{lem}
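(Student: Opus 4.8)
The statement to prove, Lemma~\ref{lem8}, asserts pairwise independence of the bipartite correlations $Q_{m_Bn_C},Q_{l_An_C},Q_{l_Am_B}$ and the tripartite correlation $Q_{ijk}$. The plan is to reduce everything to the already-established compatibility/complementarity data of Lemmas~\ref{lem5}, \ref{lem_tri1} and \ref{lem_tri2}, together with the same logical device used repeatedly above: if two questions $Q,Q'$ are such that some third question $Q''$ is compatible with $Q$ but complementary to $Q'$, then $Q$ and $Q'$ must be independent (because once $O$ knows $Q$ and $Q''$ simultaneously, he can have no information about $Q'$, so knowing $Q$ alone relative to the state of no information cannot fix $y_{Q'}$ away from $\tfrac12$). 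This is exactly the argument pattern of the proof of Lemma~\ref{lem2}, and I would invoke it as the workhorse.

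\textbf{Step 1 (bipartite--bipartite independence).} For two of the bipartite correlations among $\{Q_{m_Bn_C},Q_{l_An_C},Q_{l_Am_B}\}$, I would first dispose of the case where they intersect in a vertex; if they do not intersect they are already complementary (Lemma~\ref{lem5}) and hence independent. If they intersect, say $Q_{l_An_C}$ and $Q_{l_Am_B}$ share the vertex $Q_{l_A}$, then I pick a witness individual or bipartite question that is compatible with one and complementary to the other: e.g.\ $Q_{n_C}$ is compatible with $Q_{l_An_C}$ (by construction) but complementary to $Q_{l_Am_B}$ (Lemma~\ref{lem1}, applied to the $A$--$C$ pair noting $Q_{l_Am_B}$ involves no $C$ index so it is genuinely complementary to $Q_{n_C}$ — more precisely use that $Q_{l_Am_B}$ is compatible with $Q_{m_B}$, complementary to $Q_{n_C}$). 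Then the workhorse argument gives independence. I would spell out the few intersection patterns (share an $A$, a $B$, or a $C$ vertex) and pick the witness accordingly; each is a one-line application.

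\textbf{Step 2 (tripartite--bipartite independence).} Here the relevant facts are already in Lemma~\ref{lem_tri2}: $Q_{ijk}$ is compatible with some bipartite correlations and complementary to others. In the complementary cases independence is immediate. In the compatible cases — say $Q_{ijk}$ is compatible with $Q_{i_Aj_B}$ — I would exhibit a witness complementary to one and compatible with the other: $Q_{k_C}$ is compatible with $Q_{ijk}$ (since $Q_{ijk}=Q_{i_Aj_B}\leftrightarrow Q_{k_C}$) but complementary to $Q_{i_Aj_B}$ (Lemma~\ref{lem1}), so $Q_{ijk}$ and $Q_{i_Aj_B}$ are independent. For the ``non-intersecting'' compatible case, e.g.\ $Q_{ijk}$ compatible with $Q_{m_Bn_C}$ for $m\neq j, n\neq k$, I would use that $Q_{ijk}$ is compatible with $Q_{i_A}$ while $Q_{m_Bn_C}$ is complementary to… — actually here I need a witness compatible with $Q_{m_Bn_C}$ and complementary to $Q_{ijk}$: take $Q_{m_B}$ (compatible with $Q_{m_Bn_C}$, complementary to $Q_{ijk}$ by Lemma~\ref{lem_tri1}), done. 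I would also need to confirm symmetry of the independence relation is automatic (it is assumed symmetric throughout, section~\ref{sec_elstruc}).

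\textbf{Main obstacle.} The only real work is bookkeeping: enumerating the handful of graph-configurations (which indices coincide) and, for each, naming a witness question with the right compatibility profile. The subtle point to get right is that in every configuration such a witness genuinely exists — in a couple of cases one may need a bipartite rather than an individual witness (e.g.\ using $Q_{j_Bk_C}$ or $Q_{i_Ak_C}$ as in Lemma~\ref{lem_tri2}'s proof) — and that the witness is itself pairwise independent of both questions in question, which is what licenses the appeal to Assumption~\ref{assump5b} inside the Lemma~\ref{lem2}-style argument. I do not expect any genuinely new idea to be required; the proof is a systematic recycling of Lemmas~\ref{lem1}, \ref{lem5}, \ref{lem_tri1}, \ref{lem_tri2} and the independence-from-a-witness trick, so I would write it compactly, treating one representative case of each type in full and remarking that the remaining cases follow by the evident permutation of $A,B,C$ and relabelling of indices.
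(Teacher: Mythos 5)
Your overall strategy is the same as the paper's: reduce everything to the Lemma~\ref{lem2}-style ``witness'' argument, using the compatibility/complementarity data of Lemmas~\ref{lem1}, \ref{lem5}, \ref{lem_tri1}, \ref{lem_tri2}. Your treatment of the central case — $Q_{ijk}$ versus a non-intersecting bipartite correlation $Q_{m_Bn_C}$ ($m\neq j$, $n\neq k$) with witness $Q_{m_B}$ (compatible with $Q_{m_Bn_C}$, complementary to $Q_{ijk}$ by Lemma~\ref{lem_tri1}) — is correct and is just the mirror of the paper's choice, which takes $Q_{k_C}$ as the witness (compatible with $Q_{ijk}$, complementary to $Q_{m_Bn_C}$ by Lemma~\ref{lem1}). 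The paper, in fact, only needs this case: the complementary pairs are independent by definition, and independence of $Q_{ijk}$ from its own edges $Q_{i_Aj_B},Q_{i_Ak_C},Q_{j_Bk_C}$ was already noted when $Q_{ijk}$ was introduced, via the closure argument below~(\ref{correlation2}).

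However, two of the witnesses you name would fail, because you misapply Lemma~\ref{lem1}: that lemma gives complementarity between an individual and a bipartite correlation \emph{of the qubit pair that individual belongs to} (with a different index). An individual question of a qubit that does not appear in a correlation question is \emph{compatible} with it (this is built into Definition~\ref{def_comp} and is used explicitly in the proof of Lemma~\ref{lem_tri2}). Concretely: in your Step~1, $Q_{n_C}$ is compatible, not complementary, with $Q_{l_Am_B}$, so it is not a valid witness for the pair $Q_{l_An_C},Q_{l_Am_B}$; the correct witness is an individual of the qubit on which the two edges differ, e.g.\ $Q_{p_B}$ with $p\neq m$, which is compatible with $Q_{l_An_C}$ and complementary to $Q_{l_Am_B}$ by Lemma~\ref{lem1}. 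Likewise, in your Step~2 the witness $Q_{k_C}$ is compatible, not complementary, with $Q_{i_Aj_B}$; for that pair use instead $Q_{p_C}$ with $p\neq k$ (compatible with $Q_{i_Aj_B}$, complementary to $Q_{ijk}$ by Lemma~\ref{lem_tri1}), or simply invoke the argument below~(\ref{correlation2}) as the paper does. With these corrected witnesses your case-by-case scheme goes through and coincides with the paper's proof.
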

\begin{proof}
Lemma \ref{lem_tri2} implies that we only have to check pairwise independence of $Q_{m_Bn_C},Q_{l_An_C},Q_{l_Am_B}$ from $Q_{ijk}$ for $l\neq i$, $m\neq j$ and $k\neq n$ because maximally complementary questions are by definition independent and $Q_{ijk}$ and $Q_{i_Aj_B},Q_{i_Ak_C},Q_{j_Bk_C}$ are pairwise independent. Consider therefore $Q_{ijk}$ and $Q_{m_Bn_C}$ for $j\neq m$ and $k\neq n$. By lemma \ref{lem_tri1}, $Q_{k_C}$ is maximally compatible with $Q_{ijk}$ and by lemma \ref{lem1} maximally complementary to $Q_{m_Bn_C}$. This implies, using the arguments from the proof of lemma \ref{lem2}, independence of $Q_{ijk},Q_{m_Bn_C}$. The other cases follow similarly.
\end{proof}

\begin{lem}\label{lem9}
The tripartite correlation questions $Q_{ijk}$, $i,j,k=1,2,3$ are pairwise independent.
\end{lem}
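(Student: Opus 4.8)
The plan is to imitate the proof of Lemma~\ref{lem2} (and its three-qubit counterpart Lemma~\ref{lem8}), now using Lemma~\ref{lem_tri1} to produce a single individual question that \emph{separates} the two tripartite correlations in question. Let $Q_{ijk}$ and $Q_{i'j'k'}$ be two distinct tripartite questions, so that as ordered triples $(i,j,k)\neq(i',j',k')$; hence at least one of the three positions differs. Without loss of generality I assume $i\neq i'$ -- the cases $j\neq j'$ and $k\neq k'$ follow by the obvious relabeling of $A,B,C$, since the construction $Q_{ijk}=Q_{i_A}\leftrightarrow Q_{j_B}\leftrightarrow Q_{k_C}$ is symmetric under permutations of the three qubits.

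First I would record the two facts I need from Lemma~\ref{lem_tri1}: $Q_{i_A}$ is compatible with $Q_{ijk}$, because its vertex is one of the vertices of the triangle representing $Q_{ijk}$; and $Q_{i_A}$ is complementary to $Q_{i'j'k'}$, because $i\neq i'$ means its vertex is not a vertex of the triangle representing $Q_{i'j'k'}$. Then I run the argument of Lemma~\ref{lem2}. Suppose, for contradiction, that $Q_{ijk}$ and $Q_{i'j'k'}$ were fully or partially dependent. Since $Q_{i_A}$ and $Q_{ijk}$ are compatible, there exists a state in $\Sigma$ in which $O$ simultaneously knows the answers to both, i.e.\ $y_{i_A},y_{ijk}\in\{0,1\}$. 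By the global-dependence assumption at the end of section~\ref{sec_elstruc}, dependence of $Q_{ijk}$ and $Q_{i'j'k'}$ holds relative to \emph{every} state, so in this state $y_{ijk}\in\{0,1\}$ forces $y_{i'j'k'}\neq\frac{1}{2}$. But $Q_{i_A}$ is complementary to $Q_{i'j'k'}$, so any state with $y_{i_A}\in\{0,1\}$ must have $y_{i'j'k'}=\frac{1}{2}$ -- a contradiction. Hence $Q_{ijk}$ and $Q_{i'j'k'}$ are independent, and since independence is required to be symmetric the conclusion is unambiguous. As this holds for every pair of distinct tripartite questions, the $Q_{ijk}$ are pairwise independent.

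I expect no genuine obstacle here: the entire content is already packaged in Lemma~\ref{lem_tri1} together with the separation template of Lemma~\ref{lem2}. The only points that need a moment of care are (i) that a distinct ordered triple always differs in at least one slot, so that a separating individual question $Q_{i_A}$ (or $Q_{j_B}$, $Q_{k_C}$) is guaranteed to exist, and (ii) that it is precisely the ``dependence is a state-independent relation'' assumption of section~\ref{sec_elstruc} that licenses evaluating the hypothetical dependence in the convenient state where $O$ also knows $Q_{i_A}$. Both are immediate, so the proof is short.
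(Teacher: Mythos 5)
Your proposal is correct and follows essentially the same route as the paper: the paper's proof also picks the separating individual question $Q_{i_A}$ (compatible with $Q_{ijk}$, complementary to $Q_{lmn}$ by Lemma \ref{lem_tri1}) and then invokes the argument of Lemma \ref{lem2} verbatim, handling $j\neq m$ and $k\neq n$ by the same symmetry. Your extra remarks on the state-independence of dependence and the existence of the witnessing state merely make explicit what the paper leaves implicit.
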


\begin{proof}
Consider $Q_{ijk}$ and $Q_{lmn}$ for $i\neq l$. By lemma \ref{lem_tri1}, $Q_{i_A}$ is maximally compatible with $Q_{ijk}$ and maximally complementary to $Q_{lmn}$. Using the analogous arguments from the proof of lemma \ref{lem2}, this implies that $Q_{ijk},Q_{lmn}$ are independent. The same reasoning holds when $j\neq m$ and $k\neq n$.
\end{proof}
 
This has an immediate consequence:

\begin{Corollary}
The individuals $Q_{i_A},Q_{j_B},Q_{k_C}$, the bipartite $Q_{i_Aj_B},Q_{i_Ak_C},Q_{j_Bk_C}$ and the tripartite $Q_{ijk}$, $i,j,k=1,2,3$ are pairwise independent and thus, thanks to assumption \ref{assump6}, contained in an informationally complete set $\cq_{M_3}$.
\end{Corollary}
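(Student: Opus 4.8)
The proof proposal is essentially to assemble the pairwise independence claims that have already been established piecemeal in Lemmas~\ref{lem9}, \ref{lem8}, \ref{lem_tri1}, \ref{lem_tri2}, and \ref{lem5} (together with the basic facts about individuals and bipartite correlations inherited from the $N=2$ discussion). The Corollary is a pure bookkeeping step: it partitions the set $\{Q_{i_A},Q_{j_B},Q_{k_C},\,Q_{i_Aj_B},Q_{i_Ak_C},Q_{j_Bk_C},\,Q_{ijk}\}$ into types (individual--individual, individual--bipartite, individual--tripartite, bipartite--bipartite within a fixed pair, bipartite--bipartite across three qubits, bipartite--tripartite, tripartite--tripartite) and cites for each type the relevant earlier result guaranteeing pairwise independence. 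Once every unordered pair of questions in the set is shown to be pairwise independent, Assumption~\ref{assump6} (existence of an informationally complete set) plus the fact that this is a maximal pairwise independent set built by the composition procedure of Definition~\ref{def_comp} yields that it is contained in some informationally complete $\cq_{M_3}$.

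Concretely, first I would handle the purely individual pairs: $Q_{i_A}$ and $Q_{j_B}$ (and the analogues for the other two qubit pairs) are compatible and pairwise independent by the defining structure of a composite system (Definition~\ref{def_comp} together with the $N=1$ maximal complementary set structure), while $Q_{i_A}$ and $Q_{l_A}$ for $i\neq l$ are complementary, hence independent, since they belong to $\cq_{M_1}$ of the same qubit. Second, individual--bipartite pairs are covered by Lemma~\ref{lem1} applied within each qubit pair (for $Q_{i_A}$ versus $Q_{k_Al_B}$) and trivially by compatibility-across-subsystems arguments for $Q_{i_A}$ versus a bipartite correlation $Q_{j_Bk_C}$ not involving $A$ (here one notes $Q_{i_A}$ is compatible with $Q_{j_B},Q_{k_C}$ and hence with $Q_{j_Bk_C}$, and the analogue of Lemma~\ref{lem2}'s argument gives independence). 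Third, individual--tripartite pairs are exactly Lemma~\ref{lem_tri1}. Fourth, bipartite--bipartite pairs split into the fixed-pair case (Lemma~\ref{lem3} plus Lemma~\ref{lem2}) and the three-qubit case (Lemma~\ref{lem5} for compatibility/complementarity, with independence of the compatible case $Q_{i_Aj_B},Q_{j_Bk_C}$ following because their correlation is $Q_{i_Ak_C}$, independent of each). Fifth, bipartite--tripartite and tripartite--tripartite pairs are precisely Lemmas~\ref{lem8} and~\ref{lem9} respectively.

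I expect essentially no genuine obstacle here — this Corollary is a corollary in the literal sense. The only mild subtlety worth spelling out is the one bipartite--bipartite case \emph{within a single pair but where that pair sits inside the tripartite system}: one must check that $Q_{i_Aj_B}$ and $Q_{k_Al_B}$ with $i\ne k$, $j\ne l$ remain pairwise independent when a third qubit $C$ is present, which follows verbatim from Lemma~\ref{lem3} and Lemma~\ref{lem2} since those arguments only used the existence of the individual $Q_{i_A},Q_{j_B}$ and Specker's principle (Theorem~\ref{assump5}), all of which persist in the larger system. Hence I would write: by the cited lemmas, every pair among $Q_{i_A},Q_{j_B},Q_{k_C},Q_{i_Aj_B},Q_{i_Ak_C},Q_{j_Bk_C},Q_{ijk}$ ($i,j,k=1,2,3$) is either compatible and pairwise independent or complementary (hence pairwise independent); therefore the whole collection is a set of pairwise independent questions, and by Assumption~\ref{assump6} (applied to the composite system $S_{ABC}$) it extends to — equivalently, is contained in — an informationally complete set $\cq_{M_3}$. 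This completes the proof.

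\begin{proof}
Every unordered pair of questions drawn from $\{Q_{i_A},Q_{j_B},Q_{k_C}\}_{i,j,k=1,2,3}\cup\{Q_{i_Aj_B},Q_{i_Ak_C},Q_{j_Bk_C}\}_{i,j,k=1,2,3}\cup\{Q_{ijk}\}_{i,j,k=1,2,3}$ is pairwise independent: individual--individual pairs are independent by Definition~\ref{def_comp} (for distinct subsystems) or because they lie in a common $\cq_{M_1}$ of mutually complementary questions (for the same subsystem); individual--bipartite pairs by Lemma~\ref{lem1} (and, for an individual and a bipartite correlation not involving it, by compatibility across subsystems together with the independence argument of Lemma~\ref{lem2}); individual--tripartite pairs by Lemma~\ref{lem_tri1}; bipartite--bipartite pairs by Lemma~\ref{lem3} and Lemma~\ref{lem2} within a fixed qubit pair, and by Lemma~\ref{lem5} (noting $Q_{i_Aj_B}\leftrightarrow Q_{j_Bk_C}=Q_{i_Ak_C}$ is independent of each) across three qubits; bipartite--tripartite pairs by Lemma~\ref{lem8}; and tripartite--tripartite pairs by Lemma~\ref{lem9}. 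Hence the displayed collection is a set of pairwise independent questions. By Assumption~\ref{assump6}, applied to the composite system $S_{ABC}$, $\cq_3$ admits an informationally complete set $\cq_{M_3}$ of pairwise independent questions, and since the above collection is pairwise independent it is contained in such a $\cq_{M_3}$.
\end{proof}
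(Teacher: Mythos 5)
Your proposal matches the paper's treatment: the Corollary carries no separate proof there and is obtained exactly as you do, by assembling the preceding results (Lemmas \ref{lem5}--\ref{lem9}, the $N=2$ structure, Definition \ref{def_comp}) pair-type by pair-type and then invoking Assumption \ref{assump6}. The one point to repair is your parenthetical argument for the case the paper also leaves tacit, namely two compatible bipartite correlations of different qubit pairs sharing a vertex, e.g.\ $Q_{i_Aj_B}$ and $Q_{j_Bk_C}$: you justify their independence by noting that their correlation $Q_{i_Ak_C}$ is independent of each, but that is itself a claim of exactly the same type (a vertex-sharing bipartite pair), so as written the reasoning is circular. The fix is the same Lemma \ref{lem2}-style device you use elsewhere: take $Q_{i_Am_C}$ with $m\neq k$, which by Lemma \ref{lem5} is compatible with $Q_{i_Aj_B}$ but complementary to $Q_{j_Bk_C}$; in a state where $Q_{i_Aj_B}$ and $Q_{i_Am_C}$ are both known (e.g.\ after asking $Q_{i_A},Q_{j_B},Q_{m_C}$), the probability for $Q_{j_Bk_C}$ must be $\f{1}{2}$, so $Q_{i_Aj_B}$ cannot be even partially dependent on $Q_{j_Bk_C}$, which closes the remaining case.
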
 

Lastly, we consider the complementarity and compatibility structure of the tripartite correlations.

\begin{lem}\label{lem_tri3}
$Q_{ijk}$ and $Q_{lmn}$ are maximally compatible if $\{i,j,k\}$ and $\{l,m,n\}$ overlap in one or three indices and maximally complementary if $\{i,j,k\}$ and $\{l,m,n\}$ overlap in zero or two indices. That is, graphically, $Q_{ijk}$ and $Q_{lmn}$ are maximally compatible if their corresponding triangles intersect in one vertex (or coincide) and maximally complementary if the triangles share an edge or do not intersect.
\end{lem}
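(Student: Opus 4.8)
The plan is to split on the overlap number $p\in\{0,1,2,3\}$ between the positionwise index sets of $Q_{ijk}$ and $Q_{lmn}$, and for each value produce one auxiliary question — always an individual or a bipartite correlation — that is compatible with and pairwise independent of \emph{both} triangles. The incidence data needed (which question is an edge of which triangle, shares a vertex with it, or is disjoint from it) are all read off lemmas \ref{lem_tri1}, \ref{lem_tri2} and \ref{lem8}. The complementary cases $p=0,2$ then follow the template of lemmas \ref{lem1} and \ref{lem3}: negate ``at least partially compatible'', freeze the hypothetical state using assumption \ref{assump5b}, and derive maximal information about a third question already known to be complementary to the other triangle. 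The compatible case $p=1$ follows the template of the ``compatible'' half of the proof of lemma \ref{lem3}: construct an explicit joint state via Specker's principle (theorem \ref{assump5}). The case $p=3$ is trivial, since then $Q_{ijk}=Q_{lmn}$. Throughout I would invoke the symmetry of the construction under permutations of $A,B,C$, exactly as the text does, to reduce each $p$ to a single representative.

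For $p=2$, say $i=l$, $j=m$, $k\neq n$, I would take $R:=Q_{i_Aj_B}$. It is an edge of the triangle $Q_{ijn}$ and it satisfies $Q_{ijk}=Q_{i_Aj_B}\leftrightarrow Q_{k_C}$, so by lemma \ref{lem_tri2} it is compatible with both triangles, and by lemma \ref{lem8} pairwise independent of both. If $Q_{ijk},Q_{ijn}$ were at least partially compatible there would be a state with $\alpha_{ijk}=1\,\texttt{bit}$ and $\alpha_{ijn}>0$; asking $R$ changes $O$'s information about neither (assumption \ref{assump5b}), after which $O$ has maximal information about the compatible, $\leftrightarrow$-closed triple $\{R,Q_{k_C},Q_{ijk}\}$ and hence about $Q_{k_C}$ — but $k\neq n$ makes $Q_{k_C}$ complementary to $Q_{ijn}$ by lemma \ref{lem_tri1}, forcing $\alpha_{ijn}=0$, a contradiction. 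For $p=0$, all three indices distinct, the same $R=Q_{i_Aj_B}$ works: it is an edge of $Q_{ijk}$, and since $i\neq l$ and $j\neq m$ its edge shares no vertex with the triangle $Q_{lmn}$, so lemma \ref{lem_tri2} again gives compatibility with both triangles; the identical argument (deduce $\alpha_{k_C}=1\,\texttt{bit}$, then invoke complementarity of $Q_{k_C}$ and $Q_{lmn}$ from lemma \ref{lem_tri1} because $k\neq n$) yields the contradiction.

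For $p=1$, say $k=n$, $i\neq l$, $j\neq m$, I would instead exhibit a joint state for $Q_{ijk}$ and $Q_{lmk}$. Consider the three questions $Q_{i_Aj_B}$, $Q_{k_C}$, $Q_{lmk}$. They are pairwise independent: $Q_{i_Aj_B}$ and $Q_{k_C}$ concern disjoint subsystems; $Q_{i_Aj_B}$ and $Q_{lmk}$ are independent by lemma \ref{lem8}; $Q_{k_C}$ and $Q_{lmk}$ by lemma \ref{lem_tri1}. They are also pairwise compatible: $Q_{i_Aj_B}$–$Q_{k_C}$ by the composite-system structure; $Q_{k_C}$–$Q_{lmk}$ because $k_C$ is a vertex of the triangle $Q_{lmk}$ (lemma \ref{lem_tri1}); and $Q_{i_Aj_B}$–$Q_{lmk}$ because the edge shares no vertex with the triangle, as $i\neq l$, $j\neq m$ (lemma \ref{lem_tri2}). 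By Specker's principle (theorem \ref{assump5}) the three are then mutually compatible, so there is a state in which $O$ knows all three; since $Q_{i_Aj_B}$ and $Q_{k_C}$ are then simultaneously known and compatible, their XNOR $Q_{ijk}=Q_{i_Aj_B}\leftrightarrow Q_{k_C}$ is known too, so $O$ knows $Q_{ijk}$ and $Q_{lmk}$ at once, which is exactly compatibility.

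The main obstacle I anticipate is case $p=1$: unlike the complementary cases, which transcribe the lemma \ref{lem1}/\ref{lem3} arguments almost verbatim, here one must locate three questions that are simultaneously pairwise independent, pairwise compatible, and whose joint knowledge pins down $Q_{ijk}$, and then verify every incidence relation (edge-of-triangle versus shares-a-vertex versus disjoint) carefully against lemmas \ref{lem_tri1}–\ref{lem_tri2} — a bookkeeping step where it is easy to conflate ``shares a vertex'' with ``shares an edge''. A secondary point requiring care, as in the earlier lemmas, is that the hypothesis to negate in the complementary cases is ``at least partially compatible'', so that partial (not merely maximal) compatibility is genuinely excluded.
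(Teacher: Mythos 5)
Your proof is correct and follows essentially the same route as the paper: the same split by overlap number, the same lemma-\ref{lem1}-style argument for the complementary cases (negate partial compatibility, freeze the state by asking an auxiliary question that is compatible with and independent of both triangles via assumption \ref{assump5b}, then derive maximal information about an individual complementary to the other triangle by lemma \ref{lem_tri1}), and an explicit joint state for the one-vertex overlap. The only cosmetic difference is in that compatible case, where the paper decomposes both triangles over the shared vertex $Q_{i_A}$ and invokes compatibility of $Q_{j_Bk_C},Q_{m_Bn_C}$ from lemma \ref{lem3}, while you apply Specker's principle (theorem \ref{assump5}) to the triple $\{Q_{i_Aj_B},Q_{k_C},Q_{lmk}\}$ -- equivalent up to a relabeling of $A,B,C$.
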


\begin{proof}
Compatibility for an overlap in all three indices is trivial. But also $Q_{ijk}=Q_{i_A}\leftrightarrow Q_{j_Bk_C}$ and $Q_{imn}=Q_{i_A}\leftrightarrow Q_{m_Bn_C}$ are clearly maximally compatible for $j\neq m$ and $k\neq n$ because by lemma \ref{lem3} $Q_{j_Bk_C},Q_{m_Bn_C}$ are maximally compatible in this case. Compatibility for the other cases of an overlap of $Q_{ijk}$ and $Q_{lmn}$ in one index follows by permutation. 

The proof of the complementarity of $Q_{ijk}$ and $Q_{lmn}$ for $i\neq l$, $j\neq m$ and $k\neq n$ follows from lemma \ref{lem0}. One may use the fact that, by lemma \ref{lem_tri2}, both questions are maximally compatible with and independent of $Q_{j_Bk_C}$ and that, by lemma \ref{lem_tri1}, $Q_{i_A\neq l_A}$ is maximally complementary to $Q_{lmn}$.  %and $Q_{m_Bn_C}$ and the latter two are also compatible by lemma \ref{lem3}. Accordingly, thanks to Specker's principle (c.f.\ assumption \ref{assump5}), $O$ could ask all four questions at the same time. But this would provide $O$ with simultaneous maximal information about the complementary pair $Q_{i_A},Q_{l_A\neq i_A}$. Partial compatibility of $Q_{ijk},Q_{lmn}$ would similarly lead to illegal partial information about complementaries (see the arguments in the proof of lemma \ref{lem1}). Hence, $Q_{ijk},Q_{lmn}$ are complementary for $i\neq l$, $j\neq m$ and $k\neq n$.

Similarly, one proves complementarity of $Q_{ijk}=Q_{i_A}\leftrightarrow Q_{j_Bk_C}$ and $Q_{ljk}=Q_{l_A}\leftrightarrow Q_{j_Bk_C}$  for $i\neq l$ by using that both are maximally compatible with and independent of $Q_{j_Bk_C}$. % simultaneously, thereby knowing the answer to $Q_{i_A}$, and at the same time have at least partial information about $Q_{ljk}$ (i.e., $y_{ljk}\neq\f{1}{2}$) which, however, is complementary to $Q_{i_A}$ by lemma \ref{lem_tri1}. 
Complementarity of tripartite correlations for other overlaps in precisely two indices follows by permutation.
\end{proof}
 
For example, $Q_{111}$ and $Q_{212}$ intersect in the vertex $Q_{1_B}$ and are thus maximally compatible. By contrast, $Q_{111}$ shares the edge $Q_{1_A1_B}$ with $Q_{113}$ and does not intersect at all with $Q_{333}$ such that $Q_{111}$ is maximally complementary to both. $Q_{113}$ and $Q_{333}$ intersect in the vertex $Q_{3_C}$ and are therefore maximally compatible:
  \vspace*{.2cm} \begin{eqnarray}
\psfrag{111}{$Q_{111}$}
\psfrag{333}{$Q_{333}$}
\psfrag{113}{$Q_{113}$}
\psfrag{212}{$Q_{212}$}
{\includegraphics[scale=.2]{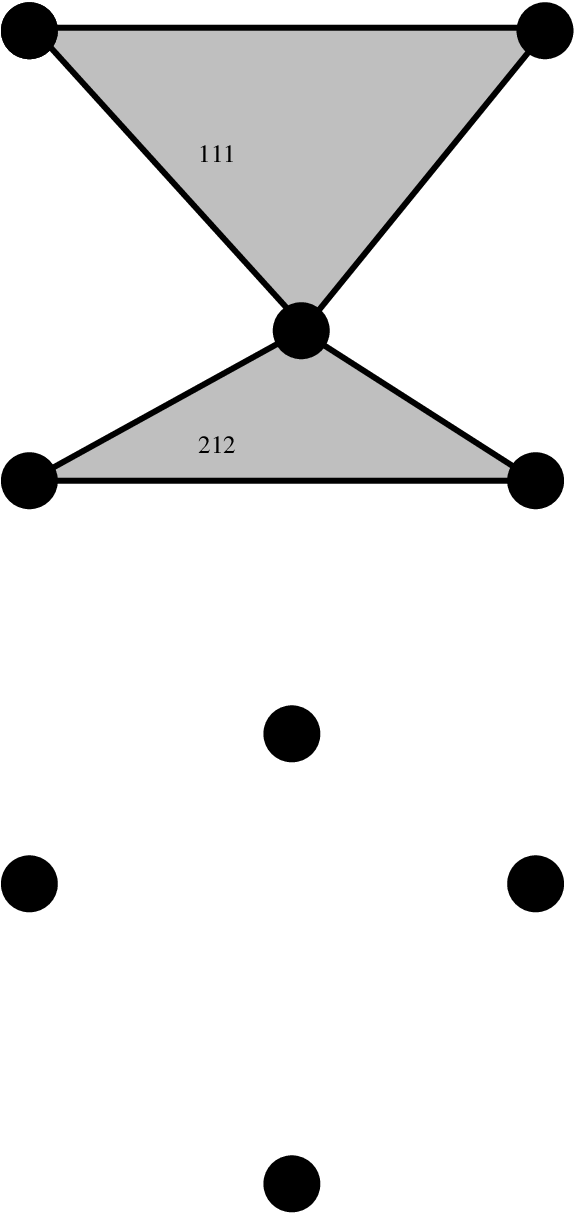}}\q\q\q\q\q\q\q\q\q\q%\q\q\q\q\q
{\includegraphics[scale=.2]{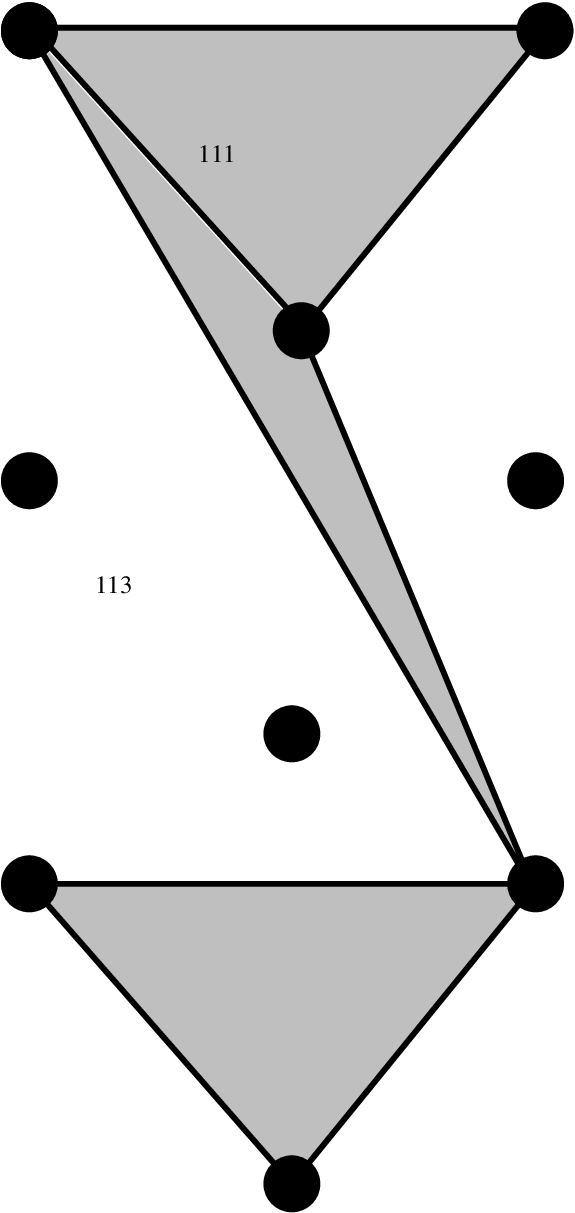}}\nn 
\end{eqnarray}

\subsubsection{An informationally complete set for three qubits}

We now show that the $9$ individuals, the $27$ bipartite and the $27$ tripartite correlation questions form an informationally complete set $\cq_{M_3}$ for $N=3$ qubits.

\begin{Theorem}\label{thm_qubit2}{\bf(Qubits)}
The individuals $Q_{i_A},Q_{j_B},Q_{k_C}$, the bipartite $Q_{i_Aj_B},Q_{i_Ak_C},Q_{j_Bk_C}$ and the tripartite $Q_{ijk}$, $i,j,k=1,2,3$ are logically closed under $\leftrightarrow$ such that they form an informationally complete set $\cq_{M_3}$ with $D_3=63$ for $D_1=3$.
\end{Theorem}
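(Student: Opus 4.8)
The plan is to verify two things: first, that the set of individuals, bipartite correlations, and tripartite correlations $Q_{ijk}$ is closed under $\leftrightarrow$ (so that no genuinely new pairwise independent questions arise from further logical composition), and second, that these $9 + 27 + 27 = 63$ questions are actually pairwise independent and hence constitute an informationally complete $\cq_{M_3}$ with $D_3 = 63$. Pairwise independence of all the listed questions has already been collected in the corollaries following Lemmas \ref{lem8} and \ref{lem9} (individuals vs.\ tripartite, bipartite vs.\ tripartite, tripartite vs.\ tripartite) together with the $N=2$ results (Lemmas \ref{lem1}--\ref{lem3}) applied to each of the three qubit pairs; so the real content is \emph{closure} under $\leftrightarrow$ plus an argument that no further independent question can be appended.

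First I would establish closure. Any question in the proposed set is, by \eqref{tripartite} and \eqref{correlation2}, expressible as an XNOR-string in the individuals $Q_{i_A}, Q_{j_B}, Q_{k_C}$. Since $\leftrightarrow$ is associative and symmetric and satisfies $Q\leftrightarrow Q = 1$ (the trivial question), any XNOR-composition of two elements of the set — \emph{provided the two are compatible so that assumption \ref{assump4} permits forming it} — reduces, after cancelling repeated individual factors, to another XNOR-string involving at most one individual from each of the three qubits, i.e.\ to an individual, a bipartite correlation, a tripartite correlation, or the trivial question. Concretely: composing two compatible questions whose combined individual-support touches qubit $A$ in indices that agree cancels that factor; if they disagree the questions would be complementary by the lemmas and cannot be composed. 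So every admissible composition lands back in the set. This is exactly the same mechanism used in Theorem \ref{thm_qubit} for $N=2$, now applied one level up. In particular, unlike the rebit case, here there is no analogue of the ``correlation of correlations'' $Q_{33}$ escaping the set, precisely because $D_1 = 3$ guarantees the third individual $Q_{3_A}$ (etc.) exists, so e.g.\ $Q_{i_Aj_B}\leftrightarrow Q_{i_Ak_C} = Q_{j_Bk_C}$ and $Q_{i_Aj_B}\leftrightarrow Q_{l_Am_B}$ (with $i\neq l$, $j\neq m$) would require composing complementary questions and is therefore not formed.

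Next I would argue that $\cq_{M_3}$ is maximal, i.e.\ informationally complete. Any candidate new pairwise independent question $Q$ would, by assumption \ref{assump6} and definition \ref{def_comp}, have to be built by XNOR from the individuals and bipartite correlations (and iteratively from such compositions). Closure shows every such composition of \emph{compatible} questions is already in the set. A composition of \emph{incompatible} (complementary) questions is not implementable according to $O$'s model (assumption \ref{assump4}), so it cannot be in $\cq$. Hence no further pairwise independent question exists beyond the listed $63$, and by the dimension lemma in section \ref{sec_elstruc} (all informationally complete sets have the same cardinality) $D_3 = 63$. I would phrase the count explicitly: $3$ individuals per qubit times $3$ qubits gives $9$; $9$ bipartite correlations per qubit pair times $3$ pairs gives $27$; and the tripartite correlations form a $3\times3\times3$ array, giving $27$; total $63$, which matches $4^3 - 1$, the number of parameters of a three-qubit density matrix.

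The main obstacle I anticipate is not any single calculation but the bookkeeping in the closure step: one must be careful that when composing, say, a tripartite $Q_{ijk}$ with a bipartite $Q_{l_Am_B}$, the hierarchy-of-execution subtlety flagged after assumption \ref{assump4b} is respected — the composition is only admissible when the two are compatible (Lemma \ref{lem_tri2}), and in that case the cancellation of shared individual factors via associativity of $\leftrightarrow$ must be checked to yield an element of the set rather than something requiring an illegal regrouping. A secondary point to handle cleanly is the iterated case (compositions of compositions): an induction on the number of XNOR-connectives, using closure at each step and the fact that at every step only compatible subexpressions are connected, disposes of it. Once these are in place the theorem follows, and it sets up the inductive base for the $N$-qubit argument in section \ref{sec_nqbits}.
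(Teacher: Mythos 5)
There is a genuine gap in your closure step, and it comes from a misstatement of the compatibility structure. You claim that whenever two questions of the set overlap on a qubit with \emph{disagreeing} indices they are complementary and hence never composed, so that every admissible XNOR composition reduces by cancelling repeated individual factors. But Lemma \ref{lem3} says the opposite for same-pair bipartite correlations: $Q_{i_Aj_B}$ and $Q_{l_Am_B}$ with $i\neq l$, $j\neq m$ (non-intersecting edges) are \emph{compatible}, and their composition must therefore be accounted for; likewise, by Lemma \ref{lem_tri2} a tripartite $Q_{ijk}$ is compatible with a bipartite correlation whose edge is disjoint from its triangle (e.g.\ $Q_{111}$ with $Q_{3_A3_C}$), and by Lemma \ref{lem_tri3} two tripartite questions intersecting in a single vertex (e.g.\ $Q_{211}$ and $Q_{121}$) are compatible although they disagree on two qubits. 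These are exactly the nontrivial cases, and none of them is handled by cancellation of shared factors: e.g.\ $Q_{1_A1_B}\leftrightarrow Q_{2_A2_B}$ has no repeated individual to cancel, and rewriting it as an XNOR string in the individuals would require regrouping through the complementary pairs $Q_{1_A},Q_{2_A}$ and $Q_{1_B},Q_{2_B}$, which assumptions \ref{assump4} and \ref{assump4b} forbid. That this composition nevertheless lands back in the set is \emph{not} a Boolean fact; it is the content of Theorem \ref{thm_qubit} (itself a consequence of principle \ref{lim} via the argument of Theorem \ref{thm_3d}), which says $Q_{\sigma(1)\sigma'(1)}\leftrightarrow Q_{\sigma(2)\sigma'(2)}$ equals $Q_{\sigma(3)\sigma'(3)}$ or its negation. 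The paper's proof uses precisely this: compositions such as $Q_{\sigma_A(1)\sigma_B(1)k}\leftrightarrow Q_{\sigma_A(2)\sigma_B(2)}$ and $Q_{\sigma_A(1)\sigma_B(1)k}\leftrightarrow Q_{\sigma_A(2)\sigma_B(2)k}$ are reduced by first cancelling the genuinely shared factor (if any) and then invoking Theorem \ref{thm_qubit} for the remaining same-pair piece. Your proof as written never invokes that theorem in the closure step, and explicitly declares the relevant compositions ``not formed,'' so it cannot establish closure.

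The remaining parts of your argument are essentially sound and match the paper: pairwise independence is indeed already supplied by the corollaries to Lemmas \ref{lem8} and \ref{lem9} together with the $N=2$ results; the maximality argument (any further question would have to be an XNOR composition of compatible elements, hence already in the set, since table (\ref{truth}) forces the connective to be XNOR/XOR) is the paper's reasoning; and the count $9+27+27=63=4^3-1$ is correct. To repair the proof, replace the per-qubit ``agree-or-complementary'' dichotomy by the correct criterion (compatibility corresponds to disagreement on an \emph{even} number of non-zero indices, as the lemmas show), and treat the even-disagreement compositions explicitly via Theorem \ref{thm_qubit}, exactly as in the displayed computations of the paper's proof.
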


\begin{proof}
For individuals and bipartite correlations of any pair of qubits the logical closure under the XNOR was already shown in section \ref{sec_n2}. Correlation of individuals with a bipartite correlation question from another pair of qubits yields the tripartite correlation questions. Lemma \ref{lem5} shows that also the bipartite correlation questions involving distinct pairs of qubits are logically closed under $\leftrightarrow$. We thus only have to check logical closure of XNOR combinations involving tripartite correlation questions.

Lemma \ref{lem_tri1} shows that tripartite correlations and individuals are only maximally compatible if the individual is a vertex of the tripartite triangle. But then a combination such as 
\ba
Q_{ijk}\leftrightarrow Q_{i_A}=(Q_{i_A}\leftrightarrow Q_{i_A})\leftrightarrow Q_{j_Bk_C}=Q_{j_Bk_C}\nn
\ea
produces another bipartite correlation, thanks to the associativity of $\leftrightarrow$. Similarly, lemma \ref{lem_tri2} asserts that tripartite and bipartite correlations are only maximally compatible if the edge of the bipartite correlation is either contained in the tripartite triangle or if the edge and triangle do not intersect. However, combinations of such maximally compatible pairs also do not yield any new questions because, e.g., 
\ba
Q_{ijk}\leftrightarrow Q_{i_Aj_B}=Q_{k_C}\nn
\ea
and, using again the associativity of XNOR and theorem \ref{thm_qubit},
\begin{widetext}
\ba
Q_{\sigma_A(1)\sigma_B(1)k}\leftrightarrow Q_{\sigma_A(2)\sigma_B(2)}=\underset{=\,Q_{\sigma_A(3)\sigma_B(3)}\q\text{or}\q\neg Q_{\sigma_A(3)\sigma_B(3)}}{\underbrace{(Q_{\sigma_A(1)\sigma_B(1)}\leftrightarrow Q_{\sigma_A(2)\sigma_B(2)})}}\leftrightarrow Q_{k_C}=\,Q_{\sigma_A(3)\sigma_B(3)k}\q\text{or}\q\neg Q_{\sigma_A(3)\sigma_B(3)k},\nn
\ea
\end{widetext}
where $\sigma_A,\sigma_B$ are permutations of $\{1,2,3\}$. Finally, lemma \ref{lem_tri3} entails that tripartite correlations are only maximally compatible if they intersect in one vertex. But, using theorem \ref{thm_qubit} once more,
\begin{widetext}
\ba
Q_{\sigma_A(1)\sigma_B(1)k}\leftrightarrow Q_{\sigma_A(2)\sigma_B(2)k}=Q_{\sigma_A(1)\sigma_B(1)}\leftrightarrow Q_{\sigma_A(2)\sigma_B(2)}=\,Q_{\sigma_A(3)\sigma_B(3)}\q\text{or}\q\neg Q_{\sigma_A(3)\sigma_B(3)}.\nn
\ea
\end{widetext}
Permuting these examples implies that all individuals, bipartite and tripartite correlations are logically closed under $\leftrightarrow$ which by (\ref{truth}) is the only independent logical connective possibly yielding new independent questions. These questions therefore form an informationally complete set $\cq_{M_3}$ with $D_3=63$.
\end{proof}

\subsubsection{Entanglement of three qubits and monogamy}\label{sec_qbitmono}

Let us now put all these detailed results to good use and reward ourselves with the observation that they naturally explain monogamy of entanglement for the qubit case. This is best illustrated with an example. Let $O$ have asked the questions $Q_{1_A1_B}$ and $Q_{2_A2_B}$ to qubits $A,B$ such that the two are in a state of maximal information of $N=2$ independent \texttt{bits} and maximal entanglement relative to $O$. 

 \vspace*{.2cm} \begin{eqnarray}
\psfrag{11}{$Q_{1_A1_B}$}
\psfrag{22}{$Q_{2_A2_B}$}
{\includegraphics[scale=.2]{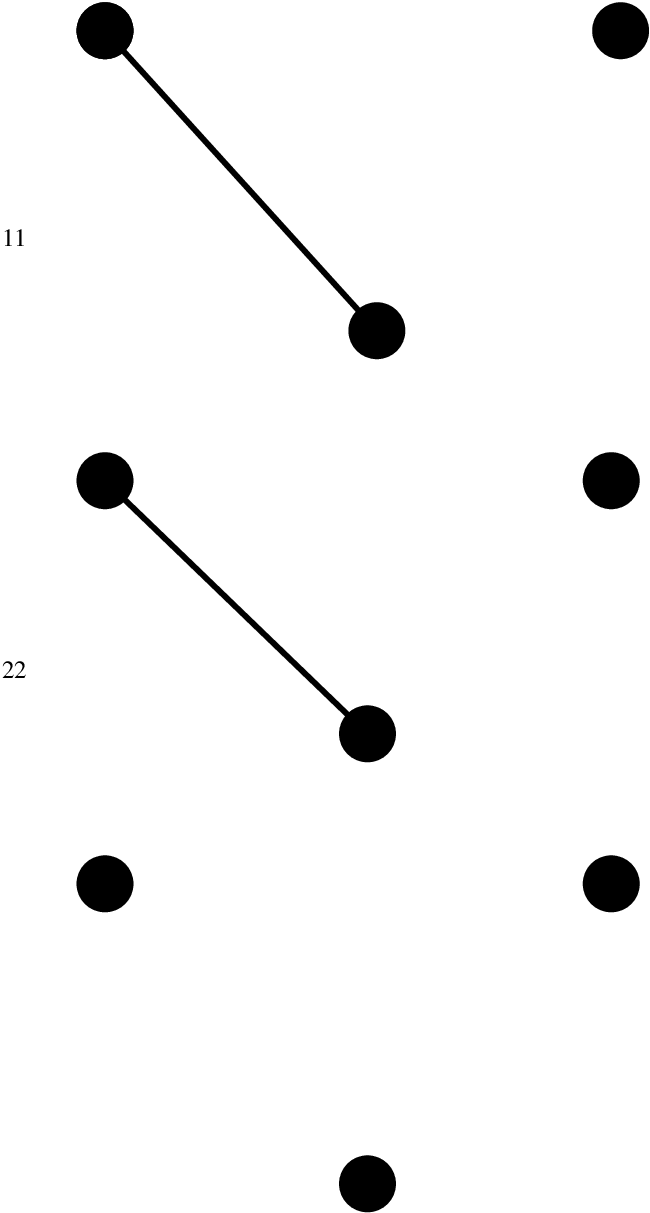}}\nn 
\end{eqnarray}  
 \vspace*{.2cm}

\noindent Intuitively, this already suggests monogamy of entanglement because $O$ has spent the maximally attainable amount of information over the pair $A,B$ such that even if now a third qubit $C$ enters the game he should no longer be able to ask {\it any} question to the triple which gives him any further simultaneous independent information about the pair $A,B$. But any correlation of $A$ or $B$ with $C$ would constitute additional information about $A$ or $B$ which would violate the information bound for the subsystem of the composite system of three qubits. 

We shall now make this more precise. The question is, which bipartite or tripartite correlations involving qubit $C$ are maximally compatible with $Q_{1_A1_B},Q_{2_A2_B}$. Firstly, lemma \ref{lem5} states that bipartite correlations of two distinct qubit pairs are only maximally compatible if their corresponding edges intersect in a vertex. Clearly, there can then not exist {\it any} edge which is maximally compatible with {\it both} $Q_{1_A1_B},Q_{2_A2_B}$. Secondly, lemma \ref{lem_tri2} asserts that a bipartite correlation is only maximally compatible with a tripartite correlation if the corresponding edge is either contained in the tripartite triangle or does not intersect the triangle. This means that the only tripartite correlations maximally compatible with $Q_{1_A1_B},Q_{2_A2_B}$ are (a) the correlations of the latter with the individuals $Q_{1_C},Q_{2_C},Q_{3_C}$ of qubit $C$ and (b) $Q_{33k}$, $k=1,2,3$. 

For example, in the first of the following two question graphs $Q_{2_B2_C}$ is maximally compatible with $Q_{2_A2_B}$ but maximally complementary to $Q_{1_A1_B}$, while $Q_{3_B3_C}$ is maximally complementary to both. But $O$ could ask $Q_{222}$ together with $Q_{1_A1_B},Q_{2_A2_B}$, as depicted in the second graph:

 \vspace*{.2cm}
  \begin{eqnarray}
\psfrag{11}{$Q_{1_A1_B}$}
\psfrag{22}{$Q_{2_A2_B}$}
\psfrag{222}{$Q_{222}$}
\psfrag{2}{$Q_{2_C}$}
\psfrag{33}{$Q_{3_B3_C}$}
\psfrag{223}{$Q_{2_B2_C}$}
\!\!\!\!\!\!\!\!{\includegraphics[scale=.2]{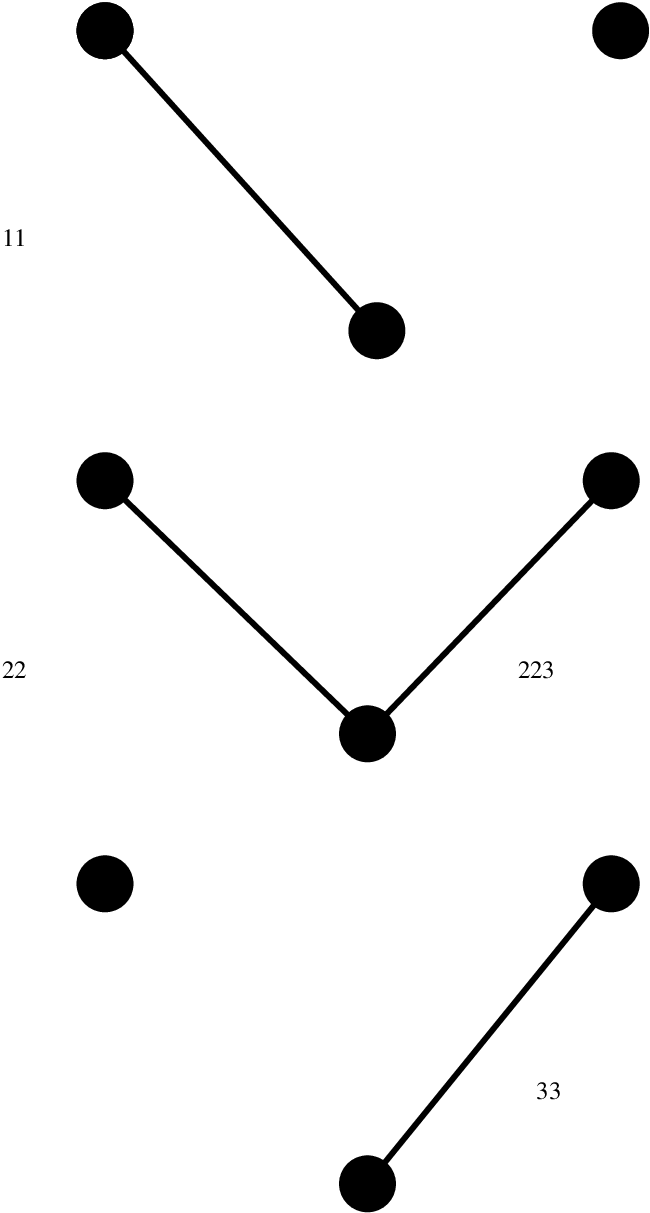}}\q\q\q\q\q\q\q\,\,%\q%\q%\q\q\q\q
{\includegraphics[scale=.2]{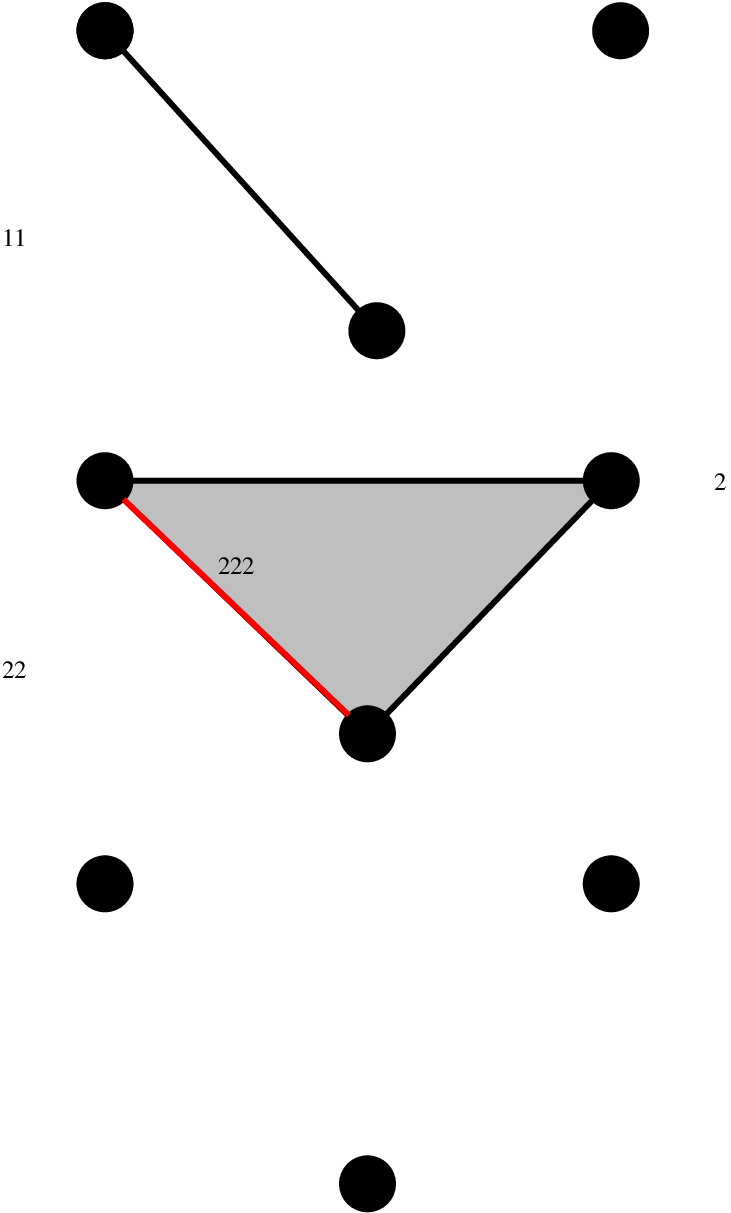}}\nn 
\end{eqnarray}  
 \vspace*{.2cm}

\noindent However, asking $Q_{1_A1_B},Q_{2_A2_B}$ and $Q_{222}$ simultaneously is equivalent to asking $Q_{1_A1_B},Q_{2_A2_B}$ and $Q_{2_C}$ because $Q_{2_A2_B}\leftrightarrow Q_{222}=Q_{2_C}$. The analogous conclusion holds for any other tripartite question maximally compatible with $Q_{1_A1_B},Q_{2_A2_B}$. Therefore, once $O$ has asked the last to bipartite correlations about qubits $A,B$, he can {\it only} acquire individual information corresponding to $Q_{1_C},Q_{2_C},Q_{3_C}$ about qubit $C$. Clearly, the same state of affairs is true for any permutation of the three qubits. This is {\it monogamy of entanglement} in its most extreme form: two qubits which are maximally entangled can {\it not} be correlated whatsoever with any other system.

The non-extremal form of monogamy, namely the case that a qubit pair is not maximally entangled and can thus share a bit of entanglement with a third qubit, can also be explained. To this end, we recall from quantum information theory that monogamy is generally described with so-called {\it monogamy inequalities}, e.g., the Coffman-Kundu-Wootters inequality \cite{Coffman:2000fk}
\ba
\tau_{A|BC}\geq \tau_{AB}+\tau_{AC},\label{CKW}
\ea
where $0\leq\tau_{A|BC}=2(1-\Tr\rho_A^2)\leq1$ measures the entanglement between qubit $A$ and the qubit pair $B,C$ and $\rho_A$ is the marginal state of qubit $A$ obtained after tracing qubit $B$ and $C$ out of the (not necessarily pure) tripartite state. $\tau_{AB},\tau_{AC}$, on the other hand, measure the bipartite entanglement of the pairs $A,B$ and $A,C$ and are similarly obtained by tracing either $C$ or $B$ out of the tripartite qubit state. The inequality (\ref{CKW}) formalizes the intuition that the correlation between $A$ and the pair $B,C$ is at least as strong as the correlation of $A$ with $B$ and $C$ individually. This is the general form of monogamy. One usually also defines the so-called {\it three-tangle} \cite{Coffman:2000fk} as the difference between left and right hand side
\ba
\tau_{ABC}=\tau_{A|BC}-\tau_{AB}-\tau_{AC}\in[0,1]\nn
\ea
to measure the genuine tripartite entanglement shared among all three qubits. This three-tangle turns out to be permutation invariant $\tau_{ABC}=\tau_{BCA}=\tau_{CAB}$.

In our current case, a measure of entanglement sharing must be informational. At this stage, in line with our previous definition of entanglement, we simply {\it define} the analogous entanglement measures to be the sum of $O$'s information about the various independent bipartite or tripartite correlation questions
\ba
&&\tilde{\tau}_{A|BC}:=\sum_{i,j,k=1}^3\alpha_{ijk}+\sum_{i,j=1}^3\alpha_{i_Aj_B}+\sum_{i,k=1}^3\alpha_{i_Ak_C}\nn\q\\
&&\tilde{\tau}_{AB}:=\sum_{i,j=1}^3\alpha_{i_Aj_B},\q\q\,\,%\q%\q\q%\q\q\q\q
\tilde{\tau}_{AC}:=\sum_{i,k=1}^3\alpha_{i_Ak_C}.\label{monomeasure}
\ea
With this definition, an informational inequality analogous to the inequality (\ref{CKW}) is trivial
\ba
\tilde{\tau}_{A|BC}\geq\tilde{\tau}_{AB}+\tilde{\tau}_{AC}\nn
\ea
as is the fact that the informational three-tangle
\ba
\tilde{\tau}_{ABC}:=\tilde{\tau}_{A|BC}-\tilde{\tau}_{AB}-\tilde{\tau}_{AC}=\sum_{i,j,k=1}^3\alpha_{ijk}\nn
\ea
is permutation invariant and measures only the information contained in the tripartite questions and thus genuine tripartite entanglement. This is how one can describe the general form of monogamy of entanglement in our language. Of course, at this stage the information measure $\alpha_i$ about the various questions $Q_{i}$ is only implicit, but we shall derive its explicit form in section \ref{sec_infomeasure}, upon which the above statements become truly quantitative. These informational versions of monogamy inequalities and tangles naturally suggest themselves for simple generalizations to arbitrarily many qubits, thereby complementing current efforts in the quantum information literature (e.g., see \cite{Regula:2014uq}).

Before we move on, we briefly emphasize that monogamy is a consequence of complementarity -- as is entanglement. For example, three classical bits also satisfy the limited information rule \ref{lim}, but due to the absence of complementarity, $O$ could ask all correlations $Q_{AB},Q_{AC},Q_{BC}$ and $Q_{ABC}$ at once

 \vspace*{.2cm} \begin{eqnarray}
\psfrag{a}{$A$}
\psfrag{b}{$B$}
\psfrag{c}{$C$}
{\includegraphics[scale=.2]{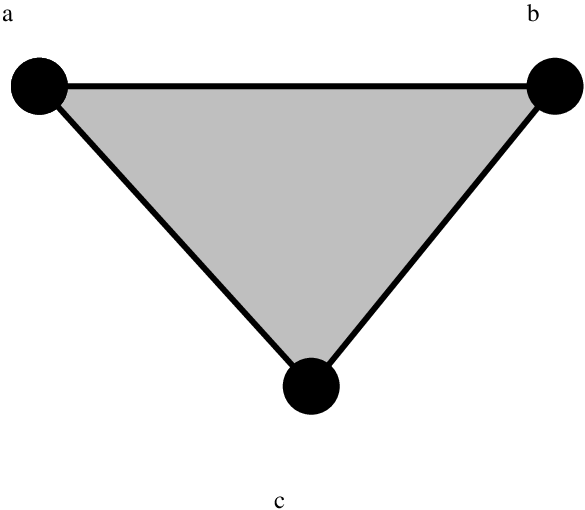}}\nn 
\end{eqnarray}  
which, however, is equivalent to asking the three individuals $Q_A,Q_B,Q_C$.

\subsubsection{Maximal entanglement for three qubits}\label{sec_ghz}

Let us elucidate maximal tripartite entanglement for three qubits, corresponding to $O$ asking three mutually maximally compatible and independent tripartite correlation questions such that he exhausts the information bound of $N=3$ independent \texttt{bits} with tripartite information. Lemma \ref{lem_tri3} asserts that tripartite questions are maximally compatible if and only if they intersect in exactly one vertex. For example, $O$ could ask $Q_{211},Q_{121},Q_{112}$ simultaneously. These are mutually independent because by (\ref{qbit4}, \ref{qbit5}, \ref{qbit7})
\ba
Q_{211}\leftrightarrow Q_{121}&=&Q_{3_A3_B}\q\text{or}\q\neg Q_{3_A3_B}\nn\\%\q\q\q 
Q_{211}\leftrightarrow Q_{112}&=&Q_{3_A3_C}\q\text{or}\q\neg Q_{3_A3_C}\nn\\
 Q_{121}\leftrightarrow Q_{112}&=&Q_{3_B3_C}\q\text{or}\q\neg Q_{3_B3_C},\label{ghz1}
\ea
i.e., their binary connectives with an XNOR do not imply each other and the bipartite correlations are pairwise independent from the tripartite ones. Accordingly, asking $Q_{211},Q_{121},Q_{112}$ will provide $O$ with three independent \texttt{bits} of information about the qubit triple. We note that lemma \ref{lem_tri1} implies that once $O$ has posed the questions $Q_{211},Q_{121},Q_{112}$, every individual question $Q_{i_A},Q_{j_B},Q_{k_C}$ is maximally complementary to at least one of the three tripartite correlations. That is, $O$ cannot acquire {\it any} individual information about the three qubits and will only have composite information. This is a necessary condition for maximal entanglement.

 In addition to the three implied binary correlations (\ref{ghz1}), $O$ would clearly also know the tripartite correlation of all three $Q_{211},Q_{121},Q_{112}$, which amounts to
\ba
Q_{211}\leftrightarrow Q_{121}\leftrightarrow Q_{112}=Q_{222}\q\text{or}\q\neg Q_{222}.\nn
\ea
This exhausts the list of questions about which $O$ would have information by asking $Q_{211},Q_{121},Q_{112}$. The list can be represented by the following question graph: 

 \vspace*{.2cm}
  \begin{eqnarray}
{\includegraphics[scale=.2]{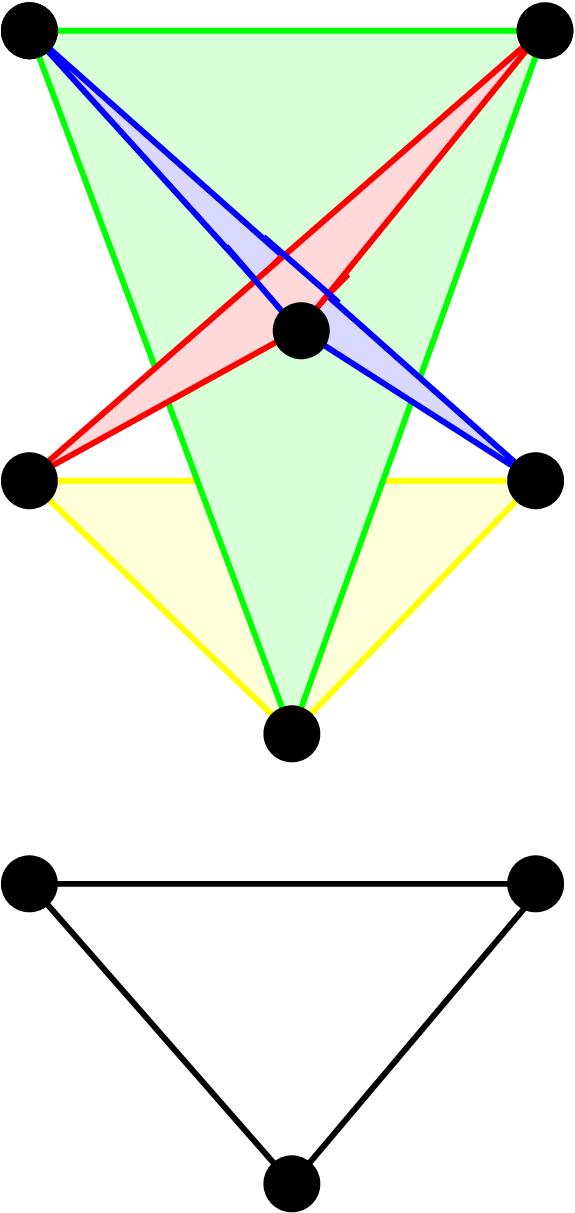}}\label{ghz2}
\end{eqnarray}  
 \vspace*{.2cm}

\noindent This graph will ultimately correspond to the eight possible Greenberger-Horne-Zeilinger (GHZ) states in either of the three question bases $\{Q_{2_A},Q_{1_B},Q_{1_C}\}$, $\{Q_{1_A},Q_{2_B},Q_{1_C}\}$ and $\{Q_{1_A},Q_{1_A},Q_{2_C}\}$, corresponding to the eight answer configurations `yes-yes-yes', `yes-yes-no', `yes-no-no', `yes-no-yes', `no-yes-yes', `no-yes-no', `no-no-yes' and `no-no-no' to the three tripartite correlations $Q_{211},Q_{121},Q_{112}$. Similarly, any other three maximally compatible tripartite correlations will correspond to GHZ states in other question bases.\footnote{As an aside, we note that these propositions solve the little riddle in Zeilinger's festschrift for D.\ Greenberger \cite{zeilinger1999foundational}.}

The correlation information in the graph (\ref{ghz2}) is clearly democratically distributed over the three qubits $A,B,C$. (Intuitively, one could even view the \texttt{bit} stemming from, say, the answer to $Q_{2_A1_B1_C}$ as being carried to one-third by each of $A,B,C$.) Furthermore, if $O$ now wanted to `marginalize' over qubit $C$, i.e.\ discard any information involving qubit $C$, all that would be left of his knowledge about the qubit triple would be the answer to $Q_{3_A3_B}$. But, as argued in section \ref{sec_rebtang}, this resulting `marginal state' of qubits $A,B$ relative to $O$ cannot be considered entangled. The analogous results hold for `marginalization'  over either $A$ or $B$. As a result, the question graph (\ref{ghz2}) corresponds to genuine maximal tripartite entanglement.

\subsubsection{Three rebits}

We shall now repeat the same exercise for three rebits but can benefit from the results of the $N=3$ qubit case. We shall thus be briefer here. The reader exclusively interested in qubits may jump to section \ref{sec_corr}.

According to definition \ref{def_comp}, $O$'s questions to the three rebit system must contain the $6$ individuals $Q_{i_A},Q_{j_B},Q_{k_C}$, $12$ bipartite correlations $Q_{i_Aj_B},Q_{i_Ak_C},Q_{j_Bk_C}$, $i,j,k=1,2$, and $3$ bipartite correlations of correlations $Q_{3_A3_B},Q_{3_A3_C},Q_{3_B3_C}$. To render this set informationally complete, we have to top it up with tripartite questions. There are now $8$ tripartite correlations $Q_{ijk}$, $i,j,k=1,2$, of the kind (\ref{tri1}, \ref{tripartite}) and, moreover, $6$ tripartite correlations of a rebit individual question with the question $Q_{33}$ asking for the correlation of bipartite correlations of the other rebit pair
\ba
Q_{i33}&:=&Q_{i_A}\leftrightarrow Q_{3_B3_C},\nn\\ Q_{3j3}&:=&Q_{j_B}\leftrightarrow Q_{3_A3_C},\nn\\ Q_{33k}&:=&Q_{3_A3_B}\leftrightarrow Q_{k_C},\q\q\q i,j,k=1,2.\nn
\ea
Given that the individuals $Q_{3_A},Q_{3_B},Q_{3_C}$ do not exist for rebits there is now also no tripartite correlation $Q_{333}$.

\subsubsection{Independence and compatibility for three rebits} 
 
The independence, compatibility and complementarity structure for the questions {\it not} involving an index $i,j,k=3$ directly follows from the qubit discussion as lemmas \ref{lem5}--\ref{lem_tri3} also hold in the present case for $i,j,k=1,2$. But we now have to clarify the question structure once two indices are equal to $3$ (an odd number of indices cannot be $3$). The status of any such purely bipartite relations was clarified in section \ref{sec_n2} such that here we have to consider the case that all three rebits are involved.

\begin{lem}\label{lem_re1}
$Q_{3_B3_C}$ is maximally complementary to $Q_{i_Aj_B}$, $i,j=1,2$, and maximally compatible with $Q_{3_A3_B}$. Furthermore,
\ba
Q_{3_A3_B}\leftrightarrow Q_{3_B3_C}= Q_{3_A3_C}.\label{re-close}
\ea
The same holds for any permutations of $A,B,C$.
\end{lem}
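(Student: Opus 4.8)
The plan is to mimic the structure of the proofs of lemmas \ref{lem1}, \ref{lem5} and the relation (\ref{re-close})'s qubit analogue (lemma \ref{lem5}), adapting them to the rebit case where the ``missing'' individuals $Q_{3_A},Q_{3_B},Q_{3_C}$ force us to work with the correlation-of-correlations questions directly. Recall from (\ref{q33re}) and theorem \ref{thm_rebit} that, for any rebit pair, $Q_{3_A3_B}$ is by definition the correlation $Q_{1_A2_B}\leftrightarrow Q_{2_A1_B}$ (equivalently, up to the negation settled in section \ref{sec_bell}, $Q_{1_A1_B}\leftrightarrow Q_{2_A2_B}$), and that by lemma \ref{lem4} it is complementary to and independent of all four individuals of that pair. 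The three claims to establish are: (i) $Q_{3_B3_C}$ is complementary to $Q_{i_Aj_B}$ for $i,j=1,2$; (ii) $Q_{3_B3_C}$ is compatible with $Q_{3_A3_B}$; and (iii) the closure identity (\ref{re-close}).

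First I would prove the closure identity (\ref{re-close}), since it is purely formal and makes the compatibility claim (ii) almost immediate. Using the associativity and symmetry of $\leftrightarrow$ together with the defining expressions of the correlation-of-correlations questions, I write $Q_{3_A3_B}\leftrightarrow Q_{3_B3_C}$ in terms of bipartite correlations that all pairwise share the common vertex structure through rebit $B$, collapse the repeated $Q_{j_B}$-type subexpressions via $Q\leftrightarrow Q=1$, and read off $Q_{3_A3_C}$. Concretely, since $Q_{3_A3_B}=Q_{1_A1_B}\leftrightarrow Q_{2_A2_B}$ and $Q_{3_B3_C}=Q_{1_B1_C}\leftrightarrow Q_{2_B2_C}$ (modulo the global negations fixed in theorems \ref{thm_qubit}, \ref{thm_rebit}), one regroups the four bipartite correlations so that the $B$-indexed pieces cancel in pairs and the surviving $A$- and $C$-indexed pieces assemble into $Q_{1_A1_C}\leftrightarrow Q_{2_A2_C}=Q_{3_A3_C}$. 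The only subtlety is bookkeeping of the relative negations, which is handled exactly as in the two-gbit analysis and contributes at most an overall sign that is absorbed into the definition of $Q_{3_A3_C}$; I would phrase the identity up to that fixed convention, just as (\ref{re-close}) is stated.

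Next, compatibility (ii): once (\ref{re-close}) holds, $Q_{3_A3_B}$ and $Q_{3_B3_C}$ have a well-defined correlation $Q_{3_A3_C}$, and the standard argument (as below (\ref{correlation2}) and in lemma \ref{lem5}) shows a closed triple under $\leftrightarrow$ forces pairwise compatibility and pairwise independence; alternatively I can argue directly that $O$ may ask $Q_{3_A3_B}$, then $Q_{3_B3_C}$, and by Specker's principle (theorem \ref{assump5}) together with the closure they are mutually compatible. For the complementarity claim (i), I would copy the proof of lemma \ref{lem1} / lemma \ref{lem5}: suppose $Q_{3_B3_C}$ and $Q_{i_Aj_B}$ were at least partially compatible, so there is a state with $\alpha_{3_B3_C}=1$ \texttt{bit} and $\alpha_{i_Aj_B}>0$ \texttt{bit}; both are compatible with and independent of $Q_{j_B}$ (the first by lemma \ref{lem4} applied to the $B,C$ pair, which says $Q_{3_B3_C}$ is complementary to all individuals of rebits $B$ and $C$ — wait, that is complementary, not compatible). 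The main obstacle is precisely this: unlike the qubit case where $Q_{j_Bk_C}$ is compatible with $Q_{j_B}$, here $Q_{3_B3_C}$ is complementary to every individual of $B$, so I cannot use an individual of $B$ as the auxiliary question. Instead I would use $Q_{i_Ak_C}$ (for the appropriate $k$) as the auxiliary: it is compatible with and independent of both $Q_{i_Aj_B}$ (by lemma \ref{lem3}, distinct pairs / non-intersecting edges — again need care, $Q_{i_Aj_B}$ and $Q_{i_Ak_C}$ share vertex $Q_{i_A}$, so by lemma \ref{lem5} they are compatible) and $Q_{3_B3_C}$; then asking it and invoking assumption \ref{assump5b} leaves $O$'s information about both unchanged, but maximal information about $Q_{3_B3_C}$ and $Q_{i_Ak_C}$ together should, via (\ref{re-close})-type closure, imply maximal information about $Q_{i_Aj_B}$-complementary data, producing the contradiction. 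Pinning down exactly which auxiliary question threads the needle between all the complementarity relations of $Q_{3_B3_C}$ is the delicate part; I expect the correct choice to be a bipartite correlation $Q_{l_Am_C}$ whose edge is disjoint from the edge $Q_{i_Aj_B}$, for which lemmas \ref{lem3}--\ref{lem5} guarantee the needed compatibility with $Q_{3_B3_C}$, and then the argument closes as in lemma \ref{lem1}. The remaining permutations of $A,B,C$ follow by relabeling.
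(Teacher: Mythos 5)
Your proof of the complementarity claim would not close as proposed. The auxiliary question you settle on, a bipartite correlation $Q_{l_Am_C}$ with edge disjoint from $Q_{i_Aj_B}$, is in fact \emph{complementary} to $Q_{3_B3_C}$, not compatible with it (it shares a $C$-vertex with exactly one of the constituents $Q_{1_B2_C},Q_{2_B1_C}$; a lemma-\ref{lem1}-style argument using $Q_{l_A}$ and lemma \ref{lem4} shows the complementarity, and it is also what the index criterion of lemma \ref{lem_re20} encodes), so assumption \ref{assump5b} cannot be invoked and the "delicate part" you flagged is precisely where the route breaks. The auxiliary that works is the one you did not consider: the individual $Q_{i_A}$ of rebit $A$. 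It is compatible with and independent of both $Q_{i_Aj_B}$ (lemma \ref{lem1}) and $Q_{3_B3_C}$ (a question purely about the pair $B,C$); assuming partial compatibility, i.e.\ a state with $\alpha_{i_Aj_B}=1$ \texttt{bit} and $\alpha_{3_B3_C}>0$ \texttt{bit}, asking $Q_{i_A}$ changes neither, but maximal knowledge of $Q_{i_A},Q_{i_Aj_B}$ implies maximal knowledge of $Q_{j_B}$, which by lemma \ref{lem4} is complementary to $Q_{3_B3_C}$ --- contradiction.

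The identity (\ref{re-close}) and the compatibility claim also have gaps. Your derivation regroups $(Q_{1_A1_B}\leftrightarrow Q_{2_A2_B})\leftrightarrow(Q_{1_B1_C}\leftrightarrow Q_{2_B2_C})$ by "associativity and symmetry", but that is exactly the classical inference forbidden by assumption \ref{assump4b}, since the four bipartite correlations contain complementary pairs (e.g.\ $Q_{1_A1_B}$ and $Q_{2_B2_C}$); the only legal regrouping is the non-classical identity (\ref{qbit8}), which brings in a relative negation your bookkeeping misses. Moreover your starting point is off by a negation: by (\ref{q33re}), theorem \ref{thm_rebit} and (\ref{qbit7}), $Q_{3_A3_B}=Q_{1_A2_B}\leftrightarrow Q_{2_A1_B}=\neg(Q_{1_A1_B}\leftrightarrow Q_{2_A2_B})$, and since $Q_{3_A3_C}$ is already fixed by (\ref{q33re}) there is no freedom to "absorb an overall sign into its definition" --- the negations must be shown to cancel, and they do (the two definitional negations cancel under $\leftrightarrow$, and the single negation from (\ref{qbit8}) is compensated by $\neg(Q_{1_A1_C}\leftrightarrow Q_{2_A2_C})=Q_{3_A3_C}$), but verifying this is the actual content of the proof. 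The paper does all of this in one stroke using the decompositions $Q_{3_A3_B}=Q_{1_A2_B}\leftrightarrow Q_{2_A1_B}$ and $Q_{3_B3_C}=Q_{1_B2_C}\leftrightarrow Q_{2_B1_C}$ together with the compatible cross-pairs $Q_{1_A2_B},Q_{2_B1_C}$ and $Q_{2_A1_B},Q_{1_B2_C}$ (lemma \ref{lem5}), the reasoning around (\ref{qbit8}) then yielding \emph{both} the compatibility of $Q_{3_A3_B},Q_{3_B3_C}$ and the exact identity. Your compatibility argument, by contrast, is circular: forming the question $Q_{3_A3_B}\leftrightarrow Q_{3_B3_C}$ already presupposes compatibility by assumption \ref{assump4}, and Specker's principle (theorem \ref{assump5}) takes pairwise compatibility as hypothesis, so neither can be used to establish it.
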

\begin{proof}
One proves complementarity of $Q_{i_Aj_B}$ and $Q_{3_B3_C}=Q_{1_B2_C}\leftrightarrow Q_{2_B1_C}$ by employing that both are maximally compatible with and independent of $Q_{i_A}$ and lemma \ref{lem0}. This also requires invoking that $Q_{j_B}$ and $Q_{3_B3_C}$ are maximally complementary by lemma \ref{lem4}. 

Compatibility of $Q_{3_A3_B}$ and $Q_{3_B3_C}$ follows indirectly. Namely, $Q_{1_A2_B},Q_{2_B1_C}$ and $Q_{2_A1_B},Q_{1_B2_C}$ are two maximally compatible pairs by lemma \ref{lem5}. We can then apply the reasoning around (\ref{qbit8}) to find
\begin{widetext}
\ba
Q_{3_A3_B}\leftrightarrow Q_{3_B3_C}&=&\q(Q_{1_A2_B}\leftrightarrow Q_{2_A1_B})\leftrightarrow (Q_{1_B2_C}\leftrightarrow Q_{2_B1_C})\nn\\
&=&\neg(\underset{=\,Q_{1_A1_C}}{\underbrace{(Q_{1_A2_B}\leftrightarrow Q_{2_B1_C})}}\leftrightarrow\underset{=\,Q_{2_A2_C}}{\underbrace{(Q_{1_B2_C}\leftrightarrow Q_{2_A1_B})}})\nn\\
&=&\neg(Q_{1_A1_C}\leftrightarrow Q_{2_A2_C})=Q_{3_A3_C}.\nn
\ea
\end{widetext}
The last equality holds thanks to theorem \ref{thm_rebit}. 

The same reasoning applies to any permutation of $A,B,C$.
\end{proof}

Next, we discuss the tripartite correlations of individuals with the questions $Q_{33}$ asking for the correlation of bipartite correlations of rebit pairs. 

\begin{lem}
$Q_{i33}$ is maximally compatible with $Q_{i_A}$ and maximally complementary to $Q_{j_B},Q_{k_C}$. The same holds for any permutation of $A,B,C$.
\end{lem}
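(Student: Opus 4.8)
The plan is to follow the template of the earlier lemmas on tripartite structure (lemmas \ref{lem_tri1}--\ref{lem_tri3} and lemma \ref{lem_re1}), exploiting the now-established identity $Q_{i33}=Q_{i_A}\leftrightarrow Q_{3_B3_C}$ together with assumption \ref{assump5b} and principle \ref{lim}. The compatibility of $Q_{i33}$ with $Q_{i_A}$ is immediate by construction: since $Q_{i33}$ is defined as $Q_{i_A}\leftrightarrow Q_{3_B3_C}$ and the XNOR of two compatible questions is compatible with both (as already used throughout section \ref{sec_n2}), $Q_{i_A}$ and $Q_{i33}$ can be known simultaneously. The content of the lemma is therefore the complementarity of $Q_{i33}$ with the individual questions $Q_{j_B}$ and $Q_{k_C}$ of the \emph{other} two rebits.

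For this, I would argue exactly as in the proof of lemma \ref{lem_tri1}, going through an auxiliary question that is compatible with and independent of both questions in question. Take $Q_{i33}$ and $Q_{j_B}$. The natural auxiliary is $Q_{3_A3_C}$: by lemma \ref{lem_re1} applied to the appropriate permutation, $Q_{3_A3_C}$ is compatible with $Q_{3_B3_C}$ and (one checks) with $Q_{i_A}$, hence with $Q_{i33}=Q_{i_A}\leftrightarrow Q_{3_B3_C}$; and $Q_{3_A3_C}$ is independent of $Q_{j_B}$ and of $Q_{i33}$ by arguments analogous to those in the proof of lemma \ref{lem2} (the key input being lemma \ref{lem4}, that $Q_{3_A3_C}$ is complementary to, hence independent of, the individuals of rebits $A$ and $C$, combined with lemma \ref{lem_re1}). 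Now suppose for contradiction that $Q_{i33}$ and $Q_{j_B}$ are at least partially compatible, so there is a state of $S$ with $\alpha_{i33}=1$ \texttt{bit} and $\alpha_{j_B}>0$ \texttt{bit}. Asking $Q_{3_A3_C}$ in this state leaves $O$'s information about both $Q_{i33}$ and $Q_{j_B}$ unchanged by assumption \ref{assump5b}. But maximal information about $Q_{i33}$ and about $Q_{3_A3_C}$ implies, via $Q_{i33}\leftrightarrow Q_{3_A3_C}$ and the associativity of $\leftrightarrow$ together with $Q_{3_A3_C}\leftrightarrow Q_{3_B3_C}=Q_{3_A3_B}$ (lemma \ref{lem_re1}) and $Q_{i_A}\leftrightarrow Q_{3_A3_B}=Q_{i_A\,3_B}$ (wait: $Q_{i_A}\leftrightarrow Q_{3_A3_B}=Q_{3_A}\leftrightarrow Q_{3_B}$-type reasoning), maximal information about a question of rebit $B$ that is complementary to $Q_{j_B}$ --- a contradiction. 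More directly: $\alpha_{i33}=\alpha_{3_A3_C}=1$ \texttt{bit} forces, after asking $Q_{3_A3_C}$, also maximal knowledge of $Q_{i_A}\leftrightarrow Q_{3_B3_C}\leftrightarrow Q_{3_A3_C}=Q_{i_A}\leftrightarrow Q_{3_A3_B}$, and continuing to chain in the individual $Q_{i_A}$ (which is compatible with $Q_{i33}$) yields maximal information about $Q_{3_A3_B}$ and hence, by the complementarity of $Q_{3_A3_B}$ with the individuals of rebit $B$ (lemma \ref{lem4}), the desired contradiction with $\alpha_{j_B}>0$. The case of $Q_{k_C}$ is symmetric, using $Q_{3_A3_B}$ as auxiliary instead, and the statement for all permutations of $A,B,C$ follows by relabelling.

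The main obstacle I anticipate is bookkeeping rather than conceptual: one must choose the auxiliary correlation question so that it is simultaneously (i) compatible with $Q_{i33}$, (ii) compatible with and independent of the targeted individual, and (iii) such that chaining XNORs with it produces a question manifestly complementary to that individual. The subtlety is that $Q_{i33}$ sits "between" the bipartite and tripartite levels --- it is a correlation of an individual of $A$ with a correlation-of-correlations of the pair $(B,C)$ --- so the closure identities from lemma \ref{lem_re1} (especially $Q_{3_A3_B}\leftrightarrow Q_{3_B3_C}=Q_{3_A3_C}$ and theorem \ref{thm_rebit}) must be invoked in the right order. Once the correct auxiliary is fixed, every step is a verbatim instance of the assumption \ref{assump5b}-plus-principle \ref{lim} argument used repeatedly above, so I would present it compactly and refer back to the proof of lemma \ref{lem_tri1} for the boilerplate.
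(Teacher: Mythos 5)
Your overall template (compatibility by construction, then an auxiliary question plus assumption \ref{assump5b} and lemma \ref{lem4} to force complementarity with the individuals of $B$ and $C$) is exactly the paper's strategy, but your concrete choice of auxiliary breaks the argument. You pick $Q_{3_A3_C}$ and claim it is compatible with $Q_{i_A}$ and hence with $Q_{i33}=Q_{i_A}\leftrightarrow Q_{3_B3_C}$. That premise is false: by lemma \ref{lem4} (applied to the pair $A,C$), $Q_{3_A3_C}$ is \emph{complementary} to the individuals $Q_{1_A},Q_{2_A}$ — the very fact you invoke a few lines later — and indeed $Q_{i33}$ and $Q_{3_A3_C}$ are themselves complementary (this is part of lemma \ref{lem_re3}; in the eventual representation they correspond to $\sigma_i\otimes\sigma_y\otimes\sigma_y$ and $\sigma_y\otimes\mathbb{1}\otimes\sigma_y$, which anticommute). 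Consequently assumption \ref{assump5b} cannot be applied to asking $Q_{3_A3_C}$ in a state with $\alpha_{i33}=1$ \texttt{bit}, and your ``more direct'' chain starting from simultaneous maximal knowledge of $Q_{i33}$ and $Q_{3_A3_C}$ starts from an impossible state, so the contradiction you derive is vacuous.

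The repair is simpler than your bookkeeping worry suggests, and it is what the paper does: take the auxiliary to be $Q_{i_A}$ itself. Both $Q_{i33}$ and $Q_{j_B}$ are compatible with and independent of $Q_{i_A}$, so if there were a state with $\alpha_{i33}=1$ \texttt{bit} and $\alpha_{j_B}>0$ \texttt{bit}, asking $Q_{i_A}$ would leave both unchanged by assumption \ref{assump5b}; but maximal information about $Q_{i_A}$ and $Q_{i33}=Q_{i_A}\leftrightarrow Q_{3_B3_C}$ implies maximal information about $Q_{3_B3_C}$, which by lemma \ref{lem4} is complementary to $Q_{j_B}$ — contradiction. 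The case of $Q_{k_C}$ and all permutations follow by relabelling, with no need for the closure identity $Q_{3_A3_B}\leftrightarrow Q_{3_B3_C}=Q_{3_A3_C}$ at all.
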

 
 \begin{proof}
 Compatibility of $Q_{i33}$ with $Q_{i_A}$ is true by construction. Complementarity of $Q_{j_B}$ and $Q_{i33}=Q_{i_A}\leftrightarrow (Q_{1_B2_C}\leftrightarrow Q_{2_B1_C})$ follows from the observation that $Q_{i33}$ and $Q_{j_B}$ are both maximally compatible with and independent of $Q_{i_A}$ and a similar reasoning to the previous proof.
 \end{proof}
 
 \begin{lem}\label{lem_re3}
 $Q_{i33}$ is maximally compatible with $Q_{3_B3_C}$, $Q_{j_Bk_C}$, $j,k=1,2$, and $Q_{l_Ak_C},Q_{l_Aj_B}$ for $i\neq l$. On the other hand, $Q_{i33}$ is maximally complementary to $Q_{i_Aj_B},Q_{i_Ak_C}$ and $Q_{3_A3_B},Q_{3_A3_C}$. Furthermore, $Q_{ijk}$ is maximally compatible with $Q_{3_A3_B}$. The same holds for any permutation of $A,B,C$.
 \end{lem}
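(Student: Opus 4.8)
The plan is to prove Lemma~\ref{lem_re3} by the same strategy used throughout section~\ref{sec_qstructure}: reduce each statement about tripartite questions to a statement about bipartite ones via the closure identities for $\leftrightarrow$, and then invoke assumption~\ref{assump5b} (together with principles~\ref{lim} and~\ref{unlim}) whenever a compatibility or complementarity claim does not follow immediately from construction. Recall that $Q_{i33}:=Q_{i_A}\leftrightarrow Q_{3_B3_C}$ where $Q_{3_B3_C}=Q_{1_B2_C}\leftrightarrow Q_{2_B1_C}$.

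\emph{Compatibility claims.} First, $Q_{i33}$ and $Q_{3_B3_C}$ are compatible by construction, since $Q_{i33}=Q_{i_A}\leftrightarrow Q_{3_B3_C}$ is formed out of two compatible questions ($Q_{i_A}$ is compatible with everything on rebits $B,C$ by definition~\ref{def_comp}, and it is compatible with $Q_{3_B3_C}$ since the latter lives purely on $BC$). Next, $Q_{i33}$ and $Q_{j_Bk_C}$, $j,k=1,2$: by lemma~\ref{lem4}, $Q_{3_B3_C}$ is complementary to $Q_{j_B},Q_{k_C}$ individually but one checks (as in the proof of lemma~\ref{lem3}, case of non-intersecting edges) that $Q_{3_B3_C}$ is compatible with the bipartite correlation $Q_{j_Bk_C}$ on the same pair; combined with $Q_{i_A}$ being compatible with $Q_{j_Bk_C}$, and invoking Specker's principle (theorem~\ref{assump5}) on the pairwise-compatible triple, we get $Q_{i33}$ compatible with $Q_{j_Bk_C}$. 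For $Q_{l_Ak_C}$ and $Q_{l_Aj_B}$ with $i\neq l$: here $Q_{l_A}$ is complementary to $Q_{i_A}$, so one argues as in lemma~\ref{lem_tri2} — $Q_{i33}$ and $Q_{l_Ak_C}$ are both compatible with and independent of $Q_{k_C}$, and since $Q_{i33}\leftrightarrow Q_{k_C}=Q_{i_A}\leftrightarrow Q_{3_B3_C}\leftrightarrow Q_{k_C}$ is compatible with $Q_{l_Ak_C}$ (the $A$-parts $Q_{i_A},Q_{l_A}$ being complementary doesn't obstruct because the relevant derived questions are the ones one actually asks), a net-information-conservation argument (principle~\ref{unlim}) forces compatibility. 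Finally $Q_{ijk}$ and $Q_{3_A3_B}$: write $Q_{ijk}=Q_{k_C}\leftrightarrow Q_{i_Aj_B}$ and use that $Q_{3_A3_B}$ is compatible with $Q_{i_Aj_B}$ (both bipartite on $AB$, non-intersecting-edge type by theorem~\ref{thm_rebit}/lemma~\ref{lem4}) and with $Q_{k_C}$ (disjoint support), then apply Specker's principle.

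\emph{Complementarity claims.} For $Q_{i33}$ versus $Q_{i_Aj_B}$: both are compatible with and independent of $Q_{i_A}$. Suppose $Q_{i33}$ and $Q_{i_Aj_B}$ were at least partially compatible, so there is a state with $\alpha_{i33}=1$~\texttt{bit} and $\alpha_{i_Aj_B}>0$~\texttt{bit}. Ask $Q_{i_A}$; by assumption~\ref{assump5b} this changes neither. Then $O$ has maximal information about $Q_{i33}$ and $Q_{i_A}$, hence (by $Q_{i33}\leftrightarrow Q_{i_A}=Q_{3_B3_C}$) about $Q_{3_B3_C}$, while retaining partial information about $Q_{i_Aj_B}$. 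But asking $Q_{i_A}$ also gives maximal information about $Q_{i_A}$, and from $Q_{i_A}$ together with $Q_{i_Aj_B}$ one would get (partial) information about $Q_{j_B}$, which is complementary to $Q_{3_B3_C}$ by lemma~\ref{lem4} — contradiction. Hence $Q_{i33}$ and $Q_{i_Aj_B}$ are complementary; $Q_{i_Ak_C}$ is handled symmetrically. For $Q_{i33}$ versus $Q_{3_A3_B}$ (and $Q_{3_A3_C}$): both are compatible with and independent of $Q_{3_B3_C}$ (using lemma~\ref{lem_re1} for $Q_{3_A3_B}$ and construction for $Q_{i33}$), so by the same maneuver, maximal information about $Q_{i33}$ and $Q_{3_B3_C}$ gives maximal information about $Q_{i_A}$; but maximal information about $Q_{3_A3_B}$ and $Q_{3_B3_C}$ gives, via lemma~\ref{lem_re1}, maximal information about $Q_{3_A3_C}$, and one shows $Q_{i_A}$ is complementary to $Q_{3_A3_C}$ (lemma~\ref{lem4} applied to rebit pair $AC$) — contradiction.

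\emph{Main obstacle.} The delicate point, as everywhere in this section, is justifying the compatibility claims $Q_{i33}$ versus $Q_{l_Ak_C},Q_{l_Aj_B}$ for $i\neq l$: unlike the complementarity arguments (which are uniform applications of assumption~\ref{assump5b} plus a closure identity), the compatibility arguments must thread through principle~\ref{unlim} (no net information loss) in a setting where the naive ``shared vertex'' intuition from the qubit graphs does not literally apply because $Q_{3_B3_C}$ has no $3$-labelled individual backing it. One must be careful that the auxiliary question $O$ asks (e.g.\ $Q_{k_C}$) is genuinely compatible with and pairwise independent of \emph{both} $Q_{i33}$ and the target bipartite question, which requires the already-established lemmas \ref{lem4}, \ref{lem_re1} and lemma~\ref{lem5} (valid for $i,j,k=1,2$); verifying these independence relations is where the bookkeeping is heaviest, but it is routine given those prior results.
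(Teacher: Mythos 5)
Your complementarity arguments, and the ``easy'' compatibility claims ($Q_{i33}$ with $Q_{3_B3_C}$ and $Q_{j_Bk_C}$, and $Q_{ijk}$ with $Q_{3_A3_B}$ via mutual compatibility of the constituents), are sound and essentially what the paper does. But the step you yourself single out as the delicate one -- compatibility of $Q_{i33}$ with $Q_{l_Ak_C},Q_{l_Aj_B}$ for $i\neq l$ -- is exactly where your argument fails. You assert that ``$Q_{i33}$ and $Q_{l_Ak_C}$ are both compatible with and independent of $Q_{k_C}$'', but by the lemma immediately preceding this one (and ultimately by lemma \ref{lem4}, since $Q_{3_B3_C}$ is complementary to every individual of the pair $B,C$), $Q_{i33}$ is \emph{complementary} to $Q_{k_C}$. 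Hence $Q_{k_C}$ cannot serve as the auxiliary question in an assumption-\ref{assump5b}-type argument, and the expression $Q_{i33}\leftrightarrow Q_{k_C}$ that you then write down is not even a permitted composition under assumption \ref{assump4}, precisely because the two questions are complementary. The closing appeal to ``a net-information-conservation argument forces compatibility'' is not a derivation: no interrogation sequence is exhibited from which principle \ref{unlim} yields the claim.

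The paper closes this gap indirectly. Observe that $Q_{i_A},Q_{l_A}$ and $Q_{3_B3_C},Q_{k_C}$ are two complementary pairs such that each member of one pair is compatible with both members of the other; this is exactly the configuration of the reasoning around (\ref{qbit8}). Concretely, one runs the lemma-\ref{lem3}-type argument: let $O$ ask the compatible, independent pair $Q_{i_A},Q_{3_B3_C}$, so that he also knows $Q_{i33}=Q_{i_A}\leftrightarrow Q_{3_B3_C}$; then let him ask $Q_{l_Ak_C}=Q_{l_A}\leftrightarrow Q_{k_C}$, which is complementary to $Q_{i_A}$ (lemma \ref{lem1}, $i\neq l$) and to $Q_{3_B3_C}$ (lemma \ref{lem_re1}, permuted); by principle \ref{unlim} there is no net loss of information, so the derived \texttt{bit} about $Q_{i33}$ must survive, i.e.\ $Q_{i33}$ and $Q_{l_Ak_C}$ are compatible. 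The case $Q_{l_Aj_B}$ is analogous. With this replacement for your third compatibility step, the remainder of your proposal goes through and coincides in substance with the paper's proof.
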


\begin{proof}
Compatibility of $Q_{i33}$ with $Q_{j_Bk_C}$ and $Q_{3_B3_C}$, as well as compatibility of $Q_{ijk}$ with $Q_{3_A3_B}$ is obvious. Compatibility of $Q_{i33}$ with $Q_{l_Ak_C}$ for $i\neq l$ follows indirectly by noting that $Q_{i_A},Q_{l_A}$ and $Q_{3_B3_C},Q_{k_C}$ are two pairs of maximally complementary questions, but that the questions in one pair are maximally compatible with both questions of the other. In this case the reasoning of (\ref{qbit8}) applies and entails the correlation of $Q_{i_A}$ with $Q_{3_B3_C}$ must be maximally compatible with the correlation of $Q_{l_A}$ with $Q_{k_C}$. Compatibility of $Q_{i33}$ with $Q_{l_Aj_B}$ follows similarly.
 
Complementarity of $Q_{i33}$ and $Q_{i_Aj_B}$ follows from the fact that both are maximally compatible with and independent of $Q_{i_A}$ and using arguments as in previous proofs. Likewise, $Q_{i33}$ and $Q_{3_A3_B}$ follows similarly by noting that both are maximally compatible with $Q_{3_B3_C}$. \end{proof}

\begin{lem}
Any of $Q_{i33},Q_{3j3},Q_{33k}$ is independent from any bipartite correlation question $Q_{i_Aj_B},Q_{i_Ak_C}$, $Q_{j_Bk_C}$ and any bipartite correlation of correlations question$Q_{3_A3_B},Q_{3_A3_C},Q_{3_B3_C}$. $Q_{ijk}$ is also pairwise independent from the latter. Furthermore, the $Q_{ijk}$ and $Q_{i33},Q_{3j3},Q_{33k}$ are pairwise independent.
\end{lem}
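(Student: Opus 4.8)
The plan is to recycle the two techniques already used throughout Section~\ref{sec_qstructure}. First, the \emph{closure technique} of Section~\ref{sec_connective}: whenever three questions $Q,Q',Q''$ form a triple closed under $\leftrightarrow$ (i.e.\ $Q''=Q\leftrightarrow Q'$, hence also $Q=Q''\leftrightarrow Q'$ and $Q'=Q\leftrightarrow Q''$ by associativity) and $Q,Q'$ are independent, then $Q''$ is pairwise independent of both $Q$ and $Q'$, exactly as argued below~(\ref{correlation2}); for if $Q$ alone partially fixed $Q''$ relative to the state of no information then, together with the fixed $Q''$, it would partially fix $Q'$, contradicting independence of $Q,Q'$. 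Second, the \emph{witness technique} of Lemmas~\ref{lem2}, \ref{lem8}, \ref{lem9}: to show $Q,Q'$ are independent it suffices to exhibit $Q''$ compatible with and independent of $Q$ but complementary to $Q'$; then $O$ can never know $Q'$ while knowing $Q$ and $Q''$, so $y'$ is not forced away from $\tfrac12$ by the answer to $Q$, and independence follows by symmetry. Before applying these, I would discard all pairs for which Lemmas~\ref{lem_re1} and~\ref{lem_re3} already assert \emph{complementarity} (e.g.\ $Q_{i33}$ with $Q_{i_Aj_B},Q_{i_Ak_C},Q_{3_A3_B},Q_{3_A3_C}$, and $Q_{3_B3_C}$ with $Q_{i_Aj_B}$), since complementary questions are independent by definition; only the genuinely compatible pairs remain to be treated.

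\textbf{Carrying it out.} For $Q_{ijk}$ against the correlations of correlations $Q_{3_A3_B},Q_{3_A3_C},Q_{3_B3_C}$: by Lemma~\ref{lem_tri1} $Q_{k_C}$ is compatible with and independent of $Q_{ijk}$, while by Lemma~\ref{lem4} $Q_{k_C}$ is complementary to $Q_{3_A3_B}$, so the witness technique yields independence of $Q_{ijk}$ and $Q_{3_A3_B}$; $Q_{3_A3_C},Q_{3_B3_C}$ follow by permuting $A,B,C$. For $Q_{i33}$ against $Q_{3_B3_C}$ and, more generally, for $Q_{ijk}$ against $Q_{i33}$, I would use the closure technique: $Q_{i_A}$, $Q_{3_B3_C}$, $Q_{i33}=Q_{i_A}\leftrightarrow Q_{3_B3_C}$ is a $\leftrightarrow$-closed triple whose first two members live on disjoint subsystems and are therefore independent, so $Q_{i33}$ is independent of $Q_{3_B3_C}$; likewise $Q_{ijk}\leftrightarrow Q_{i33}=Q_{j_Bk_C}\leftrightarrow Q_{3_B3_C}$, which by Theorem~\ref{thm_rebit} applied inside the $B,C$ pair is again a bipartite correlation up to negation, so $Q_{ijk},Q_{i33},(\neg)Q_{j_Bk_C}$ is $\leftrightarrow$-closed and the claim follows from independence of $Q_{ijk}$ and $Q_{j_Bk_C}$ established in Lemma~\ref{lem8}. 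The remaining compatible cases — $Q_{i33}$ with $Q_{j_Bk_C}$ and with $Q_{l_Aj_B},Q_{l_Ak_C}$ for $l\neq i$, and $Q_{i33}$ with $Q_{3j3},Q_{33k}$ — are closed by exhibiting the obvious witnesses from Lemmas~\ref{lem3}, \ref{lem4}, \ref{lem5} and~\ref{lem_re1}; e.g.\ for $Q_{i33}$ versus $Q_{j_Bk_C}$ take the complementary partner $Q_{m_Bn_C}$ ($m\neq j$, $n\neq k$), which by Theorem~\ref{thm_rebit} is compatible with $Q_{3_B3_C}$ and hence with $Q_{i33}$. Permutation symmetry of $A,B,C$ then cuts the total number of cases down to a handful.

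\textbf{Main obstacle.} The delicate point is not any single case but making the closure technique rigorous for questions built from correlations of correlations: unlike the qubit $Q_{ijk}$, the rebit questions $Q_{i33},Q_{3j3},Q_{33k}$ and $Q_{3_A3_B}$ etc.\ do not decompose into individuals, so one must separately justify independence of cross-subsystem pairs such as $Q_{i_A}$ and $Q_{3_B3_C}$. This requires invoking the composite-system structure of Definition~\ref{def_comp} (any $Q_A\in\cq_A$ is compatible with any $Q_B\in\cq_B$, and questions about disjoint subsystems are independent) together with Lemma~\ref{lem4} and the identity~(\ref{re-close}). One also has to track the negations produced by Theorem~\ref{thm_rebit} and~(\ref{qbit8}) when rewriting $\leftrightarrow$-combinations, and, as always, use the standing assumption from Section~\ref{sec_elstruc} that (in)dependence relative to the state of no information transfers to all states, so that the conclusions are statements about $\Sigma$ and not merely about one distinguished state.
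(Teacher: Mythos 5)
Your overall strategy is the one the paper itself uses: its proof of this lemma is a one-liner declaring it ``completely analogous to the proofs of lemmas \ref{lem2}, \ref{lem8} and \ref{lem9}'', i.e.\ exactly the witness technique you describe, supplemented by the closure argument below (\ref{correlation2}) that you also invoke. However, two of your concrete witness choices fail as written. First, you claim that, by lemma \ref{lem4}, $Q_{k_C}$ is complementary to $Q_{3_A3_B}$; it is not. Lemma \ref{lem4} asserts complementarity of the correlation of correlations only to the individuals of the two rebits it involves, here $Q_{1_A},Q_{2_A},Q_{1_B},Q_{2_B}$, whereas $Q_{k_C}$ lives on the disjoint subsystem $C$ and is therefore \emph{compatible} with $Q_{3_A3_B}$ (definition \ref{def_comp}; indeed $Q_{33k}=Q_{3_A3_B}\leftrightarrow Q_{k_C}$ presupposes this). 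The intended witness must instead be $Q_{i_A}$ (or $Q_{j_B}$), which is compatible with $Q_{ijk}$ by lemma \ref{lem_tri1} and complementary to $Q_{3_A3_B}$ by lemma \ref{lem4}; with that replacement the step goes through. Second, in the case $Q_{i33}$ versus $Q_{j_Bk_C}$ you call $Q_{m_Bn_C}$ with $m\neq j$, $n\neq k$ the ``complementary partner'' of $Q_{j_Bk_C}$, but by lemma \ref{lem3} two bipartite correlations of the same pair differing in \emph{both} indices are compatible, not complementary, and a question compatible with both members of the pair cannot serve as a witness at all. A witness that does work is, e.g., $Q_{i_Aj_B}$, which is complementary to $Q_{i33}$ by lemma \ref{lem_re3} and compatible with $Q_{j_Bk_C}$ by lemma \ref{lem5}; alternatively your closure route applies, since $Q_{i33}\leftrightarrow Q_{j_Bk_C}=Q_{i_A}\leftrightarrow(Q_{3_B3_C}\leftrightarrow Q_{j_Bk_C})$ equals, up to negation, $Q_{imn}$ with $m\neq j$, $n\neq k$, whose independence from $Q_{j_Bk_C}$ is already secured by lemma \ref{lem8}.

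A further, minor slip: $Q_{ijk}\leftrightarrow Q_{i33}=Q_{j_Bk_C}\leftrightarrow Q_{3_B3_C}$ is, up to negation, $Q_{m_Bn_C}$ with $m\neq j$, $n\neq k$, not $Q_{j_Bk_C}$ itself; this is harmless because lemma \ref{lem8} gives independence of $Q_{ijk}$ from \emph{every} bipartite $Q_{m_Bn_C}$, so your closure argument still closes. Your ``main obstacle'' paragraph is well taken -- independence of cross-subsystem pairs such as $Q_{i_A}$ and $Q_{3_B3_C}$ is indeed used tacitly when the $Q_{i33}$ are introduced as new pairwise independent questions -- but it is the misidentified complementarity relations above, not that issue, that would actually break your write-up if left uncorrected.
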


\begin{proof}
The proof is completely analogous to the proofs of lemmas \ref{lem2}, \ref{lem8} and \ref{lem9}.
\end{proof}

As before this has an important consequence.
\begin{Corollary}
The individuals $Q_{i_A},Q_{j_B},Q_{k_C}$, the bipartite correlations $Q_{i_Aj_B},Q_{i_Ak_C},Q_{j_Bk_C}$, the bipartite correlations of correlations $Q_{3_A3_B},Q_{3_A3_C},Q_{3_B3_C}$, the tripartite correlations $Q_{ijk}$ and the tripartite $Q_{i33},Q_{3j3},Q_{33k}$, $i,j,k=1,2$, are pairwise independent and thus, thanks to assumption \ref{assump6}, part of an informationally complete set $\cq_{M_3}$.
\end{Corollary}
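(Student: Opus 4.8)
The plan is to treat this corollary as a bookkeeping consequence of the lemmas already in place: no new mechanism is needed, only a verification that every pair among the five classes of questions — the individuals $Q_{i_A},Q_{j_B},Q_{k_C}$; the bipartite correlations $Q_{i_Aj_B},Q_{i_Ak_C},Q_{j_Bk_C}$; the bipartite correlations of correlations $Q_{3_A3_B},Q_{3_A3_C},Q_{3_B3_C}$; the tripartite correlations $Q_{ijk}$; and the mixed tripartite correlations $Q_{i33},Q_{3j3},Q_{33k}$ ($i,j,k=1,2$) — has had its (in)dependence settled. Throughout I would exploit the elementary observation from section~\ref{sec_elstruc} that complementary questions are automatically independent, which disposes at once of every pair for which complementarity has already been asserted; only the genuinely ``ambiguous'' pairs then need a dedicated argument.

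First I would dispatch the ``index-free'' sector, i.e.\ all pairs drawn from the individuals, the bipartite correlations, and the tripartite $Q_{ijk}$ with $i,j,k\in\{1,2\}$: as the text notes, lemmas~\ref{lem5}, \ref{lem_tri1}, \ref{lem_tri2} and \ref{lem_tri3} hold verbatim in that range, and together with the rebit analogues of lemmas~\ref{lem8} and~\ref{lem9} (identical proofs) they already give pairwise independence of all such pairs. Next I would treat the pairs carrying a ``$3$''. Those living on a single gbit pair (e.g.\ $Q_{3_A3_B}$ versus $Q_{i_Aj_B}$ or versus $Q_{i_A}$) were settled in section~\ref{sec_n2}: lemma~\ref{lem4} gives complementarity — hence independence — of a correlation of correlations to the four individuals of its pair and theorem~\ref{thm_rebit} fixes the closure. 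The three-party pairs with a ``$3$'' — $Q_{3_A3_B}$ against a bipartite correlation of pairs $(A,C)$ or $(B,C)$ and against $Q_{3_A3_C},Q_{3_B3_C}$, and the mixed tripartite $Q_{i33}$ against individuals, bipartite correlations, correlations of correlations, the $Q_{ijk}$, and the other mixed tripartite questions — are precisely what lemmas~\ref{lem_re1}--\ref{lem_re3}, the lemma on $Q_{i33}$, and the final ``completely analogous'' lemma preceding this corollary are built to cover. In every case the argument, when not a direct appeal to complementarity, is the one first used in lemma~\ref{lem2}: exhibit a third question compatible with and independent of both questions in the pair, use assumption~\ref{assump5b} to note that asking it changes $O$'s information about neither, and then observe that joint maximal knowledge would force maximal knowledge of a fourth question complementary to one of the original two, contradicting any putative dependence.

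For the second clause — ``thus part of an informationally complete set $\cq_{M_3}$'' — I would argue as for the corresponding qubit corollary: by assumption~\ref{assump6} the set $\cq$ admits an informationally complete family of pairwise independent questions, and since the dimension lemma bounds the size of any pairwise independent family by $D$, the (finite) list above extends to a maximal pairwise independent family, which is then informationally complete. I would stress that the stronger statement, that the $6+12+3+8+6=35$ questions listed actually \emph{form} an informationally complete $\cq_{M_3}$ with $D_3=35$, is not required here; it will follow once logical closure of the list under the XNOR is established, which is the rebit counterpart of theorem~\ref{thm_qubit2}.

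The hard part will be exhaustiveness of the case split, and within it the cross-pair relations involving the ``missing'' index $3$. There lemma~\ref{lem5} does not apply directly, because a correlation of correlations is not literally a bipartite correlation of two individuals; the device — already used inside the proof of lemma~\ref{lem_re1} — is to rewrite $Q_{3_X3_Y}=Q_{1_X2_Y}\leftrightarrow Q_{2_X1_Y}$ and then use the two-gbit identity $Q_{11}\leftrightarrow Q_{22}=\neg(Q_{12}\leftrightarrow Q_{21})$ of theorem~\ref{thm_rebit} (equation~(\ref{qbit7})) together with associativity of $\leftrightarrow$ to re-express the relevant object as a genuine bipartite or individual correlation, after which lemma~\ref{lem1}- and lemma~\ref{lem5}-type arguments apply. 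No idea beyond those already present in the preceding lemmas is needed.
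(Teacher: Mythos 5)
Your proposal is correct and matches the paper's route: the paper gives this corollary no separate proof, treating it exactly as you do — as an immediate collection of the preceding lemmas (the qubit lemmas restricted to $i,j,k=1,2$, lemma \ref{lem4} and theorem \ref{thm_rebit} for the single-pair ``$3$'' questions, lemmas \ref{lem_re1}--\ref{lem_re5} and the unlabeled independence lemma for the rest), with the residual independence checks done by the lemma-\ref{lem2}-style argument via assumption \ref{assump5b} and with ``complementary $\Rightarrow$ independent'' disposing of the remaining pairs. Your handling of the informational-completeness clause (membership in a maximal pairwise independent set per assumption \ref{assump6}, deferring $D_3=35$ to the closure theorem \ref{thm_rebit2}) is likewise the paper's reading.
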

 
The compatibility and complementarity structure of questions involving the `correlation of correlations' $Q_{33}$ is analogous to lemma \ref{lem_tri3}.
\begin{lem}\label{lem_re5}
$Q_{ijk}$ is maximally compatible with $Q_{i33},Q_{3j3},Q_{33k}$ and maximally complementary to $Q_{l33},Q_{3m3},Q_{33n}$ for $i\neq l$, $j\neq m$ and $k\neq n$. Furthermore, $Q_{i33}$ is maximally compatible with $Q_{3j3},Q_{33k}$, but $Q_{133}$ and $Q_{233}$ are maximally complementary. The analogous result holds for all permutations of $A,B,C$.
\end{lem}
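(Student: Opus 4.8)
The plan is to follow the template of the preceding rebit lemmas, splitting the statement into the compatibility assertions and the complementarity assertions. The building blocks I will invoke throughout are: Theorem~\ref{thm_rebit} together with the Bell identity~(\ref{qbit7}) — so that for any of the three pairs the correlation $Q_{3}\leftrightarrow Q_{jk}$ equals $Q_{j'k'}$ or its negation with $j'\neq j$, $k'\neq k$, and $Q_{33}$ is compatible with each $Q_{jk}$; the closure relation~(\ref{re-close}) from Lemma~\ref{lem_re1}; Lemma~\ref{lem_re3} and the lemma preceding it (stating $Q_{i33}$ is compatible with $Q_{i_A}$ and complementary to $Q_{j_B},Q_{k_C}$); the extensions of Lemmas~\ref{lem5}--\ref{lem_tri3} to indices $1,2$; and, as always, assumption~\ref{assump5b} and Theorem~\ref{assump5} (Specker). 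All relative negations $\neg$ produced along the way are irrelevant to any (in)dependence or (in)compatibility verdict, so I will carry them along silently.

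For $Q_{ijk}$ compatible with $Q_{i33}$: write $Q_{ijk}=Q_{i_A}\leftrightarrow Q_{j_Bk_C}$ and $Q_{i33}=Q_{i_A}\leftrightarrow Q_{3_B3_C}$. By Theorem~\ref{thm_rebit}, $Q_{j_Bk_C}$ and $Q_{3_B3_C}$ are compatible and pairwise independent, and both are compatible with and independent of the $A$-question $Q_{i_A}$; by Theorem~\ref{assump5} the triple $\{Q_{i_A},Q_{j_Bk_C},Q_{3_B3_C}\}$ is mutually compatible, so $O$ has a state in which he knows all three and hence both Boolean functions $Q_{ijk},Q_{i33}$. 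The cases $Q_{ijk}\leftrightarrow Q_{3j3}$ and $Q_{ijk}\leftrightarrow Q_{33k}$ repeat this after rewriting $Q_{ijk}=Q_{j_B}\leftrightarrow Q_{i_Ak_C}=Q_{k_C}\leftrightarrow Q_{i_Aj_B}$ and using that $Q_{i_Ak_C},Q_{3_A3_C}$ (resp.\ $Q_{i_Aj_B},Q_{3_A3_B}$) are compatible by Theorem~\ref{thm_rebit}. For $Q_{i33}$ compatible with $Q_{3j3}$ I would run the ``spend two \texttt{bits}, then ask a question complementary to both'' argument of the second half of the proof of Lemma~\ref{lem3}: $O$ first asks the compatible, pairwise independent pair $Q_{i_A},Q_{3_B3_C}$ (two \texttt{bits}), then asks $Q_{3j3}$, which by the lemma preceding Lemma~\ref{lem_re3} (permuted) is complementary to $Q_{i_A}$ and by Lemma~\ref{lem_re3} (permuted) is complementary to $Q_{3_B3_C}$. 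By principle~\ref{unlim} $O$ suffers no net loss, so besides the new \texttt{bit} he must retain one \texttt{bit} of his prior knowledge; the only information jointly carried by $Q_{i_A}$ and $Q_{3_B3_C}$ beyond the two individual answers is their correlation $Q_{i33}$, so $Q_{i33}$ must be compatible with $Q_{3j3}$. Replacing $Q_{3j3}$ by $Q_{33k}$ gives $Q_{i33}$ compatible with $Q_{33k}$.

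For the complementarity of $Q_{133}$ and $Q_{233}$: this is Lemma~\ref{lem1} verbatim with $Q_{3_B3_C}$ as auxiliary — it is compatible with and independent of both, so by assumption~\ref{assump5b} asking it changes nothing; assuming partial compatibility with $\alpha_{133}=1$, $\alpha_{233}>0$, after asking $Q_{3_B3_C}$ maximal information about $Q_{3_B3_C}$ and $Q_{133}$ yields maximal information about $Q_{1_A}=Q_{133}\leftrightarrow Q_{3_B3_C}$, hence zero information about the complementary $Q_{2_A}$; but $O$ retains partial information about $Q_{233}$, hence about $Q_{2_A}=Q_{233}\leftrightarrow Q_{3_B3_C}$ — contradiction. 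For $Q_{ijk}$ complementary to $Q_{l33}$ with $l\neq i$: take $Q_{j_Bk_C}$ as auxiliary; it is compatible with and independent of $Q_{ijk}$ (by construction) and of $Q_{l33}$ (since $\{Q_{l_A},Q_{3_B3_C},Q_{j_Bk_C}\}$ is mutually compatible and pairwise independent). If they were partially compatible, after asking $Q_{j_Bk_C}$ (assumption~\ref{assump5b}) $O$ would have maximal information about $Q_{l33}\leftrightarrow Q_{j_Bk_C}=Q_{l_A}\leftrightarrow(Q_{3_B3_C}\leftrightarrow Q_{j_Bk_C})$, which by Theorem~\ref{thm_rebit} equals $Q_{lj'k'}$ or its negation with $j'\neq j$, $k'\neq k$; since $l\neq i$, $j'\neq j$, $k'\neq k$, this tripartite question overlaps $Q_{ijk}$ in zero indices and is therefore complementary to $Q_{ijk}$ by Lemma~\ref{lem_tri3}, contradicting the assumed partial information about $Q_{ijk}$. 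The assertions about $Q_{3m3}$ ($m\neq j$), $Q_{33n}$ ($n\neq k$) and the remaining permutations of $A,B,C$ follow by relabelling.

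The main obstacle is entirely organisational: each elementary relation among the mixed objects $Q_{3_A3_B}$, $Q_{i33}$, $Q_{ijk}$ that feeds these arguments must be pulled, under the correct permutation of $\{A,B,C\}$, from one of Lemmas~\ref{lem5}--\ref{lem_tri3}, \ref{lem_re1}, \ref{lem_re3} or the unnamed lemma before Lemma~\ref{lem_re3}. The only genuinely non-mechanical point is the step in the $Q_{i33}\leftrightarrow Q_{3j3}$ argument identifying $Q_{i33}$ as the sole non-trivial question $O$'s prior knowledge of $\{Q_{i_A},Q_{3_B3_C}\}$ can carry, which is what forces principle~\ref{unlim} to bite in exactly the desired way; this is the direct analogue of the compatibility step in the proof of Lemma~\ref{lem3}.
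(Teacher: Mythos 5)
Your proof is correct and, for most of the statement, follows the same template as the paper's own proof: compatibility of $Q_{ijk}$ with $Q_{i33},Q_{3j3},Q_{33k}$ via simultaneous knowability of the constituent questions, and complementarity via the standard auxiliary-question argument (ask a question compatible with and independent of both, invoke assumption \ref{assump5b}, and deduce maximal information about something complementary to the other question). You diverge in two places, both harmlessly. For the complementarity of $Q_{ijk}$ and $Q_{l33}$ you take $Q_{j_Bk_C}$ as the auxiliary and conclude via a tripartite question $Q_{lj'k'}$ that is complementary to $Q_{ijk}$ by lemma \ref{lem_tri3}; the paper instead uses $Q_{3_B3_C}$, which delivers maximal information about $Q_{l_A}$ and only needs lemma \ref{lem_tri1} -- slightly shorter, but your route is equally valid. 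For the compatibility of $Q_{i33}$ with $Q_{3j3}$ the paper simply invokes the reasoning behind (\ref{qbit8}), applied to the complementary pairs $Q_{i_A},Q_{3_A3_C}$ and $Q_{j_B},Q_{3_B3_C}$ with cross-compatibility, whereas you re-derive this instance from principle \ref{unlim} by the interrogation argument of lemma \ref{lem3} (spend two \texttt{bits} on $Q_{i_A},Q_{3_B3_C}$, then ask $Q_{3j3}$, which you correctly identify as complementary to both from the permuted preceding lemmas). Your version is self-contained but rests on the informal step that the only retainable prior information is the correlation $Q_{i33}$ -- exactly the level of rigor the paper itself accepts in lemma \ref{lem3} and theorem \ref{thm_rebit}, so this is a difference of packaging rather than substance: the paper's appeal to (\ref{qbit8}) simply encapsulates that argument once and reuses it, while your rendition buys explicitness at the cost of repeating it.
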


\begin{proof}
Compatibility of $Q_{ijk}$ with $Q_{i33},Q_{3j3},Q_{33k}$ follows from the fact that the constituents of the latter $Q_{i_A},Q_{3_B3_C},\ldots$ are maximally compatible with $Q_{ijk}$. Complementarity of, e.g., $Q_{ijk}$ and $Q_{l33}$ for $i\neq l$ can be shown by noting that both are maximally compatible with and independent of $Q_{3_B3_C}$ and lemma \ref{lem0}. Compatibility of, say, $Q_{i33}$ and $Q_{3j3}$ can be demonstrated by using (\ref{qbit8}) and noting that $Q_{i_A},Q_{3_A3_C}$ and $Q_{j_B},Q_{3_B3_C}$ are two pairs of maximally complementary questions which are such that each question in one pair is maximally compatible with both questions of the other pair. Finally, $Q_{133}$ and $Q_{233}$ are maximally complementary because both are maximally compatible with $Q_{3_B3_C}$ and $Q_{1_A},Q_{2_A}$ are maximally complementary.
\end{proof}
 
 This finishes our considerations of the independence and complementarity structure of three rebits.

\subsubsection{An informationally complete set for three rebits} 
 
The rebit questions considered thus far comprise an informationally complete set of $35$ elements.
 \begin{Theorem}\label{thm_rebit2}{\bf(Rebits)}
The individuals $Q_{i_A},Q_{j_B},Q_{k_C}$, the bipartite correlations $Q_{i_Aj_B},Q_{i_Ak_C},Q_{j_Bk_C}$, the bipartite correlations of correlations $Q_{3_A3_B},Q_{3_A3_C},Q_{3_B3_C}$, the tripartite correlations $Q_{ijk}$ and the tripartite $Q_{i33},Q_{3j3},Q_{33k}$, $i,j,k=1,2$, are logically closed under $\leftrightarrow$ and thus form an informationally complete set $\cq_{M_3}$ with $D_3=35$ for $D_1=2$. 
 \end{Theorem}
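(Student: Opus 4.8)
The strategy is to mirror the proof of Theorem \ref{thm_qubit2}, adapted to the rebit sector. By the truth table (\ref{truth}), the XNOR $\leftrightarrow$ (equivalently the XOR) is the only binary connective whose output can be pairwise independent of its two inputs, and by assumption \ref{assump4} $O$ may directly connect only compatible questions; hence it suffices to show that the XNOR of every compatible pair drawn from the $35$ listed questions is again, up to negation, one of these $35$ (a question and its negation are fully dependent, so a negation never furnishes a new pairwise independent generator). Pairwise independence of the $35$ questions is already supplied by the Corollary preceding the theorem, so once such closure is established, (\ref{truth}) shows no further pairwise independent question can be built, the set is informationally complete in the sense of assumption \ref{assump6}, and the count $6+12+3+8+6=35$ gives $D_3=35$.

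First I would dispose of all combinations in which no index equals $3$. For the questions internal to a single rebit pair this is exactly Theorem \ref{thm_rebit} (including the absorption of $Q_{i_Aj_B}\leftrightarrow Q_{k_Al_B}$, with $\{i,k\}=\{j,l\}=\{1,2\}$, into $Q_{3_A3_B}$ up to negation). For individuals and bipartite correlations across two distinct pairs, and for the tripartite $Q_{ijk}$ with $i,j,k=1,2$, the relevant closure relations are precisely those appearing in the proof of Theorem \ref{thm_qubit2} — e.g.\ $Q_{ijk}\leftrightarrow Q_{i_A}=Q_{j_Bk_C}$, $Q_{ijk}\leftrightarrow Q_{i_Aj_B}=Q_{k_C}$, $Q_{i_Aj_B}\leftrightarrow Q_{j_Bk_C}=Q_{i_Ak_C}$ (lemma \ref{lem5}), and $Q_{\sigma_A(1)\sigma_B(1)k}\leftrightarrow Q_{\sigma_A(2)\sigma_B(2)}$ landing on a tripartite $Q_{\sigma_A(3)\sigma_B(3)k}$ up to negation via Theorem \ref{thm_rebit} applied to the pair $A,B$. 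All of these rest only on the symmetry and associativity of $\leftrightarrow$, together with the pair-level identity of Theorem \ref{thm_rebit}, and so go through verbatim for indices in $\{1,2\}$.

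The substantive part is the remaining cases, all of which involve a bipartite correlation of correlations $Q_{3_A3_B},Q_{3_A3_C},Q_{3_B3_C}$ or a two-triple-index question $Q_{i33},Q_{3j3},Q_{33k}$; here the admissible (compatible) pairs are exactly those catalogued in lemmas \ref{lem_re1}--\ref{lem_re5}, so I would walk through each edge type. The key inputs are (\ref{re-close}), $Q_{3_A3_B}\leftrightarrow Q_{3_B3_C}=Q_{3_A3_C}$ and its permutations; the decomposition $Q_{i33}=Q_{i_A}\leftrightarrow Q_{3_B3_C}$, which immediately gives $Q_{i33}\leftrightarrow Q_{i_A}=Q_{3_B3_C}$ and, after pushing $\leftrightarrow$ through, $Q_{i33}\leftrightarrow Q_{j_Bk_C}$ closing onto a tripartite $Q_{i\,m\,n}$ up to negation (using Theorem \ref{thm_rebit} on the pair $B,C$); and the crossing identity (\ref{qbit8}), which handles e.g.\ $Q_{i33}\leftrightarrow Q_{3j3}$ (it reduces, via a regrouping of the two complementary pairs $\{Q_{i_A},Q_{3_A3_C}\}$ and $\{Q_{j_B},Q_{3_B3_C}\}$, to $\neg(Q_{i_Aj_B}\leftrightarrow Q_{3_A3_B})$, hence a bipartite correlation up to negation) and $Q_{133}\leftrightarrow Q_{3j3}$ and the like. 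In each case one checks that the output is an individual, a bipartite correlation, a bipartite correlation of correlations, or a tripartite question already listed, and in particular that the nonexistent ``triple-$3$'' object $Q_{333}$ never arises — whenever two correlations of correlations meet, (\ref{re-close}) produces a third correlation of correlations, not a $Q_{333}$.

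The main obstacle I expect is not conceptual but combinatorial: there are many edge types to exhaust, and — as in the proofs of Theorem \ref{thm_rebit} and lemma \ref{lem_re1} — the relative negations must be tracked carefully through the repeated use of the non-classical identity $Q_{11}\leftrightarrow Q_{22}=\neg(Q_{12}\leftrightarrow Q_{21})$ and (\ref{qbit8}). A related subtlety is that every regrouping or re-decomposition used in these computations must be performed only on sub-expressions that are \emph{mutually} compatible, as mandated by assumptions \ref{assump4} and \ref{assump4b}; writing each product from the outset as an XNOR of compatible sub-blocks (just as in the proof of Theorem \ref{thm_qubit2}) is what keeps the argument legal. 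Once all compatible XNOR products are shown to fall back into the $35$-element set, the conclusion — logical closure, informational completeness, and $D_3=35$ — is immediate from (\ref{truth}).
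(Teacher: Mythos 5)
Your proposal is correct and follows essentially the same route as the paper's proof: reduce the index-$\{1,2\}$ cases to Theorem \ref{thm_qubit2} and the two-rebit closure of Theorem \ref{thm_rebit}, then handle the cases involving $Q_{3_a3_b}$ and $Q_{i33},Q_{3j3},Q_{33k}$ via the compatibility lemmas \ref{lem_re1}--\ref{lem_re5} together with (\ref{q33re}), (\ref{re-close}), (\ref{qbit7}) and the crossing identity (\ref{qbit8}), tracking negations and regrouping only compatible sub-blocks. Your worked case $Q_{i33}\leftrightarrow Q_{3j3}=\neg(Q_{3_A3_B}\leftrightarrow Q_{i_Aj_B})$ is one of the two explicit examples the paper itself gives, and the remaining bookkeeping you describe matches the paper's "in complete analogy" closure argument and the count $D_3=35$.
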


\begin{proof}
Logical closure under the XNOR for any pair of rebits follows from section \ref{sec_n2}. Combining individuals with bipartite questions of another rebit pair produces the tripartite questions. Lemma \ref{lem5} and \ref{lem_re1} assert logical closure of bipartite correlation questions and the bipartite questions $Q_{33}$ asking for the correlation of bipartite correlations among all three rebits. We must therefore only check logical closure of combinations involving tripartite questions which involve indices taking the value $3$ (the other cases are covered by theorem \ref{thm_qubit2} for $i,j,k=1,2$).

Using theorem \ref{thm_rebit} and lemmas \ref{lem_re1}--\ref{lem_re5}, the proof is mostly analogous to the proof of theorem \ref{thm_qubit2} such that here we will only show two non-trivial cases which are treated differently. For example, by lemma \ref{lem_re3} $Q_{133}$ and $Q_{2_A2_B}$ are maximally compatible. The conjunction with $\leftrightarrow$ yields
\begin{widetext}
\ba
Q_{133}\leftrightarrow Q_{2_A2_B}&\underset{\tiny(\ref{q33re}, \ref{qbit7})}{=}&(Q_{1_A}\leftrightarrow Q_{3_B3_C})\leftrightarrow \neg (Q_{1_A1_B}\leftrightarrow Q_{3_A3_B})\nn\\
&\underset{\tiny(\ref{qbit8}, \ref{re-close})}{=}&Q_{1_B}\leftrightarrow Q_{3_A3_C}=Q_{313}.\nn
\ea
\end{widetext}
Similarly, by lemma \ref{lem_re5}, $Q_{i33},Q_{3j3}$ are maximally compatible. Their XNOR conjunction gives
\begin{widetext}
\ba
Q_{i33}\leftrightarrow Q_{3j3}=(Q_{i_A}\leftrightarrow Q_{3_B3_C})\leftrightarrow(Q_{3_A3_C}\leftrightarrow Q_{j_B})
\underset{\tiny(\ref{qbit8})}{=}\neg(Q_{3_A3_B}\leftrightarrow Q_{i_Aj_B})\nn
\ea
\end{widetext}
which again coincides with some $Q_{l_Am_B}$ with $i\neq l$ and $j\neq m$ (or the negation thereof). In complete analogy to these explicit examples and to the proof of theorem \ref{thm_qubit2}, one shows the logical closure under the XNOR of all other cases. This gives the desired result.
\end{proof}

\subsubsection{Monogamy and maximal entanglement for three rebits}
 
Rebits are considered as non-monogamous in the literature \cite{wootters2012entanglement,woottersrebit}. However, this conclusion depends somewhat on one's notion of monogamy and can be clarified within our language. Firstly, rebits, just as qubits, are clearly monogamous in the following sense: if two rebits $A,B$ are maximally entangled in a state of {\it maximal} information relative to $O$, then they cannot share any entanglement whatsoever with a third rebit $C$. This can be seen by repeating the argument of section \ref{sec_qbitmono} which holds analogously for three rebits. 

The situation changes slightly for entangled states of non-maximal information involving either of $Q_{3_A3_B},Q_{3_A3_C},Q_{3_B3_C}$ and for tripartite maximally entangled states. As an example for the former case, $O$ could ask $Q_{3_A3_B},Q_{3_B3_C}$ simultaneously which gives him two independent \texttt{bits} of information about the rebit triple and implies $Q_{3_A3_C}$ by (\ref{re-close}) as well such that his information could be summarized as

 \vspace*{.2cm} \begin{eqnarray}
 \psfrag{33}{$Q_{3_A3_B}$}
 \psfrag{331}{$Q_{3_A3_C}$}
 \psfrag{332}{$Q_{3_B3_C}$}
{\includegraphics[scale=.2]{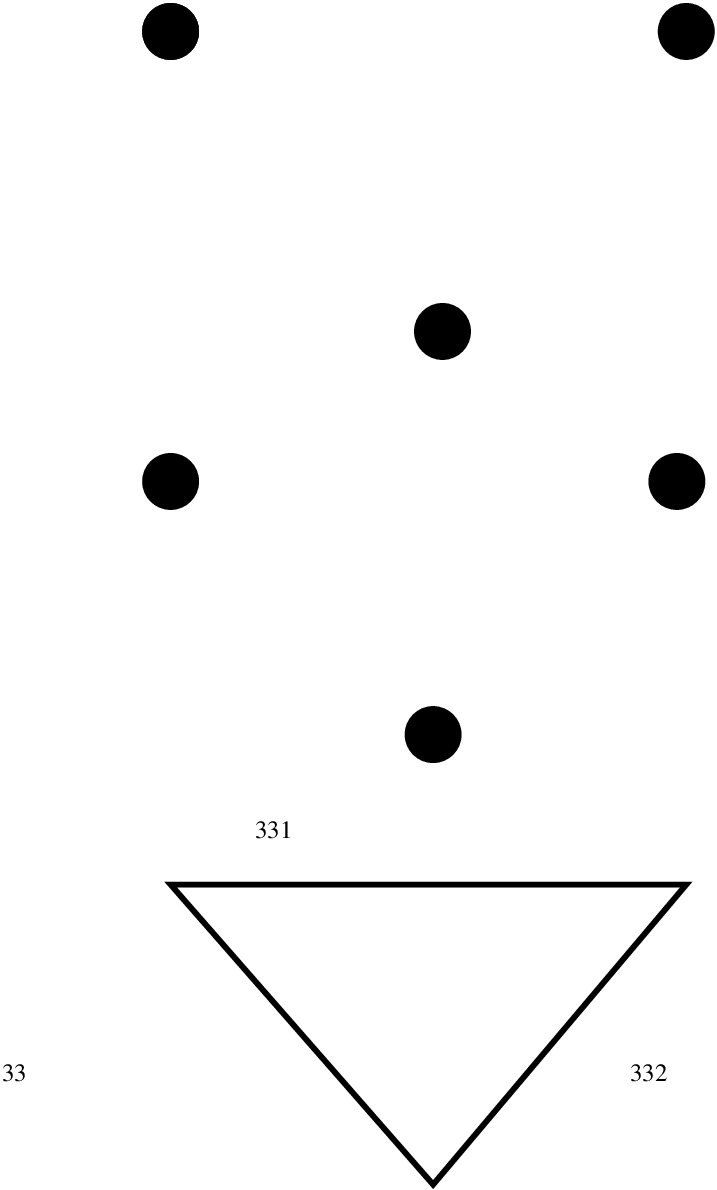}}\nn
\end{eqnarray}
 \vspace*{.1cm}
 
\noindent (we depict the $Q_{3_A3_B},Q_{3_A3_C},Q_{3_B3_C}$ without vertices to emphasize the absence of the individuals $Q_{3_A},Q_{3_B}$, $Q_{3_C}$ for rebits). This graph corresponds to non-monogamously entangled rebit states: by lemma \ref{lem4} all individuals are maximally complementary to two of the three known questions such that $O$ cannot acquire {\it any} individual information about the three rebits at the same time. The three rebits are pairwise maximally entangled, albeit not in a state of maximal information (see also section \ref{sec_rebtang}).

However, the three rebits can be similarly non-monogamous in a tripartite maximally entangled state of maximal information. As in the qubit case (\ref{ghz2}) in section \ref{sec_ghz}, one can write down a question graph corresponding to the rebit analogues of GHZ states, representing the eight possible answers to the tripartite questions $Q_{211},Q_{121},Q_{112}$

 \vspace*{.2cm}
  \begin{eqnarray}
 \psfrag{33}{$Q_{3_A3_B}$}
 \psfrag{331}{$Q_{3_A3_C}$}
 \psfrag{332}{$Q_{3_B3_C}$}
{\includegraphics[scale=.2]{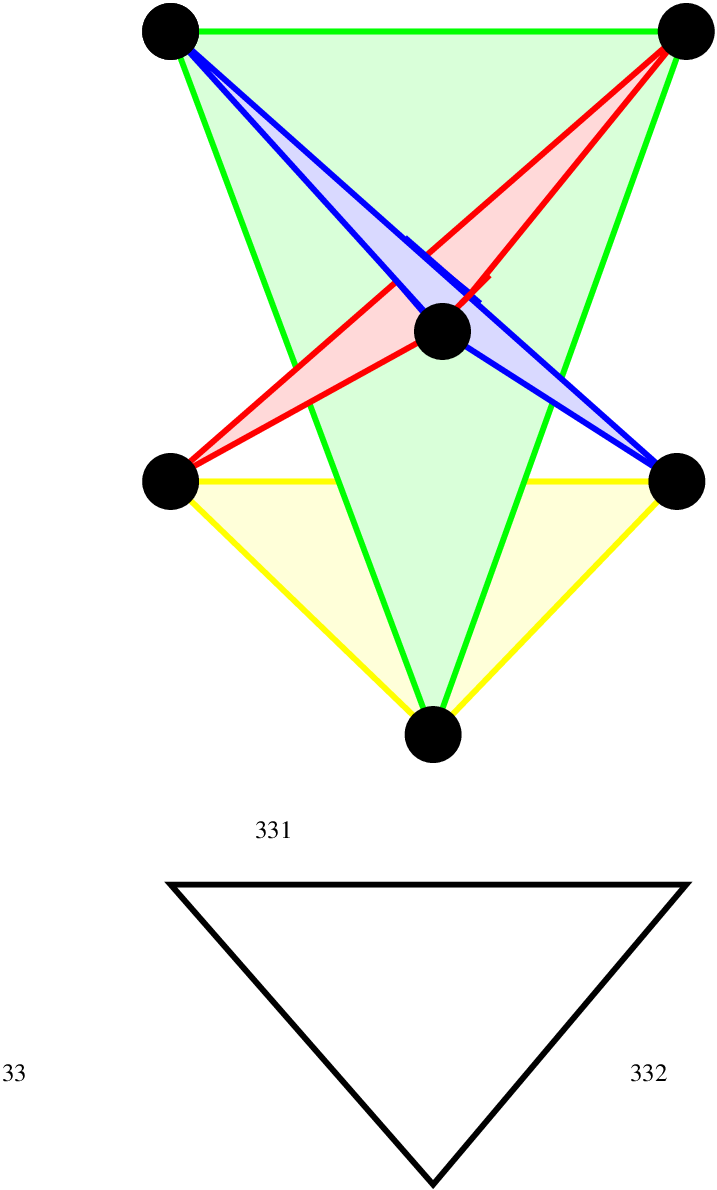}}\nn
\end{eqnarray}   
 \vspace*{.1cm}

\noindent (to justify this graph one has to employ theorem \ref{thm_rebit2}). Thanks to the information about the answers to $Q_{3_A3_B},Q_{3_A3_C},Q_{3_B3_C}$, such a state would contain a non-monogamous distribution of entanglement over $A,B,C$. In particular, if $O$ `marginalized' over rebit $C$, by discarding all information involving $C$, he would be left with the answer to $Q_{3_A3_B}$ which defines a maximally entangled two-rebit state of non-maximal information (see section \ref{sec_rebtang}) -- in contrast to the qubit case of section \ref{sec_ghz}.

\subsection{Correlation structure for $N=2$ gbits}\label{sec_corr}

While we were able to settle the relative negation $\neg$ between the correlations of bipartite correlation questions for both qubits and rebits in (\ref{qbit8}) (e.g., $Q_{11}\leftrightarrow Q_{22}=\neg(Q_{12}\leftrightarrow Q_{21})$), we still have to clarify the odd and even correlation structure for qubits in (\ref{qbit4}, \ref{qbit5}) and more generally in theorem \ref{thm_qubit}. For example, we have to clarify whether $Q_{33}=Q_{11}\leftrightarrow Q_{22}$ or $Q_{33}=\neg(Q_{11}\leftrightarrow Q_{22})$, etc. This will involve the notion of the `logical mirror' of an inference theory and require the tools from the previous section, covering the $N=3$ case. We recall that the odd and even correlation structure for rebits has already been settled in section \ref{sec_bell} through (\ref{qbit8}). This, as we shall see shortly, is a consequence of rebit theory being its own mirror image.

\subsubsection{The logical mirror image of an inference theory}\label{sec_mirror}

Consider a single qubit, described by $O$ via an informationally complete set $Q_{1},Q_{2},Q_{3}$ which can be viewed as a {\it question basis} on $\cq_1$. As such, it defines a `logical handedness' in terms of which outcomes to $Q_1,Q_2,Q_3$ $O$ calls `yes' and which `no'.\footnote{Ultimately, these three questions will define an orthonormal Bloch vector basis in the Bloch sphere (see section \ref{sec_n1}). The handedness or orientation of this basis will depend on the labeling of question outcomes.} Clearly, this is a convention made by $O$ and he can easily change the handedness by simply swapping the assignment `yes'$\leftrightarrow$'no' of one question, say, $Q_1$, which is tantamount to $Q_1\mapsto \neg Q_1$. This corresponds to taking the logical mirror image of a single qubit. For a single qubit this will not have any severe consequences and $O$ is free to choose whichever handedness he desires.

However, the handedness of a single qubit question basis does become important when considering composite systems. Consider, e.g., two observers $O,O'$ each having one qubit and describing it with a certain handed basis. They can also consider correlations $Q_{ij}$ of their qubit pair. If $O$ now decided to change the handedness of his local question basis by $Q_1\mapsto \neg Q_1$ this would result in
\ba
Q_{11},Q_{12},Q_{13}\q&\mapsto& \q\neg Q_{11},\neg Q_{12},\neg Q_{13},\q\nn\\%\q\q\q\q\q 
 Q_{ij}\q&\mapsto&\q Q_{ij},\,\,\, i\neq1\nn
\ea
 and therefore 
 \ba
&& Q_{11}\leftrightarrow Q_{22}\q\mapsto\q \neg(Q_{11}\leftrightarrow Q_{22}),\q\text{and}\q\nn\\&& Q_{12}\leftrightarrow Q_{21}\q\mapsto\q \neg(Q_{12}\leftrightarrow Q_{21}).\nn
 \ea
 Since $Q_{33}$ would remain unaffected by the change of local handedness ($Q_3,Q'_3$ remain invariant), $O$'s change of local handedness would have resulted in a switch between even and odd correlation structure for (\ref{qbit4}, \ref{qbit5}). Since the handedness of the local question basis is just a convention by $O$ or $O'$, we shall allow either. Consequently, whether three correlation questions are related by even or odd correlation (see theorem \ref{thm_qubit}) depends on the local conventions by $O,O'$ and both are, in fact, consistent. However, (\ref{qbit8}) will always hold. Usually, of course, one would favour the situation that both $O,O'$ make the same conventions such that their local question bases are equally handed (or that one observer $O$ describes all qubits with the same handedness). Nevertheless, physically all local conventions are fully equivalent, yet will lead to different representations of composite systems in the inference theory. 

But there are important consistency conditions on the distribution of odd and even correlations. This becomes obvious when examining three qubits $A,B,C$. To this end, consider the trivial conjunction
\ba
Q_{3_A3_B}\leftrightarrow Q_{3_A3_C}\leftrightarrow Q_{3_B3_C}\underset{\tiny\text{lemma \ref{lem5}}}{=}1.\label{333}
\ea
From (\ref{qbit4}) we know that the correlation is either even or odd, respectively, 
\ba
Q_{3_A3_B}&=&\q Q_{1_A1_B}\leftrightarrow Q_{2_A2_B},\q\text{or}\nn\\ Q_{3_A3_B}&=&\neg(Q_{1_A1_B}\leftrightarrow Q_{2_A2_B})\label{qbit4b}
\ea
and analogously for $A,C$ and $B,C$. Suppose now that all three bipartite correlations in the conjunction (\ref{333}) were even. Then we immediately get a contradiction because, using lemma \ref{lem5},
\begin{widetext}
\ba
Q_{3_A3_B}\leftrightarrow Q_{3_A3_C}\leftrightarrow Q_{3_B3_C}&=&\,\,\,\,(Q_{1_A1_B}\leftrightarrow Q_{2_A2_B})\leftrightarrow (Q_{1_A1_C}\leftrightarrow Q_{2_A2_C})\leftrightarrow (Q_{1_B1_C}\leftrightarrow Q_{2_B2_C})\nn\\
&\underset{\tiny(\ref{qbit8})}{=}&\neg \underset{=\,Q_{1_B1_C}}{\underbrace{(Q_{1_A1_B}\leftrightarrow Q_{1_A1_C})}}\leftrightarrow \underset{=\,Q_{2_B2_C}}{\underbrace{(Q_{2_A2_B}\leftrightarrow Q_{2_A2_C})}}\leftrightarrow (Q_{1_B1_C}\leftrightarrow Q_{2_B2_C})\nn\\
&=&0.\nn
\ea
\end{widetext}
But this violates the identity (\ref{333}) and results from the relative negation between the left and right hand side in (\ref{qbit8}). One would get the same contradiction if one of the correlations in (\ref{333}) was even and two were odd because then the two negations from the odd correlation would cancel each other and one would still be left with the negation coming from (\ref{qbit8}). 

On the other hand, everything is consistent if either all bipartite correlations in (\ref{qbit4b}) are odd or one is odd and two are even because then the odd number of negations from the odd correlations cancels the negation coming from (\ref{qbit8}):
\begin{widetext}
\ba
Q_{3_A3_B}\leftrightarrow Q_{3_A3_C}\leftrightarrow Q_{3_B3_C}&=&\neg(Q_{1_A1_B}\leftrightarrow Q_{2_A2_B})\leftrightarrow \neg(Q_{1_A1_C}\leftrightarrow Q_{2_A2_C})\leftrightarrow \neg(Q_{1_B1_C}\leftrightarrow Q_{2_B2_C})\nn\\
&\underset{\tiny(\ref{qbit8})}{=}&\neg \underset{=\,Q_{1_B1_C}}{\underbrace{(Q_{1_A1_B}\leftrightarrow Q_{1_A1_C})}}\leftrightarrow \,\,\,\,\underset{=\,Q_{2_B2_C}}{\underbrace{(Q_{2_A2_B}\leftrightarrow Q_{2_A2_C})}}\leftrightarrow \neg(Q_{1_B1_C}\leftrightarrow Q_{2_B2_C})\nn\\
&=&1.\nn
\ea
\end{widetext}
We can also quickly check that this is consistent with (\ref{qbit5}) 
\ba
Q_{3_A3_B}&=&\q Q_{1_A2_B}\leftrightarrow Q_{2_A1_B},\q\text{or}\nn\\ Q_{3_A3_B}&=&\neg(Q_{1_A2_B}\leftrightarrow Q_{2_A1_B})\label{qbit5b}
\ea
and (\ref{qbit7}) (and analogously for $A,C$ and $B,C$)
\ba
Q_{1_A2_B}\leftrightarrow Q_{2_A1_B}=\neg (Q_{1_A1_B}\leftrightarrow Q_{2_A2_B}).\nn
\ea
That is, if all correlations in (\ref{qbit4b}) are odd, then all correlations in (\ref{qbit5b}) must be even. Indeed,
\begin{widetext}
 \ba
Q_{3_A3_B}\leftrightarrow Q_{3_A3_C}\leftrightarrow Q_{3_B3_C}&=&\,\,\,\,(Q_{1_A2_B}\leftrightarrow Q_{2_A1_B})\leftrightarrow (Q_{1_A2_C}\leftrightarrow Q_{2_A1_C})\leftrightarrow (Q_{1_B2_C}\leftrightarrow Q_{2_B1_C})\nn\\
&\underset{\tiny(\ref{qbit8})}{=}&\neg \underset{=\,Q_{2_B2_C}}{\underbrace{(Q_{1_A2_B}\leftrightarrow Q_{1_A2_C})}}\leftrightarrow \underset{=\,Q_{1_B1_C}}{\underbrace{(Q_{2_A1_B}\leftrightarrow Q_{2_A1_C})}}\leftrightarrow (Q_{1_B2_C}\leftrightarrow Q_{2_B1_C})\nn\\
&=&Q_{3_B3_C}\leftrightarrow Q_{3_B3_C}=1\nn
\ea
\end{widetext}
is consistent. 

In conclusion, if $O$ wants to treat the bipartite relations among all three $A,B,C$ {\it identically}, then the following distribution of odd and even correlations
\ba
Q_{3_A3_B}&=&\q Q_{1_A2_B}\leftrightarrow Q_{2_A1_B}\nn\\&=&\neg(Q_{1_A1_B}\leftrightarrow Q_{2_A2_B}),\label{qbit9}
\ea
and analogously for $A,C$ and $B,C$, is the {\it only} consistent solution. {\it We shall henceforth make the convention that the bipartite correlation structure among any pair of qubits be the same such that (\ref{qbit9}) holds}. This turns out to be the case of qubit quantum theory. 

The tacit assumption, underlying the standard representation of qubit quantum theory, is that the handedness of each local qubit question basis for $A,B,C$ is the same, e.g., all `left' or `right' handed. But we emphasize, that it would be equally consistent to choose one basis as `left' (`right') and the other two as `right' (`left') handed. A qubit pair with equally handed bases will be described by the odd and even correlation distribution as in (\ref{qbit9}), while a qubit pair with oppositely handed bases will be described by the opposite distribution of odd and even correlations. In terms of whether $Q_{33}=Q_{11}\leftrightarrow Q_{22}$ is even or odd $Q_{33}=\neg(Q_{11}\leftrightarrow Q_{22})$, the three qubit relations yield only four consistent graphs for the distribution of `left' and `right' handedness\begin{widetext}
 \begin{eqnarray}
 \psfrag{o}{\footnotesize odd}
 \psfrag{e}{\footnotesize even}
 \psfrag{l}{\footnotesize`left'}
 \psfrag{r}{\footnotesize `right'}
\hspace*{-1cm}{\includegraphics[scale=.2]{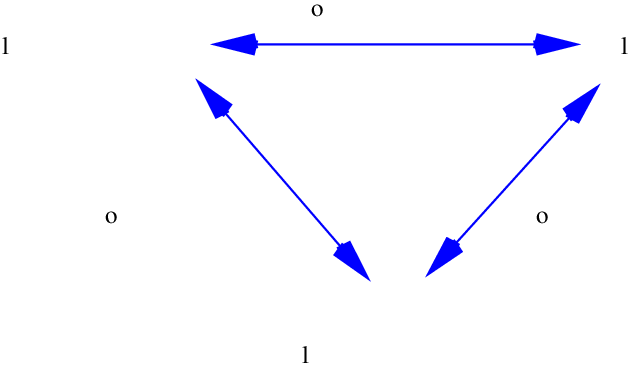}}\q\q\q\q\q{\includegraphics[scale=.2]{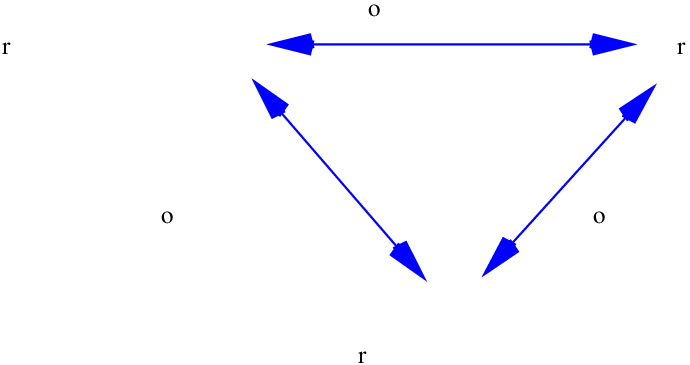}}\q\q\q\q\q{\includegraphics[scale=.2]{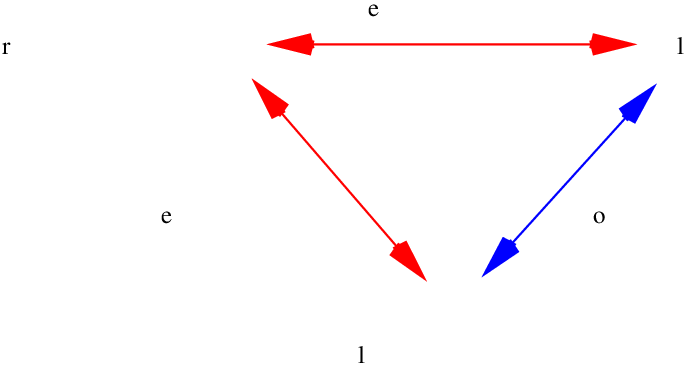}}\q\q\q\q\q{\includegraphics[scale=.2]{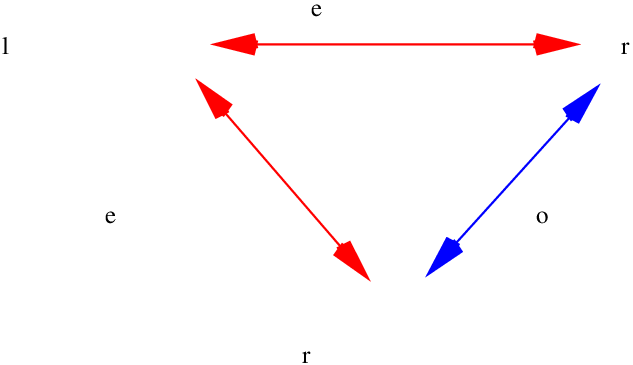}}\label{mirror}
\end{eqnarray}   
\end{widetext}
This gives a simple graphical explanation for the consistency observations above. The first two graphs correspond to quantum theory. The framework with the opposite correlation structure of quantum theory, corresponding to the last two graphs, is sometimes referred to as {\it mirror quantum theory} \cite{Dakic:2009bh}. While mirror quantum theory was considered inconsistent in \cite{Dakic:2009bh}, we see here that inconsistencies would only arise if the bipartite correlation structure of mirror quantum theory (and in particular the even correlation $Q_{33}=Q_{11}\leftrightarrow Q_{22}$) was used for all three pairs of qubits. However, the proper formulation of mirror quantum theory for three qubits corresponds to the last two graphs in (\ref{mirror}) and gives a perfectly consistent framework.\footnote{In fact, one could even produce the correlation structure of mirror quantum theory in the lab by using oppositely handed bases for two qubits in an entangled pair. The resulting state would be represented by the partial transpose of an entangled qubit state. This state would not be positive if represented in terms of the standard Pauli matrices and thus not correspond to a legal quantum state \cite{Peres:1996dw}. However, the point is that mirror quantum theory would not be represented in the standard Pauli matrix basis but in the partially transposed basis which corresponds to replacing
\ba
\sigma_y=\left(\begin{array}{cc}0 & -i \\i & 0\end{array}\right)\q\q\q\text{by}\q\q\q\sigma^T_y=\left(\begin{array}{cc}0 & i \\-i & 0\end{array}\right),\nn
\ea
for one of the two qubits (this is a switch of the $y$-axis orientation). In this basis, the state would be positive.}
 It could be easily generalized in an obvious manner to arbitrarily many qubits.\footnote{Notice, however, that for more than three qubits one would get more than two different consistent distributions of odd and even correlations. For example, for four qubits there will be three cases: (1) all four have equally handed bases, (2) three have equally handed bases, (3) two have equally handed bases.}

Obviously, the same argument can be carried out for any other triple of maximally compatible bipartite correlations appearing in theorem \ref{thm_qubit}. In conclusion, the two distinct consistent distributions of odd and even correlations, corresponding to quantum theory and mirror quantum theory, 
\begin{itemize}
\item[(a)] result from different conventions of local basis handedness, and
\item[(b)] are in one-to-one correspondence through a local relabeling `yes'$\leftrightarrow$`no' of one individual question.
\end{itemize}
As such, the two distinct correlation structures ultimately give rise to two distinct representations of the same physics and are thus fully equivalent. The transformation (b) between the two representations -- being a translation between two conventions/descriptions -- is a {\it passive} one and can be carried out on a piece of paper; it is {\it always} allowed. However, clearly, there cannot exist any actual {\it physical} transformation in the laboratory which maps states from one convention into the other. 

Within the formalism of quantum theory the transformation (b) $Q_1\mapsto\neg Q_1$ corresponds to the {\it partial transpose} (e.g., see \cite{Dakic:2009bh}). It is well known that the partial transpose defines a separability criterion for quantum states which is both necessary and sufficient for a pair of qubits \cite{Peres:1996dw}: a two-qubit density matrix $\rho$ is separable %(i.e., represents a product state) 
if and only if its partial transpose is positive relative to a basis of standard Pauli matrix products (i.e., represents a legal quantum state). This criterion holds analogously in our language here: as seen at the beginning of this section, the transformation $Q_1\mapsto \neg Q_1$ changes between odd and even correlations of bipartite correlations $Q_{ij}$. This would, in fact, be unproblematic if $O$ only had individual information about the two qubits; it would map a classically composed state even of maximal information, say, $Q_1=1$ and $Q'_1=1$ and thus $Q_{11}=1$, to another legal classically composed state $Q_1=0$, $Q'_1=1$ and thus $Q_{11}=0$. Both states exist within both conventions. However, applying this transformation to a maximally entangled state with odd correlation, say, $Q_{11}=Q_{22}=1$ and thus, by (\ref{qbit9}), $Q_{33}=0$ yields an even correlation $Q_{11}=0=Q_{33}$ and $Q_{22}=1$. The former state only exists in the quantum theory representation, while the latter exists only in the mirror image. For other entangled states one would similarly find that $Q_1\mapsto\neg Q_1$ necessarily maps from one representation into the other. The same conclusion also holds for a transformation $\{Q_1,Q_2,Q_3\}\mapsto\{\neg Q_1,\neg Q_2,\neg Q_3\}$, which one might call {\it total inversion}, because the odd number of negations involved in the transformation would likewise lead to a swap of odd and even correlations.

Lastly, we note that the situation is very different for rebit theory because it is its own mirror image, i.e.\ rebit theory and {\it mirror rebit theory} are {\it identical} representations. $O$ will describe a single rebit by a question basis $Q_1,Q_2$. Suppose $O$ decided to swap the `yes' and `no' assignments to the outcomes of $Q_1$, such that equivalently $Q_1\mapsto \neg Q_1$. For a pair of rebits, this would have the following ramification
\ba
Q_{11},Q_{12}\q&\mapsto&\q \neg Q_{11},\neg Q_{12},\nn\\ Q_{21},Q_{22}\q&\mapsto& \q\q Q_{21},\, Q_{22},\nn
\ea
 and therefore 
 \ba
Q_{12}\leftrightarrow Q_{21}\q&\mapsto&\q \neg(Q_{12}\leftrightarrow Q_{21})\nn\\\Rightarrow\q\q\q\q Q_{33}\q&\mapsto&\q\neg Q_{33}.\nn
 \ea
In contrast to the qubit case, $Q_{33}$ is {\it defined} as a correlation of correlations $Q_{33}:=Q_{12}\leftrightarrow Q_{21}$ (\ref{q33re}) and can {\it not} be written in terms of local questions $Q_{3},Q'_3$. Hence, $Q_{33}$ also changes under this transformation by construction. Accordingly, $Q_1\mapsto\neg Q_1$ does {\it not} lead to a swap of odd and even correlations for the rebit case. This `partial transpose' therefore always maps states to other states within the {\it same} representation. For example, even a maximally entangled state of maximal information and even correlation, say, $Q_{12}=Q_{21}=1$ and $Q_{33}=1$, would be mapped to another evenly correlated state $Q_{12}=0$, $Q_{21}=1$ and $Q_{33}=0$. As a consequence, the Peres separability criterion \cite{Peres:1996dw} which is valid for qubits does {\it not} hold in analogous fashion for rebits. %The swap $Q_1\mapsto\neg Q_1$ is therefore not necessarily only a passive transformation for rebits but can correspond to an actual physical transformation. (But since it is a discrete transformation it would not be part of a continuous time evolution group.) 
For completely equivalent reasons, the {\it total inversion}, corresponding to $\{Q_1,Q_2\}\mapsto\{\neg Q_1,\neg Q_2\}$, is also a transformation which preserves the representation.

\subsubsection{Collecting the results: odd and even correlation structure for $N=2$}

After the many technical details it is useful to collect all the results concerning the compatibility, complementarity and correlation structure for two qubits, derived in lemmas \ref{lem1} and \ref{lem3}, theorem \ref{thm_qubit}, equation (\ref{qbit8}) and in the previous section \ref{sec_mirror} in a graph to facilitate a visualization. We shall henceforth abide by the convention that all bipartite relations for arbitrarily many qubits be treated equally such that (\ref{qbit9}) must hold. For the other relations of theorem \ref{thm_qubit} one finds the analogous results. As can be easily verified, the ensuing question structure has the lattice pattern of figure \ref{fig_corr}, where the triangles
%\onecolumngrid
\begin{widetext}
\begin{eqnarray}
\psfrag{-}{\hspace*{.05cm}\vspace*{.1cm}\footnotesize$-$}
\psfrag{+}{\footnotesize$+$}
\psfrag{A}{\footnotesize $Q$}
\psfrag{B}{\footnotesize$Q'$}
\psfrag{C}{\footnotesize$Q''$}
{\includegraphics[scale=.3]{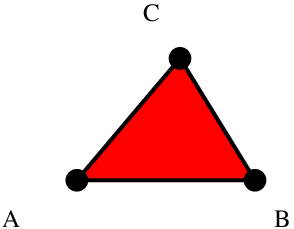}}\q\q\q\Leftrightarrow Q=\neg(Q'\leftrightarrow Q''),\q\q\q\q\q\q
{\includegraphics[scale=.3]{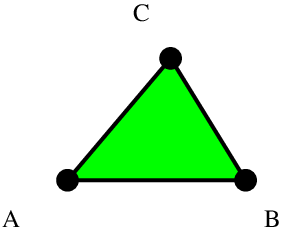}}\q\q\q\Leftrightarrow Q=Q'\leftrightarrow Q''\label{oddeven}
\end{eqnarray}
%\end{widetext}
denote odd and even correlation, respectively.  
%\begin{widetext}
\begin{figure}[hbt!]
\begin{center}
\psfrag{+}{$+$}
\psfrag{-}{\hspace*{-.1cm}$-$}
\psfrag{1}{$Q_1$}
\psfrag{2}{$Q_2$}
\psfrag{3}{$Q_3$}
\psfrag{1p}{$Q'_1$}
\psfrag{2p}{$Q'_2$}
\psfrag{3p}{$Q'_3$}
\psfrag{11}{$Q_{11}$}
\psfrag{22}{$Q_{22}$}
\psfrag{12}{$Q_{12}$}
\psfrag{33}{$Q_{33}$}
\psfrag{13}{$Q_{13}$}
\psfrag{21}{$Q_{21}$}
\psfrag{23}{$Q_{23}$}
\psfrag{31}{$Q_{31}$}
\psfrag{32}{$Q_{32}$}
\psfrag{i}{\hspace*{-.45cm}\footnotesize identify}
{\includegraphics[scale=.3]{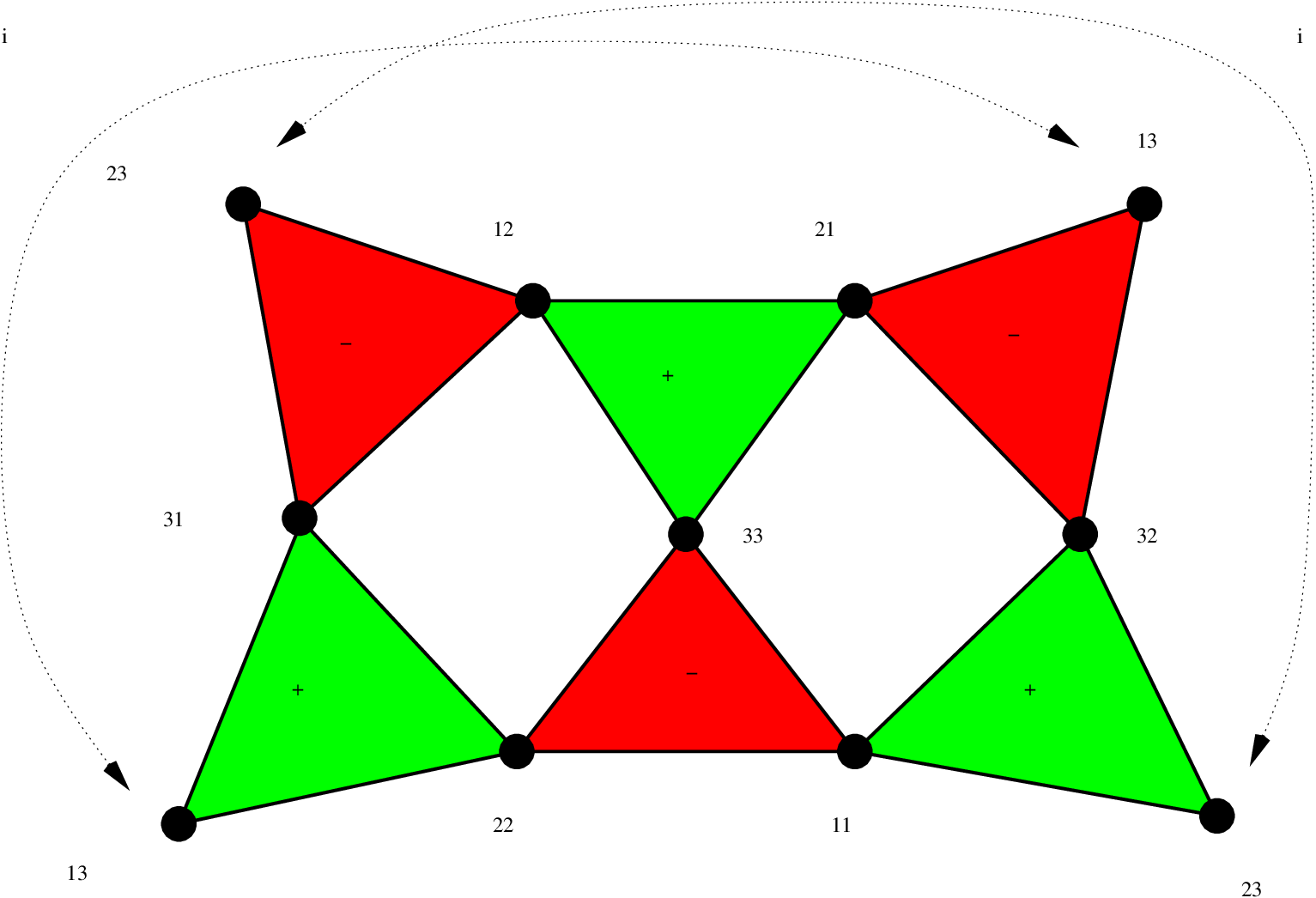}}\\\vspace*{1cm}
{\includegraphics[scale=.3]{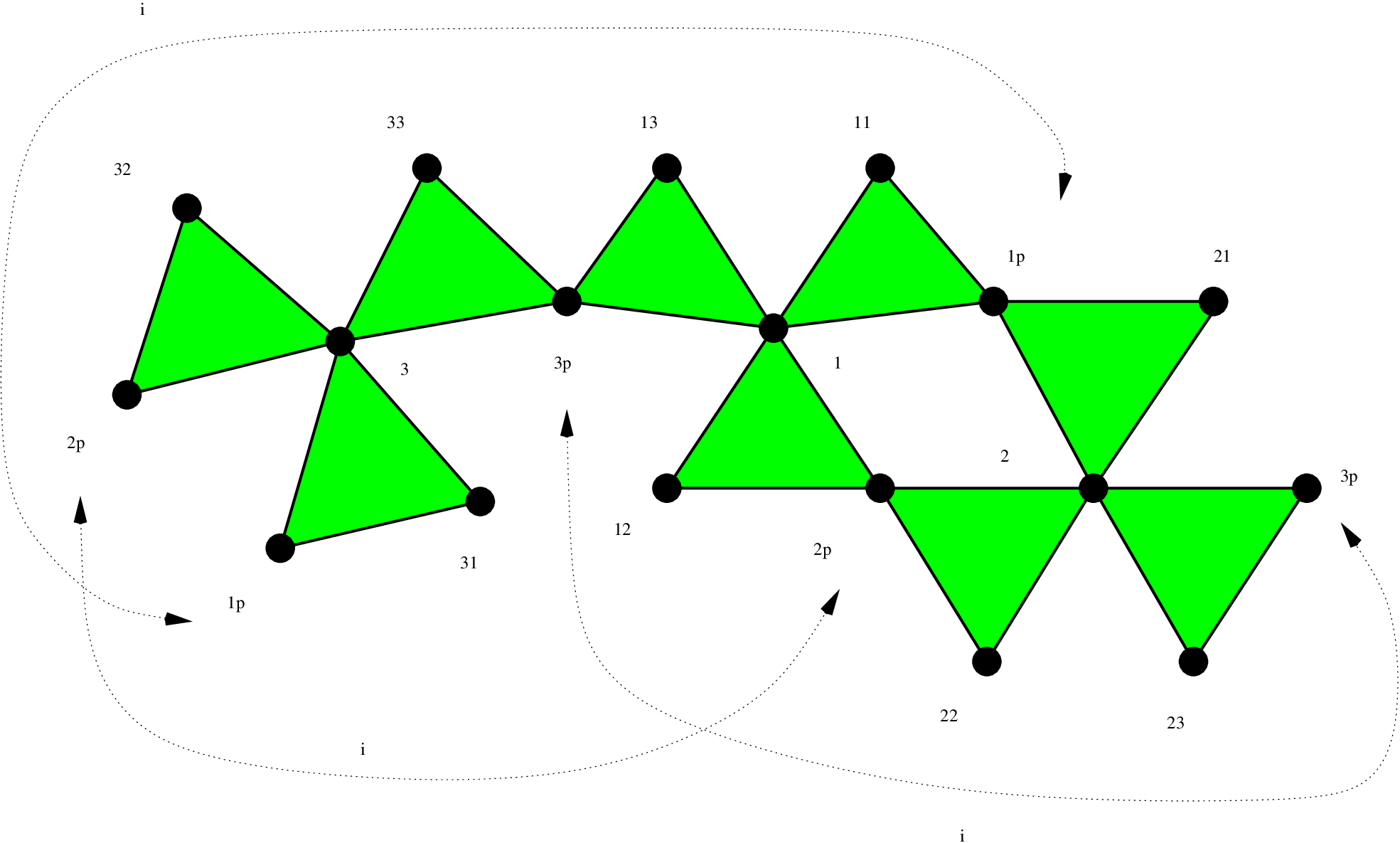}}
\caption{\small A lattice representation of the complete compatibility, complementarity and correlation structure of the informationally complete set $\cq_{M_2}$ for {\bf two qubits}. Every vertex corresponds to one of the 15 pairwise independent questions. If two questions are connected by an edge, they are maximally compatible. If two questions are {\it not} connected by an edge, they are maximally complementary. Thanks to the logical structure of the XNOR $\leftrightarrow$, defining a question as a correlation of two other questions, the compatibility structure results in a lattice of triangles. As clarified in (\ref{oddeven}), red triangles denote odd, while green triangles denote even correlation. Note that the two lattices represented here are connected through the nine correlation questions $Q_{ij}$ and form a single closed lattice (which, however, is easier to represent in this disconnected manner). Every question resides in exactly three triangles and is thereby maximally compatible with six and maximally complementary to eight other questions.}\label{fig_corr}
\end{center}
\end{figure}
\end{widetext}
%\twocolumngrid

We recall that (\ref{qbit7}, \ref{qbit8}) imply alternating odd and even correlation triangles for the bipartite correlation questions. However, we emphasize that the Bell scenario argument of section \ref{sec_bell} does {\it not} require the correlation triangles involving individual questions to also admit such an alternating odd and even pattern. For instance, the following relations of $Q_{33}$
\ba
Q_{33}=Q_3\leftrightarrow Q'_3=Q_{12}\leftrightarrow Q_{21}=\neg(Q_{11}\leftrightarrow Q_{22})\nn
\ea
%do {\it not} permit $O$ to extract any illegal complementary information about the system 
are consistent with assumptions \ref{assump4} and \ref{assump4b}, despite the absence of a negation in $Q_3\leftrightarrow Q'_3=Q_{12}\leftrightarrow Q_{21}$ because the latter is {\it not} a classical logical identity. The graph corresponding to the last relation,
 \vspace*{.2cm} \begin{eqnarray}
\psfrag{d}{$\vdots$}
\psfrag{Q11}{$Q_{11}$}
\psfrag{Q22}{$Q_{22}$}
\psfrag{Q12}{$Q_{12}$}
\psfrag{Q21}{$Q_{21}$}
\psfrag{Q33}{$Q_{33}$}
\psfrag{q3}{$Q_3$}
\psfrag{p3}{$Q'_3$}
{\includegraphics[scale=.2]{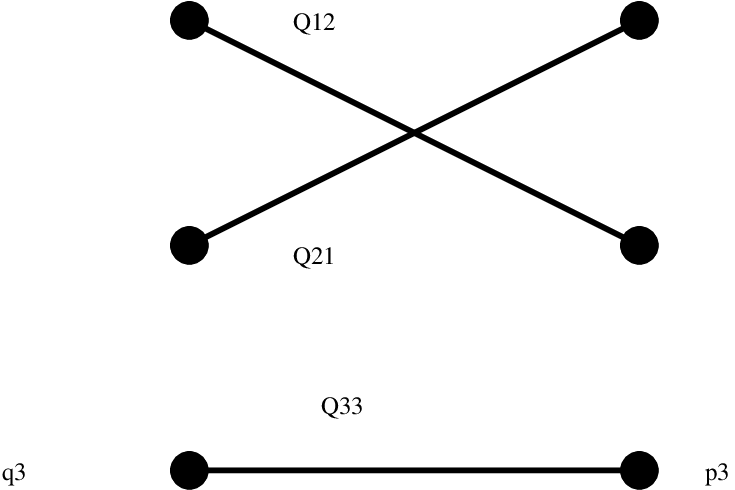}}\notag
\end{eqnarray}
 \vspace*{.2cm}
 
\noindent is also different from the graphs (\ref{join}) as it involves all six individual questions and the argument leading to (\ref{qbit8}) does not apply. Clearly, given the definition $Q_{ij}:=Q_i\leftrightarrow Q_j'$, all such triangles must be even. 

The lattice structure in figure \ref{fig_corr} contains 15 triangles and $15\times3=45$ distinct edges corresponding to compatibility relations. There are 60 `missing edges' corresponding to complementarity relations. Every question resides in three compatibility triangles and is therefore maximally compatible with the six other questions in these three triangles and maximally complementary to the eight remaining questions not contained in those three triangles. This means that once one question is fully known, all other information available to $O$ must be distributed over the three adjacent triangles. In particular, if $O$ knows the answers to two of the questions in the lattice, he will also know the answer to the third sharing the same triangle such that every triangle corresponds to a specific set of states of maximal information. (Although, as we shall see shortly in section \ref{sec_infomeasure} and, in more detail, in \cite{hw}, the time evolution rule \ref{time} implies that states of maximal information will likewise exist such that two independent \texttt{bits} can be distributed differently over the lattice.) 

Notice that every triangle is connected by an edge (adjacent to one of its three vertices) to every other vertex in the lattice. This embodies the statement `whenever $O$ asks $S$ a new question, he experience no net loss of information' of the complementarity rule \ref{unlim}. For instance, if $O$ has maximal knowledge about two questions in the lattice, corresponding to maximal information about one triangle, it is impossible for him to loose information by asking another question from the lattice because it will be connected to one of the questions from the previous triangle. As a concrete example, suppose $O$ knew the answers to $Q_{21},Q_{13},Q_{32}$. He could ask $Q_{22}$ next. Since $Q_{22}$ is connected by an edge to $Q_{13}$, he would know the answers to both upon asking $Q_{22}$ and thereby then also the answer to $Q_{31}$ -- no net loss of information occurs.

It is straightforward to check, e.g., using the ansatz
\ba
|\psi\rangle=\alpha\,|z_-z_-\rangle+\beta\,|z_+z_+\rangle+\gamma\,|z_-z_+\rangle+\delta|z_+z_-\rangle\nn
\ea
for a two qubit pure state and translating it into the various basis combinations $xx,xy,yx,yy,\ldots$, that the lattice structure of figure \ref{fig_corr} is precisely the compatibility and correlation structure of qubit quantum theory. For instance, $Q_{11},Q_{22},Q_{33}$ correspond to projectors onto the $+1$-eigenspaces of $\sigma_x\otimes\sigma_x$, $\sigma_y\otimes\sigma_y$ and $\sigma_z\otimes\sigma_z$. The three questions sharing a red triangle means, e.g., that $Q_{11}=Q_{22}=1$ and $Q_{33}=0$ is an allowed state while $Q_{11}=Q_{22}=Q_{33}=1$ is illegal. Indeed, ignoring normalization, in quantum theory one finds 
\ba
|x_+x_+\rangle-|x_-x_-\rangle&=&-i|y_+y_+\rangle+i|y_-y_-\rangle\nn\\&=&\q\,|z_+z_-\rangle\,+\,|z_-z_+\rangle\nn
\ea
for $\alpha=\beta=0$ and $\gamma=\delta=1$, corresponding to the propositions $Q_{11}=1$: ``the spins are correlated in $x$-direction"; $Q_{22}=1$: ``the spins are correlated in $y$-direction"; and $Q_{33}=0$: ``the spins are anti-correlated in $z$-direction".\footnote{Similarly, $|\psi\rangle=\alpha\,|z_-z_-\rangle+\beta\,|z_+z_+\rangle=\f{\alpha+\beta}{2}(|x_+x_+\rangle+|x_-x_-\rangle)+\f{\beta-\alpha}{2}(|x_+x_-\rangle+|x_-x_+\rangle)=\f{\alpha+\beta}{2}(|y_+y_-\rangle+|y_-y_+\rangle)+\f{\beta-\alpha}{2}(|y_+y_+\rangle+|y_-y_-\rangle)$. But $|x_+x_+\rangle+|x_-x_-\rangle=|y_+y_-\rangle+|y_-y_+\rangle$ and $|y_+y_+\rangle+|y_-y_-\rangle=|x_+x_-\rangle+|x_-x_+\rangle$. Hence, once $Q_{33}=1$, one indeed gets $Q_{11}\leftrightarrow Q_{22}=0$ even though $\alpha,\beta$ are unspecified such that $Q_{11},Q_{22}$ may be unknown.} But no quantum state exists such that the spins are also correlated in $z$-direction if they are correlated in $x$- and $y$-direction (this would be mirror quantum theory). Every other triangle in the lattice corresponds similarly to four pure quantum states (representing the answer configurations `yes-yes', `yes-no', `no-yes', `no-no' to the two independent questions per triangle).

Finally, we also collect the results on the compatibility, complementarity and correlation structure of two rebits, derived in lemmas \ref{lem1}, \ref{lem3} and \ref{lem4}, theorem \ref{thm_rebit} and equation (\ref{qbit7}), in a lattice structure in figure \ref{fig_corrd2}.
\begin{figure}[hbt!]
\begin{center}
\psfrag{+}{$+$}
\psfrag{-}{\hspace*{-.1cm}$-$}
\psfrag{1}{$Q_1$}
\psfrag{2}{$Q_2$}
\psfrag{3}{$Q_3$}
\psfrag{1p}{$Q'_1$}
\psfrag{2p}{$Q'_2$}
\psfrag{3p}{$Q'_3$}
\psfrag{11}{$Q_{22}$}
\psfrag{22}{$Q_{11}$}
\psfrag{12}{$Q_{12}$}
\psfrag{33}{$Q_{33}$}
\psfrag{13}{$Q_{1}'$}
\psfrag{21}{$Q_{21}$}
\psfrag{23}{$Q_{2}'$}
\psfrag{31}{$Q_{1}$}
\psfrag{32}{$Q_{2}$}
\psfrag{i}{\hspace*{-.45cm}\footnotesize identify}
{\includegraphics[scale=.28]{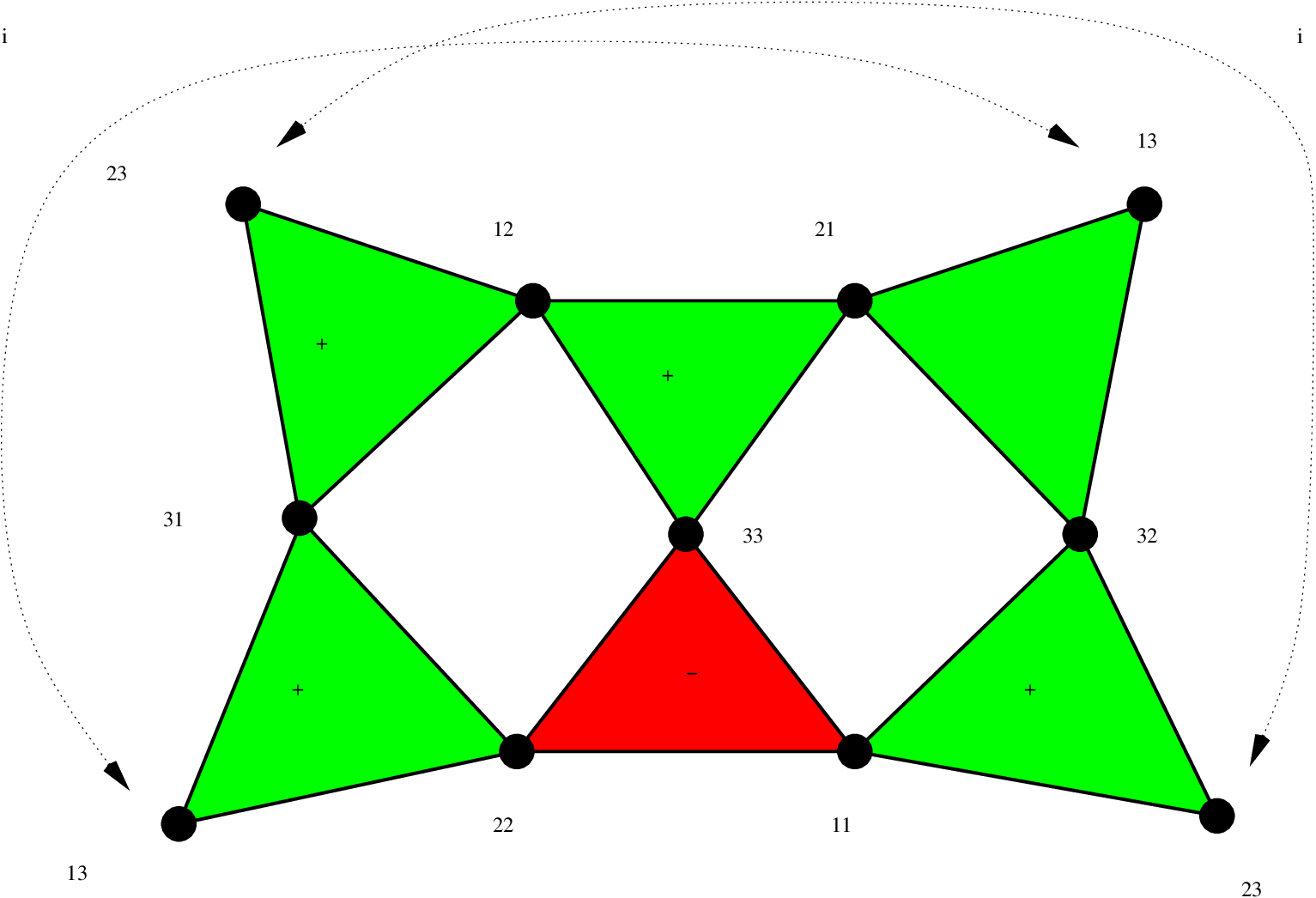}}
\caption{\small A lattice representation of the complete compatibility, complementarity and correlation structure of the informationally complete set $\cq_{M_2}$ for {\bf two rebits}. The explanations of the caption of figure \ref{fig_corr} also apply here. }\label{fig_corrd2}
\end{center}
\end{figure}
There are six triangles and $6\times3=18$ edges representing compatibility relations. Every question resides in exactly two triangles and is thereby maximally compatible with four and maximally complementary to four other questions. As in the qubit case in figure \ref{fig_corr}, every triangle is connected by an edge to every other vertex in the lattice, in conformity with rule \ref{unlim} asserting that $O$ shall not experience a net loss of information by asking further questions.

\subsection{The general case of $N>3$ gbits}\label{sec_ngbits}

We are now prepared to investigate the independence, compatibility and correlation structure ensuing from rules \ref{lim} and \ref{unlim} on a general $\cq_{N}$, i.e.\ of $O$'s possible questions to a system $S$ composed of $N>3$ gbits. In particular, we shall exhibit an informationally complete set $\cq_{M_N}$ for both qubits and rebits. $O$ can view the $N$ gbits as a composite system in many different ways: as $N$ individual gbits, as one individual gbit and a composite system of $N-1$ gbits, as a system composed of $2$ gbits and a system composed of $N-2$ gibts, and so on. All these different compositions yield, of course, the same question structure. It is simplest to interpret the $N$ gbits recursively as being composed of a composite system of $N-1$ gbits and a new individual gbit. Definition \ref{def_comp} of a composite system then implies that $\cq_{N}$ must contain (1) the questions of $\cq_{{N-1}}$ for $N-1$ gbits, (2) the set $\cq_1$ of the new gbit, and (3) all logical connectives of the maximally compatible questions of those two sets. This entails slightly different repercussions for qubits and rebits.

\subsubsection{An informationally complete set and entanglement for $N>3$ qubits}\label{sec_nqbits}

 A natural candidate for an informationally complete question set is given by the set of all possible XNOR conjunctions of the individual questions of the $N$ gbits 
  \ba
 Q_{\mu_1\mu_2\cdots \mu_N}:=Q_{\mu_1}\leftrightarrow Q_{\mu_2}\leftrightarrow\cdots\leftrightarrow Q_{\mu_N}\label{ngbit}
 \ea
(we recall from (\ref{truth}) that the logical connective yielding independent questions is either $\leftrightarrow$ or $\oplus$). Here we have introduced a new index notation: $\mu_a$ is the question index for gbit $a\in\{1,\ldots,N\}$ and can take the values $0,1,2,3$. As before the index values $i=1,2,3$ correspond to the individuals $Q_{1_a},Q_{2_a},Q_{3_a}$. On the other hand, the index value $\mu_a=0$ implies that none of the three individual questions of gbit $a$ appears in the conjunction (\ref{ngbit}), i.e.\ always $Q_{0_a}\equiv1$. For instance, 
\ba
Q_{100000\cdots000}:=Q_{1_1},\q\,\, Q_{0003020\cdots0}:=Q_{3_4}\leftrightarrow Q_{2_6},\nn
\ea
 etc. Note that $Q_{000\cdots000}$ corresponds to no question. The set (\ref{ngbit}) thus contains all individual, bipartite, tripartite, and up to $N$-partite correlation questions for $N$ gbits. We emphasize that, due to the special multipartite structure (and associativity) of the XNOR, a question such as $Q_{111\cdots111}$ does {\it not} incarnate the question `are the answers to $Q_{1_1},Q_{1_2},\cdots,Q_{1_N}$ all the same?'. For example, for $N=4$, $Q_{1_1}=Q_{1_2}=0$ and $Q_{1_3}=Q_{1_4}=1$ also yield $Q_{1_11_21_31_4}=1$. This is important for the entanglement structure.
 
 We begin with an important result.
 \begin{lem}\label{lem_Nq0}
 The $4^N-1$ questions\footnote{We deduct the trivial question $Q_{000\cdots000}$. Obviously, one arrives at the same number by counting the distribution of individuals, bipartite,...and $N$-partite correlations over $N$ qubits as a binomial series $\sum_{k=1}^{N}\binom{N}{k}\,3^k=(3+1)^N-1$.} $Q_{\mu_1\cdots\mu_N}$, $\mu=0,1,2,3$, are pairwise independent.
 \end{lem}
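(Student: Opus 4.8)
The plan is to prove pairwise independence of the questions $Q_{\mu_1\cdots\mu_N}$ by induction on $N$, leveraging the recursive composite-system structure and the complementarity/compatibility lemmas already established for $N=1,2,3$. The base cases $N=1,2,3$ are covered by the earlier results (in particular lemmas \ref{lem2}, \ref{lem8}, \ref{lem9} and their rebit analogues, together with theorems \ref{thm_qubit}, \ref{thm_qubit2}). For the inductive step I assume all $4^{N-1}-1$ questions of the form $Q_{\mu_1\cdots\mu_{N-1}}$ on the first $N-1$ qubits are pairwise independent, and I view the $N$-qubit system as the composite of this $(N-1)$-qubit system with one new qubit labelled $N$, so that every $Q_{\mu_1\cdots\mu_N}$ is either (i) a question purely on the first $N-1$ qubits ($\mu_N=0$), (ii) an individual question $Q_{i_N}$ of the new qubit, or (iii) an XNOR of a nontrivial $Q_{\mu_1\cdots\mu_{N-1}}$ with some $Q_{i_N}$.

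The first step is to observe that any question $Q_{\mu_1\cdots\mu_N}$ with $\mu_N=i\neq0$ can, by associativity of $\leftrightarrow$, be written as $Q_{\mu_1\cdots\mu_{N-1}0}\leftrightarrow Q_{i_N}$, where $Q_{\mu_1\cdots\mu_{N-1}0}$ is a (nontrivial or trivial) question on the first $N-1$ qubits and $Q_{i_N}$ is an individual of the last qubit; since questions of the first block are all compatible with individuals of qubit $N$ (definition \ref{def_comp}), this XNOR is well-formed. Then the key lemma to establish is a compatibility/complementarity dichotomy: $Q_{\mu_1\cdots\mu_N}$ and $Q_{\nu_1\cdots\nu_N}$ are complementary if for at least one block index $a$ we have $\mu_a\neq0$, $\nu_a\neq0$ and $\mu_a\neq\nu_a$ (``clashing at a qubit''), and compatible otherwise — this generalizes lemmas \ref{lem3}, \ref{lem_tri3}, \ref{lem5}. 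This is proven exactly as in those lemmas: if the two questions clash at qubit $a$, pick the individual $Q_{i_a}$ appearing in the first; it is compatible with and independent of both questions (using the induction hypothesis plus the fact that individuals of one qubit are compatible with everything not constrained on that qubit), and by assumption \ref{assump5b} asking it cannot disturb either; but knowing $Q_{\mu_1\cdots\mu_N}$ and $Q_{i_a}$ forces, via associativity of the XNOR, maximal information about a question on the remaining qubits that is complementary to the individual $Q_{j_a}$ ($j\neq i$) hiding inside $Q_{\nu_1\cdots\nu_N}$, a contradiction. Conversely, non-clashing pairs are shown compatible by the same argument used in lemma \ref{lem3}: asking one of them first saturates part of the information budget, and asking the other, being a fresh independent \texttt{bit}, cannot by principle \ref{unlim} cause a net loss, so both remain known.

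Granting that dichotomy, pairwise independence follows as in lemmas \ref{lem2}, \ref{lem8}, \ref{lem9}: given distinct $Q_{\mu_1\cdots\mu_N}$ and $Q_{\nu_1\cdots\nu_N}$, since they are distinct there is some block index $a$ where $\mu_a\neq\nu_a$; if they clash at $a$ they are complementary hence independent by definition, and if they do not clash at $a$ then one of $\mu_a,\nu_a$ is $0$ — say $\mu_a=0$, $\nu_a=i\neq0$ — and one selects a question $R$ on the remaining structure that is compatible with $Q_{\mu_1\cdots\mu_N}$ but complementary to the individual $Q_{i_a}$ inside $Q_{\nu_1\cdots\nu_N}$, so that $O$ knowing $Q_{\mu_1\cdots\mu_N}$ together with the answer to $Q_{\nu_1\cdots\nu_N}$ would imply knowledge of two complementary things; the precise wording mirrors the independence argument in the proof of lemma \ref{lem2} (dependence would require the answer to one to always imply at least partial knowledge of the other, contradicting the complementarity just exhibited), and symmetry of independence finishes it. I expect the main obstacle to be the careful bookkeeping in the clashing-at-a-qubit argument: one must verify that the auxiliary individual question is genuinely pairwise independent of \emph{both} multipartite questions (not merely of each block separately), which requires the full induction hypothesis on the $(N-1)$-qubit system rather than just the $N=2,3$ results, and one must track the XNOR associativity rearrangements so that at every intermediate step only mutually compatible questions are being connected, in accordance with assumption \ref{assump4}.
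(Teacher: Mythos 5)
There is a genuine error in your key intermediate lemma. The dichotomy you propose -- ``$Q_{\mu_1\cdots\mu_N}$ and $Q_{\nu_1\cdots\nu_N}$ are complementary as soon as they clash at some qubit, and compatible otherwise'' -- is false, and in fact contradicts the very results you claim it generalizes: by lemma \ref{lem3}, $Q_{11}$ and $Q_{22}$ clash at \emph{both} qubits and are nevertheless compatible, and more generally (lemma \ref{lem_Nq}, proven in the paper \emph{after} the present lemma) the correct criterion is parity -- compatible iff the index sets differ in an \emph{even} number of non-zero indices. Your proof of the dichotomy breaks down exactly where it must: for a clashing pair such as $Q_{11},Q_{22}$ the auxiliary individual $Q_{\mu_a}=Q_1$ is \emph{not} compatible with both questions (it is complementary to $Q_{22}$ by lemma \ref{lem1}), so assumption \ref{assump5b} cannot be invoked to interrogate it without disturbance and no contradiction arises. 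Consequently the first branch of your case analysis, ``if they clash at $a$ they are complementary hence independent by definition,'' is unjustified for all clashing pairs that differ in an even number of non-zero indices.

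The repair is to drop the dichotomy (and the induction, which is not needed here) and argue exactly as in your second branch -- which is the paper's proof: for any two distinct questions pick a slot $a$ with $\mu_a\neq\nu_a$; if $\mu_a,\nu_a\neq0$ take $R=Q_{\mu_a}$, and if $\mu_a=0$, $\nu_a\neq0$ take $R=Q_{i_a}$ with $i\neq0,\nu_a$. In either case $R$ is compatible with $Q_{\mu_1\cdots\mu_N}$ (it appears in it, or that question does not involve qubit $a$) and complementary to $Q_{\nu_1\cdots\nu_N}$ (an individual of qubit $a$ is complementary to any correlation question whose $a$-th index is non-zero and different, the $N$-partite analogue of lemmas \ref{lem1} and \ref{lem_tri1}). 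The lemma \ref{lem2} argument then yields independence directly: there is a state in which $Q_{\mu_1\cdots\mu_N}$ and $R$ are maximally known, hence nothing can be known about $Q_{\nu_1\cdots\nu_N}$, so the answer to the former cannot even partially determine the latter. Note also that your phrasing ``knowing both would imply knowledge of two complementary things'' is the template for a \emph{complementarity} proof, not an independence proof; independence requires exhibiting a state of maximal knowledge of one question and total ignorance of the other, as above.
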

 
 \begin{proof}
 Consider $Q_{\mu_1\cdots\mu_N}$ and $Q_{\nu_1\cdots\nu_N}$. The two questions must disagree in at least one index otherwise they would coincide. Let the questions differ on the index of gbit $a$, i.e.\ $\mu_a\neq\nu_a$. Suppose $\mu_a,\nu_a\neq0$. Then $Q_{\mu_a}$ is maximally compatible with  $Q_{\mu_1\cdots\mu_a\cdots\mu_N}=Q_{\mu_1}\leftrightarrow\cdots\leftrightarrow Q_{\mu_a}\leftrightarrow\cdots\leftrightarrow Q_{\mu_N}$ and maximally complementary to $Q_{\nu_1\cdots\nu_a\cdots\nu_N}=Q_{\mu_1}\leftrightarrow\cdots\leftrightarrow Q_{\nu_a}\leftrightarrow\cdots\leftrightarrow Q_{\nu_N}$ since $Q_{\mu_a},Q_{\nu_a}$ are maximally complementary. Using the same argument as in the proof of lemma \ref{lem2} this implies independence of $Q_{\mu_1\cdots\mu_N}$ and $Q_{\nu_1\cdots\nu_N}$. Lastly, suppose now $\mu_a=0$ and $\nu_a\neq0$ (as we have $\mu_a\neq\nu_a$ not both can be $0$). Then $Q_{i_a\neq\nu_a}$ with $i\neq0$ is maximally compatible with $Q_{\mu_1\cdots\mu_N}$ and maximally complementary to $Q_{\nu_1\cdots\nu_N}$. By the same argument, this again implies independence of $Q_{\mu_1\cdots\mu_N}$ and $Q_{\nu_1\cdots\nu_N}$.
  \end{proof}
 
Consequently, the set (\ref{ngbit}) will be part of an informationally complete set. We note that a hermitian matrix of trace equal to $1$ on a $2^N$-dimensional complex Hilbert space (i.e.\ qubit density matrix) is described by $4^N-1$ parameters. 

Next, we must elucidate the compatibility and complementarity structure of this set.
 
\begin{lem}\label{lem_Nq}
 $Q_{\mu_1\cdots\mu_N}$ and $Q_{\nu_1\cdots\nu_N}$ are
 \begin{description}
 \item[maximally compatible] if the index sets $\{\mu_1,\ldots,\mu_N\}$ and $\{\nu_1,\ldots,\nu_N\}$ differ by an {\bf even} number (incl.\ $0$) of non-zero indices, \vspace*{-.1cm}
 \item[maximally complementary] if the index sets $\{\mu_1,\ldots,\mu_N\}$ and $\{\nu_1,\ldots,\nu_N\}$ differ by an {\bf odd} number of non-zero indices.
  \end{description}
 \end{lem}
 For example, for $N=2$, $Q_{11}$ and $Q_{22}$ differ by two non-zero indices and are thus maximally compatible. By contrast, $Q_{10}=Q_1$ and $Q_{22}$ differ by an odd number of non-zero indices and are thereby maximally complementary.
 
 \vspace*{-.2cm}
\begin{proof}
Let  $Q_{\mu_1\cdots\mu_N}$ and $Q_{\nu_1\cdots\nu_N}$ disagree in an {\bf odd} number, call it $2n+1$, of non-zero indices. We can always reshuffle the index labeling of the $N$ qubits such that now $a=1,\ldots,2n+1$ corresponds to the qubits on which $Q_{\mu_1\cdots\mu_N}$ and $Q_{\nu_1\cdots\nu_N}$ differ by non-zero indices, i.e.\ $\mu_a\neq\nu_a$ and $\mu_a,\nu_a\neq0$. The remaining qubits labeled by $b=2n+2,\ldots,N$ are then such that $Q_{\mu_1\cdots\mu_N}$ and $Q_{\nu_1\cdots\nu_N}$ either agree on the non-zero index, $\mu_b=\nu_b\neq0$ or at least one of $\mu_b,\nu_b$ is $0$. That is, after the reshuffling the index labeling, we can write the questions as
\begin{widetext}
\ba
Q_{\mu_1\cdots\mu_N}&=&\underset{\text{disagreement}}{\underbrace{(Q_{\mu_1}\leftrightarrow\cdots\leftrightarrow Q_{\mu_{2n+1}})}}\leftrightarrow \underset{\text{maximally compatible}}{\underbrace{(Q_{\mu_{2n+2}}\leftrightarrow \cdots\leftrightarrow Q_{\mu_N})}}\nn\\
Q_{\nu_1\cdots\nu_N}&=&\overbrace{(Q_{\nu_1}\leftrightarrow\cdots\leftrightarrow Q_{\nu_{2n+1}})}\,\,\leftrightarrow \overbrace{(Q_{\nu_{2n+2}}\leftrightarrow \cdots\leftrightarrow Q_{\nu_N})}.\label{ngbit2}
\ea
\end{widetext}
The parts of the questions where the index sets either agree or feature zeros coincide with $Q_{\mu_{2n+2}\cdots\mu_N}$ and $Q_{\nu_{2n+1}\cdots\nu_N}$ and are clearly maximally compatible (dropping here zero indices).

We can now proceed by induction. Lemmas \ref{lem1}, \ref{lem3}, \ref{lem5}--\ref{lem_tri3} imply that the statement of this lemma is correct for $n=0,1$. Let the statement therefore be true for $n$ and consider $n+1$. Then, (\ref{ngbit2}) reads
\begin{widetext}
\ba
Q_{\mu_1\cdots\mu_N}\!\!&=&\!\!\underset{\text{disagreement}}{\underbrace{(Q_{\mu_1\cdots\mu_{2n+3}})}}\leftrightarrow \underset{\text{maximally compatible}}{\underbrace{Q_{\mu_{2n+4}\cdots\mu_N}}}=\underset{\text{maximally complementary}}{\underbrace{(Q_{\mu_1}\leftrightarrow\cdots\leftrightarrow Q_{\mu_{2n+1}})}}\leftrightarrow \underset{\text{disagreement}}{\underbrace{(Q_{\mu_{2n+2}\mu_{2n+3}})}}\leftrightarrow \underset{\text{max.\ compat.}}{\underbrace{Q_{\mu_{2n+4}\cdots\mu_N}}}\nn\\
Q_{\nu_1\cdots\nu_N}\!\!&=&\!\!\overbrace{(Q_{\nu_1\cdots\nu_{2n+3}})}\,\leftrightarrow\,\,\q \overbrace{Q_{\nu_{2n+4}\cdots\nu_N}}\,\,\q={\overbrace{(Q_{\nu_1}\leftrightarrow\cdots\leftrightarrow Q_{\nu_{2n+1}})}}\,\leftrightarrow {\overbrace{(Q_{\nu_{2n+2}\nu_{2n+3}})}}\,\,\leftrightarrow {\overbrace{Q_{\nu_{2n+4}\cdots\nu_N}}}.\nn
\ea  
\end{widetext}
But, by lemma \ref{lem3}, $Q_{\mu_{2n+2}\mu_{2n+3}}$ and $Q_{\nu_{2n+2}\nu_{2n+3}}$ are maximally compatible with each other and therefore also with $Q_{\mu_1\cdots\mu_N}$ and $Q_{\nu_1\cdots\nu_N}$. This implies that both $Q_{\mu_1\cdots\mu_N}$ and $Q_{\nu_1\cdots\nu_N}$ are maximally compatible with and, thanks to lemma \ref{lem_Nq0}, independent of, e.g., $Q_{\mu_{2n+2}\cdots\mu_{N}}$. Complementarity of $Q_{\mu_1\cdots\mu_N}$ and $Q_{\nu_1\cdots\nu_N}$ now follows from lemma \ref{lem0} and noting that $Q_{\mu_1\cdots\mu_{2n+1}}$ and $Q_{\nu_1\cdots\nu_{N}}$ are maximally complementary because they disagree in $2n+1$ non-zero indices (for which, by assumption, the statement of the lemma holds). %Similarly, if $Q_{\mu_1\cdots\mu_N}$ and $Q_{\nu_1\cdots\nu_N}$ were partially compatible, $O$ could obtain maximal information about either of $Q_{\mu_1\cdots\mu_{2n+1}}$ or $Q_{\nu_1\cdots\nu_{2n+1}}$ and still have partial information about the other which is illegal. Accordingly, $Q_{\mu_1\cdots\mu_N}$ and $Q_{\nu_1\cdots\nu_N}$ must be complementary.  
 
 Finally, let $Q_{\alpha_1\cdots\alpha_N}$ and $Q_{\beta_1\cdots\beta_N}$ disagree in an {\bf even} number $2n$ of non-zero indices. Using an analogous reshuffling of the index labeling as in the odd case above, one can rewrite the two questions as 
 \begin{widetext}
 \ba
Q_{\alpha_1\cdots\alpha_N}&=&\underset{\text{differ}}{\underbrace{Q_{\alpha_1\alpha_2}}}\leftrightarrow\underset{\text{differ}}{\underbrace{Q_{\alpha_3\alpha_4}}}\leftrightarrow\cdots\leftrightarrow \underset{\text{differ}}{\underbrace{Q_{\alpha_{2n-1}\alpha_{2n}}}}\leftrightarrow \underset{\text{maximally compatible}}{\underbrace{Q_{\alpha_{2n+1}\cdots\alpha_N}}}\nn\\
Q_{\beta_1\cdots\beta_N}&=&{\overbrace{Q_{\beta_1\beta_2}}}\,\leftrightarrow{\overbrace{Q_{\beta_3\beta_4}}}\,\leftrightarrow\cdots\leftrightarrow {\overbrace{Q_{\beta_{2n-1}\beta_{2n}}}}\,\leftrightarrow\q\,\,{\overbrace{Q_{\beta_{2n+1}\cdots\beta_N}}}.\label{ngbit3}
\ea 
\end{widetext}
That is, one can decompose the disagreeing parts of the questions into bipartite correlations. But thanks to lemma \ref{lem3} two bipartite correlations of the same qubit pair are maximally compatible if and only if they differ in both indices. Consequently, all the pairings of question components of the upper and lower line, as written in (\ref{ngbit3}), are maximally compatible and, hence, so must be $Q_{\alpha_1\cdots\alpha_N}$ and $Q_{\beta_1\cdots\beta_N}$. 
  \end{proof}

Fortunately, it turns out that the $4^N-1$ questions (\ref{ngbit}) are logically closed under the XNOR.% and thus form an informationally complete set.

\begin{Theorem}\label{thm_qubit3}{\bf(Qubits)}
The $4^N-1$ questions $Q_{\mu_1\cdots\mu_N}$, $\mu=0,1,2,3$, are logically closed under $\leftrightarrow$ and thus form an informationally complete set $\cq_{M_N}$ with $D_N=4^N-1$ for the case $D_1=3$.
\end{Theorem}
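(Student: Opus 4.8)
The plan is to prove logical closure under $\leftrightarrow$ by showing that the XNOR of any two questions from the set $\{Q_{\mu_1\cdots\mu_N}\}$ is again of this form (possibly up to a negation), and that informational completeness then follows from Lemma \ref{lem_Nq0} (pairwise independence), the count $4^N-1$ matching $D_N$, and the fact that $\leftrightarrow/\oplus$ is the only connective generating new independent questions (table (\ref{truth})). First I would observe the purely Boolean identity: for any $a\in\{1,\ldots,N\}$, $Q_{\mu_a}\leftrightarrow Q_{\mu_a}=1$, so when we form $Q_{\mu_1\cdots\mu_N}\leftrightarrow Q_{\nu_1\cdots\nu_N}$ and expand both sides as XNOR chains of individual questions, every qubit index $a$ on which $\mu_a=\nu_a\neq0$ contributes a trivial factor $1$ and drops out, every qubit on which exactly one of $\mu_a,\nu_a$ is nonzero contributes that single individual question, and every qubit on which $\mu_a\neq\nu_a$ are both nonzero contributes a bipartite-within-one-qubit pair $Q_{\mu_a}\leftrightarrow Q_{\nu_a}$. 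The only subtlety is these last factors: $Q_{\mu_a},Q_{\nu_a}$ are complementary individual questions of a single qubit ($D_1=3$), so their XNOR is not literally an individual question but — by theorem \ref{thm_3d}, applied to the single-qubit complementary triple $Q_1,Q_2,Q_3$ — equals $Q_{\rho_a}$ or $\neg Q_{\rho_a}$ where $\rho_a$ is the third index. Hence the full product collapses to $Q_{\lambda_1\cdots\lambda_N}$ or $\neg Q_{\lambda_1\cdots\lambda_N}$ for a well-defined index string $\lambda$, which is in the set.

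The key steps in order: (1) Record that $\leftrightarrow$ is associative and symmetric in Boolean logic, and that assumptions \ref{assump4} and \ref{assump4b} permit the formal manipulations here because, although complementary individual questions of one qubit appear, they are never \emph{directly} connected — the hierarchy of XNORs is always executed so that at each stage compatible subexpressions are combined (this is exactly the situation discussed after assumption \ref{assump4}, and it is the content of theorems \ref{thm_qubit}, \ref{thm_qubit2}, which I may invoke). (2) Decompose both questions into chains of individual questions via (\ref{ngbit}), regroup per qubit, and cancel the $\mu_a=\nu_a$ factors. (3) For each qubit where $\mu_a\neq\nu_a$ are both nonzero, invoke theorem \ref{thm_3d} / theorem \ref{thm_qubit} for $N=1$: $Q_i\leftrightarrow Q_j = Q_k$ or $\neg(Q_i\leftrightarrow Q_j)=Q_k$ for $\{i,j,k\}=\{1,2,3\}$, absorbing negations. (4) Conclude the XNOR lies in $\{Q_{\lambda_1\cdots\lambda_N}\}$ up to overall negation — which, since $\neg Q$ carries the same information as $Q$, does not enlarge the independent content. (5) Since table (\ref{truth}) shows $\leftrightarrow$ (equiv. $\oplus$) is the only connective producing new pairwise-independent questions, and since $\wedge,\vee$ etc.\ produce dependent questions, the set $\{Q_{\mu_1\cdots\mu_N}\}$ together with the trivial/negated questions is logically closed; by Lemma \ref{lem_Nq0} it is pairwise independent, and by definition \ref{def_comp} and assumption \ref{assump6} it is therefore an informationally complete $\cq_{M_N}$, giving $D_N = 4^N-1$.

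The main obstacle I anticipate is \emph{not} the algebra — which is a routine induction/bookkeeping on index strings — but carefully justifying that the manipulations respect the operational logic constraints (assumptions \ref{assump4}, \ref{assump4b}): one must argue that every XNOR regrouping used in step (2)–(3) can be arranged as a hierarchy of pairwise-compatible combinations, never directly connecting two complementary questions. The way to handle this is to note that one can always split the index positions into the "cancelling" set (where factors are $1$ and hence trivially compatible with everything), the "single" set, and the "flip" set, and combine the flip-set pairs \emph{first} within each qubit (these are single-qubit pairs, and $Q_i\leftrightarrow Q_j$ for complementary $Q_i,Q_j$ is a legitimate composite question by theorem \ref{thm_qubit}'s logic even though $Q_i,Q_j$ are complementary — it is precisely a question of the form "are the answers the same", which for a single qubit is $Q_k$), then combine across qubits where everything in sight is now mutually compatible. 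This mirrors exactly the closure arguments already carried out for $N=2,3$ in theorems \ref{thm_qubit}, \ref{thm_qubit2}, so the induction step for general $N$ reduces to quoting lemma \ref{lem_Nq} (compatibility structure) plus those base cases, and the proof is short.
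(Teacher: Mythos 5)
There is a genuine gap, and it sits exactly at your step (3) and in the way you propose to dodge the operational-logic obstacle. You reduce each ``flip'' position to a \emph{single-qubit} relation $Q_{\mu_a}\leftrightarrow Q_{\nu_a}=Q_{\rho_a}$ or $\neg Q_{\rho_a}$, citing theorems \ref{thm_3d} and \ref{thm_qubit}. No such relation exists in this framework: $Q_{\mu_a}$ and $Q_{\nu_a}$ are complementary individual questions of one qubit, so by assumption \ref{assump4} the connective $Q_{\mu_a}\leftrightarrow Q_{\nu_a}$ is not even implementable in $O$'s model (this is precisely the ``correlation of complementary questions'' excluded in section \ref{sec_ri}; it is the Spekkens-type ontic relation the paper deliberately avoids when arguing for $D_1=3$). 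Theorem \ref{thm_3d} only bounds $D_1$ via a two-qubit argument, and theorem \ref{thm_qubit} relates \emph{bipartite correlation questions of a qubit pair} (which are compatible by lemma \ref{lem3}), not individuals of a single qubit -- and the claimed identity is also false in qubit quantum theory itself, where no observable is a Boolean function of the outcomes of two complementary ones. Likewise, your step (2) ``expand into individuals and regroup per qubit'' is exactly the classical regrouping of XNOR chains containing complementary questions that assumption \ref{assump4b} forbids; section \ref{sec_bell} shows such regroupings produce false identities (compare (\ref{noclassfo}) with the correct relation (\ref{qbit7})), so even the sign bookkeeping in your ``up to negation'' cannot be controlled this way.

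The paper's proof closes the set differently: it proceeds by induction on $N$, so that any conjunction in which the two index strings share a common nonzero index, or in which either question has a zero index, reduces (by cancelling $Q_{i_a}\leftrightarrow Q_{i_a}=1$) to the $N-1$ case; in the remaining case, where two $N$-partite questions disagree on an even number $2n$ of nonzero indices and agree on none, the disagreeing positions are paired \emph{across two different qubits} into bipartite correlations $Q_{i_ai_b}$, $Q_{\nu_a\nu_b}$ of the same pair. These are compatible by lemma \ref{lem3}, and theorem \ref{thm_qubit} replaces each such pair by another bipartite correlation of that pair (up to negation), yielding again a question of the form (\ref{ngbit}). If you repair your argument, it should be along these lines: keep all intermediate subexpressions composed of mutually compatible questions (bipartite blocks across qubits, never within-qubit connectives), and invoke lemma \ref{lem_Nq} plus the $N=2,3$ closure theorems inside an induction, rather than a per-qubit collapse that the framework's rules of inference do not permit.
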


\begin{proof}
We shall prove the statement by induction. The statement is trivially true for $N=1$ and, by theorems \ref{thm_qubit} and \ref{thm_qubit2}, holds also for $N=2,3$. Let the statement therefore be true for $N-1$ and consider a composite system of $N$ qubits. Since $O$ can treat the $N$ qubit system in many different ways as a composite system and the statement holds for $N-1$, all XNOR conjunctions of questions involving at least one zero index will be contained in (\ref{ngbit}). We thus only need to show that all $\leftrightarrow$ conjunctions involving at least one $N$-partite correlation $Q_{i_1\cdots i_N}$, $i_a\neq0$ $\forall\,a=1,\ldots,N$, produce questions already included in (\ref{ngbit}). 

Consider $Q_{i_1\cdots i_N}$ and $Q_{\nu_1\cdots\nu_N}$. Lemma \ref{lem_Nq} implies that these two questions are maximally compatible -- and can thus be connected by $\leftrightarrow$ -- if and only if $\{i_1,\ldots,i_N\}$ and $\{\nu_1,\ldots,\nu_N\}$ disagree in an even number of non-zero indices. There are now two cases that we must consider:

(a) Suppose $Q_{i_1\cdots i_N}$ and $Q_{\nu_1\cdots\nu_N}$ disagree in an even number of non-zero indices and, furthermore, agree on at least one index $i_a=\nu_a$. Thanks to $Q_{i_a}\leftrightarrow Q_{i_a}=1$, the conjunction then yields
\begin{widetext}
\ba
Q_{i_1\cdots i_a\cdots i_N}\leftrightarrow Q_{\nu_1\cdots i_a\cdots\nu_N}=Q_{i_1\cdots\underset{\text{position }a}{0}\cdots i_N}\leftrightarrow Q_{\nu_1\cdots\underset{\text{position }a}{0}\cdots\nu_N}.\nn
\ea
\end{widetext}
Hence, the two questions on the right hand side both contain less than $N$ non-zero indices such that the result must lie in the set (\ref{ngbit}) because the statement is true up to $N-1$ by assumption.

(b) Suppose $Q_{i_1\cdots i_N}$ and $Q_{\nu_1\cdots\nu_N}$ disagree in an even number $2n$ of non-zero indices and do not agree on {\it any} non-zero index. Reshuffling the index labelings as in the proof of lemma \ref{lem_Nq}, one can then write
\begin{widetext}
 \ba
Q_{i_1\cdots i_N}&=&\,\underset{\text{differ}}{\underbrace{Q_{i_1i_2}}}\,\leftrightarrow\,\underset{\text{differ}}{\underbrace{Q_{i_3i_4}}}\,\,\leftrightarrow\cdots\leftrightarrow \,\underset{\text{differ}}{\underbrace{Q_{i_{2n-1}i_{2n}}}}\,\,\leftrightarrow Q_{i_{2n+1}\cdots i_N}\nn\\
Q_{\nu_1\cdots\nu_N}&=&{\overbrace{Q_{\nu_1\nu_2}}}\,\leftrightarrow{\overbrace{Q_{\nu_3\nu_4}}}\,\leftrightarrow\cdots\leftrightarrow {\overbrace{Q_{\nu_{2n-1}\nu_{2n}}}}\,\,\leftrightarrow{{Q_{0_{2n+1}\cdots0_N}}}.\nn
\ea
\end{widetext}
(We obtain here $Q_{0_{2n+1}\cdots0_N}$ because $Q_{i_1\cdots i_N}$ and $Q_{\nu_1\cdots\nu_N}$ do not agree on any common index and $Q_{i_1\cdots i_N}$ does not feature zero-indices.) By lemma \ref{lem3}, the pairs of bipartite correlations differing in two indices, e.g., $Q_{i_1i_2}$ and $Q_{\nu_1\nu_2}$, are maximally compatible and by theorem \ref{thm_qubit} their XNOR conjunction will yield another bipartite correlation of the same qubit pair. For example, $Q_{i_1i_2}\leftrightarrow Q_{\nu_1\nu_2}$ equals either $Q_{j_1j_2}$ or $\neg Q_{j_1j_2}$ for $j_1\neq i_1,\nu_1$ and $j_2\neq i_2,\nu_2$. Accordingly, up to negation, one finds ($j_a\neq i_a,\nu_a$, $a=1,\ldots,2n$)
\begin{widetext}
\ba
Q_{i_1\cdots i_N}\leftrightarrow Q_{\nu_1\cdots\nu_N}=Q_{j_1j_2}\,\leftrightarrow\,{Q_{j_3j_4}}\,\,\leftrightarrow\cdots\leftrightarrow \,{{Q_{j_{2n-1}j_{2n}}}}\,\,\leftrightarrow Q_{i_{2n+1}\cdots i_N}=Q_{j_1\cdots j_{2n}i_{2n+1}\cdots i_N}\nn
\ea
\end{widetext}
which is another $N$-partite correlation contained in the set (\ref{ngbit}).
\end{proof}

As in the cases $N\leq3$, we could represent the compatibility and complementarity relations for $N>3$ qubits geometrically by a simplicial question graph where a $D$-partite question corresponds to a $(D-1)$-dimensional simplex within the graph ($D\leq N$). The compatibility and complementarity relations of lemma \ref{lem_Nq} then translate into abstract geometric relations according to whether a $D$-simplex and a $D'$-simplex share or disagree on subsimplices. In particular, the criterion that two distinct questions in $\cq_{M_N}$ are maximally compatible if and only if they disagree on an {\it even} number of non-zero indices means geometrically that the two questions are maximally compatible if and only if the two subsimplices in the two question simplices which correspond to the qubits they share in common either
\begin{itemize}
\item[(a)] do not overlap and are {\it odd}-dimensional, or %such that one of them is a subsimplex of the first question simplex and the other is a subsimplex of the second question simplex but none of the two is contained in both, or
\item[(b)] coincide.
\end{itemize} %(Except individual questions which, represented as vertices, do not possess subsimplices.)  
For example, for $N=2$, $Q_{11},Q_{22}$ correspond to two non-intersecting edges, i.e.\ $1$-simplices; they are maximally compatible because they disagree on their $1$-simplices, both involving qubit $1$ and $2$. On the other hand, for $N=3$, $Q_{1_A1_B},Q_{2_B2_C}$ also correspond to two non-intersecting $1$-simplices but are maximally complementary because the two edges involve different qubits -- $A,B$ for the first and $B,C$ for the second edge. 
 
This also helps us to understand entanglement for arbitrarily many qubits. Specifically, maximal entanglement will correspond to $O$ spending the $N$ independent \texttt{bits} he is allowed to acquire about the system $S$ of $N$ qubits on $N$-partite correlation questions. Lemma \ref{lem_Nq} guarantees that for every $N$ there will exist $N$ maximally compatible $N$-partite questions. For instance, there are $\binom{N}{2}$ ways of having $N-2$ of the indices take value $1$ and $2$ indices take the value $2$. Any two of the $\binom{N}{2}$ corresponding questions will disagree in two or four non-zero indices (and agree on the rest) and will thus be maximally compatible (even mutually according to theorem \ref{assump5}). These will correspond to a set of $\binom{N}{2}$ maximally compatible $(N-1)$-simplices in the question graph such that any two of them either disagree on an edge or a tetrahedron. (There will exist even more maximally compatible $N$-partite questions.) For $N\geq3$ it also holds that $\binom{N}{2}\geq N$. 

It is definitely possible to choose $N$ such maximally compatible $N$-partite correlation questions out of the $\binom{N}{2}$ many such that these $N$ questions do not all agree on a single index. For similar reasons to the $N=2,3$ cases, this choice will constitute a mutually independent set such that every individual question $Q_{i_1},\ldots,Q_{i_N}$ will be maximally complementary to at least one of these $N$ $N$-partite questions. (As a consequence of rule \ref{lim}, once the answers to these $N$ $N$-partite questions are known, they will also imply the answers to the $\binom{N}{2}-N$ remaining ones by the same reasoning as in section \ref{sec_3D}.) Accordingly, $O$ can exhaust the information limit with these $N$-partite correlation questions, while not being able to have any information whatsoever about the individuals -- a necessary condition for maximal entanglement.

There will exist many different ways of having such multipartite entanglement for arbitrary $N$. %Moreover, principle \ref{time} will  implies that the total information can be partially distributed over many of the questions in $\cq_{M_N}$. 
One could describe such different ways of entanglement by generalizing the correlation measures (\ref{monomeasure}) and informational monogamy inequalities resulting therefrom. These monogamy inequalities could also be considered as simplicial relations: they restrict the way in which the available (independent and dependent) information can be distributed over the various simplices and subsimplices in the question graph. However, we abstain from analyzing such relations here further.

\subsubsection{An informationally complete set and entanglement for $N>3$ rebits}

We briefly repeat the same procedure for $N$ rebits. In analogy to (\ref{ngbit}), the natural candidate set for an informationally complete $\cq_{M_N}$ will contain
\ba
 Q_{\mu_1\mu_2\cdots \mu_N}&:=&Q_{\mu_1}\leftrightarrow Q_{\mu_2}\leftrightarrow\cdots\leftrightarrow Q_{\mu_N},\nn\\\mu_a&=&0,1,2,\q\q a=1,\ldots,N,\label{nrebit}
\ea 
where the notation should be clear from section \ref{sec_nqbits}. However, being a composite system, by definition \ref{def_comp} we must permit the correlation of correlations $Q_{3_a3_b}$ (\ref{q33re}) for all $a,b\in\{1,\ldots,N\}$ because clearly $O$ is allowed to ask $Q_{1_a},Q_{2_a},Q_{1_b},Q_{2_b}$. Furthermore, thanks to (\ref{re-close}) we have, e.g.,
\ba
Q_{3_13_23_33_4}&=&Q_{3_13_2}\leftrightarrow Q_{3_33_4}=Q_{3_13_3}\leftrightarrow Q_{3_23_4}\nn\\&=&Q_{3_13_4}\leftrightarrow Q_{3_23_3}\nn
\ea
such that no confusion can arise about the meaning of $Q_{3_13_23_33_4}$ although there are no individuals $Q_{3_a}$ into which the question could be decomposed. The same holds similarly for any other even number of indices taking the value $3$. Consequently, the candidate set for $\cq_{M_N}$ can be written as
\begin{widetext}
\ba
\tilde{\cq}_{M_N}:=\big\{Q_{\mu_1\cdots\mu_N},\q\mu_a=0,1,2,3, \q a=1,\ldots,N\q\big|\q\text{only even number of indices taking value $3$}\big\}\nn
\ea
\end{widetext}
with an evident meaning of each such question.

Let us count the number of elements within $\tilde{\cq}_{M_N}$.

\begin{lem}
$\tilde{\cq}_{M_N}$ contains $2^{N-1}(2^N+1)-1$ non-trivial questions.
\end{lem}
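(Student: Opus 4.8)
The plan is to count the questions $Q_{\mu_1\cdots\mu_N}$ with $\mu_a\in\{0,1,2,3\}$ subject to the constraint that the number of indices equal to $3$ is even, and then subtract $1$ for the trivial question $Q_{00\cdots0}$. First I would stratify the count by $k$, the number of positions carrying the index value $3$; by the constraint $k$ ranges over the even integers $0,2,4,\ldots,2\lfloor N/2\rfloor$. For a fixed such $k$, there are $\binom{N}{k}$ ways to choose which positions carry a $3$, and each of the remaining $N-k$ positions independently carries one of the three values $\{0,1,2\}$, giving $3^{N-k}$ choices. Hence the total (including the trivial question) is
\ba
T_N=\sum_{\substack{k=0\\ k\ \mathrm{even}}}^{N}\binom{N}{k}\,3^{N-k}.\nn
\ea

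The key step is then a standard binomial identity: pick out the even-index terms of $(3+x)^N$ by averaging at $x=1$ and $x=-1$. Explicitly,
\ba
\sum_{\substack{k=0\\ k\ \mathrm{even}}}^{N}\binom{N}{k}3^{N-k}
=\tfrac12\Big[(3+1)^N+(3-1)^N\Big]
=\tfrac12\big(4^N+2^N\big)
=2^{2N-1}+2^{N-1}
=2^{N-1}\big(2^N+1\big).\nn
\ea
Subtracting the trivial question $Q_{00\cdots0}$ (the unique element with $k=0$ and all remaining indices $0$) yields $2^{N-1}(2^N+1)-1$, as claimed. I would also remark, as a sanity check, that this reproduces the earlier counts: $N=1$ gives $2^0\cdot3-1=2$; $N=2$ gives $2\cdot5-1=9=D_2$ (theorem \ref{thm_rebit}); $N=3$ gives $4\cdot9-1=35=D_3$ (theorem \ref{thm_rebit2}).

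There is essentially no serious obstacle here; the only point requiring a sentence of justification is that the parametrization is non-redundant, i.e.\ that distinct admissible index tuples $(\mu_1,\ldots,\mu_N)$ label genuinely distinct questions in $\tilde{\cq}_{M_N}$, and that every question of the stated form is accounted for exactly once. This is immediate from the definition of $\tilde{\cq}_{M_N}$: a tuple with an even number of $3$'s unambiguously specifies the XNOR conjunction of the corresponding individual questions $Q_{i_a}$ for $i_a=1,2$ together with the well-defined correlation-of-correlations blocks built from the pairs of $3$'s (well-definedness of the latter being exactly the content of (\ref{re-close}), which guarantees the value is independent of how the $3$-indexed positions are paired up). Thus the map from admissible tuples to questions is a bijection onto $\tilde{\cq}_{M_N}$, and the arithmetic above completes the proof.
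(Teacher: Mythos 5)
Your proof is correct and is essentially the paper's own argument: both hinge on the parity-splitting identity obtained from $(3+1)^N=4^N$ and $(3-1)^N=2^N$, the only cosmetic difference being that you count the even-number-of-$3$'s tuples $E_N=\tfrac12(4^N+2^N)$ directly and subtract the trivial question, whereas the paper subtracts the odd count $O_N=\tfrac12(4^N-2^N)$ from $4^N-1$. Your added remarks (the non-redundancy of the index parametrization via (\ref{re-close}) and the sanity checks against $D_1,D_2,D_3$) are consistent with the paper and harmless.
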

\begin{proof}
If arbitrary distributions of the values $\mu_a=0,1,2,3$ over the $a=1,\ldots,N$ were permitted, we would obtain $4^N-1$ non-trivial questions upon subtracting $Q_{0_10_2\cdots0_N}$ as in the qubit case. In order to obtain the number of questions within $\tilde{\cq}_{M_N}$ we thus still have to subtract all the possible ways of distributing an odd number of $3$'s over the $N$ indices. There are precisely
\ba
O_N:=&&\binom{N}{1}\,3^{N-1}+\binom{N}{3}\,3^{N-3}\nn\\&&\q\q\q\q+\cdots+\binom{N}{2n+1}\,3^{N-(2n+1)}\nn
\ea
such ways, where $2n+1$ is the largest odd number smaller or equal to $N$. Similarly, the number of ways an even number of $3$'s can be distributed over $N$ indices is given by
\ba
E_N:=3^N+\binom{N}{2}\,3^{N-2}+\cdots+\binom{N}{2m}\,3^{N-2m},\nn
\ea
where $2m$ is the largest even number smaller or equal to $N$. We then have
\ba
E_N+O_N&=&(3+1)^N=4^N,\nn\\ E_N-O_N&=&(3-1)^N=2^N\nn
\ea
and thus
\ba
O_N=\f{1}{2}(4^N-2^N)\nn
\ea
which yields
\ba
4^N-1-O_N=2^{N-1}(2^N+1)-1\nn
\ea
non-trivial questions in $\tilde{\cq}_{M_N}$.
\end{proof}

We note that the number of parameters in a symmetric matrix with trace equal to $1$ on a $2^N$-dimensional real Hilbert space (i.e.\ rebit density matrix) is precisely $\f{1}{2}\,2^N(2^N+1)-1$.

Next, we assert pairwise independence as required for an informationally complete set.
\begin{lem}
The $2^{N-1}(2^N+1)-1$ non-trivial questions in $\tilde{\cq}_{M_N}$ are pairwise independent.
\end{lem}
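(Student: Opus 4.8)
The plan is to mimic the structure of the proof of Lemma~\ref{lem_Nq0}, which established pairwise independence in the qubit case, and adapt it to account for the fact that here indices taking the value $3$ only occur in even numbers and that a question such as $Q_{3_a3_b}$ is a correlation of correlations rather than a correlation of individuals. First I would take two distinct questions $Q_{\mu_1\cdots\mu_N}$ and $Q_{\nu_1\cdots\nu_N}$ in $\tilde{\cq}_{M_N}$. Since they are distinct, their index tuples must disagree in at least one slot. The idea is then to exhibit, in each case, a third (individual or correlation) question $R$ in $\cq$ which is compatible with exactly one of the two and complementary to the other; by the now-standard argument (as in the proofs of Lemmas~\ref{lem2}, \ref{lem8}, \ref{lem9} and \ref{lem_Nq0}), asking $R$ to a state in which one of the two questions is fully known then leaves $O$ oblivious about the other, which forces independence. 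Because independence is assumed symmetric, establishing it in one direction suffices.

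The case analysis is where the rebit-specific care enters. Suppose $Q_{\mu_1\cdots\mu_N}$ and $Q_{\nu_1\cdots\nu_N}$ disagree on the index of rebit $a$. If $\mu_a,\nu_a\in\{1,2\}$ with $\mu_a\neq\nu_a$, then $Q_{\mu_a}$ is an individual question compatible with $Q_{\mu_1\cdots\mu_N}$ and, since $Q_{\mu_a},Q_{\nu_a}$ are complementary individuals of rebit $a$, complementary to $Q_{\nu_1\cdots\nu_N}$; take $R=Q_{\mu_a}$. If $\mu_a=0$ and $\nu_a\in\{1,2\}$ (so $Q_{\nu_1\cdots\nu_N}$ contains an individual of rebit $a$ but $Q_{\mu_1\cdots\mu_N}$ does not), pick the individual $Q_{i_a}$ with $i\in\{1,2\}$, $i\neq\nu_a$: it is compatible with $Q_{\mu_1\cdots\mu_N}$ and complementary to $Q_{\nu_1\cdots\nu_N}$; take $R=Q_{i_a}$. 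The genuinely new subcase is when one or both of $\mu_a,\nu_a$ equals $3$. Here I would use Lemmas~\ref{lem4} and \ref{lem_re1}: the correlation of correlations $Q_{3_a3_b}$ (for a suitable partner index $b$, chosen among the even-$3$ pairing) is complementary to and independent of the individuals $Q_{1_a},Q_{2_a}$, and compatible with $Q_{3_a3_c}$ precisely when the index-$3$ slots match up via (\ref{re-close}). So if, say, $\mu_a=3$ while $\nu_a\in\{0,1,2\}$, one can find either an individual $Q_{i_a}$ or a correlation-of-correlations $Q_{3_a3_b}$ that discriminates the two questions in the required sense. If $\mu_a=3=\nu_a$ but the questions disagree on the partner index $b$ of the $3$-pairing, then they differ in the $3$-slot at $b$, reducing to the preceding situations at position $b$.

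Concretely, I expect the cleanest route is to first reduce to the case where $Q_{\mu_1\cdots\mu_N}$ and $Q_{\nu_1\cdots\nu_N}$ disagree in a way that involves at most the individual-type indices $\{0,1,2\}$, which is handled verbatim as in Lemma~\ref{lem_Nq0}; and then handle the residual disagreement involving $3$'s by the explicit discriminating questions built from Lemmas~\ref{lem4} and \ref{lem_re1}, exactly paralleling the $N=3$ treatment in Lemma~\ref{lem_re1}'s corollary and the surrounding lemmas. One should also invoke Lemma~\ref{lem_Nq} (whose rebit analogue follows from Lemmas~\ref{lem5}--\ref{lem_re5} and an induction identical in form to the one proving Lemma~\ref{lem_Nq}) to know the compatibility/complementarity pattern of the questions in $\tilde{\cq}_{M_N}$ among themselves, since the discriminating question $R$ is itself often a member of $\tilde{\cq}_{M_N}$.

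The main obstacle I anticipate is bookkeeping rather than conceptual depth: one must carefully track that the auxiliary question $R$ chosen to discriminate a pair is genuinely \emph{compatible} with one of them and genuinely \emph{complementary} to the other, given the even-$3$ constraint, and that $R$ is independent of both so that Assumption~\ref{assump5b} applies and the ``no change of information'' step goes through. In particular, when both questions carry $3$'s, one has to be careful that the chosen partner index for the correlation-of-correlations does not accidentally lie in a slot where the two questions agree (which would make $R$ compatible with both, breaking the argument); the fix is to use the freedom in (\ref{re-close}) to re-pair the $3$-indices so that the discriminating slot is isolated, just as the index-reshuffling trick was used in the proofs of Lemmas~\ref{lem_Nq} and \ref{thm_qubit3}. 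Once that reshuffling is in place, every subcase collapses to one already settled for $N\le 3$, and the induction (or the direct case analysis) closes.
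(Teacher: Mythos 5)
Your proposal is correct and is essentially the paper's own argument: the paper proves this lemma by declaring it "completely analogous" to the proofs of Lemmas \ref{lem2}, \ref{lem8}, \ref{lem9} and \ref{lem_Nq0}, i.e.\ exactly the discriminating-question strategy you spell out, with the even-$3$ bookkeeping handled via Lemmas \ref{lem4} and \ref{lem_re1}--\ref{lem_re5}. Your elaboration of the index-$3$ subcases (and the observation that an individual $Q_{i_a}$ already discriminates whenever the disagreement slot has $\mu_a=3$, $\nu_a\in\{0,1,2\}$) fills in precisely the details the paper leaves implicit.
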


\begin{proof}
The proof is entirely analogous to the proofs of lemmas \ref{lem2}, \ref{lem8}, \ref{lem9} and \ref{lem_Nq0}.
\end{proof}

Likewise, the complementarity and compatibility structure of $\tilde{\cq}_{M_N}$ is analogous to the qubit case.

\begin{lem}\label{lem_re20}
 $Q_{\mu_1\cdots\mu_N},Q_{\nu_1\cdots\nu_N}\in\tilde{\cq}_{M_N}$ are
 \begin{description}
 \item[maximally compatible] if the index sets $\{\mu_1,\ldots,\mu_N\}$ and $\{\nu_1,\ldots,\nu_N\}$ differ by an {\bf even} number (incl.\ $0$) of non-zero indices, and
 \item[maximally complementary] if the index sets $\{\mu_1,\ldots,\mu_N\}$ and $\{\nu_1,\ldots,\nu_N\}$ differ by an {\bf odd} number of non-zero indices.
  \end{description}
 \end{lem}
 
\begin{proof}
Thanks to lemmas \ref{lem4}, \ref{lem_re1}--\ref{lem_re3} and \ref{lem_re5}, the proof of lemma \ref{lem_Nq} also applies to the rebit case with the sole difference that only an even number of indices in the questions can take the value $3$ and that correlations of correlations $Q_{3_a3_b}$ cannot be decomposed into individuals $Q_{3_a},Q_{3_b}$.
\end{proof}

Finally, $\tilde{\cq}_{M_N}$ is indeed logically closed.

\begin{Theorem}\label{thm_rebit3}{\bf(Rebits)}
$\tilde{\cq}_{M_N}$ is logically closed under $\leftrightarrow$ and is thus an informationally complete set $\tilde{\cq}_{M_N}=\cq_{M_N}$ with $D_N=2^{N-1}(2^N+1)-1$ for the case $D_1=2$.
\end{Theorem}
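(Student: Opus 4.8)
The plan is to prove Theorem \ref{thm_rebit3} by induction on $N$, closely mirroring the proof of Theorem \ref{thm_qubit3} for qubits, with the two crucial modifications tracked throughout: only an \emph{even} number of indices may take the value $3$ in any legal question, and a ``block of threes'' $Q_{3_a3_b}$ cannot be decomposed into nonexistent individuals $Q_{3_a},Q_{3_b}$. The base cases $N=1,2,3$ are already supplied by the definition $\cq_{M_1}$, by Theorem \ref{thm_rebit}, and by Theorem \ref{thm_rebit2} respectively. So assume $\tilde{\cq}_{M_{N-1}}$ is logically closed under $\leftrightarrow$ and informationally complete, and consider $N$ rebits. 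First I would invoke that $N$ rebits can be viewed as a composite of any $N-1$ of them together with one more rebit (Definition \ref{def_comp}), so every XNOR conjunction of two questions, at least one of which has a zero in some common slot, reduces by the inductive hypothesis to a question already in $\tilde{\cq}_{M_N}$. Hence it suffices to show that conjoining questions which \emph{together} saturate all $N$ rebits stays inside $\tilde{\cq}_{M_N}$.

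The core of the argument is the case analysis for $Q_{\mu_1\cdots\mu_N}\leftrightarrow Q_{\nu_1\cdots\nu_N}$ where the combined support is all of $\{1,\dots,N\}$. By Lemma \ref{lem_re20} these are compatible (hence connectable by $\leftrightarrow$, per Assumption \ref{assump4}) iff the index sets differ in an even number of nonzero positions. Following the qubit proof: (a) if they agree on some nonzero index $\mu_a=\nu_a$ with $\mu_a\in\{1,2\}$, then $Q_{\mu_a}\leftrightarrow Q_{\mu_a}=1$ and associativity of $\leftrightarrow$ collapses that slot to $0$, producing two questions each with fewer than $N$ nonzero indices, so the result lies in $\tilde{\cq}_{M_{N-1}}\subset\tilde{\cq}_{M_N}$ by induction. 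If they agree on a nonzero index that is part of a $3$-block, the argument is the same using $Q_{3_a3_b}\leftrightarrow Q_{3_a3_b}=1$. (b) If they disagree on every nonzero index, reshuffle the labels so the $2n$ disagreeing rebits come first, pair them as bipartite correlations $Q_{\alpha_i\alpha_{i+1}}$ vs.\ $Q_{\beta_i\beta_{i+1}}$, note each such pair is compatible and conjoins via Theorem \ref{thm_rebit} (or Lemma \ref{lem_re1}/\ref{lem_re5} when threes appear) to another bipartite correlation of the same rebit pair, possibly up to an overall negation, and the remaining agreeing/zero part combines by induction. The upshot is $Q_{\mu_1\cdots\mu_N}\leftrightarrow Q_{\nu_1\cdots\nu_N}$ equals $\pm Q_{\lambda_1\cdots\lambda_N}$ for some index set $\lambda$, and I must then verify that $\lambda$ still has an \emph{even} number of threes: each paired bipartite combination $Q_{i_ai_{a+1}}\leftrightarrow Q_{\nu_a\nu_{a+1}}$ either keeps the number of $3$'s in those two slots fixed or flips it by $0$ or $2$ (because the Pauli-like closure on a single pair sends $\{1,2\}\mapsto 3$, $\{1,3\}\mapsto 2$, $\{2,3\}\mapsto 1$, $\{3,3\}\mapsto 0$, $\{x,x\}\mapsto 0$), so parity of the total number of threes is preserved; since both inputs had an even number of threes, so does the output.

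The genuinely delicate step — and the one I expect to be the main obstacle — is the bookkeeping of the $3$-indices in case (b) when the two questions disagree on slots that already carry value $3$, together with keeping track of relative negations. Concretely, a pairing like $Q_{3_a3_b}$ against $Q_{i_ai_b}$ with $i_a,i_b\in\{1,2\}$ must be reduced using the rebit-specific identities (Lemma \ref{lem_re1}, Lemma \ref{lem_re5}, Theorem \ref{thm_rebit}, and the relations in section \ref{sec_corr}), and one has to make sure the output never ends up with an \emph{odd} number of threes — which would be a question outside $\tilde{\cq}_{M_N}$. The parity-preservation observation above is what closes this gap, but it needs to be stated carefully because a block $Q_{3_a3_b}$ spans two rebits simultaneously, so one cannot naively treat slots independently; one should instead argue at the level of the ``type vector'' in $\{0,1,2,3\}^N$ subject to the global even-threes constraint, and check that the $\leftrightarrow$-action on a compatible pair of bipartite blocks is well-defined on this constrained set (this is exactly the content of Theorems \ref{thm_rebit}, \ref{thm_rebit2} reinterpreted). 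The overall negations are harmless for the closure statement — they do not move a question out of $\tilde{\cq}_{M_N}$ since $\neg Q\in\cq_N$ whenever $Q\in\cq_N$ — so I would simply carry a $\pm$ and not belabor it. Once closure is established, informational completeness follows because $\tilde{\cq}_{M_N}$ is pairwise independent (already shown) and, being logically closed under the only connective ($\leftrightarrow$, by (\ref{truth})) that can generate further independent questions, admits no proper independent extension; hence $\tilde{\cq}_{M_N}=\cq_{M_N}$ and $D_N=2^{N-1}(2^N+1)-1$.
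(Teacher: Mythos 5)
Your proposal is correct and follows essentially the same route as the paper: the paper's own proof simply states that the induction argument of Theorem \ref{thm_qubit3} carries over verbatim, using Theorems \ref{thm_rebit}, \ref{thm_rebit2} and Lemma \ref{lem_re20}, with the two modifications (even number of $3$-indices, no decomposition of $Q_{3_a3_b}$) that you track throughout. Your explicit parity-of-threes bookkeeping is a detail the paper leaves implicit, and it checks out (the even number of disagreeing non-zero slots guarantees the output again carries an even number of $3$'s), so it strengthens rather than alters the argument.
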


\begin{proof}
Thanks to theorems \ref{thm_rebit}, \ref{thm_rebit2} and lemma \ref{lem_re20}, the proof of theorem \ref{thm_qubit3} also applies here, except that only an even number of indices can take the value $3$ and $Q_{3_a3_b}$ cannot be decomposed into individuals $Q_{3_a},Q_{3_b}$.
\end{proof}

We close with the observation that similarly to the $N$ qubit case in section \ref{sec_nqbits}, one could represent the compatibility and complementarity structure via a simplicial question graph. Maximally entangled states (of maximal information) will correspond to $O$ spending all $N$ available \texttt{bits} over a mutually independent set of $N$ $N$-partite questions which is maximally complementary to every individual question. In fact, the prescription for constructing a maximally $N$-partite entangled qubit state provided at the end of section \ref{sec_nqbits} also applies to $N$ rebits since only indices with values $1,2$ were employed. Furthermore, for rebits one can similarly generate maximally entangled states of non-maximal information; e.g., $O$ could ask only the $N-1$ questions $Q_{3_13_2000\cdots},Q_{03_23_300\cdots},Q_{003_33_40\cdots},\ldots,Q_{0\cdots03_{N-1}3_N}$. If $O$ has maximal information about each such question, every rebit pair will be maximally entangled, while $O$ has still not reached the information limit (see also section \ref{sec_rebtang}).

\section{Information measure, time evolution and state space}\label{sec_time}\label{sec_infomeasure}

Thus far we have only applied rules \ref{lim} and \ref{unlim} to derive the basic question structure on $\cq_N$. We shall now slightly switch topic and return to the problems of time evolution, begun in subsection \ref{subsec_time}, and of explicitly {\it quantifying} the information which $O$ has acquired about $S$ by means of interrogations with questions. The information measure and time evolution of $S$'s state in between interrogations are intertwined through rule \ref{pres} of information preservation and rule \ref{time} of maximality of time evolution such that we have to discuss these topics together.
%
%In particular, we shall consider the time evolution of the state of $S$ relative to $O$ in between interrogations and impose the rule \ref{pres} of information preservation and the rule \ref{time} of maximality of time evolution to infer that time evolution defines a group action on the states.
%
%

In (\ref{totalinfo}) we have implicitly defined the total amount of $O$'s information about $S$ -- once the latter is in a state $\vec{y}_{O\rightarrow S}$ -- as
\ba
I_{O\rightarrow S}(\vec{y}_{O\rightarrow S})=\sum_{i=1}^{D_N}\,\alpha_i(\vec{y}_{O\rightarrow S}),\nn
\ea
where $\alpha_i$ quantifies $O$'s information about the outcome of question $Q_i\in\cq_{M_N}$ and satisfies the bounds (\ref{infoineq}). We shall begin in subsection \ref{sec_shannon} by clarifying that the information measure we seek to derive here is conceptually distinct from the standard Shannon entropy, before imposing elementary consistency conditions on the relation $\alpha_i(\vec{y}_{O\rightarrow S})$ in subsection \ref{sec_elcond}. Subsequently, we use these conditions and implement rules \ref{pres} and \ref{time} to establish that the set of possible time evolutions defines a group and, finally, to derive the explicit functional form of $I_{O\rightarrow S}$ in subsection \ref{sec_infomeasure}. This discussion will also unravel properties of state spaces and lead to the notion of pure states.% between the information measure and probabilities.

\subsection{Why the Shannon entropy does not apply here}\label{sec_shannon}

Consider a random variable experiment with distinct outcomes each of which is described by a particular probability. Outcomes with a smaller likelihood are more informative (relative to the prior information) than those with a larger likelihood. The Shannon information quantifies the information an observer gains, {\it on average}, via a specific outcome if she repeated the experiment many times. (Equivalently, it quantifies the uncertainty of the observer before an outcome of the experiment occurs.) The probability distribution over the different outcomes is assumed to be known. In particular, none of the outcomes in any run of the experiment will lead to an update of the probability distribution. %In this sense, the Shannon entropy can also be viewed as observer independent.

Here, by contrast, we are {\it not} interested in how informative one specific question outcome is relative to another and how much information $O$ would gain, on average, with a specific answer if he repeated the interrogation with a fixed question many times on identically prepared systems. Instead, we wish to quantify $O$'s {\it prior information} (e.g., gained from previous interrogations) about the possible question outcomes on {\it only} the next system to be interrogated. The role of the probability distribution over the various outcomes (to all questions) is here assumed by the state which $O$ ascribes to $S$. This state is not assumed to be `known' absolutely, instead, as discussed in section \ref{sec_infland}, it is defined only relative to the observer and represents $O$'s `catalogue of knowledge' or `degree of belief' about $S$. In particular, this state is updated after any interrogation if the outcome has led to an information gain. This update takes the form of a `collapse' of the prior into a posterior state of s single system in a single shot interrogation, while it yields an update of the ensemble state in a multiple shot interrogation (see subsection \ref{sec_bayes}). The probability distribution represented by the state is thus a prior distribution for the next interrogation and only valid until the next answer -- unless the latter yields no information gain. We thus seek to quantify {\it the information content in the state of $S$ relative to $O$}. This is {\it not} equivalent to the sum of the average information gains, relative to that same state used as a fixed prior probability distribution, which are provided by specific answers to the questions in an informationally complete set over {\it repeated} interrogations. 

The reader should thus not be surprised to find that the end result of the below derivation will not yield the Shannon entropy. We emphasize, however, that this clearly does not invalidate the use of the Shannon entropy (in the more general form of the von Neumann entropy) in quantum theory as long as one employs it what it is designed for: to describe average information gains in repeated experiments on identically prepared systems -- relative to a prior state which is assigned {\it before} the repeated experiments are carried out and which represents the `known' probability distribution.

\subsection{Elementary conditions on the measure}\label{sec_elcond}

There are a few natural requirements on the relation between $\alpha_i$ and $\vec{y}_{O\rightarrow S}$: 
\begin{itemize}
\item[(i)] The $Q_i\in\cq_{M_N}$ are pairwise independent. Accordingly, $\alpha_i$ should {\it not} depend on the `yes'-probabilities $y_{j\neq i}$ of other questions $Q_{j\neq i}$ such that $\alpha_i=\alpha_i(y_i)$.
\item[(ii)] All $Q_i\in\cq_{M_N}$ are informationally of equivalent status. The functional relation between $y_i$ and $\alpha_i$ should be the same for all $i$: $\alpha_i=\alpha(y_i)$, $i=1,\ldots,D_N$.
\item[(iii)] If $O$ has no information about the outcome of $Q_i$, i.e.\ $y_i=1/2$, then $\alpha_i=0$ \texttt{bit}.
\item[(iv)] If $O$ has maximal information about the outcome of $Q_i$, i.e.\ $y_i=1$ or $y_i=0$, then $\alpha_i=1$ \texttt{bit}; both possible answers give $1\,\texttt{bit}$ of information.
\item[(v)] The assignment of which answer to $Q_i$ is `yes' and which is `no' is arbitrary and the functional relation between $\alpha_i$ and $y_i$ (or $n_i$) should {\it not} depend on this choice. Hence, $\alpha(y_i)=\alpha(n_i)$ must be symmetric around $y_i=1/2$ (see also (iv)); $\alpha_i$ quantifies the amount of information about $Q_i$, but does not encode {\it what} the answer to $Q_i$ is.
\item[(vi)] On the interval $y_i\in(1/2,1]$, the relation between $\alpha$ and $y_i$ should be monotonically {\it increasing} such that $O$'s information about the answer to $Q_i$ is quantified as higher, the higher the assigned probability for a `yes' outcome. Likewise, on $[0,1/2)$, $\alpha(y_i)$ must be monotonically {\it decreasing}. In particular, $\alpha$ shall be continuous and strictly convex; i.e., for all $y^1_i,y^2_i\in[0,1]$ and every $\lambda\in[0,1]$
\ba
\!\!\!\!\!\alpha(\lambda\,y^1_i+(1-\lambda)\,y_i^2)\leq\lambda\,\alpha(y_i^1)+(1-\lambda)\,\alpha(y_i^2)\nn
\ea
and the inequality shall be strict whenever $y^1_i\neq\,y^2_i$ and $0<\lambda<1$. Hence, $y_i=\f{1}{2}$ is the global minimum of $\alpha$ on $[0,1]$.
\end{itemize}

Since now
\ba
I_{O\rightarrow S}(\vec{y}_{O\rightarrow S})=\sum_{i=1}^{D_N}\,\alpha(y_i)\label{eqn_newI}
\ea
and each summand is strictly convex, we also have that $I_{O\rightarrow S}$ is a strictly convex function on $\Sigma_N$. Accordingly, in the coin flip scenario of subsection \ref{subsec_time}, $O$'s information about the mixed state $\vec{y}_{O\rightarrow S_{12}}=\lambda\,\vec{y}_{O\rightarrow S_1}+(1-\lambda)\,\vec{y}_{O\rightarrow S_2}$ is \emph{smaller} than his maximal information about any of its constituents $\vec{y}_{O\rightarrow S_1},\vec{y}_{O\rightarrow S_2}$ -- unless the outcome of the coin flip was certain, or $S_1,S_2$ are in the same state. This is consistent with the fact that
$O$'s information about the outcome of any question (asked to either $S_1$ or $S_2$, depending on the outcome of the coin flip) is clouded by the random coin flip outcome.

%This suggests that to every non-vanishing value of $\alpha_i$ there must correspond two possible solutions for $y_i$.% Indeed, using rules \refWe shall show that this leads to a quadratic relation between $\alpha$ and $y_i$ which then gives us precisely the two possible solutions for $y_i$ for fixed $\alpha_i$.

\subsection{The state of no information is an interior state}\label{ssec_interior}

In order to discuss the action of time evolution on the state space $\Sigma_N$, we have to derive a few further properties of the latter, some of which depend on the properties of $I_{O\rightarrow S}$.

Firstly, since $D_N$ is finite for finite $N$, $\Sigma_N$ is finite dimensional in this case. In fact, $\Sigma_N$ contains a basis of $\mathbb{R}^{D_N}$\footnote{Each of the following $D_N$-dimensional vectors $(1,\f{1}{2},\f{1}{2},\cdots,\f{1}{2}), (\f{1}{2},1,\f{1}{2},\cdots,\f{1}{2}),\cdots,(\f{1}{2},\cdots,\f{1}{2},1)$ represents a legal state $\vec{y}_{O\rightarrow S}$, corresponding to $O$ only knowing the answer to precisely one question from $\cq_{M_N}$ with certainty and `nothing else'.} so that it is a $D_N$-dimensional closed convex subset of $\mathbb{R}^{D_N}$. Furthermore, since $y_i\in[0,1]$ for all $y_i$ in $\vec{y}_{O\rightarrow S}$, $\Sigma_N$ is clearly bounded and thus, by the Heine-Borel-theorem, compact. As a compact convex set with non-empty interior, $\Sigma_N$ is thus a convex body and, in particular, has volume \cite{webster}.

Next, we focus on the state of no information $\vec{y}_{O\rightarrow S}=\f{1}{2}\,\vec{1}$. Given strict convexity of $I_{O\rightarrow S}$ on $\Sigma_N$, the conditions of the previous subsection entail that the state of no information is the {\it global minimum of} $I_{O\rightarrow S}$ with $I_{O\rightarrow S}(\f{1}{2}\,\vec{1})=0$ \texttt{bits} -- in agreement with its uniqueness. It is also an interior state.

\begin{lem}\label{lem_interior}
The state of no information lies in the interior of $\Sigma_N$.
\end{lem}
\begin{proof}
The $Q_i\in\cq_{M_N}$ are pairwise independent so for each such $Q_i$ there exist two states $\vec{y}\,^{y_i}_{O\rightarrow S}=(\f{1}{2},\cdots,\f{1}{2},1,\f{1}{2},\cdots,\f{1}{2})$ and $\vec{y}\,^{n_i}_{O\rightarrow S}=(\f{1}{2},\cdots,\f{1}{2},0,\f{1}{2},\cdots,\f{1}{2})$ corresponding to $O$ having {\it only} asked $Q_i$ to $S$ in the state of no information and having received the answers $Q_i=$ `yes' and $Q_i=$ `no', respectively. Since $\Sigma_N$ is convex, it also contains all convex mixtures of these special states. The $D_N$ $\vec{y}\,^{y_i}_{O\rightarrow S}$, as well as the $D_N$ $\vec{y}\,^{n_i}_{O\rightarrow S}$ each define a basis of $R^{D_N}$ such that the set of all their convex mixtures define a $D_N$-dimensional convex polytope contained in $\Sigma_N\subset\mathbb{R}^{D_N}$. In particular, given that $\f{1}{2}\,\vec{y}\,^{y_i}_{O\rightarrow S}+\f{1}{2}\,\vec{y}\,^{n_i}_{O\rightarrow S}=\f{1}{2}\,\vec{1}$ yields the state of no information for each $i$, it is clear that it lies in the interior of the convex polytope.
\end{proof}

Given that time evolution preserves the total information by rule \ref{pres}, we shall also be interested in the level sets of $I_{O\rightarrow S}$.

\begin{lem}\label{lem_level}
Let $I_{O\rightarrow S}$ fulfill conditions (i)--(vi) of section \ref{sec_elcond}. For sufficiently small $\varepsilon>0$, the level set 
\ba
L_\varepsilon:=\{\vec{y}_{O\rightarrow S}\in\Sigma_N|I_{O\rightarrow S}(\vec{y}_{O\rightarrow S})=\varepsilon\}\label{levelset}
\ea
lies in the interior of $\Sigma_N$ and is homeomorphic to $S^{D_N-1}$. In this case, $L_\varepsilon$ constitutes the full boundary of the fat level set $L_{I\leq\varepsilon}:=\{\vec{y}_{O\rightarrow S}\in\Sigma_N|I_{O\rightarrow S}(\vec{y}_{O\rightarrow S})\leq\varepsilon\}$ which contains the state of no information and is homeomorphic to the closed unit ball in $\mathbb{R}^{D_N}$.
\end{lem}

\begin{proof}
The state of no information is the global minimum of $I_{O\rightarrow S}$ with $I_{O\rightarrow S}(\f{1}{2}\,\vec{1})=0$ and, thanks to lemma \ref{lem_interior}, lies in the interior of $\Sigma_N$. Furthermore, condition (vi) of subsection \ref{sec_elcond} requires that $\alpha(y_i)$ increases monotonically away from $y_i=\f{1}{2}$ for each $i$ so that $I_{O\rightarrow S}$ in (\ref{eqn_newI}) must likewise increase monotonically in each direction away from $\vec{y}_{O\rightarrow S}=\f{1}{2}\,\vec{1}$ in $\mathbb{R}^{D_N}$. Owing to the continuity of $I_{O\rightarrow S}$, we can therefore find a sufficiently small $\varepsilon>0$ such that any state attained by moving away from the state of no information in {\it any} direction until reaching $I_{O\rightarrow S}=\varepsilon$ still lies in the interior of $\Sigma_N$. Accordingly, for such $\varepsilon>0$, $L_\varepsilon$ must lie in the interior of $\Sigma_N$ and so must $L_{I\leq\varepsilon}$ with the state of no information in its own interior. Given that $I_{O\rightarrow S}:\Sigma_N\rightarrow \mathbb{R}$ is continuous and $\Sigma_N$ is closed, $I_{O\rightarrow S}$ is a closed convex function. Closed convex functions have closed convex fat level sets \cite{webster} so that $L_{I\leq\varepsilon}$ is a closed convex subset in the interior of $\Sigma_N$. Clearly, $L_{I\leq\varepsilon}$ is $D_N$-dimensional since we moved away in all directions from $\vec{y}_{O\rightarrow S}=\f{1}{2}\,\vec{1}$ to reach $L_\varepsilon$ %. Since $I_{O\rightarrow S}$ is monotonically increasing away from $\vec{y}_{O\rightarrow S}=\f{1}{2}\,\vec{1}$ in each direction, 
and the states with $I_{O\rightarrow S}=\varepsilon$ define the limit points of $L_{I\leq\varepsilon}$ in each direction so that $L_\varepsilon$ constitutes its full boundary. Any $D_N$-dimensional closed convex subset of $\mathbb{R}^{D_N}$ with non-empty interior is homeomorphic to the closed unit ball and its boundary is homeomorphic to $S^{D_N-1}$. 
\end{proof}

These results will become important for establishing that the set of time evolutions is a group.

\vspace*{.2cm}
\subsection{Time evolution of the `Bloch vector'}

%We begin by demonstrating that time evolution of $S$'s state, as perceived by $O$, must be {\it reversible}.  To this end, we show firstly that every time evolution has an inverse. Thereafter, we demonstrate that every inverse is also a legal time evolution. We shall return to employ operational arguments and recall (\ref{linear}) according to which time evolution of the ($2D_N$-dimensional, redundantly parametrized) state $\vec{P}_{O\rightarrow S}$ %\in\Sigma_N\oplus\Sigma_N\subset\mathbb{R}^{2D_N}$ 
%is linear, described by a matrix $A(\Delta t)$, and only depends on the interval $\Delta t=t_2-t_1$, where $t_1,t_2$ are arbitrary instants of time in between $O$'s interrogations on a given system.

In subsection \ref{subsec_time}, we have only considered the time evolution of the redundantly parametrized $2D_N$-dimensional state $\vec{P}_{O\rightarrow S}$. According to (\ref{linear}), this time evolution is linear, described by a matrix $A(\Delta t)$, and only depends on the interval $\Delta t=t_2-t_1$, where $t_1,t_2$ are arbitrary instants of time in between $O$'s interrogations on a given system.
However, because of (\ref{norm}), it clearly suffices to consider the yes-vector $\vec{y}_{O\rightarrow S}$ (or no-vector $\vec{n}_{O\rightarrow S}$) alone to describe the state of $S$ relative to $O$. Let us therefore determine, how $\vec{y}_{O\rightarrow S}$ and $\vec{n}_{O\rightarrow S}$ evolve under time evolution. To this end, we decompose $A(\Delta t)$ and $\vec{P}_{O\rightarrow S}$ in (\ref{linear}),
\ba
\!\!\!\!\left(\begin{array}{c}\vec{y}_{O\rightarrow S}(\Delta t) \\\vec{n}_{O\rightarrow S}(\Delta t)\end{array}\right)\!=\!\left(\begin{array}{cc}a(\Delta t) & b(\Delta t) \\c(\Delta t) & d(\Delta t)\end{array}\right)\!\!\left(\begin{array}{c}\vec{y}_{O\rightarrow S}(0) \\\vec{n}_{O\rightarrow S}(0)\end{array}\right),\nn
\ea
\vspace*{.15cm}

\noindent where $a(\Delta t), b(\Delta t),c(\Delta t),d(\Delta t)$ are nonnegative (real) $D_N\times D_N$ matrices. Using the normalization (\ref{norm}) (with $p=1$), one finds that both $\vec{y}_{O\rightarrow S},\vec{n}_{O\rightarrow S}$ evolve {\it affinely},
\begin{widetext}
\ba
\!\!\!\!\!\!\vec{y}_{O\rightarrow S}(\Delta t)&=&\left[a(\Delta t)-b(\Delta t)\right]\vec{y}_{O\rightarrow S}(0)%\nn\\&&\q\q\q\q\q\q\q\q
+b(\Delta t)\,\vec{1},\nn\\
\!\!\!\!\!\!\vec{n}_{O\rightarrow S}(\Delta t)&=&\left[d(\Delta t)-c(\Delta t)\right]\vec{n}_{O\rightarrow S}(0)%\label{afft}\\&&\q\q\q\q\q\q\q\q
+c(\Delta t)\,\vec{1}.\label{afft}
\ea
\end{widetext}
We shall now determine relations among the four matrices $a,b,c,d$. Firstly, the normalization (\ref{norm}) must hold for all initial states and all times. Hence, for all $\Delta t\in\mathbb{R}$,
\begin{widetext}
\ba
\vec{1}&=&\vec{y}_{O\rightarrow S}(\Delta t)+\vec{n}_{O\rightarrow S}(\Delta t)\nn\\
&=&\left[a(\Delta t)-b(\Delta t)\right]\vec{y}_{O\rightarrow S}(0)+\left[d(\Delta t)-c(\Delta t)\right]\vec{n}_{O\rightarrow S}(0)+\left[b(\Delta t)+c(\Delta t)\right]\,\vec{1}\nn\\
&\underset{\tiny(\ref{norm})}{=}&\left[a(\Delta t)-b(\Delta t)+c(\Delta t)-d(\Delta t)\right]\vec{y}_{O\rightarrow S}(0)+\left[b(\Delta t)+d(\Delta t)\right]\,\vec{1}.\nn
\ea
\end{widetext}
Since the set of all possible initial states $\vec{y}_{O\rightarrow S}(0)$ contains a basis of $\mathbb{R}^{D_N}$, we must conclude
\ba
a(\Delta t)-b(\Delta t)&=&d(\Delta t)-c(\Delta t),\nn\\ \left[b(\Delta t)+d(\Delta t)\right]\,\vec{1}&=&\vec{1}.\label{wkh}
\ea
Next, we note that, by assumption \ref{assump3}, the state of no information is {\it unique} and, by rule \ref{pres}, clearly preserved under time evolution. Inserting the state of no information, e.g., as $\vec{n}_{O\rightarrow S}(\Delta t)=\f{1}{2}\,\vec{1}$ into (\ref{afft}), this entails
\ba
\left[c(\Delta t)+d(\Delta t)\right]\,\vec{1}=\vec{1}.\label{wkh2}
\ea
In conjunction, (\ref{wkh}, \ref{wkh2}) therefore imply\footnote{As an aside, note that $A(\Delta t)$ can thus be replaced by
\ba
A(\Delta t)=\left(\begin{array}{cc}a(\Delta t) & b(\Delta t) \\b(\Delta t) & a(\Delta t)\end{array}\right),\nn
\ea
so that time evolution is symmetric under a swap of the `yes/no'-labeling for all $Q\in\cq_{M_N}$ at once, i.e.\ under $\vec{y}_{O\rightarrow S}\leftrightarrow\vec{n}_{O\rightarrow S}$. }%Given that it is completely arbitrary which answer to any $Q\in\cq_{M_N}$ $O$ refers to as `yes' and which as `no', the situation must be symmetric under a swap of the `yes/no'-labeling for all $Q\in\cq_{M_N}$ at once, i.e.\ under $\vec{y}_{O\rightarrow S}\leftrightarrow\vec{n}_{O\rightarrow S}$, such that we must also have
%\ba
%\left(\begin{array}{c}\vec{n}_{O\rightarrow S}(\Delta t) \\\vec{y}_{O\rightarrow S}(\Delta t)\end{array}\right)=\left(\begin{array}{cc}a(\Delta t) & b(\Delta t) \\c(\Delta t) & d(\Delta t)\end{array}\right)\left(\begin{array}{c}\vec{n}_{O\rightarrow S}(0) \\\vec{y}_{O\rightarrow S}(0)\end{array}\right).\nn
%\ea
%From this it follows that
%\ba
%a(\Delta t)\equiv d(\Delta t),\q\q b(\Delta t)\equiv c(\Delta t).\nn
%\ea
%However, $a(\Delta t)\neq b(\Delta t)$, otherwise $A(\Delta t)$ would not be invertible -- in conflict with what we have concluded above. Using the normalization (\ref{norm}), one finds that both $\vec{y}_{O\rightarrow S},\vec{n}_{O\rightarrow S}$ evolve {\it affinely},
\ba
\vec{y}_{O\rightarrow S}(\Delta t)&=&\left[a(\Delta t)-b(\Delta t)\right]\vec{y}_{O\rightarrow S}(0)+b(\Delta t)\,\vec{1},\nn\\
\vec{n}_{O\rightarrow S}(\Delta t)&=&\left[a(\Delta t)-b(\Delta t)\right]\vec{n}_{O\rightarrow S}(0)+b(\Delta t)\,\vec{1}.\nn
\ea
Consequently, defining the $D_N\times D_N$ time evolution matrix
\ba
T(\Delta t):=a(\Delta t)-b(\Delta t),\label{eqntdef}
\ea
yields a {\it linear} time evolution of what we shall henceforth call the {\it generalized Bloch vector} $2\,\vec{y}_{O\rightarrow S}-\vec{1}$:
\ba
\!\!\!\!\!\!\!\!\!\!\!2\,\vec{y}_{O\rightarrow S}(\Delta t)-\vec{1}&=&\vec{y}_{O\rightarrow S}(\Delta t)-\vec{n}_{O\rightarrow S}(\Delta t)\nn\\&=&T(\Delta t)\left(2\,\vec{y}_{O\rightarrow S}(0)-\vec{1}\right).\label{blochtime}
\ea
Notice that $T(\Delta t)$ need neither be nonnegative nor stochastic in any pair of its components (in contrast to $A(\Delta t)$). Henceforth, we shall only consider the $T(\Delta t)$ governing the time evolution of the Bloch vector 
\ba
\vec{r}_{O\rightarrow S}:=2\,\vec{y}_{O\rightarrow S}-\vec{1}
\ea
and often employ the latter to parametrize states.

\subsection{Time evolution is injective}

We continue by demonstrating that time evolution of $S$'s states must be injective on $\Sigma_N$. In subsection \ref{sec_surj}, we will use this result to prove that time evolution is also surjective on $\Sigma_N$ and thus, in fact, {\it reversible}.

%{\it reversible}.  To this end, we show firstly that every time evolution has an inverse. Thereafter, we demonstrate that every inverse is also a legal time evolution. %We shall return to employ operational arguments. % and recall (\ref{linear}) according to which time evolution of the ($2D_N$-dimensional, redundantly parametrized) state $\vec{P}_{O\rightarrow S}$ %\in\Sigma_N\oplus\Sigma_N\subset\mathbb{R}^{2D_N}$ 
%is linear, described by a matrix $A(\Delta t)$, and only depends on the interval $\Delta t=t_2-t_1$, where $t_1,t_2$ are arbitrary instants of time in between $O$'s interrogations on a given system.

\begin{lem}\label{lem_injective}
Let $T(\Delta t):\Sigma_N\rightarrow\Sigma_N$ be a time evolution as given in (\ref{eqntdef}) for any $\Delta t\in\mathbb{R}$. If $I_{O\rightarrow S}$ is strictly convex, rule \ref{pres} entails that $T(\Delta t)$ is injective.
\end{lem}
\begin{proof}
%Suppose time evolution was not invertible, i.e.\ that an inverse to $A(\Delta t)$ does {\it not} exist. We will show that this leads to a conflict with rule \ref{pres}. Denote by $\tilde{\Sigma}_N$ the (redundant) representation of the state space $\Sigma_N$ in $\mathbb{R}^{2D_N}$, comprised of the set of all $\vec{P}_{O\rightarrow S}$. Given that $A(\Delta t):\tilde{\Sigma}_N\rightarrow\tilde{\Sigma}_N$, $\forall\,\Delta t$, $A$ must be a square matrix on $\mathbb{R}^{2D_N}$. For finite dimensional square matrices, $A$ being bijective is equivalent to it being injective. 
Assume $T(\Delta t)$ was not injective. Then there would exist states $\vec{y}\,'_{O\rightarrow S}(t_1)\neq\vec{y}\,''_{O\rightarrow S}(t_1)$ such that%\footnote{$\tilde{\Sigma}_N$ must contain a basis of $\mathbb{R}^{2D_N}$. Otherwise, $O$ could parametrize the state of $S$ by less than $D_N$ questions, in contradiction with assumption \ref{assump6} and the fact that $\vec{y}_{O\rightarrow S}$ is parametrized by the probabilities of a $\cq_{M_N}$.} 
\ba
\vec{y}_{O\rightarrow S}(t_2)&:=&T(\Delta t)\vec{y}\,'_{O\rightarrow S}(t_1)\nn\\&=&T(\Delta t)\vec{y}\,''_{O\rightarrow S}(t_1).\label{noninv}
\ea
Now consider the coin flip scenario of section \ref{subsec_time}. $O$ can prepare $S_1$ in the state $\vec{y}_{O\rightarrow S_1}(t_1):=\vec{y}\,'_{O\rightarrow S}(t_1)$ and $S_2$ in the state $\vec{y}_{O\rightarrow S_2}(t_1):=\vec{y}\,''_{O\rightarrow S}(t_1)$ at time $t_1$ before tossing the coin. By (\ref{convex}),
\ba
\vec{y}_{O\rightarrow S_{12}}(t_1)=\lambda\,\vec{y}_{O\rightarrow S_1}(t_1)+(1-\lambda)\,\vec{y}_{O\rightarrow S_2}(t_1),\nn
\ea
and, on account of (\ref{noninv}) and rule \ref{pres}, 
\ba
\!\!\!\!\!\!\!\!\!\!\!\!I_{O\rightarrow S_1}(\vec{y}_{O\rightarrow S_1}(t_1))&=&I_{O\rightarrow S_2}(\vec{y}_{O\rightarrow S_2}(t_1))\nn\\&=&I_{O\rightarrow S_{1}}(\vec{y}_{O\rightarrow S_1}(t_2))\nn\\&=&I_{O\rightarrow S_{2}}(\vec{y}_{O\rightarrow S_2}(t_2)).\label{noninv2}
\ea

Thus, using (\ref{noninv2}) and strict convexity of $I_{O\rightarrow S}$, this yields for a coin flip with $0<\lambda<1$ %in the present case
\begin{widetext}
\ba
I_{O\rightarrow S_{12}}(\vec{y}_{O\rightarrow S_{12}}(t_1))<%I_{O\rightarrow S_1}(\vec{y}'_{O\rightarrow S}(t_1))=I_{O\rightarrow S_2}(\vec{y}''_{O\rightarrow S}(t_1))=
I_{O\rightarrow S_{1}}(\vec{y}_{O\rightarrow S_1}(t_2))=I_{O\rightarrow S_{2}}(\vec{y}_{O\rightarrow S_2}(t_2)).\label{incr1}
\ea
\end{widetext}
On the other hand, (\ref{noninv}) implies that, at time $t_2$, $O$ would find
\ba
\vec{y}_{O\rightarrow S_{12}}(t_2)=\vec{y}_{O\rightarrow S_1}(t_2)=\vec{y}_{O\rightarrow S_2}(t_2)\nn
\ea
such that 
\ba
\!\!\!\!\!\!\!\!\!\!I_{O\rightarrow S_{12}}(\vec{y}_{O\rightarrow S_{12}}(t_2))&=&I_{O\rightarrow S_{1}}(\vec{y}_{O\rightarrow S_1}(t_2))\nn\\&=&I_{O\rightarrow S_{2}}(\vec{y}_{O\rightarrow S_2}(t_2)).\label{incr2}
\ea
Hence, (\ref{incr1}, \ref{incr2}) entail that $O$'s information about $S_{12}$ has increased between $t_1$ and $t_2$, despite not having tossed the coin and asked any questions. This is in contradiction with rule \ref{pres}. We conclude that $T(\Delta t)$ must be injective. 
\end{proof} 

Since $\Sigma_N$ contains a basis of $\mathbb{R}^{D_N}$ (see subsection \ref{ssec_interior}) it is clear that $T(\Delta t)$ is also injective on $\mathbb{R}^{D_N}$. For finite dimensional square matrices, $T(\Delta t)$ being injective is equivalent to it being bijective on $\mathbb{R}^{D_N}$. Hence, to every $T(\Delta t)$ there must exist an inverse matrix $T^{-1}(\Delta t)$ such that $T(\Delta t)T^{-1}(\Delta t)=T^{-1}(\Delta t)T(\Delta t)=\mathbb{1}$.

However, is this $T^{-1}(\Delta t)$ also a legal time evolution?\footnote{I thank a referee for pointing out that addressing this question was overlooked in a previous draft version.} This would be the case if $T(\Delta t)$ was also surjective and thereby reversible {\it on} $\Sigma_N\subset\mathbb{R}^{D_N}$. Indeed, we shall demonstrate this property next.

\subsection{Time evolution is also reversible}\label{sec_surj}

To establish reversibility of time evolution we shall resort to tools from topology and convex sets.

\begin{lem}\label{lem_clopen}
Restricting any time evolution $T(\Delta t):\Sigma_N\rightarrow \Sigma_N$ to an interior level set $L_\varepsilon$ (\ref{levelset}) defines a continuous, open and closed map from $L_\varepsilon$ to itself.
\end{lem}
\begin{proof}
As an injective square matrix $T(\Delta t)$ defines a continuous invertible map from $\mathbb{R}^{D_N}$ to itself and so the pre-image of any open set in $\mathbb{R}^{D_N}$ is another open set. Rule \ref{pres} implies $[T(\Delta t)](L_\varepsilon)\subseteq L_\varepsilon$ and that no state from outside $L_\varepsilon$ can evolve into $L_\varepsilon$. The open sets of $L_\varepsilon$ in the induced topology are the intersections of $L_\varepsilon$ with open sets of $\mathbb{R}^{D_N}$. Consider any such open set $V\subset L_\varepsilon$ and any open set $U\subset\mathbb{R}^{D_N}$ such that $V=U\cap L_\varepsilon$. The pre-image $[T^{-1}(\Delta t)](U)$ is again open as is therefore $[T^{-1}(\Delta t)](U)\cap L_\varepsilon$
which, thanks to rule \ref{pres}, must be non-empty if $V$ is non-empty. Hence, $T(\Delta t)$ defines a continuous map from $L_\varepsilon$ to itself. For similar reasons, this map is also open. Finally, the closed map lemma states that a continuous map from a compact to a Hausdorff space is also closed. Thanks to lemma \ref{lem_level}, we have $L_\varepsilon\simeq S^{D_N-1}$ which is both compact and Hausdorff and so the map from $L_\varepsilon$ to itself defined by $T(\Delta t)$ is also closed.
\end{proof}

Next, we employ this result to show that time evolution is surjective on interior level sets.

\begin{lem}\label{lem_surjint}
Any allowed time evolution is surjective on the interior level sets, i.e.\ $[T(\Delta t)](L_\varepsilon)=L_\varepsilon$ and thus also $[T(\Delta t)](L_{I\leq\varepsilon})=L_{I\leq\varepsilon}$.
\end{lem}
\begin{proof}
Lemma \ref{lem_clopen} entails that $T(\Delta t)$ defines a continuous, open and closed map from $L_\varepsilon$ to itself. It therefore maps all open sets to open sets and all closed sets to closed sets in $L_\varepsilon$. In particular, it must map all sets that are both open and closed in $L_\varepsilon$ to other sets which are both open and closed. Since $L_\varepsilon\simeq S^{D_N-1}$ is connected, the empty set and the full $L_\varepsilon$ are the only sets in $L_\varepsilon$ which are both closed and open. Now, by lemma \ref{lem_injective}, $T(\Delta t)$ is injective
so that $[T(\Delta t)](L_\varepsilon)=L_\varepsilon$.
\end{proof}

The last result is sufficient to finally establish that time evolution is surjective also on the state space.
\begin{Theorem}\label{thm_surj}
Any allowed time evolution is surjective on $\Sigma_N$, i.e.\ $[T(\Delta t)](\Sigma_N)=\Sigma_N$, and has $|\det T(\Delta t)|=1$.
\end{Theorem}
\begin{proof}
Given that an interior fat level set $L_{I\leq\varepsilon}$ is, by lemma \ref{lem_level}, homeomorphic to the closed unit ball in $\mathbb{R}^{D_N}$ it is a convex body and has volume \cite{webster}. Lemma \ref{lem_surjint} yields $[T(\Delta t)](L_{I\leq\varepsilon})=L_{I\leq\varepsilon}$ and so its volume is left invariant by $T(\Delta t)$ which, according to (\ref{blochtime}), acts linearly on states $\vec{r}_{O\rightarrow S}$ in the Bloch vector parametrization. Theorem 6.2.14 in \cite{webster} then implies that $|\det T(\Delta t)|=1$ and therefore also that the volume of $[T(\Delta t)](\Sigma_N)$ is equal to the volume of $\Sigma_N$. Consider now the volume difference
\begin{widetext}
\ba
\delta\Big([T(\Delta t)](\Sigma_N),\Sigma_N\Big):=\rm{vol}\Big([T(\Delta t)](\Sigma_N)\cup\Sigma_N\Big)-\rm{vol}\Big([T(\Delta t)](\Sigma_N)\cap\Sigma_N\Big).\nn
\ea
\end{widetext}
Clearly, $[T(\Delta t)](\Sigma_N)\subseteq\Sigma_N$, and so $\delta\big([T(\Delta t)](\Sigma_N),\Sigma_N\big)=\rm{vol}\left(\Sigma_N\right)-\rm{vol}\big([T(\Delta t)](\Sigma_N)\big)=0$. Exercise 6.2.2 in \cite{webster} then shows that $\delta\big([T(\Delta t)](\Sigma_N),\Sigma_N\big)=0$ is only possible if $[T(\Delta t)](\Sigma_N)=\Sigma_N$.
\end{proof}

Hence, $T^{-1}(\Delta t)$ maps all legal states to legal states and must, by rule \ref{time}, be legal also.

\begin{Corollary}
Any time evolution permitted by the rules is \emph{reversible}.
\end{Corollary}

\begin{widetext}
\subsection{Time evolution defines a group}

Given that any time interval can be decomposed into two time intervals, $\Delta t=\Delta t_1+\Delta t_2$, and the evolution of $\vec{r}_{O\rightarrow S}=2\,\vec{y}_{O\rightarrow S}-\vec{1}$ is \emph{continuous} by rule \ref{time}, $O$ must find
%\begin{widetext}
\ba
T(\Delta t_2)\,T(\Delta t_1)\,\vec{r}_{O\rightarrow S}(0)=T(\Delta t_2)\,\vec{r}_{O\rightarrow S}(\Delta t_1)=\vec{r}_{O\rightarrow S}(\Delta t)=T(\Delta t)\,\vec{r}_{O\rightarrow S}(0),\nn
\ea
\end{widetext}
and thus that multiplication is {\it abelian},
\ba
T(\Delta t_1+\Delta t_2)=T(\Delta t_2)\,T(\Delta t_1)=T(\Delta t_1)\,T(\Delta t_2).\nn
\ea
(The last equality follows from time translation invariance.) From the last equation and $T(0)=\mathbb{1}$, using $\Delta t-\Delta t=0$, we can also infer that $T^{-1}(\Delta t)=T(-\Delta t)$. If we permit $O$ to consider the time evolution of $S$ for {\it any} duration $\Delta t\in\mathbb{R}$, it follows that the product of any two time evolution matrices is again a time evolution matrix. In summary, we therefore gather that a {\it given} time evolution, as perceived by $O$ and under the circumstances to which he has subjected $S$, is such that: 
\begin{itemize}
\item[(i)] $T(0)=\mathbb{1}$, 
\item[(ii)] to every $T(\Delta t)$ there exists an inverse $T^{-1}(\Delta t)=T(-\Delta t)$, 
\item[(iii)] the multiplication of any two time evolution matrices is again a time evolution matrix, and 
\item[(iv)] matrix multiplication is obviously associative. 
\end{itemize}
In conclusion, under the assumptions of section \ref{sec_infland} and rules \ref{pres} and \ref{time}, a {given} time evolution defines therefore an abelian, one-parameter matrix group. % such that, due to (\ref{norm}), each of its elements is {\it stochastic} in any pair of components $i$ and $i+D_N$ of $\vec{y}_{O\rightarrow S}$. 
Hence, a given time evolution is described by a single evolution generator and single parameter $\Delta t$ parametrizing the duration. We note, however, that a multiplicity of time evolutions of $S$ is possible, depending on the physical circumstances (interactions) to which $O$ may subject $S$. Different time evolutions will be generated by different generators, but each time evolution will form a one-parameter group as discussed above. 

In fact, the full set of time evolutions $\ct_N$ which $O$ is able to implement must likewise be a group. Namely, if $T(\Delta t),T'(\Delta t')$ correspond to two distinct interactions, then rule \ref{time} implies that also $T(\Delta t)\cdot T'(\Delta t')$ is a legal time evolution for any state and since both $T,T'$ are invertible their full set must be a group. This implies, in particular, that $\ct_N$ is a group. We shall return to this further below and in \cite{hw}.

\subsection{The squared length of the Bloch vector as information measure}\label{sec_infomeasure}

We now have sufficient structure in our hand to determine the functional relation between $\alpha_i$ and $y_i$. Given that the generalized Bloch vector $2\,\vec{y}_{O\rightarrow S}-\vec{1}$ transforms nicely under time evolution (\ref{blochtime}), it is useful to parametrize $\alpha_i$ by $2y_i-1$, i.e.\ $\alpha_i=\alpha(2y_i-1)$. Rule \ref{pres} entails that $O$'s total information about (an otherwise non-interacting) $S$ is a `conserved charge' of time evolution
\ba
I_{O\rightarrow S}(\vec{y}_{O\rightarrow S}(\Delta t))=I_{O\rightarrow S}(\vec{y}_{O\rightarrow S}(0))\nn
\ea
which translates into the condition
\begin{widetext}
\ba
I_{O\rightarrow S}\left(T(\Delta t)\left(2\,\vec{y}_{O\rightarrow S}(0)-\vec{1}\right)\right)&=&\sum_{i=1}^{D_N}\alpha\left(\sum_{j=1}^{D_N}T_{ij}(\Delta t)\left(2y_j(0)-1\right)\right)\nn\\
&=&\sum_{i=1}^{D_N}\alpha\left(2y_i(0)-1\right)=I_{O\rightarrow S}(2\,\vec{y}_{O\rightarrow S}(0)-\vec{1}).\label{infocons}
\ea
\end{widetext}
If $T(\Delta t)$ was a permutation matrix, (\ref{infocons}) would hold for {\it any} function $\alpha(2y_i-1)$. For example, $N$ classical bits are governed by the evolution group $\mathbb{Z}_2\times\cdots\times \mathbb{Z}_2$. However, permutations form a discrete group, while in our present case $\{T(\Delta t),\,\Delta t\in \mathbb{R}\}$ constitutes a {\it continuous} one-parameter group. This is where {\it continuity} of time evolution, as asserted by rule \ref{time}, becomes crucial. Under a reasonable assumption on the information measure, we shall now show that {continuity} of time evolution, together with conditions (i)--(vi) of subsection \ref{sec_elcond}, enforces the quadratic relation $\alpha_i=(2y_i-1)^2$. To this end, we once more invoke the coin flip scenario.\footnote{We suspect that this result may be derivable from purely group theoretic arguments {\it without} an operational setup by employing the mathematical fact that to every continuous matrix group acting linearly on some space there corresponds a conserved inner product which is quadratic in the components of the vectors.}

%Let us firstly argue conceptually and technically why the Shannon entropy is {\it not} appropriate for quantifying $O$'s information {\it in this case}. Conceptually, the main difference is that the Shannon entropy quantifies the {\it average information gain} an observer experiences {\it after} the outcomes of a random variable experiment occur. Equivalently, it quantifies the observer's {\it uncertainty} about the outcomes {\it before} the experiment. By contrast, here we seek to quantify $O$'s information which he has gained from previous interrogations about the outcomes of (i.e., answers to) {\it future} experiments (i.e., questions).

Given our parametrization in terms of the Bloch vector $2\,\vec{y}_{O\rightarrow S}-\vec{1}$, $O$'s information about the outcomes of his questions in the coin flip scenario can be written as follows
\begin{widetext}
\ba
I_{O\rightarrow S_{12}}\left(2\left(\lambda\,\vec{y}_{O\rightarrow S_1}+(1-\lambda)\,\vec{y}_{O\rightarrow S_2}\right)-\vec{1}\right)&=&I_{O\rightarrow S_{12}}\left(\lambda\,(2\,\vec{y}_{O\rightarrow S_1}-\vec{1})+(1-\lambda)(2\,\vec{y}_{O\rightarrow S_2}-\vec{1})\right)\nn\\
&=&\sum_{i=1}^{D_N}\,\alpha\left(\lambda\,(2y_i^1-1)+(1-\lambda)(2y_i^2-1)\right).\nn
\ea
\end{widetext}
It is instructive to consider the case in which $O$ is entirely oblivious about $S_2$ such that the latter is in the state of no information $\vec{y}_{O\rightarrow S_2}=\f{1}{2}\,\vec{1}$ relative to him, but that $O$ has some information about $S_1$. In this case, $2\,\vec{y}_{O\rightarrow S_{12}}-\vec{1}=\lambda(2\,\vec{y}_{O\rightarrow S_1}-\vec{1})$ and (assuming the outcome of the coin flip is not certain) strict convexity of $I_{O\rightarrow S}$ (see subsection \ref{sec_elcond}) implies
\ba
I_{O\rightarrow S_{12}}(\lambda\, (2\,\vec{y}_{O\rightarrow S_1}-\vec{1}))<\lambda\,I_{O\rightarrow S_1}(2\,\vec{y}_{O\rightarrow S_1}-\vec{1})\nn
\ea
or, equivalently,
%\begin{widetext}
\ba
I_{O\rightarrow S_{12}}(\lambda\, (2\,\vec{y}_{O\rightarrow S_1}&-&\vec{1}))\\\label{rescale}
=f&\cdot& I_{O\rightarrow S_1}(2\,\vec{y}_{O\rightarrow S_1}-\vec{1}),\nn
\ea
%\end{widetext}
where $f<1$ is a factor parametrizing $O$'s information loss relative to the case in which he does {\it not} toss a coin and, instead, directly asks $S_1$.\footnote{In fact, one can equivalently interpret the situation as follows: $O$ {\it only} considers a system $S_1$ which would be in the state $\vec{y}_{O\rightarrow S_1}$ if it was present with certainty. However, $\lambda$ in the state $\lambda(2\,\vec{y}_{O\rightarrow S_1}-\vec{1})$ represents the probability that $S_1$ `is there' at all. Indeed, in this case, (\ref{norm}) can be written as $\lambda(\vec{y}_{O\rightarrow S_1}+\vec{n}_{O\rightarrow S_1})=\lambda\cdot\vec{1}$ such that $\lambda(\vec{y}_{O\rightarrow S_1}-\vec{n}_{O\rightarrow S_1})=\lambda(2\,\vec{y}_{O\rightarrow S_1}-\vec{1})$.} The reason $O$ experiences such a relative information loss about the outcome of his interrogation is, of course, entirely due to the randomness of the coin flip. But the coin flip is independent of the systems $S_{1,2}$ and, in particular, of the states in which these are relative to $O$; the factor $\lambda$ by which the probabilities $\vec{y}_{O\rightarrow S_1}$ become rescaled is state independent. For that reason, the relative information loss should likewise depend {\it only} on the coin flip, quantified by $\lambda$, and {\it not} on the state $\vec{y}_{O\rightarrow S_1}$. For instance, if we also considered the case that the coin flip was certain, i.e.\ $\lambda=0,1$, then clearly for $\lambda=1$ we must have $f=1$ $\forall\,\vec{y}_{O\rightarrow S_1}\in\Sigma_N$ and for $\lambda=0$ it must hold $f=0$ $\forall\,\vec{y}_{O\rightarrow S_1}\in\Sigma_N$. We shall make this into a requirement on the information measure for all values of $\lambda$:

\begin{Req}
The relative information loss factor $f$ in (\ref{rescale}) is a \emph{state independent} (continuous) function of the coin flip probability $\lambda$ with $f(\lambda)<1$ for $\lambda\in(0,1)$.
\end{Req}

The binary Shannon entropy $H(y)=-y\log y-(1-y)\log(1-y)$ in the role of $\alpha$ would fail this requirement. Namely, the measure $\alpha$ thus factorizes, $\alpha(\lambda\,(2y_i-1))=f(\lambda)\,\alpha(2y_i-1)$, while $H(y)$ would not. Setting $\lambda=\lambda_1\cdot\lambda_2$ yields\footnote{Such a factorization of coin flip probabilities could be achieved, e.g., if $O$ decided to use one coin, with `heads' probability $\lambda_1$, to firstly decide which of two possible convex mixtures to prepare where both possible mixtures are generated with a second coin with `heads' probability $\lambda_2$. If three of the four states within the two mixtures are chosen as the state of no information, one would obtain precisely such an equation.}
%\begin{widetext}
\ba
f(\lambda_1\cdot\lambda_2)\,\alpha(2\,y_i-1)&=&\alpha(\lambda_1\cdot\lambda_2\,(2\,y_i-1))\nn\\&=&f(\lambda_1)\,\alpha(\lambda_2(2\,y_i-1))\nn\\&=&f(\lambda_1)f(\lambda_2)\,\alpha(2\,y_i-1)\nn
\ea
%\end{widetext}
and therefore $f(\lambda_1)\cdot f(\lambda_2)=f(\lambda_1\cdot\lambda_2)$, which implies $f(\lambda)=\lambda^p$, for some power $p\in\mathbb{R}$. But then, $\alpha$ must be a homogeneous function $\alpha(2\,y_i-1)=k\,(2\,y_i-1)^p$ with some constant $k\in\mathbb{R}$. In consequence, the information $I_{O\rightarrow S}(2\,\vec{y}_{O\rightarrow S}-\vec{1})$ is (up to $k$) the $p$-norm of the Bloch vector $2\,\vec{y}_{O\rightarrow S}-\vec{1}$.

We can rule out that $p\in(-\infty,0]$ because in this case, as one can easily check, it is impossible to satisfy all the consistency conditions (i)--(vi) of subsection \ref{sec_elcond}. Hence, $p>0$. At this stage we can make use of (\ref{infocons}) and a result by Aaronson \cite{Aaronson:fk} which implies that the only vector $p$-norm with $p>0$ which is preserved by a {\it continuous }matrix group is the $2$-norm. Since any given time evolution of the Bloch vector $2\,\vec{y}_{O\rightarrow S}-\vec{1}$ is governed by a continuous, one-parameter matrix group, we conclude that $\alpha(2\,y_i-1)=k\,(2\,y_i-1)^2$. Imposing condition (iv) yields $k=1$ and therefore ultimately
\ba
I_{O\rightarrow S}(\vec{y}_{O\rightarrow S})=\sum_{i=1}^{D_N}\,(2\,y_i-1)^2.\label{info2}
\ea
It is straightforward to convince oneself that all of (i)--(vi) of subsection \ref{sec_elcond} are satisfied by this quadratic information measure. $O$'s total amount of information about $S$ is thus the squared length of the generalized Bloch vector, thereby assuming a geometric flavour. It is important to emphasize that, had we not imposed continuity of time evolution in rule \ref{time}, we would not have been able to arrive at (\ref{info2}); if time evolution was {\it not} continuous, many solutions to $\alpha$ in terms of $y_i$ would be possible.

%Consequently, $O$'s information about the various question outcomes is {\it not} quantified by the Shannon entropy in the more general form of the von Neumann entropy although the latter certainly {\it is} invariant under the continuous unitary transformations in quantum theory. As argued in section \ref{sec_shannon}, the von Neumann entropy is conceptually distinct from and designed for a different purpose than the present information measure and for this purpose it clearly remains useful in quantum theory. Technically, it fails to abide by requirement 2. 

The quadratic information measure (\ref{info2}) has been proposed earlier by Brukner and Zeilinger in \cite{Brukner:1999qf,Brukner:ly,Brukner:2001ve,Brukner:ys} from a different perspective, emphasizing that this is the most natural measure taking into account an observer's uncertainty -- due to statistical fluctuations -- about the outcome of the next trial of measurements on a system in a multiple shot experiment. 
%In particular, the authors stress that the quadratic measure is the appropriate one under the circumstance that the measurements do not reveal pre-existing properties of the observed systems and that all the information an observer has access to prior to a measurement are the probabilities for the various experimental outcomes. By contrast, it is further argued that the classical Shannon information conceptually applies to the situation where a measurement reveals a pre-existing property. The arguments leading to their proposal are congruous with the operational character of the landscape of inference theories in section \ref{sec_infland} and thereby conceptually complement and further support our current derivation of the measure quantifying the information in $S$'s state. However, as argued above, this does {\it not} mean that the von Neumann entropy is not applicable for other purposes in quantum theory, specifically, to quantify average information gains relative to prior states in multiple shot experiments.
Furthermore, taking the formalism of quantum theory as given, Brukner and Zeilinger \cite{brukner2009information} later singled out the quadratic measure from the set of Tsallis entropies by imposing an `information invariance principle', according to which a continuous transformation among any two complete sets of mutually complementary measurements in quantum theory should leave an observer's information about the system invariant. While \cite{brukner2009information} is certainly compatible with the present framework, here we come from farther away to the same result: we do not pre-suppose quantum theory and derive the quadratic measure more generally by starting from the landscape of information inference theories and imposing rule \ref{pres} of information preservation and rule \ref{time} of maximality of time evolution thereon. In this regard, the present derivation may similarly be taken as a strong justification for the original Brukner-Zeilinger proposal.

\subsection{The set of all time evolutions is a subgroup of $\rm{SO}(D_N)$}\label{ssec_SO}

Thus far, we only gathered that a given time evolution is described by a one-parameter group and, by theorem \ref{thm_surj}, must have $|\det T(\Delta t)|=1$. Now, given (\ref{info2}), we are in the position to say quite a bit more. The full matrix group leaving (\ref{info2}) invariant is $\mathrm{O}(D_N)$. However, a given time evolution is a continuous one-parameter group and must therefore be connected to the identity. Hence, $T(\Delta t)\in\SO(D_N)$, $\forall\,T(\Delta t)$. Consequently, the group corresponding to any fixed time evolution is a one-parameter subgroup of $\SO(D_N)$. We also noted before that the set of all possible time evolutions $\ct_N$ is a group thanks to rule \ref{time}. It must therefore likewise satisfy $\ct_N\subset\SO(D_N)$. This topic is thoroughly discussed in the companion articles, where it is shown that the rules imply $\ct_N=\rm{PSU}(2^N)$ for the $D_1=3$ \cite{hw} and $\ct_N=\rm{PSO}(2^N)$ for the $D_1=2$ case \cite{hw2}. Note that $\rm{PSU}(2^N)$ is a proper subgroup of $\rm{SO}(D_N=4^N-1)$ for $N>1$. The generators of $\ct_N$ are the set of possible time evolution generators of $S$'s states.

\subsection{Pure and mixed states}\label{sec_puremixed}

The explicit quantification of $O$'s information now permits us to render the distinction between three informational classes of $S$'s states -- which we already loosely referred to as `states of maximal knowledge' or `states of non-maximal information' in previous sections -- precise. 

Firstly, we determine the maximally attainable (independent and dependent) information content within a state $\vec{y}_{O\rightarrow S}$ of a system of $N$ gbits. This can be easily counted: once $O$ knows the answers to $N$ mutually independent questions (these do not need to be individuals), he will also know the answers to all their bipartite, tripartite,... and $N$-partite correlation questions -- all of which are contained in $\cq_{M_N}$ too by theorems \ref{thm_qubit3} and \ref{thm_rebit3}. But these are then
\ba
\binom{N}{1}+\binom{N}{2}+\cdots\binom{N}{N}=\sum_{i=1}^{N}\,\binom{N}{i}=2^N-1\nn
\ea
answered questions from $\cq_{M_N}$, while all remaining questions in $\cq_{M_N}$ will be maximally complementary to at least one of the known ones. $O$'s total information, as quantified by $I_{O\rightarrow S}$ (\ref{eqn_newI}), thus contains plenty of dependent \texttt{bits} of information -- a result of the fact that the questions in $\cq_{M_N}$ are pairwise but not necessarily mutually independent.

Using this observation, %and invoking the completeness rule \ref{infodist} according to which any information distribution over the questions in $\cq_{M_N}$ is permitted which abides by principles \ref{lim} and \ref{unlim}, 
we shall characterize $S$'s states according to their information content, i.e.\ squared length of the Bloch vector. By rules \ref{pres} and \ref{time}, this distinction applies to all states which are connected via some time evolution to the states above, including those for which the information may be distributed partially over many elements of a fixed $\cq_{M_N}$. Specifically, we shall refer to a state $\vec{y}_{O\rightarrow S}$ as a
\begin{widetext}
%\onecolumngrid

\begin{description}
\item {\bf pure state:} if it is a state of maximal information content, i.e.\ maximal length
\ba
I_{O\rightarrow S}(\vec{y}_{O\rightarrow S})=\sum_{i=1}^{D_N}\,(2\,y_i-1)^2=(2^N-1)\, \texttt{bits},\nn
\ea
\item {\bf mixed state:} if it is a state of non-extremal information content, i.e.\  non-extremal length
\ba
0\,\texttt{bit}<I_{O\rightarrow S}(\vec{y}_{O\rightarrow S})=\sum_{i=1}^{D_N}\,(2\,y_i-1)^2<(2^N-1)\, \texttt{bits}, \nn
\ea
\item {\bf totally mixed state:} if it is the state of no information $\vec{y}_{O\rightarrow S}=\f{1}{2}\vec{1}$ with zero length 
\ba
I_{O\rightarrow S}\left(\vec{y}_{O\rightarrow S}=\f{1}{2}\vec{1}\right)=0\, \texttt{bit}.\nn
\ea
\end{description}
\end{widetext}
%\twocolumngrid
We note that these characterizations of states in terms of their length are indeed true in quantum theory. In particular, $N$ qubit pure states actually have a Bloch vector squared length equal to $2^N-1$. Our reconstruction gives this peculiar fact a clear informational interpretation.

One may wonder whether the above definition of a pure state is directly equivalent to being extremal in $\Sigma_N$ and thereby to the usual definition of pure states in GPTs or quantum theory. This is not obvious from what we have established thus far. However, since clearly we assume $I_{O\rightarrow S}=(2^N-1)$ \texttt{bits} to be the maximally attainable information, we can already conclude that pure states so defined must lie on the boundary of $\Sigma_N$ because a non-constant convex function cannot have its maximum in the interior of its convex domain \cite{webster}. Showing that the pure states, as defined above, of the theory surviving the imposition of all rules -- quantum theory -- are, indeed, the extremal states, requires more work. For $N=1$ we shall demonstrate this shortly, while the discussion for $N>1$ is deferred to \cite{hw}.

\section{The $N=1$ case and the Bloch ball}\label{sec_n1}

Before closing this toolkit for now, we quickly give a flavor of the capabilities of the newly developed concepts and tools by applying them to show that in the simplest case of a single gbit ($N=1$) rules \ref{lim}--\ref{time} indeed only have two solutions within $\cl_{\rm gbit}$, namely the qubit and the rebit state space including their respective time evolution groups. This proves the claim of section \ref{sec_principles} for $N=1$.

To this end, recall theorem \ref{thm_3d} which asserts that the dimension of the $N=1$ state space $\Sigma_1$ is either $D_1=2$ or $D_1=3$ which thus far we suggestively referred to as the `rebit' and `qubit case', respectively.

\newpage
\begin{widetext}
\subsection{A single qubit and the Bloch ball}

We begin with the $D_1=3$ case. $\Sigma_1$ will be parametrized by a three-dimensional vector 
\ba
\vec{y}_{O\rightarrow S}=\left(\begin{array}{c}y_1 \\y_2 \\y_2\end{array}\right),\nn
\ea 
where $y_1,y_2,y_3$ are the `yes'-probabilities of a mutually maximally complementary question set $Q_1,Q_2,Q_3$ constituting an informationally complete $\cq_1$ (\ref{qm1}). 

The informational distinction of states introduced in section \ref{sec_puremixed} reads in this case

%\begin{widetext}

\begin{description}
\item {\bf pure states:} $\vec{y}_{O\rightarrow S}$ such that 
\ba
I_{O\rightarrow S}=(2\,y_1-1)^2+(2\,y_2-1)^2+(2\,y_3-1)^2=1\, \texttt{bit},\nn
\ea
\item {\bf mixed states:} $\vec{y}_{O\rightarrow S}$ such that 
\ba
0\, \texttt{bit}<I_{O\rightarrow S}=(2\,y_1-1)^2+(2\,y_2-1)^2+(2\,y_3-1)^2<1\, \texttt{bit},\nn
\ea
\item {\bf totally mixed state:} $\vec{y}_{O\rightarrow S}=\f{1}{2}\,\vec{1}$ such that 
\ba
I_{O\rightarrow S}=(2\,y_1-1)^2+(2\,y_2-1)^2+(2\,y_3-1)^2=0\, \texttt{bit}.\nn
\ea
\end{description}
\end{widetext}

Recall from section \ref{ssec_SO} that the set of possible time evolutions $\ct_1$ must be contained in $\SO(3)$ and that the time evolution rule \ref{time} requires the set of states into which {\it any} state $\vec{y}_{O\rightarrow S}$ can evolve to be maximal, while being compatible with the rules. Thus, in particular, the image of the certainly legal pure state $(1,0,0)$ (rules \ref{lim} and \ref{unlim} imply its existence in $\Sigma_1$) under $\ct_1$ must be maximal. Applying all of $\SO(3)$ to this state generates {\it all} states with $|2\vec{y}-\vec{1}|^2=1$ \texttt{bit} and these certainly abide by the rules. %abiding by the limited information and complementarity principles to other such states. 
%It is easy to see that all states with $|2\vec{y}-\vec{1}|^2\leq1$ \texttt{bit} are consistent with the principles. 
Consequently, the set of all possible time evolutions, compatible with rules \ref{lim}--\ref{time}, is the rotation group
\ba
\ct_1\simeq\rm{SO}(3)\simeq\rm{PSU}(2).\nn
\ea
This is the component of the isometry group of the Bloch ball which is connected to the identity and it is required in full in order to maximize the number of states into which $(1,0,0)$ can evolve. However, since time evolution is state independent (c.f.\ assumption \ref{assump7}), this is the time evolution group for all states. 
% all time evolutions (in between $O$'s interrogations) are allowed which preserve $O$'s total information about the qubit, quantified by the squared length of the Bloch vector. 
$\rm{PSU}(2)$ is precisely the adjoint action of $\rm{SU}(2)$ on density matrices $\rho_{2\times2}$ over $\mathbb{C}^2$, $\rho_{2\times2}\mapsto U\,\rho_{2\times2}\,U^\dag$, $U\in\rm{SU}(2)$ and thus coincides with the set of all possible unitary time evolutions of a single qubit in standard quantum theory.

The set of allowed states populates the entire unit ball in the three-dimensional Bloch ball, i.e.\ $\Sigma_1\simeq B^3$. This follows from the fact that applying the full group $\ct_1=\SO(3)$ to $(1,0,0)$ generates the entire Bloch sphere as a closed set of extremal states and the fact that $\Sigma_1$ must be closed convex according to assumption \ref{assump2}. Hence, we recover the well-known three-dimensional Bloch ball state space of a single qubit of standard quantum theory with the set of all pure states defining the boundary sphere $S^2$, the totally mixed state (as the state of no information) constituting the center and the set of mixed states filling the interior in between, as illustrated in figure \ref{fig_bloch}. This is precisely the geometry of the set of all normalized density matrices on $\mathbb{C}^2$. Notice that the pure state space $S^2\simeq\mathbb{C}P^1$ indeed coincides with the set of all unit vectors in $\mathbb{C}^2$ (modulo phase).

\begin{figure}[hbt!]
\begin{center}
\psfrag{cm}{totally mixed state}
\psfrag{m}{mixed states}
\psfrag{p}{pure states}
\psfrag{r}{$\vec{r}$}
\psfrag{y}{$\vec{r}=2\,\vec{y}_{O\rightarrow S}-\vec{1}$}
\vspace*{.4cm}\hspace*{-4cm}\begin{subfigure}[b]{.22\textwidth}
\centering
\includegraphics[scale=.3]{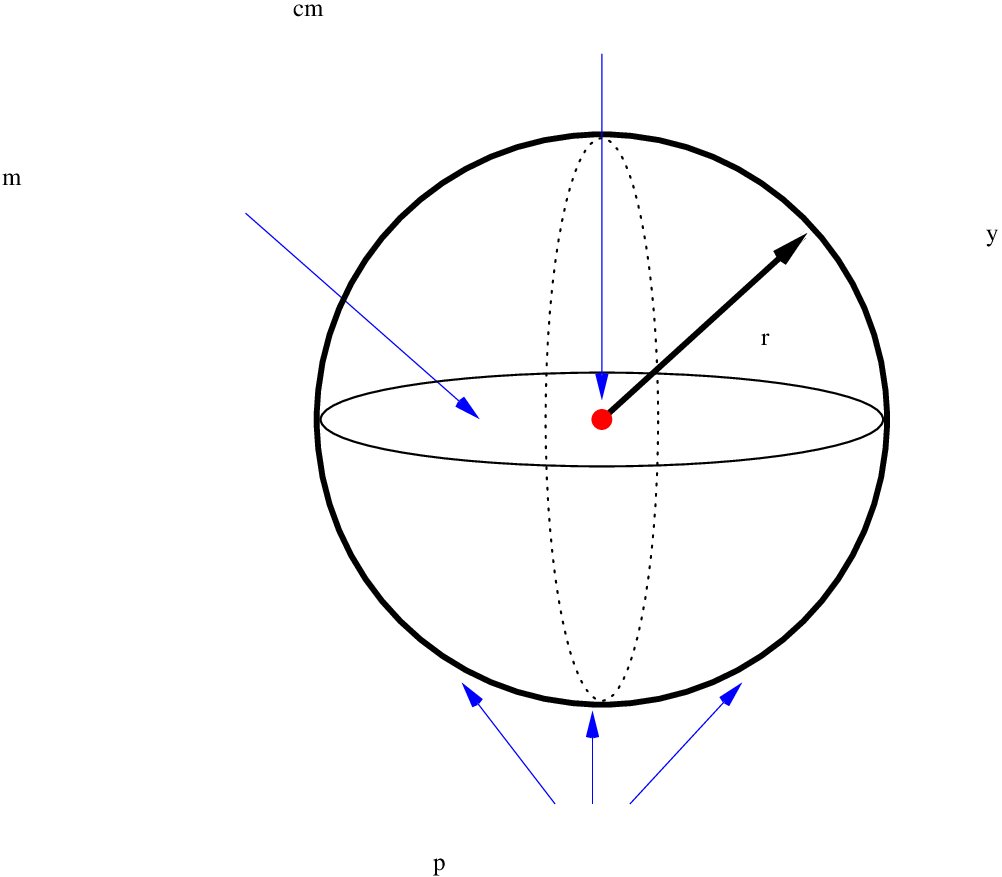}
\centering
\vspace*{.3cm}
\caption{\small 3D qubit Bloch ball}\label{fig_bloch}\end{subfigure}\\\vspace*{1cm}
%\centering
\hspace*{-4cm}
\begin{subfigure}[b]{.22\textwidth}
\centering
\includegraphics[scale=.3]{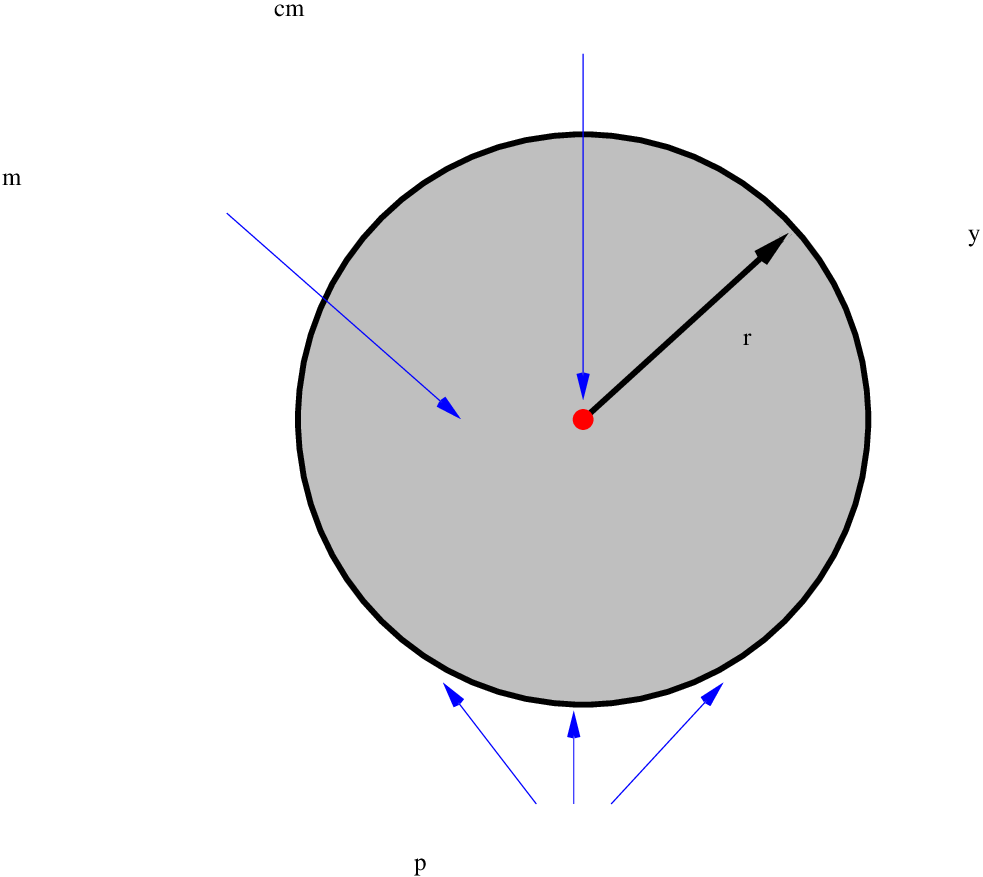}
\vspace*{.3cm}\caption{\small 2D rebit Bloch disc }\label{fig_blochrebit}
\end{subfigure}
\caption{\small The three-dimensional Bloch ball (a) and the two-dimensional Bloch disc (b) are the correct state spaces $\Sigma_1$ of a single qubit in standard quantum theory and a single rebit in real quantum theory, respectively. The vector $\vec{r}$ parametrizing the states is the Bloch vector $2\,\vec{y}_{O\rightarrow S}-\vec{1}$.}
\end{center}
\end{figure}

Given the complete symmetry of the Bloch ball as the state space for $N=1$, there should not exist a distinguished informationally complete question set $\cq_{M_1}$, corresponding to a distinguished orthonormal Bloch vector basis, by means of which $O$ can interrogate $S$. While it is an additional assumption, it is thus natural to stipulate that to {\it every} $Q\in\cq_1$ there exists a unique pure state in $\Sigma_1$ which represents the truth value $Q=$`yes' and, conversely, that every pure state of this system corresponds to the definite answer to one question in $\cq_1$. But then $\cq_1\simeq S^2$ which also coincides with the set of all possible projective measurements onto the $+1$ (or, equivalently, the $-1$) eigenspaces of the Pauli operators $\vec{n}\cdot\vec{\sigma}$ over $\mathbb{C}^2$ which is parametrized by $\vec{n}\in\mathbb{R}^3$, $|\vec{n}|^2=1$, and where $\vec{\sigma}=(\sigma_x,\sigma_y,\sigma_z)$ are the usual Pauli matrices. This set of permissible questions $\cq_N$ for all $N$ will be discussed more thoroughly in \cite{hw} together with a derivation of the Born rule for projective measurements.

The `ballness' and three-dimensionality of the state space of a single qubit can also be derived from various operational axioms within GPTs \cite{Dakic:2009bh,Masanes:2011kx,masanes2011derivation,Masanes:2012uq,Mueller:2012pc,Barnum:2014fk,Garner:2014uq}
and constitutes a crucial step in most GPT based reconstructions of quantum theory \cite{Hardy:2001jk,Dakic:2009bh,de2012deriving,masanes2011derivation,Mueller:2012pc}. The principle of {\it continuous reversibility}, according to which every pure state of the convex set can be mapped into any other by means of a continuous and reversible transformation, usually assumes a crucial role in such derivations. Here we offer a novel perspective on the origin of the Bloch ball by deriving it from elementary rules for the informational relation between $O$ and $S$; in particular, we recover continuous reversibility as a by-product.

\subsection{A single rebit and the Bloch disc}

The analogous result holds for the $D_1=2$ case: $\Sigma_1$ will be parametrized by a two-dimensional vector 
\ba
\vec{y}_{O\rightarrow S}=\left(\begin{array}{c}y_1 \\y_2\end{array}\right),\nn
\ea 
where $y_1,y_2$ are the `yes'-probabilities of two maximally complementary questions $Q_1,Q_2$ constituting an informationally complete $\cq_1$ (\ref{qm1}). We then have
\begin{description}
\item {\bf pure states:} $\vec{y}_{O\rightarrow S}$ such that
\ba
I_{O\rightarrow S}=(2\,y_1-1)^2+(2\,y_2-1)^2=1\, \texttt{bit},\nn
\ea
\item {\bf mixed states:} $\vec{y}_{O\rightarrow S}$ such that 
\ba
\!\!\!\!\!\!\!\!\!0\, \texttt{bit}<I_{O\rightarrow S}=(2y_1-1)^2+(2y_2-1)^2<1\, \texttt{bit},\nn
\ea 
and the
\item {\bf totally mixed state:} $\vec{y}_{O\rightarrow S}=\f{1}{2}\,\vec{1}$ such that 
\ba
I_{O\rightarrow S}=(2\,y_1-1)^2+(2\,y_2-1)^2=0\, \texttt{bit}.\nn
\ea
\end{description}
In analogy to the qubit case, the time evolution rule \ref{time} implies that
\begin{itemize}
\item[(a)] the set of all possible time evolutions is the (projective) rotation group
\ba
\ct_1\simeq\rm{SO}(2)\overset{\tiny{\#}}{\simeq}\rm{PSO}(2)\simeq\rm{SO}(2)/\mathbb{Z}_2\nn
\ea
because $\rm{SO}(2)$ is the full (connected component of the orientation preserving) isometry group of the Bloch disc and all time evolutions are permitted which preserve the Bloch vector length, representing $O$'s total information about $S$, and the orientation of the Bloch vector basis. Orientation preservation means that $O$'s convention about the `yes'-`no'-labelling of the question outcomes is preserved: 
The isomorphism denoted with a \# requires some explaining because $\rm{SO}(2)$ is actually a double cover of $\rm{PSO}(2)$ as indicated on the right. Indeed, the real density matrix of a single rebit, $\rho_{2\times2}=\f{1}{2}(\mathds{1}+r_x\sigma_x+r_z\sigma_z)$ on $\mathbb{R}^2$ where $r_i=(2y_i-1)$, evolves under the adjoint action of $\rm{SO}(2)$, $\rho_{2\times2}\mapsto O\,\rho_{2\times2}\,O^T$ for $O=e^{i\sigma_y t}\in\rm{SO}(2)$. This is equivalent to an action of $\rm{PSO}(2)$ because the non-trivial center element $-\mathbb{1}\in\rm{SO}(2)$ acts trivially on $\rho_{2\times2}$ and thus factors out. However, as a subcase of the single qubit case this adjoint action of $\rm{SO}(2)$ on $\rho_{2\times2}$ is also equivalent to $O'\,\vec{r}$ for $O'\in\rm{SO}(2)$ in the Bloch vector representation.\vspace*{-.1cm}

\item[(b)]  the state space $\Sigma_1$ coincides with the two-dimensional Bloch disc, as depicted in figure \ref{fig_blochrebit}, with the totally mixed state in the center, the pure states on the boundary circle $S^1$ and the mixed states in the interior in between. This is precisely the geometry of the set of unit trace, positive-semidefinite, symmetric matrices over $\mathbb{R}^2$ -- the space of density matrices of a single rebit. Similarly, the pure state space $S^1\simeq\mathbb{R}P^1$ coincides with the set of all unit vectors in the Hilbert space $\mathbb{R}^2$ modulo $\mathbb{Z}_2$.
\end{itemize}

Again, given the symmetry of the Bloch disc, there is no reason for a distinguished informationally complete question basis $\cq_{M_1}$ on $\cq_1$, incarnated as a distinguished Bloch vector basis, to exist. Accordingly, we assume that {\it any} pure state on $S^1$ corresponds to the definite answer of some question in $\cq_1$ which $O$ may ask the rebit. In this case, $\cq_1\simeq S^1$, which coincides with the set of projective measurements onto the $+1$ eigenspaces of the Pauli operators $\vec{n}\cdot\vec{\sigma}$ over $\mathbb{R}^2$ which are parametrized by $\vec{n}\in\mathbb{R}^2$, $|\vec{n}|^2=1$ and where $\vec{\sigma}=(\sigma_x,\sigma_z)$ are the two real and symmetric Pauli matrices.

\section{Discussion and outlook}\label{sec_conc}

Based on the premise to only speak about information which an observer (or more generally system) has access to by interaction with other systems, we have developed a novel framework for characterizing and (re-)constructing the quantum theory of qubit systems. 
We have laid down, without ontological statements, the mathematical and conceptual foundations for a landscape of theories describing how an observer acquires information about physical systems through interrogation with yes-no-questions.

%which emphasizes the informational relation between an observer $O$ and a system $S$. This relation is established by $O$ interrogating $S$ with yes-no-questions and is governed by a set of elementary rules constraining $O$'s acquisition of information about $S$. 

 Within this landscape four elementary rules have been given which constrain the observer's acquisition of information for qubit (and rebit) systems. The rule of limited information and the complementarity rule imply an independence and compatibility structure of the binary questions which, in fact, reproduces that of projective measurements onto the $+1$ eigenspaces of Pauli operators in quantum theory. In particular, these rules entail in a constructive and simple way
\begin{enumerate}
\item a novel argument for the dimensionality of the Bloch ball, 
\item a new method for determining the correct number of independent questions/measurements necessary to describe a system of $N$ qubits (or rebits), 
\item a natural explanation for entanglement, monogamy of entanglement and quantum non-locality,
\item the explicit correlation structure of two qubits and rebits, and 
\item more generally the correlation structure for arbitrarily many qubits and rebits.
\end{enumerate}

\noindent Furthermore, the rules of information preservation and maximality of time evolution are shown to result
\begin{description}
\item  \hspace*{.2cm}6. in a reversible time evolution, and
\item  \hspace*{.2cm}7. under elementary consistency conditions, in a {\it quadratic} information measure, quantifying the observer's prior information about the answers to the various questions he may ask the system. 
\end{description}
This measure has been earlier proposed by Brukner and Zeilinger from a different perspective \cite{Brukner:1999qf,Brukner:ly,Brukner:2001ve,Brukner:ys,brukner2009information}, complementing our present derivation. 

Combining these results, we then show, as the simplest example, how the rules entail that
\begin{description}
\item\hspace*{.2cm}8. the Bloch ball and Bloch disc are recovered as state spaces for a single qubit and rebit, respectively, together with the correct time evolution groups and question sets. 
\end{description}

The full reconstruction of qubit quantum theory, following from the present four rules (and two additional ones), is completed in the companion paper \cite{hw} (the rebit case which violates one of the additional rules is considered separately \cite{hw2}). In conjunction, this derivation highlights the {\it partial} interpretation of quantum theory as a law book, describing and governing an observer's acquisition of information about physical systems. In particular, it highlights the quantum state as the observer's `catalogue of knowledge'.
 
Certainly, there are some limitations to the present approach: First of all, we employ a clear distinction between observer and system which cannot be considered as fundamental. Secondly, the construction is specifically engineered for the quantum theory of qubit systems. While arbitrary finite dimensional quantum systems could, in principle, be immediately encompassed by imposing the so-called {\it subspace axiom} of GPTs \cite{Hardy:2001jk,masanes2011derivation}, the latter does not naturally fit into our set of principles, mostly because rules \ref{lim} and \ref{unlim} quantify the information content of a system in terms of $N\in\mathbb{N}$ bits which is only suitable for composite gbit systems. More generally, the limitation is that the current approach only encompasses finite dimensional quantum theory, but not quantum {\it mechanics}. As it stands, the mechanical phase space language does not naturally fit into the present framework and more sophisticated tools are required in order to cover mechanical systems, let alone anything beyond that.

While this informational construction recovers the state spaces, the set of possible time evolutions and projective measurements of qubit quantum theory, in other words, the architecture of the theory, it clearly does not tell us much about the concrete physics (other than demanding it to fit into this general construction). This purely informational framework is rather universal and information carrier independent. But qubits as information carriers can be physically incarnated in many different ways: as electron or muon spins, photon polarization, quantum dots, etc. The framework cannot distinguish among the different physical incarnations, underlining the observation that not everything can be reduced to information and additional inputs are necessary in order to do proper physics. 

Nevertheless, despite its current limitations, this %elementary 
informational %and relational 
approach teaches us something non-trivial about the structure and physical content of quantum theory.

While this work is more generally motivated by the effort to understand physics from an informational and operational perspective and highlights a {\it partial} interpretation of quantum theory, it clearly does not single out `the right one' (see also the discussion in \cite{Hoehn:2017gst}). For example, by speaking exclusively about the information accessible to an observer, we are by default silent on the fate of hidden variables and on whether they could give rise to the assumptions and rules which we impose. The status of hidden variables is simply not relevant here (other than that {\it local} hidden variables are ruled out).

Let us, nonetheless, make a few possible (but not inescapable) interpretational statements. % which are {\it not} a necessary consequence of the reconstruction.  %An approach of this kind describes {\it relational} rather than absolute %intrinsic %(or extrinsic) 
%properties of systems. 
Given the absence of ontological commitments, this informational approach is generally compatible with (but does {\it not} rely on) the relational \cite{Rovelli:1995fv,Smerlak:2006gi}, Brukner-Zeilinger informational \cite{zeilinger1999foundational,Brukner:1999qf,Brukner:vn,Brukner:ys,Brukner:2002kx}, or QBist \cite{Fuchs:fk,Caves:1996nq,Caves:2002uq,caves2002unknown} interpretations of quantum mechanics. They depart from the traditional idea that systems necessarily have, absolute, i.e.\ observer independent properties (or, more generally, properties independent of their relations with other systems). Instead, many physical properties are interpreted as relational; the interaction between systems establishes a relation between them, permitting an information exchange which reveals certain physical properties relative to one another and in the absence of hidden variables this would be all there is.  Certainly, in order not to render such a view hopelessly solipsistic, systems ought to have certain intrinsic attributes, e.g.\ a corresponding state space or set of permissible interactions/measurements, such that different observers have a basis for agreeing or disagreeing on the description of physical objects. However, a state of a system or measurement/question outcome is taken relative to whoever performs the measurement or asks the question. Different observers may agree on states or measurement outcomes by communication (i.e., physical interaction) but if one rejects the idea of an absolute and omniscient observer it is natural to also abandon the idea of an absolute and external standard by means of which properties of systems could be defined. %This is a perspective on physics well congruent with the relational interpretation of quantum mechanics \cite{Rovelli:1995fv,Smerlak:2006gi}. 

%In this article we make such ideas more rigorous within the limits of qubit (and rebit) quantum theory as the exemplary physical scenario.

From such a perspective, one could say that the relational character of the qubits' properties is a consequence of a universal limit on the amount of information accessible to the observer and the mere existence of complementary information such that the observer can {\it not} know the answers to all his questions simultaneously. It is the observer who determines which questions he will ask and thereby what kind of system property the interrogation will reveal and thus, ultimately, which kind of information he will acquire (although clearly he does not determine what the question outcome is). But relative to the observer the system of qubits does not have properties other than those accessible to him.

\subsection{An operational alternative to the wave function of the universe}

Pushing an informational interpretation of quantum theory to the extreme, one may speculate whether the quantum state could also represent a state of information in a gravitational or cosmological context. For instance, is such an interpretation adequate for the `wave function of the universe' which is ubiquitous in standard approaches to quantum cosmology (e.g., see \cite{Hartle:1983ai,Bojowald:2011zzb,Ashtekar:2011ni})? Such an interpretation would require the existence of an absolute and omniscient observer, an idea which we just abandoned. 

Alternatively, one could adopt one of the central ideas of relational quantum mechanics \cite{Rovelli:1995fv,Smerlak:2006gi}, according to which all physical systems can assume the role of an `observer', recording information about other systems, thereby relieving the clear distinction used in this manuscript. Extending this idea to a space-time context, one could interpret the universe as an abstract network of subsystems/subregions, viewed as information registers, which can communicate and exchange information through interaction (see figure \ref{fig_global}). 
 \begin{figure}[hbt!]
\begin{center}
\psfrag{a}{absolute observer}
\psfrag{s1}{$S_1$}
\psfrag{s2}{$S_2$}
\psfrag{s3}{$S_3$}
\psfrag{s4}{$S_4$}
\psfrag{s5}{\tiny$S_5$}
\psfrag{s6}{\tiny$S_6$}
\psfrag{s7}{\hspace*{-.08cm}\tiny$S_7$}
\psfrag{c}{\hspace*{-.2cm}communication}
{\includegraphics[scale=.3]{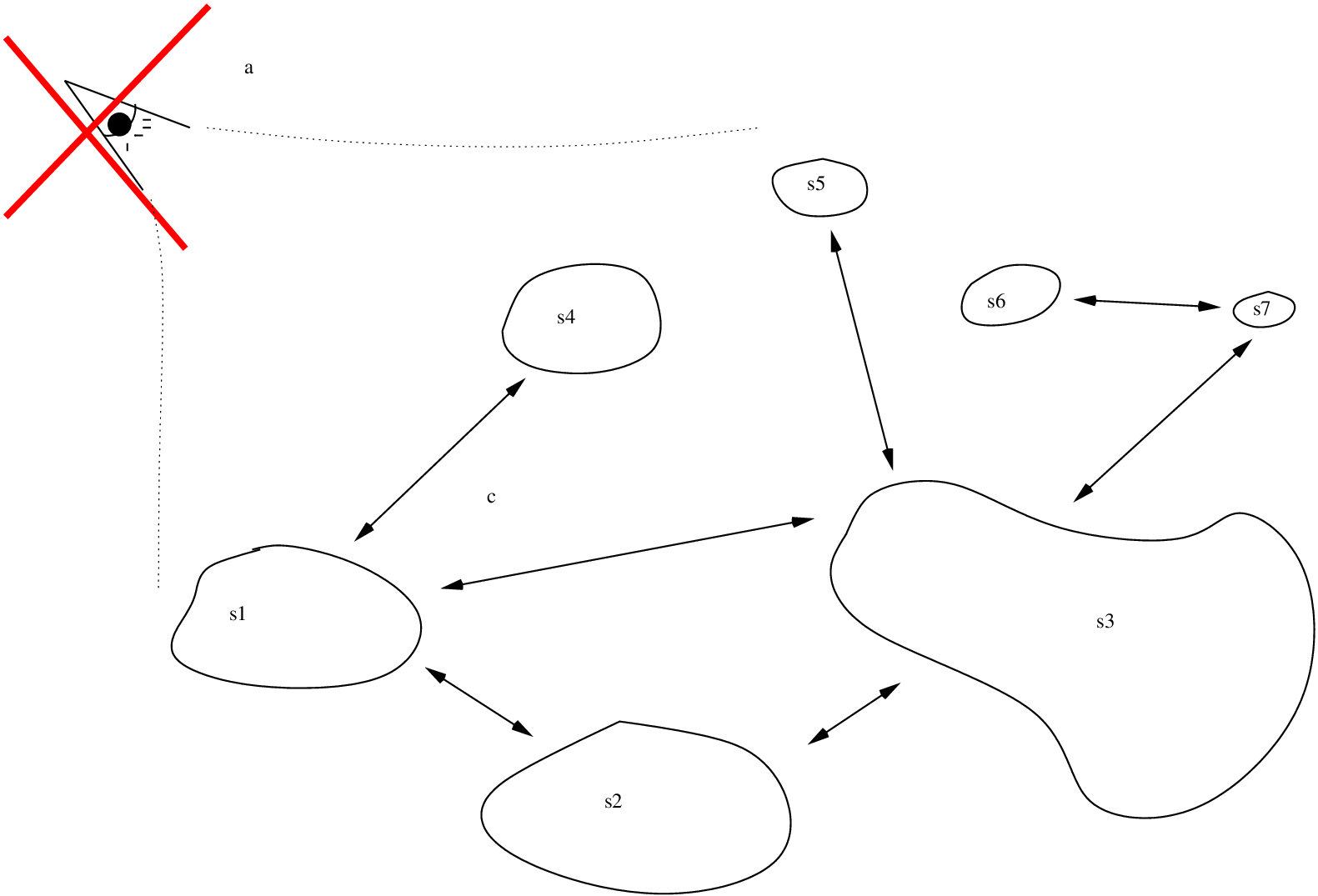}}
\caption{\small The universe as an information exchange network of subsytems without absolute observer.}\label{fig_global}
\end{center}
\end{figure}
In this background independent context, any information acquisition by any register is {\it internal}, i.e.\ occurs within the network; a global observer outside the network becomes meaningless. This may appear as a purely philosophical observation, but it implies concrete consequences for the description of the network: {\it there should be no global state} (aka `wave function of the universe') for the entire network at once. Indeed, admitting any register in the network to act as `observer', the {\it self-reference problem} \cite{dalla1977logical,breuer1995impossibility} impedes a given register to infer the global state of the entire network -- including itself -- from its interactions with the rest. Accordingly, relative to any subsystem, one could assign a state to the rest of the network, but a global state and thus a global Hilbert space would not arise. The absence of a global state in quantum gravity has been proposed before \cite{Crane:1995qj,Markopoulou:1999cz,Markopoulou:2002ru,Markopoulou:2007qg} -- albeit from a different, less informational perspective. 

This offers an operational alternative to the problematic concept of the `wave function of the universe' which is ubiquitous in quantum cosmology. However, while a `wave function of the universe' might not make sense as a fundamental concept, it could still be given meaning as an effective notion, namely as the state of the large scale structure of our universe on whose description `late time' observers better agree. 

Clearly, if this was to offer a coherent picture of physics, there would need to exist non-trivial consistency relations among the different registers' descriptions such as, e.g., the requirement that `late time observers' agree on the state of the large scale structure. This is not a practically unrealistic expectation as %already standard quantum theory features such a consistency between different observers' perspectives \cite{Rovelli:1995fv,Smerlak:2006gi,Cavescompat,mermin2002compatibility}. 
%Promisingly, 
a concrete playground for this idea has recently been constructed from a different motivation \cite{Hackl:2014txa}: a scalar field on the background geometry of elliptic de Sitter space can only be quantized in an observer dependent manner. A global Hilbert space for the quantum field does not exist, but consistency conditions between different observers' descriptions can be derived. (Observer consistency has also been discussed in general terms within Relational Quantum Mechanics \cite{Rovelli:1995fv,Smerlak:2006gi} and QBism \cite{Cavescompat,mermin2002compatibility}.) Although not being a quantum gravity model, it may serve as a platform for concretizing this proposal further.

%\vspace*{.5cm}
%Quantum theory suggests that, above all, physics is relational; it is about what we can say about the world and not about how the world `really' is.

\section*{Acknowledgements}
This work has benefited from interactions with many colleagues. First and foremost, it is a pleasure to thank Markus M\"uller for uncountably many insightful discussions, patient explanations of his own work and suggestions. I am grateful to Chris Wever for collaboration on \cite{hw,hw2} and important discussions, to Sylvain Carrozza for drawing my attention to Relational Quantum Mechanics in the first place and many related conversations, to Tobias Fritz  and Lucien Hardy for asking numerous clarifying questions, and to Rob Spekkens for many careful comments and suggestions. I would also like to thank Caslav Brukner, Borivoje Dakic, Martin Kliesch, Matt Leifer, Miguel Navascu\'es, Matteo Smerlak and Rafael Sorkin for useful discussions. Research at Perimeter Institute is supported by the Government of Canada through Industry Canada and by the Province of Ontario through the Ministry of Research and Innovation. The project leading to this publication has also received funding from the European Union's Horizon 2020 research and innovation programme under the Marie Sklodowska-Curie grant agreement No 657661.

\providecommand{\href}[2]{#2}\begingroup\raggedright\endgroup


\begin{thebibliography}{10}

\bibitem{hw}
P.~A. H\"ohn and C.~S.~P. Wever, ``Quantum theory from questions,''\href{https://doi.org/10.1103/PhysRevA.95.012102}{{\em Phys.Rev.} {\bfseries
  A95} (2017) 012102}, Preprint: \href{http://arxiv.org/abs/1511.01130}{{\ttfamily arXiv:1511.01130}}.
%%CITATION = ARXIV:1511.01130;%%.


\bibitem{hw2}
P.~A. H\"ohn and C.~S.~P. Wever, ``A reconstruction of real quantum theory from rules on information acquisition," {\em to appear}

\bibitem{Sorkin:2014kta}
R.~D. Sorkin, ``{On the Entropy of the Vacuum outside a Horizon},'' {\em 10th
  Int.\ Conf.\ Gen.\ Rel.\ Grav.\ Contributed Papers} {\bfseries II} (1983)
  734,
\href{http://arxiv.org/abs/1402.3589}{{\ttfamily arXiv:1402.3589}}.
%%CITATION = ARXIV:1402.3589;%%.

\bibitem{Bombelli:1986rw}
L.~Bombelli, R.~K. Koul, J.~Lee, and R.~D. Sorkin, ``{A Quantum Source of
  Entropy for Black Holes},''
\href{http://dx.doi.org/10.1103/PhysRevD.34.373}{{\em Phys.Rev.} {\bfseries
  D34} (1986) 373--383}.
%%CITATION = PHRVA,D34,373;%%.

\bibitem{Jacobson:1995ab}
T.~Jacobson, ``{Thermodynamics of space-time: The Einstein equation of
  state},'' \href{http://dx.doi.org/10.1103/PhysRevLett.75.1260}{{\em
  Phys.Rev.Lett.} {\bfseries 75} (1995) 1260--1263},
\href{http://arxiv.org/abs/gr-qc/9504004}{{\ttfamily arXiv:gr-qc/9504004}}.
%%CITATION = GR-QC/9504004;%%.

\bibitem{Jacobson:2012yt}
T.~Jacobson, ``{Gravitation and vacuum entanglement entropy},''
  \href{http://dx.doi.org/10.1142/S0218271812420060}{{\em Int.J.Mod.Phys.}
  {\bfseries D21} (2012) 1242006},
\href{http://arxiv.org/abs/1204.6349}{{\ttfamily arXiv:1204.6349}}.
%%CITATION = ARXIV:1204.6349;%%.

\bibitem{Bianchi:2012ev}
E.~Bianchi and R.~C. Myers, ``{On the Architecture of Spacetime Geometry},''
  \href{http://dx.doi.org/10.1088/0264-9381/31/21/214002}{{\em
  Class.Quant.Grav.} {\bfseries 31} no.~21, (2014) 214002},
\href{http://arxiv.org/abs/1212.5183}{{\ttfamily arXiv:1212.5183}}.
%%CITATION = ARXIV:1212.5183;%%.

\bibitem{jaynes}
E.~T.~Jaynes, ``Information theory and statistical mechanics," \href{https://doi.org/10.1103/PhysRev.106.620}{{\it Phys.\ Rev.} {\bf106} (1957) 620}

\bibitem{bennett1982thermodynamics}
C.~H. Bennett, ``The thermodynamics of computation---a review,'' \href{https://doi.org/10.1007/BF02084158}{{\em
  Intern.\ J.\ Theor.\ Phys.}\ {\bfseries 21}, (1982)
  905--940.}

\bibitem{maruyama2009colloquium}
K.~Maruyama, F.~Nori, and V.~Vedral, ``Colloquium: The physics of maxwell's
  demon and information,'' \href{https://doi.org/10.1103/RevModPhys.81.1}{{\em Reviews of Modern Physics} {\bfseries 81}, (2009) 1.}

\bibitem{popescu2006entanglement}
S.~Popescu, A.~J. Short, and A.~Winter, ``Entanglement and the foundations of
  statistical mechanics,'' \href{https://doi.org/10.1038/nphys444}{{\em Nature Physics} {\bfseries 2}, (2006)
  754--758.}

\bibitem{horodecki2013fundamental}
M.~Horodecki and J.~Oppenheim, ``Fundamental limitations for quantum and
  nanoscale thermodynamics,'' \href{https://doi.org/10.1038/ncomms3059}{{\em Nature communications} {\bfseries 4} (2013).}

\bibitem{nielsen2010quantum}
M.~A. Nielsen and I.~L. Chuang, {\em Quantum computation and quantum
  information}.
\newblock Cambridge university press, 2010.

\bibitem{Hardy:2001jk}
L.~Hardy, ``{Quantum theory from five reasonable axioms},''
\href{http://arxiv.org/abs/quant-ph/0101012}{{\ttfamily arXiv:quant-ph/0101012
  [quant-ph]}}.
%%CITATION = QUANT-PH/0101012;%%.

\bibitem{Dakic:2009bh}
B.~Dakic and C.~Brukner, ``Quantum theory and beyond: Is entanglement
  special?,'' {\em Deep Beauty: Understanding the Quantum World through
  Mathematical Innovation, Ed. H. Halvorson (Cambridge University Press, 2011)
  365-392} (11, 2009), \href{http://arxiv.org/abs/0911.0695}{{\ttfamily
  arXiv:0911.0695}}. %\url{http://arxiv.org/abs/0911.0695}.

\bibitem{masanes2011derivation}
L.~Masanes and M.~P. M{\"u}ller, ``A derivation of quantum theory from physical
  requirements,'' \href{https://doi.org/10.1088/1367-2630/13/6/063001}{{\em New Journal of Physics} {\bfseries 13}, (2011)
  063001.}

\bibitem{Mueller:2012ai}
M.~P. M\"uller and L.~Masanes, ``{Information-theoretic postulates for quantum
  theory},'' {\it in Quantum Theory: Informational Foundations and Foils.} Chiribella G., Spekkens R.\ (eds) \href{https://doi.org/10.1007/978-94-017-7303-4_5}{Fund.\ Theories Phys., vol 181, Springer}  
\href{http://arxiv.org/abs/1203.4516}{{\ttfamily arXiv:1203.4516}}.
%%CITATION = ARXIV:1203.4516;%%.

\bibitem{Masanes:2012uq}
L.~Masanes, M.~P. M\"uller, R.~Augusiak, and D.~Perez-Garcia, ``Existence of an
  information unit as a postulate of quantum theory,'' \href{https://doi.org/10.1073/pnas.1304884110}{{\em PNAS} {\bfseries 110}, 16373 (2013)},
  \href{http://arxiv.org/abs/1208.0493}{{\ttfamily arXiv:1208.0493}}.
 % \url{http://arxiv.org/abs/1208.0493}.

\bibitem{chiribella2011informational}
G.~Chiribella, G.~M. D'Ariano, and P.~Perinotti, ``Informational derivation of
  quantum theory,'' \href{https://doi.org/10.1103/PhysRevA.84.012311}{{\em Physical Review A} {\bfseries 84}, (2011)
  012311.}

\bibitem{de2012deriving}
G.~de~la Torre, L.~Masanes, A.~J. Short, and M.~P. M\"uller, ``Deriving quantum
  theory from its local structure and reversibility,'' \href{https://doi.org/10.1103/PhysRevLett.109.090403}{{\em Physical Review
  Letters} {\bfseries 109}, (2012) 090403.}

\bibitem{Mueller:2012pc}
M.~P. M\"uller and L.~Masanes, ``{Three-dimensionality of space and the quantum
  bit: how to derive both from information-theoretic postulates},''
  \href{http://dx.doi.org/10.1088/1367-2630/15/5/053040}{{\em New J. Phys. 15,}
  {\bfseries 053040} (2013)},
\href{http://arxiv.org/abs/1206.0630}{{\ttfamily arXiv:1206.0630}}.
%%CITATION = ARXIV:1206.0630;%%.

\bibitem{Hardy:2013fk}
L.~Hardy, ``Reconstructing quantum theory,''
  \href{http://arxiv.org/abs/1303.1538}{{\ttfamily arXiv:1303.1538}}.
 % \url{http://arxiv.org/abs/1303.1538}.

\bibitem{Barnum:2014fk}
H.~Barnum, M.~P. M\"uller, and C.~Ududec, ``Higher-order interference and
  single-system postulates characterizing quantum theory,'' \href{https://doi.org/10.1088/1367-2630/16/12/123029}{{\em New J.\ Phys.}\ {\bf16} 123029 (2014)}
  \href{http://arxiv.org/abs/1403.4147}{{\ttfamily arXiv:1403.4147}}.
%  \url{http://arxiv.org/abs/1403.4147}.

\bibitem{kochen2013reconstruction}
S.~Kochen, ``A reconstruction of quantum mechanics,'' in Bertlmann R., Zeilinger A.\ (eds) Quantum [Un]Speakables II. (2017) \href{https://doi.org/10.1007/978-3-319-38987-5_12}{The Frontiers Collection. Springer, Cham}, \href{https://arxiv.org/abs/1306.3951}{\ttfamily
  arXiv:1306.3951}.

\bibitem{Oeckl:2014uq}
R.~Oeckl, ``A local and operational framework for the foundations of physics
,'' \href{https://arxiv.org/abs/1610.09052}{{\ttfamily
  arXiv:1610.09052}}. 

\bibitem{hartle1968quantum}
J.~B. Hartle, ``Quantum mechanics of individual systems,'' \href{https://doi.org/10.1119/1.1975096}{{\em Am.\
  J.\ of Phys.}\ {\bfseries 36}, (1968) 704--712.}

\bibitem{zheng1996quantum}
Q.~Zheng and T.~Kobayashi, ``Quantum optics as a relativistic theory of
  light,'' \href{https://dx.doi.org/10.4006/1.3029255}{{\em Physics Essays} {\bfseries 9} (1996) 447--459.}

\bibitem{Rovelli:1995fv}
C.~Rovelli, ``{Relational quantum mechanics},''
  \href{http://dx.doi.org/10.1007/BF02302261}{{\em Int.J.Theor.Phys.}
  {\bfseries 35} (1996) 1637--1678},
\href{http://arxiv.org/abs/quant-ph/9609002}{{\ttfamily arXiv:quant-ph/9609002}}.
%%CITATION = QUANT-PH/9609002;%%.

\bibitem{Smerlak:2006gi}
M.~Smerlak and C.~Rovelli, ``{Relational EPR},''
  \href{http://dx.doi.org/10.1007/s10701-007-9105-0}{{\em Found.Phys.}
  {\bfseries 37} (2007) 427--445},
\href{http://arxiv.org/abs/quant-ph/0604064}{{\ttfamily arXiv:quant-ph/0604064}}.
%%CITATION = QUANT-PH/0604064;%%.

\bibitem{peres1995quantum}
A.~Peres, {\em Quantum theory: concepts and methods}, vol.~57.
\newblock Springer, 1995.

\bibitem{zeilinger1999foundational}
A.~Zeilinger, ``A foundational principle for quantum mechanics,'' \href{https://doi.org/10.1023/A:1018820410908}{{\em
  Foundations of Physics} {\bfseries 29}, (1999) 631--643.}

\bibitem{Brukner:1999qf}
C.~Brukner and A.~Zeilinger, ``Operationally invariant information in quantum
  measurements,'' \href{https://doi.org/10.1103/PhysRevLett.83.3354}{{\em Phys. Rev. Lett.} {\bfseries 83} (1999) 3354--3357},
  \href{http://arxiv.org/abs/quant-ph/0005084}{{\ttfamily arXiv:quant-ph/0005084}}.


\bibitem{Brukner:vn}
C.~Brukner, M.~Zukowski, and A.~Zeilinger, ``The essence of entanglement,''
  \href{http://arxiv.org/abs/quant-ph/0106119}{{\ttfamily arXiv:quant-ph/0106119}}.
 
\bibitem{Brukner:ys}
C.~Brukner and A.~Zeilinger, ``Information and fundamental elements of the
  structure of quantum theory,'' \href{http://www.springer.com/gp/book/9783540440338}{{\em in ``Time, Quantum, Information''}}, eds.\
  by L.\ Castell and O.\ Ischebeck (Springer, 2003),
  \href{http://arxiv.org/abs/quant-ph/0212084}{{\ttfamily arXiv:quant-ph/0212084}}.
 
 
 \bibitem{Brukner:2002kx}
C.~Brukner and A.~Zeilinger, ``Young's experiment and the finiteness of information,'' \href{https://doi.org/10.1098/rsta.2001.0981}{{\em Phil. Trans. R. Soc. Lond. A} {\bfseries 360} (2002)
  1061}, \href{http://arxiv.org/abs/quant-ph/0201026}{{\ttfamily
  arXiv:quant-ph/0201026}}. %\url{http://arxiv.org/abs/quant-ph/0201026}.

\bibitem{Fuchs:fk}
C.~A. Fuchs, ``Quantum mechanics as quantum information (and only a little
  more),'' \href{http://arxiv.org/abs/quant-ph/0205039}{{\ttfamily
  arXiv:quant-ph/0205039}}. %\url{http://arxiv.org/abs/quant-ph/0205039}.

\bibitem{Caves:1996nq}
C.~M. Caves and C.~A. Fuchs, ``{Quantum information: How much information in a
  state vector?},'' %{\em Annals Israel Phys.Soc.} (1996) ,
\href{http://arxiv.org/abs/quant-ph/9601025}{{\ttfamily arXiv:quant-ph/9601025}}.
%%CITATION = QUANT-PH/9601025;%%.

\bibitem{Caves:2002uq}
C.~M. Caves, C.~A. Fuchs, and R.~Schack, ``Quantum probabilities as bayesian
  probabilities,'' \href{https://doi.org/10.1103/PhysRevA.65.022305}{{\em Phys. Rev. A} {\bfseries 65} (2002) 022305},
  \href{http://arxiv.org/abs/quant-ph/0106133}{{\ttfamily arXiv:quant-ph/0106133}}.


\bibitem{caves2002unknown}
C.~M. Caves, C.~A. Fuchs, and R.~Schack, ``Unknown quantum states: the quantum
  de finetti representation,'' \href{https://doi.org/10.1063/1.1494475}{{\em J.\ Math.\ Phys.}\ {\bfseries
  43}, (2002) 4537--4559.}

\bibitem{spekkens2007evidence}
R.~W. Spekkens, ``Evidence for the epistemic view of quantum states: A toy
  theory,'' \href{https://doi.org/10.1103/PhysRevA.75.032110}{{\em Physical Review A} {\bfseries 75}, (2007) 032110}.

\bibitem{Spekkens:2014fk}
R.~W. Spekkens, ``Quasi-quantization: classical statistical theories with an
  epistemic restriction,'' {\it in Quantum Theory: Informational Foundations and Foils.} Chiribella G., Spekkens R.\ (eds) \href{https://doi.org/10.1007/978-94-017-7303-4_4}{Fund.\ Theories Phys., vol 181, Springer} \href{http://arxiv.org/abs/1409.5041}{{\ttfamily
  arXiv:1409.5041}}.
  
\bibitem{Hoehn:2014vua}
 P.~A. H\"ohn and M.~P. M\"uller,
  ``An operational approach to spacetime symmetries: Lorentz transformations from quantum communication,''
  \href{https://doi.org/10.1088/1367-2630/18/6/063026}{{\em New J.\ Phys.}\  {\bf 18} (2016),  063026},
  \href{http://arxiv.org/abs/1412.8462}{\ttfamily arXiv:1412.8462}.
  %%CITATION = doi:10.1088/1367-2630/18/6/063026;%%
  %4 citations counted in INSPIRE as of 23 Jan 2017

\bibitem{Rovelli:2004tv}
C.~Rovelli, {\em {Quantum Gravity}}.
\newblock Cambridge University Press,
2004.
\newblock
%%CITATION = INSPIRE-659635;%%.

\bibitem{Rovelli:1989jn}
C.~Rovelli, ``{Time in quantum gravity: Physics beyond the Schr\"odinger
  regime},''
\href{http://dx.doi.org/10.1103/PhysRevD.43.442}{{\em Phys.Rev.} {\bfseries
  D43} (1991) 442--456}.
%%CITATION = PHRVA,D43,442;%%.

\bibitem{Rovelli:1990ph}
C.~Rovelli, ``{What is observable in classical and quantum gravity?},''
\href{http://dx.doi.org/10.1088/0264-9381/8/2/011}{{\em Class.Quant.Grav.}
  {\bfseries 8} (1991) 297--316}.
%%CITATION = CQGRD,8,297;%%.

\bibitem{Rovelli:1990pi}
C.~Rovelli, ``{Quantum reference systems},''
\href{http://dx.doi.org/10.1088/0264-9381/8/2/012}{{\em Class.Quant.Grav.}
  {\bfseries 8} (1991) 317--332}.
%%CITATION = CQGRD,8,317;%%.

\bibitem{Dittrich:2005kc}
B.~Dittrich, ``{Partial and complete observables for canonical General
  Relativity},'' \href{http://dx.doi.org/10.1088/0264-9381/23/22/006}{{\em
  Class.Quant.Grav.} {\bfseries 23} (2006) 6155--6184},
\href{http://arxiv.org/abs/gr-qc/0507106}{{\ttfamily arXiv:gr-qc/0507106}}.
%%CITATION = GR-QC/0507106;%%.

\bibitem{Tambornino:2011vg}
J.~Tambornino, ``{Relational observables in gravity: A review},'' \href{https://doi.org/10.3842/SIGMA.2012.017}{{\em SIGMA}
  {\bfseries 8} (2012) 017},
\href{http://arxiv.org/abs/1109.0740}{{\ttfamily arXiv:1109.0740}}.
%%CITATION = ARXIV:1109.0740;%%.

\bibitem{Bojowald:2010qw}
M.~Bojowald, P.~A. H\"ohn, and A.~Tsobanjan, ``{Effective approach to the
  problem of time: general features and examples},''
  \href{http://dx.doi.org/10.1103/PhysRevD.83.125023}{{\em Phys.Rev.}
  {\bfseries D83} (2011) 125023},
\href{http://arxiv.org/abs/1011.3040}{{\ttfamily arXiv:1011.3040}}.
%%CITATION = ARXIV:1011.3040;%%.

\bibitem{Bojowald:2010xp}
M.~Bojowald, P.~A. H\"ohn, and A.~Tsobanjan, ``{An Effective approach to the
  problem of time},''
  \href{http://dx.doi.org/10.1088/0264-9381/28/3/035006}{{\em
  Class.Quant.Grav.} {\bfseries 28} (2011) 035006},
\href{http://arxiv.org/abs/1009.5953}{{\ttfamily arXiv:1009.5953}}.
%%CITATION = ARXIV:1009.5953;%%.

\bibitem{Hohn:2011us}
P.~A. H\"ohn, E.~Kubalova, and A.~Tsobanjan, ``{Effective relational dynamics
  of a nonintegrable cosmological model},''
  \href{http://dx.doi.org/10.1103/PhysRevD.86.065014}{{\em Phys.Rev.}
  {\bfseries D86} (2012) 065014},
\href{http://arxiv.org/abs/1111.5193}{{\ttfamily arXiv:1111.5193}}.
%%CITATION = ARXIV:1111.5193;%%.

\bibitem{popescu1994quantum}
S.~Popescu and D.~R\"ohrlich, ``Quantum nonlocality as an axiom,'' \href{https://doi.org/10.1007/BF02058098}{{\em
  Foundations of Physics} {\bfseries 24}, (1994) 379--385}.

\bibitem{pawlowski2009information}
M.~Paw{\l}owski, T.~Paterek, D.~Kaszlikowski, V.~Scarani, A.~Winter, and
  M.~{\.Z}ukowski, ``Information causality as a physical principle,'' \href{https://doi.org/10.1038/nature08400}{{\em
  Nature} {\bfseries 461}, (2009) 1101--1104}.

\bibitem{Paterek:2010fk}
T.~Paterek, B.~Dakic, and C.~Brukner, ``Theories of systems with limited
  information content,'' \href{https://doi.org/10.1088/1367-2630/12/5/053037}{{\em New J.\ Phys.}\ {\bfseries 12} (2010) 053037},
  \href{http://arxiv.org/abs/0804.1423}{{\ttfamily arXiv:0804.1423}}.

\bibitem{barrett2007information}
J.~Barrett, ``Information processing in generalized probabilistic theories,''
  \href{https://doi.org/10.1103/PhysRevA.75.032304}{{\em Physical Review A} {\bfseries 75}, (2007) 032304}.

\bibitem{hardy2011foliable}
L.~Hardy, ``Foliable operational structures for general probabilistic
  theories,'' {\em Deep Beauty: Understanding the Quantum World through
  Mathematical Innovation; Halvorson, H., Ed} (2011) 409,
  \href{http://arxiv.org/abs/0912.4740}{{\ttfamily arXiv:0912.4740}}.

\bibitem{hardy2013formalism}
L.~Hardy, ``A formalism-local framework for general probabilistic theories,
  including quantum theory,'' \href{https://doi.org/10.1017/S0960129512000163}{{\em Math.\ Struct.\ Comp.\ Science}
  {\bfseries 23}, (2013) 399},
  \href{http://arxiv.org/abs/1005.5164}{{\ttfamily arXiv:1005.5164}}.

\bibitem{chiribella2010probabilistic}
G.~Chiribella, G.~M. D'Ariano, and P.~Perinotti, ``Probabilistic theories with
  purification,'' \href{https://doi.org/10.1103/PhysRevA.81.062348}{{\em Physical Review A} {\bfseries 81}, (2010) 062348}.

\bibitem{Masanes:2011kx}
L.~Masanes, M.~P. M\"uller, D.~Perez-Garcia, and R.~Augusiak, ``Entanglement
  and the three-dimensionality of the bloch ball,'' \href{https://doi.org/10.1063/1.4903510}{{\em J.\ Math.\ Phys.}\ 55, 122203 (2014)},
  \href{http://arxiv.org/abs/1111.4060}{{\ttfamily arXiv:1111.4060}}.


\bibitem{pfister2013information}
C.~Pfister and S.~Wehner, ``An information-theoretic principle implies that any
  discrete physical theory is classical,'' \href{https://doi.org/10.1038/ncomms2821}{{\em Nature communications}
  {\bfseries 4} (2013) 1851}.

\bibitem{Barnum:2010uq}
H.~Barnum, J.~Barrett, L.~O. Clark, M.~Leifer, R.~Spekkens, N.~Stepanik,
  A.~Wilce, and R.~Wilke, ``Entropy and information causality in general
  probabilistic theories,'' \href{https://doi.org/10.1088/1367-2630/14/12/129401}{{\em New J. Phys.} {\bfseries 12} (2010) 033024},
  \href{http://arxiv.org/abs/0909.5075}{{\ttfamily arXiv:0909.5075}}.



\bibitem{holevo}
A.~S.~Holevo, {\em Probabilistic and Statistical Aspects of Quantum Theory},
North-Holland, Amsterdam, 1982

\bibitem{spekkens2005contextuality}
R.~W. Spekkens, ``Contextuality for preparations, transformations, and unsharp
  measurements,'' \href{https://doi.org/10.1103/PhysRevA.71.052108}{{\em Physical Review A} {\bfseries 71}, (2005) 052108}.

\bibitem{Cavescompat}
C.~Caves, C.~Fuchs, and R.~Schack, ``Conditions for compatibility of
  quantum-state assignments,''
  \href{http://dx.doi.org/10.1103/PhysRevA.66.062111}{{\em Phys. Rev. A}
  {\bfseries 66} (Dec, 2002) 062111}.


\bibitem{mermin2002compatibility}
N.~D. Mermin, ``Compatibility of state assignments,'' \href{https://doi.org/10.1063/1.1495897}{{\em J.\
  Math.\ Phys.}\ {\bfseries 43}, (2002) 4560}.

\bibitem{Fuchs:1995xa}
C.~A. Fuchs and A.~Peres, ``{Quantum state disturbance versus information gain:
  Uncertainty relations for quantum information},''
  \href{http://dx.doi.org/10.1103/PhysRevA.53.2038}{{\em Phys.Rev.} {\bfseries
  A53} (1996) 2038},
\href{http://arxiv.org/abs/quant-ph/9512023}{{\ttfamily arXiv:quant-ph/9512023}}.
%%CITATION = QUANT-PH/9512023;%%.

\bibitem{Fuchs:1996hz}
C.~A. Fuchs, ``{Information gain versus state disturbance in quantum theory},''
%  {\em Physica D} (1996) ,
\href{http://arxiv.org/abs/quant-ph/9611010}{{\ttfamily arXiv:quant-ph/9611010}}.
%%CITATION = QUANT-PH/9611010;%%.

\bibitem{specker1960logik}
E.~Specker, ``Die logik nicht gleichzeitig entscheidbarer aussagen,'' \href{http://dx.doi.org/10.1111/j.1746-8361.1960.tb00422.x}{{\em
  Dialectica} {\bfseries 14} (1960) 239}.

\bibitem{liang2011specker}
Y.-C. Liang, R.~W. Spekkens, and H.~M. Wiseman, ``Specker's parable of the
  overprotective seer: A road to contextuality, nonlocality and
  complementarity,'' \href{https://doi.org/10.1016/j.physrep.2011.05.001}{{\em Physics Reports} {\bfseries 506}, (2011) 1--39.}

\bibitem{cabello2013simple}
A.~Cabello, ``Simple explanation of the quantum violation of a fundamental
  inequality,'' \href{https://doi.org/10.1103/PhysRevLett.110.060402}{{\em Physical review letters} {\bfseries 110}, (2013)
  060402}.

\bibitem{cabello2012specker}
A.~Cabello, ``Specker's fundamental principle of quantum mechanics,'' \href{http://arxiv.org/abs/1212.1756}{\ttfamily arXiv:1212.1756}.

\bibitem{chiribella2014measurement}
G.~Chiribella and X.~Yuan, ``Measurement sharpness cuts nonlocality and
  contextuality in every physical theory,'' \href{http://arxiv.org/abs/1404.3348}{\ttfamily
  arXiv:1404.3348}.

\bibitem{brukner2009information}
{\v{C}}.~Brukner and A.~Zeilinger, ``Information invariance and quantum
  probabilities,'' \href{https://doi.org/10.1007/s10701-009-9316-7}{{\em Foundations of Physics} {\bfseries 39}, (2009)
  677--689.}

\bibitem{Weinberg:1989cm}
S.~Weinberg, ``{Precision Tests of Quantum Mechanics},''
\href{http://dx.doi.org/10.1103/PhysRevLett.62.485}{{\em Phys.Rev.Lett.}
  {\bfseries 62} (1989) 485}.
%%CITATION = PRLTA,62,485;%%.

\bibitem{gisin1990weinberg}
N.~Gisin, ``Weinberg's non-linear quantum mechanics and supraluminal
  communications,'' \href{https://doi.org/10.1016/0375-9601(90)90786-N}{{\em Physics Letters A} {\bfseries 143}, (1990) 1--2}.

\bibitem{Polchinski:1990py}
J.~Polchinski, ``{Weinberg's nonlinear quantum mechanics and the EPR
  paradox},''
\href{http://dx.doi.org/10.1103/PhysRevLett.66.397}{{\em Phys.Rev.Lett.}
  {\bfseries 66} (1991) 397--400}.
%%CITATION = PRLTA,66,397;%%.

\bibitem{Bennett:2009rt}
C.~H. Bennett, D.~Leung, G.~Smith, and J.~A. Smolin, ``{Can closed timelike
  curves or nonlinear quantum mechanics improve quantum state discrimination or
  help solve hard problems?},''
  \href{http://dx.doi.org/10.1103/PhysRevLett.103.170502}{{\em Phys.Rev.Lett.}
  {\bfseries 103} (2009) 170502},
\href{http://arxiv.org/abs/0908.3023}{{\ttfamily arXiv:0908.3023}}.
%%CITATION = ARXIV:0908.3023;%%.

\bibitem{weizsaeckerbook}
C.~F. von Weizs\"acker, {\em The Structure of Physics}.
\newblock Springer-Verlag, Dordrecht, 2006.

\bibitem{gornitz2003introduction}
T.~G{\"o}rnitz and O.~Ischebeck, {\em An Introduction to Carl Friedrich von
  Weizs\"acker's Program for a Reconstruction of Quantum Theory}. \href{https://doi.org/10.1007/978-3-662-10557-3_17}{{\em in ``Time, Quantum, Information''}}, eds.\
  by L.\ Castell and O.\ Ischebeck (Springer, 2003)

\bibitem{lyre1995quantum}
H.~Lyre, ``Quantum theory of ur-objects as a theory of information,'' \href{https://doi.org/10.1007/BF00676265}{{\em
  Int.\ J.\ Theor.\ Physics} {\bfseries 34}, (1995)
  1541}.

\bibitem{Dakic:2013fk}
B.~Dakic and C.~Brukner, ``The classical limit of a physical theory and the
  dimensionality of space,'' {\it in Quantum Theory: Informational Foundations and Foils.} Chiribella G., Spekkens R.\ (eds) \href{https://doi.org/10.1007/978-94-017-7303-4_8}{Fund.\ Theories Phys., vol 181, Springer}
  \href{http://arxiv.org/abs/1307.3984}{{\ttfamily
  arXiv:1307.3984}}. 
  
  \bibitem{Bohr}
N.~Bohr, {\em Atomic Theory and the Description of Nature}.
\newblock Cambridge University Press, 1961 (Reprint).

\bibitem{caves2001entanglement}
C.~M. Caves, C.~A. Fuchs, and P.~Rungta, ``Entanglement of formation of an
  arbitrary state of two rebits,'' \href{https://doi.org/10.1023/A:1012215309321}{{\em Foundations of Physics Letters}
  {\bfseries 14}, (2001) 199}.

\bibitem{Coffman:2000fk}
V.~Coffman, J.~Kundu, and W.~K. Wootters, ``Distributed entanglement,'' \href{https://doi.org/10.1103/PhysRevA.61.052306}{{\em
  Phys.Rev.A} {\bfseries 61} (2000) 052306},
  \href{http://arxiv.org/abs/quant-ph/9907047}{{\ttfamily arXiv:quant-ph/9907047}}.


\bibitem{Regula:2014uq}
B.~Regula, S.~D. Martino, S.~Lee, and G.~Adesso, ``Strong monogamy conjecture
  for multiqubit entanglement: The four-qubit case,'' \href{https://doi.org/10.1103/PhysRevLett.113.110501}{{\em Phys. Rev. Lett.}
  {\bfseries 113} (2014) 110501},
  \href{http://arxiv.org/abs/1405.3989}{{\ttfamily arXiv:1405.3989}}.


\bibitem{wootters2012entanglement}
W.~K. Wootters, ``Entanglement sharing in real-vector-space quantum theory,''
  \href{https://doi.org/10.1007/s10701-010-9488-1}{{\em Foundations of Physics} {\bfseries 42}, (2012) 19}.

\bibitem{woottersrebit}
W.~K. Wootters, ``The rebit three-tangle and its relation to two-qubit
  entanglement,'' \href{https://doi.org/10.1088/1751-8113/47/42/424037}{{\em {J.\ Phys.\ A: Math.\ Theor.}} {\bfseries 47} (2014)
  424037}.

\bibitem{Peres:1996dw}
A.~Peres, ``{Separability criterion for density matrices},''
  \href{http://dx.doi.org/10.1103/PhysRevLett.77.1413}{{\em Phys.Rev.Lett.}
  {\bfseries 77} (1996) 1413--1415},
\href{http://arxiv.org/abs/quant-ph/9604005}{{\ttfamily arXiv:quant-ph/9604005}}.
%%CITATION = QUANT-PH/9604005;%%.


\bibitem{webster}
R.\ Webster, {\em Convexity}.
\newblock Oxford University Press, Oxford, 1994.



\bibitem{Aaronson:fk}
S.~Aaronson, ``Is quantum mechanics an island in theoryspace?,''
  \href{http://arxiv.org/abs/quant-ph/0401062}{{\ttfamily arXiv:quant-ph/0401062}}.


\bibitem{Brukner:ly}
C.~Brukner and A.~Zeilinger, ``Quantum measurement and shannon information, a
  reply to m.\ j.\ w.\ hall,''
  \href{http://arxiv.org/abs/quant-ph/0008091}{{\ttfamily arXiv:quant-ph/0008091}}.


\bibitem{Brukner:2001ve}
C.~Brukner and A.~Zeilinger, ``Conceptual inadequacy of the shannon information
  in quantum measurements,'' \href{https://doi.org/10.1103/PhysRevA.63.022113}{{\em Phys.Rev.A} {\bfseries 63} (2001) 022113},
  \href{http://arxiv.org/abs/quant-ph/0006087}{{\ttfamily arXiv:quant-ph/0006087}}.


\bibitem{Garner:2014uq}
A.~J.~P. Garner, M.~P. M{\"u}ller, and O.~C.~O. Dahlsten, ``The quantum bit
  from relativity of simultaneity on an interferometer,''
  \href{https://doi.org/10.1098/rspa.2017.0596}{{\em Proc.\ R.\ Soc.}\ {\bf A473} (2017) 20170596},
  \href{http://arxiv.org/abs/1412.7112}{{\ttfamily arXiv:1412.7112}}.

  
\bibitem{Hoehn:2017gst} 
  P.~A.~H\"ohn,
  ``Reflections on the information paradigm in quantum and gravitational physics,''
  \href{https://doi.org/10.1088/1742-6596/880/1/012014}{{\em J.\ Phys.\ Conf.\ Ser.}\  {\bf 880}, 012014 (2017)},
  %doi:10.1088/1742-6596/880/1/012014
\href{http://arxiv.org/abs/1706.06882}{arXiv:1706.06882}.
  %%CITATION = doi:10.1088/1742-6596/880/1/012014;%%
  %1 citations counted in INSPIRE as of 29 Sep 2017

\bibitem{Hartle:1983ai}
J.~Hartle and S.~Hawking, ``{Wave function of the universe},''
\href{http://dx.doi.org/10.1103/PhysRevD.28.2960}{{\em Phys.Rev.} {\bfseries
  D28} (1983) 2960--2975}.
%%CITATION = PHRVA,D28,2960;%%.

\bibitem{Bojowald:2011zzb}
M.~Bojowald, ``{Quantum cosmology},''
\href{http://dx.doi.org/10.1007/978-1-4419-8276-6}{{\em Lect.Notes Phys.}
  {\bfseries 835} (2011) 1--308}.
%%CITATION = LNPHA,835,1;%%.

\bibitem{Ashtekar:2011ni}
A.~Ashtekar and P.~Singh, ``{Loop Quantum Cosmology: A status report},''
  \href{http://dx.doi.org/10.1088/0264-9381/28/21/213001}{{\em
  Class.Quant.Grav.} {\bfseries 28} (2011) 213001},
\href{http://arxiv.org/abs/1108.0893}{{\ttfamily arXiv:1108.0893}}.
%%CITATION = ARXIV:1108.0893;%%.

\bibitem{dalla1977logical}
M.~L. Dalla~Chiara, ``Logical self reference, set theoretical paradoxes and the
  measurement problem in quantum mechanics,'' \href{https://doi.org/10.1007/BF00262066}{{\em J.\ Philosophical
  Logic} {\bfseries 6}, (1977) 331}.

\bibitem{breuer1995impossibility}
T.~Breuer, ``The impossibility of accurate state self-measurements,'' \href{https://doi.org/10.1086/289852}{{\em
  Phil.\ Science} (1995) 197}.

\bibitem{Crane:1995qj}
L.~Crane, ``{Clock and category: Is quantum gravity algebraic?},''
  \href{http://dx.doi.org/10.1063/1.531240}{{\em J.Math.Phys.}\ {\bfseries 36}
  (1995) 6180},
\href{http://arxiv.org/abs/gr-qc/9504038}{{\ttfamily arXiv:gr-qc/9504038}}.
%%CITATION = GR-QC/9504038;%%.

\bibitem{Markopoulou:1999cz}
F.~Markopoulou, ``{Quantum causal histories},''
  \href{http://dx.doi.org/10.1088/0264-9381/17/10/302}{{\em Class.Quant.Grav.}
  {\bfseries 17} (2000) 2059},
\href{http://arxiv.org/abs/hep-th/9904009}{{\ttfamily arXiv:hep-th/9904009}}.
%%CITATION = HEP-TH/9904009;%%.

\bibitem{Markopoulou:2002ru}
F.~Markopoulou, ``{Planck scale models of the universe},''
\href{http://arxiv.org/abs/gr-qc/0210086}{{\ttfamily arXiv:gr-qc/0210086}}.
%%CITATION = GR-QC/0210086;%%.

\bibitem{Markopoulou:2007qg}
F.~Markopoulou, ``{New directions in background independent quantum gravity},'' in {\it Approaches to Quantum Gravity: Toward a New Understanding of Space, Time and Matter}, ed.\ D.\ Oriti (2009), Cambridge University Press, 129
\href{http://arxiv.org/abs/gr-qc/0703097}{{\ttfamily arXiv:gr-qc/0703097}}.
%%CITATION = GR-QC/0703097;%%.

\bibitem{Hackl:2014txa}
L.~F. Hackl and Y.~Neiman, ``{Horizon complementarity in elliptic de Sitter
  space},'' \href{https://doi.org/10.1103/PhysRevD.91.044016}{Phys.\ Rev.\ D {\bf 91} (2015), 044016}, \href{http://arxiv.org/abs/1409.6753}{{\ttfamily arXiv:1409.6753}}.
%%CITATION = ARXIV:1409.6753;%%.

\end{thebibliography}
\end{document}